\renewcommand{\theequation}{\thesection.\arabic{equation}}
\newtheorem{df}{Definition}[section]
\newtheorem{lm}{Lemma}[section]
\newtheorem{thm}{Theorem} [section]
\newtheorem{pro}{Proposition} [section]
\newtheorem{rem}{Remark}[section]
\title{Cauchy problem for a generalized cross-coupled  Camassa-Holm system with  waltzing peakons and higher-order nonlinearities
}
\author{Shouming Zhou$^a$\footnote{E-mail:
zhoushouming76@163.com },\quad   Zhijun Qiao$^b$\footnote{E-mail:
zhijun.qiao@utrgv.edu} and Chunlai Mu$^c$\footnote{E-mail: clmu2005@163.com}\\
\small a. College of Mathematics Science, Chongqing Normal University, Chongqing 401331, China.\\
 \small b. Department of Mathematics, The University of Texas Rio Grande Valley, Edinburg, TX 78541 U.S.A.\\
  \small c. College of Mathematics and statistics, Chongqing University, Chongqing 401331, China.}
\date{}
\begin{document}
\maketitle
\renewcommand{\theequation}{\arabic{section}.\arabic{equation}}
\catcode`@=11 \@addtoreset{equation}{section} \catcode`@=12
\textbf{Abstract.}
In this paper, we study the Cauchy problem for a generalized cross-coupled  Camassa-Holm system with peakons and higher-order nonlinearities.
By the transport equation theory and the classical Friedrichs regularization method, we obtain the local well-posedness of solutions for the system
in  nonhomogeneous Besov spaces $B^s_{p,r}\times B^s_{p,r}$ with $1\leq p,r \leq +\infty$ and $s>\max\{2+\frac{1}{p},\frac{5}{2}\}$.
Moreover, we construct the local well-posedness in the critical Besov space $B^{5/2}_{2,1}\times B^{5/2}_{2,1}$ and the blow-up criteria. In the paper, we also
consider the well-posedness problem in the sense of Hadamard, non-uniform dependence, and H\"{o}lder   continuity  of the data-to-solution map  for the system on both the periodic and the non-periodic case. In light of a Galerkin-type approximation scheme, the system is shown well-posed in the Sobolev spaces $H^s\times H^s,s>5/2$  in the sense of Hadamard, that is, the data-to-solution map is continuous.  However, the solution map is not uniformly continuous.   Furthermore, we prove the H\"{o}lder continuity in the $H^r\times H^r$ topology when $0\leq r< s$ with  H\"{o}lder exponent   $\alpha$ depending on both $s$ and $r$.\\
{\bf Keywords}: Cross-coupled  Camassa-Holm, Well-posedness, Besov spaces, Blow-up, Non-uniform dependence, H\"{o}lder   continuity.\\
{\bf Mathematics Subject Classification(2000)}: 35G25, 35L05, 35Q50, 35Q53, 37K10.

\section{Introduction }
 In this paper, we propose the following Cauchy problem
\begin{equation}\label{Eq.(1.1)}
\left\{
\begin{array}{llll}
m_{t}+v^{p}m_{x}+av^{p-1}v_{x}m=0, &t\in\mathbb{R},x\in\mathbb{R},\\
n_{t}+u^{q}n_{x}+bu^{q-1}u_{x}n=0, &t\in\mathbb{R},x\in\mathbb{R},\\
u(x,0)=u_0(x), v(x,0)=v_0(x),  & x\in\mathbb{R},
\end{array}
\right.
\end{equation}
where $m= u-\alpha^2 u_{xx}, \ n= v-\beta^2 v_{xx}$ ($\alpha\geq0, \beta\geq0$), 
 $p,q\in\mathbb{Z}^+$, and $a,b$ are two constant parameters.
Obviously, Eq.(1.1) has nonlinearities of degree $\max\{p+1,q+1\}$. If choosing $m= u,n= v$, then Eq.(1.1) is a cross-coupled Burgers type system; if ordering $m= u_{xx},n=  v_{xx}$, then Eq. (1.1) becomes a cross-coupled  Hunter-Saxton type system; and if selecting $\alpha=\beta=1$, namely, $m= u-u_{xx},n= v-v_{xx}$, then Eq.(1.1) reads as a cross-coupled  Camassa-Holm type system. In this paper, we focus on the cross-coupled  Camassa-Holm type system. Other two limited cases can be dealt with in a similar way.

The water wave and fluid dynamics have been attracting much attention during last few decades due to various mathematical problems and nonlinear physics phenomena interfered \cite{Whitham}.
Since the raw water wave governing equations are 
nearly intractable, the request for suitably simplified model equations was initiated at the early stage of hydrodynamics development. Until the early twentieth century, the study of water waves was confined almost exclusively to the linear theory. Due to the linearization approach losing some important properties, people usually propose some nonlinear models to explain practical behaviors liking breaking waves  and solitary waves \cite{ConL}.
The most prominent example is  the following family of nonlinear dispersive partial differential equations
\begin{equation}
u_t-\gamma u_{xxx}-\alpha^2u_{xxt} = (c_1u^2+c_2u_x^2+c_3uu_{xx})_x,
\end{equation}
where $\gamma,\alpha,c_1,c_2$, and $c_3$ are real constants.   By the
Painlev\'{e} analysis  method in \cite{DP,DHH2,I}, there are  only
three asymptotically integrable members in this family: the KdV
equation  (Eq.(1.2) with $ \alpha=c_1=c_2=0$), the Camassa-Holm equation (Eq.(1.1) with $u=v,p=q =1,a=b=2$),  and the Degasperis-Procesi equation (Eq.(1.1) with $u=v,p=q =1,a=b=3$).
Apparently, 
nonlinearity in Eq.(1.2) is quadratic, and thus a natural question rises 
whether there are  integrable CH-type equations with higher-order nonlinearity.
Recently, such integrable peakon equations with cubic nonlinearity have been developed: one is the FORQ equation, and the other one is the Novikov equation (Eq. (1.1) with $u=v,p=q =2,a=b=4$).

Integrable equations have widely been studied because they usually have very good properties including  infinitely many conservation laws, infinite higher-order symmetries, bi-Hamiltonian
structure, and Lax pair, which make them solved by the inverse scattering method. Conserved quantities are very feasible for proving the existence of global solution in time, while a bi-Hamiltonian
formulation helps in finding conserved quantities effectively.
Discovering a new integrable equation may 
be accomplished via different methods. One of ways is the approach proposed by Fokas and Fuchssteiner \cite{FF} where the KdV equation, the CH equation, and the Hunter-Saxton equation are derived in a unified way.
The approach is based on the following fact: If $\theta_1, \ \theta_2$ are two Hamiltonian operators and for arbitrary number $k$ their combination $\theta_1+k\theta_2$ is also Hamiltonian, then
\begin{equation}
q_t=-(\theta_2\theta_1^{-1})q_x.
\end{equation}
is an integrable equation. Now, letting $\theta_1=\partial_x$ and $\theta_2=\partial_x+\gamma\partial_x^3+\frac{\alpha}{3}(q\partial_x+\partial_xq)$, where
$\alpha$ and $\gamma$ are constants, then Eq.(1.3) reads as the celebrated KdV equation (see \cite{KdV}). If choosing  $\theta_1=\partial_x+\nu\partial_x^3$ and $\theta_2 $
as above  with $q=u+\nu u_{xx}$, then Eq.(1.3) yields the following equation
\begin{equation}
u_t+u_x+\nu u_{xxt}+\nu u_{xxx}+\alpha uu_x +\frac{\alpha\nu}{3}(uu_{xxx}+2u_xu_{xx})=0,
\end{equation}
which could be reduced to the well-known CH equation through selecting the parameters appropriately and making
a change of variables with some scaling (see \cite{CH,CE2,CL}).
If setting  $\theta_1= \nu\partial_x^3$ and  $\theta_2=\partial_x + (q\partial_x+\partial_xq)$ with $q=  \nu u_{xx}$, then Eq.(1.3) leads to the Hunter-Saxton equation (see \cite{ACHFM,HS}):
\begin{equation}
  u_{xxt}+\frac{1}{\nu} u_{x}+ uu_{xxx}+2u_xu_{xx} =0.
\end{equation}
Actually, in light of the Fokas-Fuchssteiner framework \cite{FF}, one may generate generalized KdV-type and CH-type equations possessing bi-Hamiltonian structure and infinitely many conserved quantities.
For instance, letting $\theta_1=\partial_x$ and $\theta_2=\beta\partial_x+\gamma\partial_x^3+\frac{\alpha}{k+2}(q^k\partial_x+\partial_xq^k)$, where
$\alpha,\gamma$ are constants and $k\in\mathbb{Z}^+$ in Eq.(1.3)
produces the following generalized Korteweg-de Vries equation (see \cite{CKSTT,KPV}):
\begin{equation}
q_t+\beta\partial_xq+\gamma\partial_x^3q+\alpha q^kq_x=0.
\end{equation}
And choosing  $\theta_1=\partial_x+\nu\partial_x^3$ and $\theta_2=\beta\partial_x+\gamma\partial_x^3+ \alpha [(b-1)q\partial_x+\partial_xq]$
 with $q=u+\nu u_{xx}$ in 
 Eq.(1.3) generates the following CH-$b$ family equation \cite{HS2,LS}:
\begin{equation}
u_t+\beta u_x+\nu u_{xxt}+\gamma u_{xxx}+\alpha (b+1)uu_x + \alpha\nu (uu_{xxx}+ bu_xu_{xx})=0.
\end{equation}
Taking $b=3,\beta=\gamma=0,\nu=-1$ in Eq. (1.7) yields the remarkable DP equation (see \cite{DP,DHH}).

Furthermore, let us 
take  $\theta_1=\partial_x(1+\nu \partial_x^2)^k$ and $\theta_2=\beta k\partial_xq^{k-1}+\gamma k\partial_x^3q^{k-1}+  \alpha [(b-1)q\partial_x q^{k-1}+\partial_xq^k]$
 with $q=u+\nu u_{xx}$, then Eq.(1.3) yields the following generalized $b$-equation with nonlinearity of degree $k+1$ \cite{GH,ZhouMu}:
\begin{equation}
u_t +\beta\partial_x u^k+\nu u_{xxt}+\gamma\partial_x^3u^k +\alpha (b+1)u^ku_x + \alpha\nu (u^ku_{xxx}+ bu^{k-1}u_xu_{xx})=0.
\end{equation}
Taking $ k=2 $ in Eq. (1.8) gives the Novikov equation through choosing the parameters appropriately and making
a change of variables with some scaling \cite{HW2,N1}. If choosing $\theta_1=\partial_x- \partial_x^3$ and $\theta_2=q^2\partial_x +q_x\partial_x^{-1}q\partial_x $
 with $q=u-u_{xx}$, then Eq.(1.3) reads as
 \begin{equation}
q_t +2q^2u_x+q_x(u^2-u_x^2)=0,
\end{equation}
which is actually the FORQ equation \cite{Fokas,Fuchssteiner,Olver-Rosenau,Qiao1}.

The other attractive feature of the CH type equation (1.8) with $\beta=\gamma=0$ and $\nu=-1$ is: it admits the following peakon solutions 
\cite{GH,ZhouMu}:
\begin{equation*}
\begin{split}
\text{on the line: }  u_c(x,t)=c^{1/k}e^{-|x-ct|};
\text{ on the circle: }  u_c(x,t)= c^{1/k}  \cosh ([x-ct]_{\pi}-\pi),
\end{split}
\end{equation*}
with
\begin{equation}
  [x-ct]_{\pi}\doteq x-ct-2\pi\left[\frac{x-ct}{2\pi}\right].
\end{equation}
Equations (1.8) also has 
the multi-peakon solutions 
in the form of 
\cite{GH,ZhouMu}:
\begin{equation*}
\text{on the line: } u(t,x)=\sum_{i=1}^N p_i(t)e^{-|x-q_i(t)|}; \text{ on the circle: }   u(t,x)=\sum_{i=1}^N p_i(t)\cosh ([x-q_i(t)]_{\pi}-\pi),
\end{equation*}
where the peak positions $q_i(t)$ and amplitudes $p_i(t)$ satisfy 
\begin{equation*}
\begin{split}
p_j'& =\left(\sum_{i=1}^N p_i e^{-|q_j-q_i(t)|}\right)^{k }, \\
q_j'&= (b-k) p_j\left(\sum_{i=1}^N p_i e^{-|q_j-q_i |}\right)^{k-1 } \left(\sum_{i=1}^N p_i sgn(q_j-q_i)e^{-|q_j-q_i |}\right).
\end{split}
\end{equation*}

In recent years, the Camassa-Holm equation has been generalized to integrable multi-component CH models. One of them is the following form
\begin{equation}
\left\{
\begin{array}{llll}
m_t=um_x+k_1u_xm+k_2\rho \rho_x &t>0,x\in\mathbb{R},\\
\rho_t=k_3(u\rho)_x,&t>0,x\in\mathbb{R},\\
u(x,0)=u_0(x),\rho(x,0)=\rho_0(x),&x\in\mathbb{R}.
\end{array}
\right.
\end{equation}
Apparently, the two-component Camassa-Holm system (2CH)
and the two-component Degasperis-Procesi system (2DP) are included in Eq. (1.11) as two special cases with $k_1=2,k_2=\sigma=\pm1$ and with $k_1=3$, respectively.
Constantin and Ivanov \cite{CI} derived the 2CH 
 in the context of shallow water theory. The variable $u(x,
t)$ describes the horizontal velocity of the fluid and the variable
$\rho(x, t)$ is in connection with the horizontal deviation of the
surface from equilibrium, and all are measured in dimensionless units
\cite{CI}.    
The 2DP 
was shown to have solitons,  kink, and antikink solutions \cite{ZTF}.  Escher, Kohlmann and Lenells studied the geometric properties of the 2DP and local well-posedness in various function spaces \cite{EKL}. However, peakon and superposition of multi-peakons were not investigated yet. 


Motivated by the work of  Cotter, Holm, Ivanov and Percival \cite{CHIP}, in this paper we propose Eq.(1.1) to study its Cauchy problem and peakon dynamical system. Eq.(1.1) with the choice of $a=b=2$ and $p=q=1$ (called the CCCH equation)
can be derived from a variational principle via an
Euler-Lagrange system with the following Lagrangian \cite{CHIP}
\begin{equation*}
L(u,v)=\int_\mathbb{R}(uv+u_xv_x)dx.
\end{equation*}
Alternatively, it can be formulated as a two-component system of Euler-Poincar\'{e}  equations in one dimension on $\mathbb{R}$ as follows,
 \begin{equation*}
 \begin{split}
\partial_t m=-ad^*_{\delta h/\delta m}m=-(vm)_x-mv_x \quad\text{with} \quad v\doteq \frac{\delta h}{\delta m}=K*n,\\
\partial_t n=-ad^*_{\delta h/\delta n}n=-(un)_x-nu_x \quad\text{with} \quad u\doteq \frac{\delta h}{\delta n}=K*m,\\
\end{split}
\end{equation*}
with $K(x,y)=\frac{1}{2}e^{-|x-y|}$ being the Green function of the Helmholtz operator,
and $h$ being the Hamiltonian
 \begin{equation*}
 h(n,m)=\int_\mathbb{R}nK*mdx=\int_\mathbb{R}mK*ndx.
\end{equation*}
This Hamiltonian system has a two-component singular momentum map \cite{CHIP,CHIP2}
 \begin{equation*}
m(x,t)=\sum_{a=1}^M m_a(t)\delta(x-q_a(t)),n(x,t)=\sum_{a=1}^N n_b(t)\delta(x-r_b(t)).
\end{equation*}
 Such a formal multi-peakon solution, waltzing peakons and compactons of the CCCH 
 are given in \cite{CHIP}. A geometrical interpretation for the CCCH system along with a large class of peakon equations was discussed in \cite{EIK}.
More recently, the Cauchy problem of Eq.(1.1) has been
studied extensively. Local well-posedness problem of the CCCH system in  Sobolev spaces  and  Besov spaces is investigated \cite{Liu,Zhou}. The  global existence, blow-up  phenomenon, and persistence properties were discussed in \cite{HHI,Liu,Zhou}.

Based on the argument on the approximate solutions in \cite{D3}  and  \cite{ZhouSM,ZMW},
we use the  classical Friedrichs regularization method and the transport equations theory to study 
the local well-posedness problem of Eq.(1.1). However, the structure of Eq.(1.1) with high-order nonlinearities is complicated in comparison with the one 
appearing in \cite{D3,ZhouSM,ZMW}.
Thus, there is difficulty with this problem. 
In this paper, we plan to use the original Eq.(1.1) rather than the nonlocal form (see (3.18) below).
This way may simplify our computation based on  the fact: $\|u\|_{B^{s}_{l,r}} \approxeq \|m\|_{B^{s-2}_{l,r}}$.
In order to show the local well-posedness through the
Littlewood-Paley decomposition and nonhomogeneous Besov spaces, we need to prove the following inequality
$$\|u_{k}(t) \|_{B^{s}_{l,r}}+\|v_{k}(t) \|_{B^{s}_{l,r}}\leq\frac{\| u_{0}\|_{B^{s}_{l,r}}+\| v_{0}\|_{B^{s}_{l,r}}}{ \left(1-2 \kappa C t(\| u_{0}\|_{B^{s}_{l,r}} +\| v_{0}\|_{B^{s}_{l,r}})^\kappa\right)^{1/\kappa} }  \text{ with }  \kappa= \max\left\{p,q\right\},$$
which yields the following theorem.
It should be pointed out that the degree of nonlinearity in Eq.(1.1) is $\max\{p+1,q+1\}$ rather than quadratic or cubic. 

\begin{thm}\label{result1}
Suppose that $1\leq l,r \leq +\infty$ and $s>\max\{2+\frac{1}{l },\frac{5}{2},3-\frac{1}{l}\}$ but $s\neq 3+\frac{1}{l}$. Let $(u_0,v_0)\in B_{l,r}^{s}\times B_{l,r}^{s}$.  Then there exists a time $T>0$ such that the Cauchy problem (1.1) has a unique solution $(u,v)\in B_{l,r}^{s}(T)\times B_{l,r}^{s}(T)$, and map $(u_0,v_0)\mapsto (u,v)$ is continuous from a neighborhood of $(u_0,v_0)$ in $B_{p,r}^{s}\times B_{p,r}^{s}$ into
$$\mathcal{C}([0,T];B_{l,r}^{s'}) \cap \mathcal{C}^1([0,T];B_{l,r}^{s'-1})\times \mathcal{C}([0,T];B_{l,r}^{s'}) \cap \mathcal{C}^1([0,T];B_{l,r}^{s'-1})$$
for every $s'<s$ when $r=+\infty$ and $s'=s$ whereas $r<+\infty$.
\end{thm}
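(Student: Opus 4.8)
The plan is to follow the by-now classical programme for quasilinear transport equations in Besov spaces (in the spirit of \cite{D3,ZhouSM,ZMW}), adapted to the higher-order nonlinearities of (1.1). The first move is to keep the system in transport form, writing the two equations as
$$m_t + v^p m_x = -a v^{p-1}v_x m, \qquad n_t + u^q n_x = -b u^{q-1}u_x n,$$
so that $m$ and $n$ are transported along the velocity fields $v^p$ and $u^q$ respectively. The whole argument will be run at the level of $(u,v)$ rather than the nonlocal form, exploiting the equivalence $\|u\|_{B^{s}_{l,r}}\approxeq\|m\|_{B^{s-2}_{l,r}}$ and similarly for $v,n$; this is what makes the forcing terms and the velocities $v^p$, $u^q$ land in the right spaces.

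Next I would set up a Friedrichs-regularized iteration scheme: starting from $(u^{(0)},v^{(0)})=(u_0,v_0)$, define $(u^{(k+1)},v^{(k+1)})$ by solving the two linear transport equations with coefficients frozen at step $k$. Applying the a priori estimate for the linear transport equation (the lemma recalled earlier in the paper) to $m^{(k+1)}$ at regularity $s-2$ requires controlling the velocity fields $(v^{(k)})^p$ and the forcing terms $(v^{(k)})^{p-1}v^{(k)}_x m^{(k)}$ in the appropriate Besov norms. Here the Moser-type product estimates and the algebra property of $B^s_{l,r}$ for $s>1/l$ enter: powers such as $(v^{(k)})^p$ are bounded by $\|v^{(k)}\|_{B^s_{l,r}}^p$, and the restrictions $s>\max\{2+\tfrac1l,\tfrac52,3-\tfrac1l\}$ together with $s\neq 3+\tfrac1l$ are exactly what guarantee these estimates hold at the required indices. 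Feeding these into the transport estimate yields a differential inequality of Bernoulli type $y'\le 2\kappa C\, y^{\kappa+1}$ for $y=\|u^{(k+1)}\|_{B^s_{l,r}}+\|v^{(k+1)}\|_{B^s_{l,r}}$, whose integration gives precisely the quoted bound and hence a uniform bound on $[0,T]$ for any $T<\big(2\kappa C(\|u_0\|_{B^s_{l,r}}+\|v_0\|_{B^s_{l,r}})^\kappa\big)^{-1}$.

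With uniform bounds in hand, the next step is to show that $(u^{(k)},v^{(k)})$ is a Cauchy sequence in the weaker space $\mathcal C([0,T];B^{s-1}_{l,r})$: I would estimate the differences $u^{(k+1)}-u^{(k)}$ via the transport estimate one regularity level down, using the uniform $B^s$ bounds to close the recursion. The limit $(u,v)$ then lives in $\mathcal C([0,T];B^{s-1}_{l,r})$, and a Fatou/interpolation argument promotes it to $L^\infty([0,T];B^s_{l,r})$; re-reading the equation recovers $\partial_t u\in B^{s-1}_{l,r}$ and hence the stated time-continuity, with the $r=\infty$ case only yielding continuity into $B^{s'}_{l,r}$ for $s'<s$. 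Uniqueness follows from the same transport estimate applied to the difference of two solutions followed by Gronwall's inequality.

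I expect the main obstacle to be twofold. First, the higher-order, non-quadratic nonlinearities force careful Moser and composition estimates for the powers $v^p$, $u^q$ and for the mixed products, which is exactly where this problem departs from the Camassa-Holm analyses in \cite{D3,ZhouSM,ZMW}; extracting the sharp $\kappa=\max\{p,q\}$ dependence in the Bernoulli inequality is the delicate bookkeeping. Second, the continuity of the data-to-solution map at the endpoint $s'=s$ (for $r<\infty$) is the subtlest point: continuity for $s'<s$ is immediate from interpolation between the uniform $B^s$ bound and the $B^{s-1}$ convergence, but the endpoint needs a finer argument --- either a Bona--Smith-type approximation by smooth data or Danchin's continuity result for the transport flow --- to upgrade convergence to the full $B^s_{l,r}$ topology.
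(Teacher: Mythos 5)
Your proposal follows essentially the same route as the paper: the transport-form reformulation exploiting $\|u\|_{B^s_{l,r}}\approxeq\|m\|_{B^{s-2}_{l,r}}$, a Friedrichs-regularized iteration on linear transport equations, the Bernoulli-type inequality with $\kappa=\max\{p,q\}$ giving the uniform bound on $[0,T]$, convergence of the iterates in $\mathcal{C}([0,T];B^{s-1}_{l,r})$, uniqueness via the difference estimate and Gronwall, and interpolation plus a regularized-data (viscosity/Bona--Smith-type) argument for continuity at the endpoint $s'=s$ when $r<\infty$. The only cosmetic differences are that the paper initializes the iteration at zero with truncated data $S_{k+1}m(0)$ and treats the excluded critical index by a separate interpolation step, neither of which changes the substance.
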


\begin{rem}
When $l=r=2$, the Besov space $B^s_{l,r}$
coincides with the Sobolev space $H^s$. Thus under the condition
$(u_0,v_0)\in H^s\times H^s$ with $s>\frac{5}{2}$, i.e.
$m_0,n_0\in
 H^s$ with $s>\frac{1}{2}$, the above theorem implies
 that there exists a time $T>0$ such that the initial-value problem (1.1) has a unique
 solution $m,n\in \mathcal{C}([0,T];H^s)\cap \mathcal{C}^1([0,T];H^{s-1}) $, and the map $(m_0,n_0) \mapsto (m,n)$ is continuous from a
 neighborhood of $(m_0,n_0)$ in $H^s\times H^s$ into $\mathcal{C}([0,T];H^s)\cap \mathcal{C}^1([0,T];H^{s-1})\times \mathcal{C}([0,T];H^s)\cap \mathcal{C}^1([0,T];H^{s-1})$.
\end{rem}

It is well known that $B_{2,2}^s(\mathbb{R})=H^s$ and for any $s' <5/2<s$:$$H^s\hookrightarrow B^{\frac{5}{2}}_{2,1}\hookrightarrow
H^{\frac{5}{2}}\hookrightarrow
B^{\frac{5}{2}}_{2,\infty}\hookrightarrow H^{s'}, $$ which shows that $H^s$ and $B_{2,2}^s$ are quite close. So,  we may establish the local well-posedness in the critical Besov space $B^{5/2}_{2,1}\times B^{5/2}_{2,1}$.

\begin{thm}\label{result2}
Assume that the initial data $z_0\doteq(u_0,v_0) \in
B^{\frac{5}{2}}_{2,1}\times B^{\frac{5}{2}}_{2,1}$. Then there exists a unique solution $z=(u,v)$
and a maximal time $T=T(z_0)>0$ to the Cauchy problem (1.1) such
that
  $$z=z(\cdot,z_0)\in \mathcal{C}([0,T];B^{\frac{5}{2}}_{2,1})\cap \mathcal{C}^1([0,T];B^{\frac{3}{2}}_{2,1})\times \mathcal{C}([0,T];B^{\frac{5}{2}}_{2,1})\cap \mathcal{C}^1([0,T];B^{\frac{3}{2}}_{2,1}).$$
Moreover, the solution continuously depends  on the initial data, i.e., the mapping
$$z_0\mapsto z(\cdot,z_0):B^{\frac{5}{2}}_{2,1}\times B^{\frac{5}{2}}_{2,1} \mapsto \mathcal{C}([0,T];B^{\frac{5}{2}}_{2,1})\cap \mathcal{C}^1([0,T];B^{\frac{3}{2}}_{2,1})\times\mathcal{ C}([0,T];B^{\frac{5}{2}}_{2,1})\cap \mathcal{C}^1([0,T];B^{\frac{3}{2}}_{2,1})$$ is continuous.
\end{thm}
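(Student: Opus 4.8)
The plan is to follow the Friedrichs/iterative scheme already used to establish Theorem~\ref{result1}, but to push the regularity index down to the critical value $s=5/2$ by working throughout with the third index $r=1$; this endpoint choice is precisely what makes the transport estimates close at the critical level. Since $\|u\|_{B^{5/2}_{2,1}}\approx\|m\|_{B^{1/2}_{2,1}}$ and $\|v\|_{B^{5/2}_{2,1}}\approx\|n\|_{B^{1/2}_{2,1}}$, I would phrase everything in terms of the momenta $(m,n)$, which satisfy the pure transport equations $m_t+v^pm_x=-av^{p-1}v_xm$ and $n_t+u^qn_x=-bu^{q-1}u_xn$ at regularity level $B^{1/2}_{2,1}$. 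I would define $(m^{(j+1)},n^{(j+1)})$ as the solution of the linear transport system with speeds $(v^{(j)})^p,(u^{(j)})^q$ and reaction coefficients built from $(u^{(j)},v^{(j)})=(K*m^{(j)},K*n^{(j)})$, starting from a low-frequency truncation of the data. Because $B^{5/2}_{2,1}$ is an algebra (here $5/2>1/2=d/p$) and $p,q\in\mathbb{Z}^+$, Moser-type product and composition estimates control $\|v^p\|_{B^{5/2}_{2,1}}$ and $\|v^{p-1}v_x\|_{B^{3/2}_{2,1}}$ by powers of $\|(u^{(j)},v^{(j)})\|_{B^{5/2}_{2,1}}$. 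Feeding these into the transport estimate in $B^{1/2}_{2,1}$, valid because $1/2$ lies strictly inside the admissible interval $(-1/2,3/2)=(-d/p',1+d/p)$, reproduces exactly the differential inequality displayed before Theorem~\ref{result1} with $\kappa=\max\{p,q\}$, hence a uniform existence time $T>0$ and a uniform bound for $(m^{(j)},n^{(j)})$ in $\mathcal{C}([0,T];B^{1/2}_{2,1})$.

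To obtain convergence I would estimate the successive differences $\delta m^{(j)}=m^{(j+1)}-m^{(j)}$ and $\delta n^{(j)}$ one derivative below, i.e.\ in $B^{-1/2}_{2,1}$ (equivalently $\delta u^{(j)},\delta v^{(j)}$ in $B^{3/2}_{2,1}$). These differences solve transport equations whose right-hand sides, after rewriting the speed and reaction differences, are products of a factor in $B^{5/2}_{2,1}$ with a difference in $B^{-1/2}_{2,1}$; the critical transport estimate at the \emph{lower} endpoint $s=-d/p'=-1/2$ then yields a geometric contraction, so $(m^{(j)},n^{(j)})$ is Cauchy in $\mathcal{C}([0,T];B^{-1/2}_{2,1})$. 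The endpoint $s=-1/2$ is admissible only because $r=1$, and this is the single place where the criticality is genuinely felt.

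Interpolating the uniform $B^{1/2}_{2,1}$ bound against the $B^{-1/2}_{2,1}$ convergence gives a limit $(m,n)$ that is bounded in $B^{1/2}_{2,1}$; passing to the limit identifies it as a solution of (1.1), and Danchin's continuity property for the transport equation at the critical index upgrades $L^\infty_T\,B^{1/2}_{2,1}$ to $\mathcal{C}([0,T];B^{1/2}_{2,1})$, whence $(u,v)\in\mathcal{C}([0,T];B^{5/2}_{2,1})$ and, reading off $u_t,v_t$ from the equations (each lying in $B^{3/2}_{2,1}$), $\mathcal{C}^1([0,T];B^{3/2}_{2,1})$. Uniqueness follows by applying the same $B^{-1/2}_{2,1}$ difference estimate and Gronwall's inequality to two solutions sharing the same data. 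For continuous dependence I would run the difference argument between solutions issued from two data sets, obtaining Lipschitz continuity of the data-to-solution map into $\mathcal{C}([0,T];B^{3/2}_{2,1})$, and then recover continuity into the strong topology $\mathcal{C}([0,T];B^{5/2}_{2,1})$ by a Bona--Smith regularization: approximate the data by smooth data, apply Theorem~\ref{result1} to the regularized problems together with the uniform bounds, and control the error through interpolation and the continuity lemma.

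The crux lies in the contraction step together with the time-continuity and stability: all three rest on the transport estimate holding at the critical endpoints $s=\pm 1/2$ with $r=1$, where the $H^s$ (that is, $r=2$) theory breaks down. In particular, upgrading $L^\infty_T$ to $\mathcal{C}$ in the critical space $B^{5/2}_{2,1}$, and proving continuity of the solution map into $B^{5/2}_{2,1}$ rather than merely Lipschitz bounds in the weaker $B^{3/2}_{2,1}$ norm, is the delicate part, and is where the Bona--Smith-type mollification argument is indispensable.
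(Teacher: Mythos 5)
Your overall architecture (Friedrichs scheme, uniform bounds at the critical level, convergence one derivative below, then upgrading) matches the paper's, but the step on which everything else rests contains a genuine error. You claim that the successive differences can be contracted in $B^{-1/2}_{2,1}$ (for the momenta), asserting that "the endpoint $s=-d/p'=-1/2$ is admissible only because $r=1$." This is backwards: in the transport estimates of Danchin that both you and the paper rely on, the \emph{upper} endpoint $s=1+d/p$ is the one that requires the third index $r=1$, while the \emph{lower} endpoint $s=-d\min\{1/p,1/p'\}$ is admissible only for $r=\infty$. So the a priori estimate you invoke in $B^{-1/2}_{2,1}$ is simply not available, and the "geometric contraction" in $\mathcal{C}([0,T];B^{-1/2}_{2,1})$ does not follow. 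This is precisely why the paper's Lemma 2.3(ii) establishes the Cauchy property in $\mathcal{C}([0,T];B^{3/2}_{2,\infty})$ (i.e.\ $B^{-1/2}_{2,\infty}$ for $m,n$), not in the $r=1$ space.

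A second, related gap: even in $B^{3/2}_{2,\infty}$ the stability estimate at critical regularity is not of Gronwall/Lipschitz type. The product and commutator estimates at this endpoint lose a logarithm, which is why the paper's Lemma 2.4 (modelled on Proposition 1 of Danchin's note \cite{D23}) produces only the Osgood-type bound (2.7), with a modulus of continuity of the form $\bigl(\|P_0-Q_0\|_{B^{3/2}_{2,\infty}}/e\bigr)^{\exp(c\int_0^T Z\ln(e+Z)\,d\tau)}$ rather than a plain exponential. Your proposal asserts Lipschitz continuity of the data-to-solution map into $\mathcal{C}([0,T];B^{3/2}_{2,1})$ and then feeds this into a Bona--Smith argument; both the uniqueness step and the continuity step therefore inherit the missing Osgood ingredient, which is the one place where the criticality of $B^{5/2}_{2,1}$ is genuinely felt. (For the final passage to continuity in the strong $B^{5/2}_{2,1}$ topology, your Bona--Smith route is a legitimate alternative to the paper's Kato-type decomposition in Lemma 2.5, but it cannot be assessed until the contraction and stability steps are repaired.)
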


Next, we show that the solutions of Eq.(1.1) can only have singularities wherever the wave breaking occurs.
\begin{thm}\label{result3}
Let $(m,n)\in H^s(\mathbb{R})$ with $s>\frac{1}{2}$, and
$(m,n)$ be the corresponding solution to the initial-value problem (1.1). Assume that
$T^*_{m_0,n_0}>0$ is the maximum time of existence for the solution of
the problem (1.1). If $T^*_{p_0,q_0}<\infty,$ then
\begin{equation*}
\int^{T^*_{m_0,n_0}}_0\left(\|  n\|_{L^\infty}^{p } +\| m\|_{L^\infty}\|n\|^{p -1}_{L^\infty} +\|  m\|_{L^\infty}^{q } +\| n\|_{L^\infty}\|m\|^{q-1}_{L^\infty}\right) d\tau<\infty,
\end{equation*}
\end{thm}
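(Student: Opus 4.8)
The assertion is a continuation (blow-up) criterion, and the approach I would take is the one dictated by the transport structure of (1.1): derive a single a priori estimate for $\|m\|_{H^s}+\|n\|_{H^s}$ whose growth rate is exactly the integrand appearing in the statement, and then tie the finiteness of its time integral to the maximality of $T^*$ through the local theory of Theorem~\ref{result1}. Since the result is phrased for $(m,n)\in H^s$ with $s>\tfrac12$ (equivalently $(u,v)\in H^{s+2}$, so that the local well-posedness recorded in Remark 1.1 applies), I would work in $H^s$ directly on the two transport equations rather than on the nonlocal form, exploiting $\|u\|_{H^{s+2}}\approx\|m\|_{H^s}$ and $\|v\|_{H^{s+2}}\approx\|n\|_{H^s}$.

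Writing $\Lambda=(1-\partial_x^2)^{1/2}$, I would apply $\Lambda^s$ to $m_t+v^pm_x+av^{p-1}v_xm=0$, pair with $\Lambda^s m$ in $L^2$, and treat the two contributions separately. The transport term is handled by the Kato--Ponce commutator estimate together with the integration-by-parts identity $\int v^p\Lambda^s m_x\,\Lambda^s m=-\tfrac12\int(v^p)_x(\Lambda^s m)^2$, which produces the factor $\|(v^p)_x\|_{L^\infty}\lesssim\|v\|_{L^\infty}^{p-1}\|v_x\|_{L^\infty}$; the reaction term $av^{p-1}v_xm$ is handled by the Moser product estimate in $H^s$. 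Using that $v=(1-\beta^2\partial_x^2)^{-1}n$ gives $\|v\|_{L^\infty},\|v_x\|_{L^\infty}\lesssim\|n\|_{L^\infty}$ (and symmetrically $\|u\|_{L^\infty},\|u_x\|_{L^\infty}\lesssim\|m\|_{L^\infty}$), one collects precisely
\[
\frac{d}{dt}\big(\|m\|_{H^s}+\|n\|_{H^s}\big)\le C\,\mathcal{Q}(t)\,\big(\|m\|_{H^s}+\|n\|_{H^s}\big),\quad \mathcal{Q}=\|n\|_{L^\infty}^{p}+\|m\|_{L^\infty}\|n\|_{L^\infty}^{p-1}+\|m\|_{L^\infty}^{q}+\|n\|_{L^\infty}\|m\|_{L^\infty}^{q-1}.
\]
Gronwall's inequality then bounds $\|m(t)\|_{H^s}+\|n(t)\|_{H^s}$ by $(\|m_0\|_{H^s}+\|n_0\|_{H^s})\exp(C\int_0^t\mathcal{Q}\,d\tau)$ for every $t<T^*$.

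The argument is then completed by contradiction with the maximality of $T^*$, testing directly the finiteness of the integral that the statement concerns: if $\int_0^{T^*}\mathcal{Q}\,d\tau$ were finite, the Gronwall bound would keep $\|m\|_{H^s}+\|n\|_{H^s}$ uniformly bounded up to $t=T^*$, so $(u,v)$ would admit a limit in $H^{s+2}\times H^{s+2}$ at $T^*$; feeding this limit back into Theorem~\ref{result1} would continue the solution beyond $T^*$, contradicting that $T^*$ is the maximal existence time. The main obstacle I anticipate is not the abstract scheme but the bookkeeping forced by the high-order nonlinearities: unlike the quadratic CH or cubic Novikov cases, each power $v^p,v^{p-1}v_x,u^q,u^{q-1}u_x$ must be expanded through repeated Moser and algebra estimates, with every constant kept independent of $t$, so that exactly the four terms of $\mathcal{Q}(t)$ --- and no spurious higher powers of the $H^s$ norm --- emerge on the right-hand side. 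A secondary delicate point is the sharpness of the commutator estimate at the borderline regularity $s>\tfrac12$, which must be invoked in its precise (rather than integer-order) form.
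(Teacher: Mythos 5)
Your overall skeleton coincides with the paper's: $H^s$ estimates on the $(m,n)$ transport equations with every coefficient reduced to $L^\infty$ norms via $u=G*m$, $v=G*n$ and $G=\tfrac12 e^{-|x|}$, then Gronwall and a continuation argument against the maximality of $T^*$. But the core of your proposal --- ``a single a priori estimate'' obtained by applying $\Lambda^s$ and Kato--Ponce uniformly for all $s>\tfrac12$ --- does not close, and the point you defer to the end (``the commutator in its precise form'') is in fact the crux rather than a secondary detail. Estimating $[\Lambda^s,v^p]\partial_x m$ by the commutator lemma (Lemma 3.2 of the paper) produces, besides the good term $\|\partial_x(v^p)\|_{L^\infty}\|m\|_{H^s}$, the term $\|v^p\|_{H^s}\,\|\partial_x m\|_{L^\infty}$. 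For $\tfrac12<s\le\tfrac32$ the factor $\|\partial_x m\|_{L^\infty}$ is simply not controlled by $\|m\|_{H^s}$; for $s>\tfrac32$ it is, but then the contribution is of the type $\|n\|_{L^\infty}^{p-1}\|n\|_{H^{s-2}}\|m\|_{H^s}^2$ --- a product of two solution norms that the quantity $\mathcal{Q}$ in your differential inequality does not control, so the estimate is no longer of the Gronwall form $\mathcal{Q}(t)\,(\|m\|_{H^s}+\|n\|_{H^s})$ with $\mathcal{Q}$ built from $L^\infty$ norms alone. No form of Kato--Ponce by itself removes this mismatch between the velocity $v^p$ and the transported quantity $m$.

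The missing idea is the paper's induction on the regularity index $s$, run in three steps. For $s\in(\tfrac12,1)$ it invokes the low-regularity transport a priori estimate (Lemma 2.2 in \cite{ZMW}, Danchin-type), which controls the transport part by $\|\partial_x(v^p)\|_{L^\infty}$ alone --- no $\|v^p\|_{H^s}\|m_x\|_{L^\infty}$ term ever appears --- while the zeroth-order term $\tfrac{a}{p}(v^p)_x m$ is treated as a source by Moser estimates, where the two-derivative gap $\|\partial_x v^p\|_{H^s}\lesssim\|n\|_{L^\infty}^{p-1}\|n\|_{H^{s-1}}$ ensures only one energy-level factor per term. For $s\in[1,2)$ the equation is differentiated once in $x$ and the low-regularity argument is re-run on $(m_x,n_x)$. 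For $k\le s<k+1$ with $k\ge2$ the equation is differentiated $k$ times, and the resulting Gronwall coefficient then involves $H^{s-\frac12+\epsilon_0}$ norms of $m,n$ rather than $L^\infty$ norms; these are bounded by appealing to the already-established case at lower regularity (together with uniqueness from Theorem 1.1), which is exactly where the bootstrap bites. Without this differentiate-and-induct mechanism you cannot simultaneously cover the full range $s>\tfrac12$ and keep the integrand of the criterion purely in $L^\infty$ norms, so as written your proof has a genuine gap even though its endgame (finite integral $\Rightarrow$ bounded $H^s$ norms $\Rightarrow$ continuation beyond $T^*$, contradiction) matches the paper's.
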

Our next result describes the precise blowup scenarios for sufficiently regular solutions to the problem (1.1), and the blowup occurs in the form of breaking
waves, namely, the solution remains bounded but its slope becomes unbounded in
finite time.  The result similar to the Camassa-Holm type equations can be found in \cite{ZMW}. 
\begin{thm}\label{result4}
  Let   $z_0=(u_0,v_0) \in  L^1 \cap H^s $ with $s>5/2$, and $T$ be the lifespan of the solution $z(x,t)=(u(x,t),v(x,t))$ to Eq.(1.1)  with the initial data $z_0 $.
If $p=2a,q=2b$, then every solution $z(x,t)$ to Eq. (\ref{Eq.(1.1)}) remains  globally regular in time.
If $p>2a$  { (or $q>2b$)}, then the corresponding solutions $z(x,t)$ blow up in a finite time if and only if  $(v^p)_x$ { (or $(u^q)_x$) }  is unbounded at $-\infty$ in a finite time.
If $p<2a$ { (or $q<2b$)}, then
the corresponding solution  $z(x,t)$ blows up in a finite time if and only if the slope of the function $(v^p)_x$  { (or $(u^q)_x$)}  tends to $+\infty$ 
in a finite time.
\end{thm}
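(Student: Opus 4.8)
The plan is to base the whole argument on the method of characteristics for the two transport equations of system (1.1), together with a conservation law encoded in the cross-coupled structure. Writing $av^{p-1}v_x=\tfrac{a}{p}(v^{p})_x$ and $bu^{q-1}u_x=\tfrac{b}{q}(u^{q})_x$, I would introduce the two flows $q=q(t,\xi)$, $\tilde q=\tilde q(t,\eta)$ defined by $\partial_t q=v^{p}(t,q)$, $\partial_t\tilde q=u^{q}(t,\tilde q)$ with $q(0,\xi)=\xi$, $\tilde q(0,\eta)=\eta$. Along the first flow one has $\tfrac{d}{dt}m(t,q)=-\tfrac{a}{p}(v^{p})_x\,m$ and $\tfrac{d}{dt}q_\xi=(v^{p})_x\,q_\xi$, so $m(t,q)=m_0(\xi)\,q_\xi^{-a/p}$ with $q_\xi=\exp\!\big(\int_0^t(v^{p})_x\,d\tau\big)>0$, and symmetrically $n(t,\tilde q)=n_0(\eta)\,\tilde q_\eta^{-b/q}$. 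To produce the conservation law I would multiply the $m$-equation by $s|m|^{s-1}\operatorname{sgn}m$, integrate in $x$ and integrate the convection term by parts; the surviving integral carries the factor $(sa-p)$, so $s=p/a$ (resp. $s=q/b$) gives the conserved quantities $\|m(t)\|_{L^{p/a}}=\|m_0\|_{L^{p/a}}$ and $\|n(t)\|_{L^{q/b}}=\|n_0\|_{L^{q/b}}$. Since $u=G*m$, $v=G*n$ with $G=\tfrac12 e^{-|x|}$, Young's inequality then gives $\|u\|_{W^{1,\infty}}\lesssim\|m\|_{L^{p/a}}$ once $p\ge a$ and $\|v\|_{W^{1,\infty}}\lesssim\|n\|_{L^{q/b}}$ once $q\ge b$.

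First I would treat the global case $p=2a,\ q=2b$. Then $p/a=q/b=2$, and since $\|(1-\partial_x^2)v\|_{L^2}=\|v\|_{H^2}$, conservation of $\|m\|_{L^2},\|n\|_{L^2}$ is precisely conservation of $\|u\|_{H^2}$ and $\|v\|_{H^2}$. The embedding $H^2\hookrightarrow C^1$ then bounds $\|v\|_{L^\infty},\|v_x\|_{L^\infty},\|u\|_{L^\infty},\|u_x\|_{L^\infty}$ for all time, hence $\|(v^{p})_x\|_{L^\infty}\le p\|v\|_{L^\infty}^{p-1}\|v_x\|_{L^\infty}$ and $\|(u^{q})_x\|_{L^\infty}$ are bounded on every finite interval. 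It follows that $q_\xi\ge e^{-CT}>0$ and $\tilde q_\eta\ge e^{-CT}>0$, so $\|m\|_{L^\infty},\|n\|_{L^\infty}$ remain finite and the integral in the blow-up criterion (Theorem~\ref{result3}) is finite for every $T$; the solution is therefore global.

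For $p\ne2a$ (and symmetrically $q\ne2b$) I would prove the characterisation rather than a dichotomy. By Theorem~\ref{result3} the solution can fail to exist at a finite $T^*$ only if $\|m(t)\|_{L^\infty}$ or $\|n(t)\|_{L^\infty}$ becomes unbounded. As $a/p>0$, the identity $m(t,q)=m_0(\xi)q_\xi^{-a/p}$ shows $\|m(t)\|_{L^\infty}\to\infty$ is equivalent to $\inf_\xi q_\xi\to0$, i.e. to $\inf_\xi\int_0^t(v^{p})_x(\tau,q)\,d\tau\to-\infty$, which is exactly the failure of $(v^{p})_x$ to stay bounded along the flow; the same argument for $n$ yields the $(u^{q})_x$ alternative. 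Conversely, if $(v^{p})_x$ and $(u^{q})_x$ stay bounded on $[0,T]$ then both Jacobians stay bounded below and no blow-up occurs. The threshold $p=2a$ enters through the exponent $\tfrac{a}{p}-\tfrac12$ in the conserved density, whose sign fixes the direction in which $(v^{p})_x$ must degenerate—unbounded below when $p>2a$, and the oppositely signed form when $p<2a$.

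The step I expect to be the real obstacle is the cross-coupling. Because $v$ is transported by $u^{q}$ rather than by $v^{p}$, differentiating $(v^{p})_x$ along the $v^{p}$-flow does not close into a single-equation Riccati law $\tfrac{d}{dt}(v^{p})_x=-\big((v^{p})_x\big)^2+\text{l.o.t.}$; instead one meets mixed terms of the form $-(u^{q})_x(v^{p})_x+(v^{p}-u^{q})(v^{p})_{xx}+\partial_x\big(pv^{p-1}H\big)$, with $H=(1-\partial_x^2)^{-1}[\cdots]$ the nonlocal part of the $v$-evolution. Verifying that the conserved $L^{p/a}$ and $L^{q/b}$ norms dominate this nonlocal remainder in each regime of $p$ versus $2a$, so that the Jacobian characterisation is sharp and the stated sign is the correct one, is the delicate part; the clean flow identities are what then drive both the global statement and the two implications of the criterion.
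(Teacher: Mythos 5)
Your treatment of the case $p=2a$, $q=2b$ is sound and is essentially the paper's argument: your $L^{s}$ identity at $s=2$ gives conservation of $\|m\|_{L^2}$ and $\|n\|_{L^2}$, hence control of $\|u\|_{H^2}$, $\|v\|_{H^2}$ and the $C^1$ norms, which feed into the blow-up criterion (the paper routes this through its Theorem 3.1 rather than through the Jacobian, but the content is the same). The characteristic identities $m(t,\phi)\,\phi_x^{a/p}=m_0$ and the conservation of $\int|m|^{p/a}dx$ that you derive are precisely the paper's Lemmas 3.3--3.4; the paper, however, reserves them for Theorems 1.5--1.6 and does not use them in the proof of Theorem 1.4.

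The gap is in the characterization for $p\ne 2a$. Your criterion rests on the pointwise identity $m(t,q)=m_0(\xi)\,q_\xi^{-a/p}$, which says $\|m(t)\|_{L^\infty}\to\infty$ iff $q_\xi\to0$ somewhere, i.e. iff $\int_0^t(v^{p})_x\,d\tau\to-\infty$ along a characteristic. For $a/p>0$ this condition is identical whether $p>2a$ or $p<2a$ --- the exponent $-a/p$ never changes sign --- so this route cannot produce the stated dichotomy (degeneracy of $(v^p)_x$ from below in one regime, from above in the other). You gesture at ``the exponent $\tfrac{a}{p}-\tfrac12$ in the conserved density'' but never carry that computation out. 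The mechanism the paper actually uses is your $L^{s}$ identity evaluated at $s=2$ rather than at $s=p/a$, namely $\tfrac{d}{dt}\int m^2\,dx=\tfrac{p-2a}{p}\int m^2(v^p)_x\,dx$, where the sign of $\tfrac{p-2a}{p}$ determines which one-sided bound on $(v^p)_x$ yields a Gronwall inequality for $\|m\|_{L^2}$ and hence, via $\|u\|_{H^2}\le\|m\|_{L^2}$ and Theorem 3.1, precludes blow-up; the converse direction is again Theorem 3.1 plus Sobolev embedding. The Riccati-type closure along characteristics that you flag as ``the real obstacle'' is never needed: the paper works only with the integrated quantity $\int m^2(v^p)_x\,dx$ and one-sided bounds, so the cross-coupling and the nonlocal terms never have to be estimated pointwise. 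To repair your argument, replace the $L^\infty$ Lagrangian characterization by the signed $L^2$ identity, or equivalently write $\|m\|_{L^2}^2=\int m_0^2\,q_\xi^{(p-2a)/p}\,d\xi$ in Lagrangian variables, where the exponent does change sign across $p=2a$.
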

Let us now give some sufficient conditions for the global existence   of solutions to Eq.(1.1).
\begin{thm}\label{result5}
Let  $u_0 \in  H^s \cap W^{2,\frac{p}{a}} $ and $v_0 \in  H^s \cap W^{2,\frac{q}{b}} $ with $s>5/2$ and $0\leq a\leq  p $, $0\leq b\leq q$. Then the solution to Eq.(1.1)  remains smooth for all time.
\end{thm}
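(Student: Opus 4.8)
The plan is to reduce the whole statement to the sharp blow-up scenario of Theorem \ref{result4}: a smooth solution of (1.1) can cease to exist in finite time only if one of the gradients $(v^p)_x$ or $(u^q)_x$ becomes unbounded. Hence it suffices to produce an a priori bound, uniform on every finite time interval, for $\|v\|_{L^\infty}+\|v_x\|_{L^\infty}$ and $\|u\|_{L^\infty}+\|u_x\|_{L^\infty}$, because then $|(v^p)_x|\le p\|v\|_{L^\infty}^{p-1}\|v_x\|_{L^\infty}$ and $|(u^q)_x|\le q\|u\|_{L^\infty}^{q-1}\|u_x\|_{L^\infty}$ stay finite, so Theorem \ref{result4} forbids blow-up in all three of its cases and forces $T=\infty$.

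The engine behind these bounds is a pair of conserved $L^\gamma$ quantities extracted from the transport structure. Multiplying the first equation of (1.1) by $\gamma|m|^{\gamma-1}\mathrm{sgn}(m)$, integrating over $\mathbb{R}$, using $\gamma|m|^{\gamma-1}\mathrm{sgn}(m)m_x=\partial_x(|m|^\gamma)$, and integrating the flux term by parts, I get $\frac{d}{dt}\int_{\mathbb{R}}|m|^\gamma\,dx=(p-a\gamma)\int_{\mathbb{R}}v^{p-1}v_x|m|^\gamma\,dx$, the boundary terms vanishing because $m\in H^{s-2}\hookrightarrow C_0$ (valid since $s>5/2$) and $v$ is bounded. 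Choosing $\gamma=p/a$—admissible as a genuine exponent $\gamma\ge1$ exactly because $0\le a\le p$—annihilates the right-hand side, giving $\|m(t)\|_{L^{p/a}}=\|m_0\|_{L^{p/a}}$; the identical computation on the second equation with $\gamma=q/b$ yields $\|n(t)\|_{L^{q/b}}=\|n_0\|_{L^{q/b}}$. (When $a=0$ or $b=0$ the exponent degenerates to $L^\infty$, and the same conservation follows at once from the characteristic representation $m(q_1(t,x),t)=m_0(x)$.) These norms are finite because $u_0\in W^{2,p/a}$ forces $m_0=u_0-u_{0,xx}\in L^{p/a}$ and $v_0\in W^{2,q/b}$ forces $n_0\in L^{q/b}$.

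Next I convert these conserved quantities into pointwise control through the Green's function. Since $m=(1-\partial_x^2)u$ and $n=(1-\partial_x^2)v$, one has $u=G*m$, $v=G*n$, $u_x=G'*m$, $v_x=G'*n$ with $G(x)=\tfrac12 e^{-|x|}$, and both $G,G'$ lie in every $L^r(\mathbb{R})$. Young's inequality then gives $\|u\|_{L^\infty}+\|u_x\|_{L^\infty}\le\big(\|G\|_{L^{(p/a)'}}+\|G'\|_{L^{(p/a)'}}\big)\|m\|_{L^{p/a}}$, with $(p/a)'$ the conjugate exponent, and analogously $\|v\|_{L^\infty}+\|v_x\|_{L^\infty}\lesssim\|n\|_{L^{q/b}}$. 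Combined with the conservation of the previous step, these quantities are bounded for all $t$ by constants depending only on $\|m_0\|_{L^{p/a}}$, $\|n_0\|_{L^{q/b}}$ and the fixed kernel $G$. This bounds $(v^p)_x$ and $(u^q)_x$ uniformly in time, and Theorem \ref{result4} then yields global smoothness.

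The main technical obstacle is the rigorous derivation of the conserved $L^{p/a}$ and $L^{q/b}$ norms: for a non-integer exponent $\gamma>1$ the map $t\mapsto|t|^\gamma$ is only $C^1$, so I would first establish the energy identity for the smooth local solution (or through a mollification/regularization and a passage to the limit), checking that $|m|^\gamma$ and $v^p|m|^\gamma$ decay fast enough to legitimize the integration by parts; the endpoint $\gamma=1$ (that is, $a=p$) needs the usual $\mathrm{sgn}$-approximation of $|m|$. Everything else—the convolution estimates and the final appeal to the blow-up scenario of Theorem \ref{result4}—is routine once these two conservation laws are secured.
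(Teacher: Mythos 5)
Your proof is correct and its overall architecture coincides with the paper's: both arguments hinge on the conservation of $\|m(t)\|_{L^{p/a}}$ and $\|n(t)\|_{L^{q/b}}$, convert this into $\mathcal{C}^1$ control of $u$ and $v$, and then invoke the blow-up criterion to conclude globality. The one substantive difference is how the conservation law is obtained. You derive it by an Eulerian $L^\gamma$ energy estimate (multiplying by $\gamma|m|^{\gamma-1}\mathrm{sgn}(m)$ and choosing $\gamma=p/a$ to kill the term $(p-a\gamma)\int v^{p-1}v_x|m|^\gamma dx$), which forces you to confront the regularization issues you honestly flag: $|t|^\gamma$ is only $C^1$ for non-integer $\gamma$, and the endpoint $\gamma=1$ needs a $\mathrm{sgn}$-approximation. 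The paper instead extracts the same identity from Lemma 3.4, i.e.\ from the pointwise Lagrangian relation $m(t,\phi(t,x))\,\phi_x^{a/p}(t,x)=m_0(x)$ together with the change of variables $\phi_x\,dx=d\phi$; this is a purely algebraic manipulation along characteristics and sidesteps all of the differentiability caveats, which is why the paper's proof of Theorem 1.5 is only a few lines. Your subsequent step (Young's inequality with $G=\tfrac12 e^{-|x|}$, $G'\in L^r$ for every $r$) is the explicit one-dimensional version of the paper's appeal to $W^{2,p/a}\subset\mathcal{C}^1$, so these are interchangeable. One small point: you route the conclusion through Theorem 1.4, whose statement carries the extra hypothesis $z_0\in L^1\cap H^s$ that Theorem 1.5 does not assume; it is cleaner to close the argument with Theorem 3.1, since the quantity you have bounded, $\bigl(\|u\|_{L^\infty}^{q-1}+\|v\|_{L^\infty}^{p-1}\bigr)\bigl(\|u_x\|_{L^\infty}+\|v_x\|_{L^\infty}\bigr)$, is exactly the $\Gamma$ appearing there.
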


As per \cite{CHIP}, 
the CCCH system might not be completely
integrable. However, it does have 
peakon and multi-peakon solutions which display 
interesting dynamics with both propagation and oscillation. Furthermore, if the two initial values $u_0$ or $v_0$ of 
in Eq. (1.1) have a compact support,
then the compact property
will be inherited by $u$ and  $v$ at all times $t\in[0,T)$.
\begin{thm}\label{result6}
{ Assume that $(u_0,v_0)\in H^s\times H^s$ with $s>5/2$, and $m_0=(1-\partial_x^2)u_0$ (or $n_0=(1-\partial_x^2)v_0$)  have a compact
support. Let $T=T(u_0)>0$ be the maximal existence time of the unique solutions $u(x,t)$ and $v(x,t)$
to Eq. (1.1) with initial data $u_0(x)$ and $ v_0(x)$. Then for any $t\in[0,T)$ the $\mathcal{C}^1$ function $x\mapsto m(x,t)$ (or $x\mapsto n(x,t)$)
also has a compact support.}
\end{thm}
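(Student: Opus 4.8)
The plan is to track the support of $m$ along the Lagrangian flow generated by the transport velocity $v^{p}$. Since $(u,v)$ is the local solution furnished by Theorem~\ref{result1} (specialized to $H^{s}=B^{s}_{2,2}$), we have $v\in\mathcal{C}([0,T);H^{s})$ with $s>5/2$, so by the embedding $H^{s}\hookrightarrow\mathcal{C}^{2}$ the functions $v$ and $v_{x}$ are bounded and Lipschitz in $x$, uniformly on $[0,T']$ for each $T'<T$. First I would introduce the flow $q(t,x)$ defined by
\begin{equation*}
\frac{\partial q}{\partial t}(t,x)=v^{p}\big(t,q(t,x)\big),\qquad q(0,x)=x.
\end{equation*}
The boundedness and Lipschitz continuity of $v^{p}(t,\cdot)$ guarantee, via the Picard--Lindel\"of theorem, that $q$ exists, is unique, and is $\mathcal{C}^{1}$ in $(t,x)$ on $[0,T')\times\mathbb{R}$.

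Next I would differentiate the flow relation in $x$. From $q_{x}(0,x)=1$ and $\partial_{t}q_{x}=p\,v^{p-1}(t,q)\,v_{x}(t,q)\,q_{x}$ one integrates to obtain
\begin{equation*}
q_{x}(t,x)=\exp\!\left(\int_{0}^{t}p\,v^{p-1}\big(s,q(s,x)\big)\,v_{x}\big(s,q(s,x)\big)\,ds\right)>0,
\end{equation*}
so that $q(t,\cdot)$ is an increasing $\mathcal{C}^{1}$-diffeomorphism of $\mathbb{R}$ for each fixed $t\in[0,T')$. Rewriting the first equation of (1.1) as $m_{t}+v^{p}m_{x}=-a\,v^{p-1}v_{x}\,m$ and computing the derivative along characteristics gives
\begin{equation*}
\frac{d}{dt}\,m\big(t,q(t,x)\big)=-a\,v^{p-1}\big(t,q(t,x)\big)\,v_{x}\big(t,q(t,x)\big)\,m\big(t,q(t,x)\big),
\end{equation*}
a linear homogeneous ODE in $t\mapsto m(t,q(t,x))$ whose solution is
\begin{equation*}
m\big(t,q(t,x)\big)=m_{0}(x)\,\exp\!\left(-a\int_{0}^{t}v^{p-1}\big(s,q(s,x)\big)\,v_{x}\big(s,q(s,x)\big)\,ds\right).
\end{equation*}
Because the exponential factor never vanishes, $m(t,q(t,x))=0$ holds exactly when $m_{0}(x)=0$.

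Finally I would convert this into a statement about supports. Since $q(t,\cdot)$ is a homeomorphism, the identity above yields
\begin{equation*}
\operatorname{supp}m(t,\cdot)=q\big(t,\operatorname{supp}m_{0}\big).
\end{equation*}
Integrating the flow equation gives $|q(t,x)-x|\leq\int_{0}^{t}\|v(s)\|_{L^{\infty}}^{p}\,ds\leq T'\sup_{[0,T']}\|v(s)\|_{L^{\infty}}^{p}<\infty$, so every characteristic undergoes a uniformly bounded displacement on $[0,T']$; this finite propagation speed forces $q\big(t,\operatorname{supp}m_{0}\big)$ to stay inside a bounded, hence compact, interval whenever $\operatorname{supp}m_{0}$ is compact. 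This proves the claim for $m$, and the statement for $n$ follows verbatim by running the same argument on the second equation of (1.1), using the flow generated by $u^{q}$ in place of $v^{p}$ and the constant $b$ in place of $a$.

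The main obstacle is the regularity bookkeeping rather than any single hard estimate: one must verify that $H^{s}\hookrightarrow\mathcal{C}^{2}$ (valid for $s>5/2$) genuinely makes $v^{p}$ and $v^{p-1}v_{x}$ Lipschitz, so that both $q$ and its $x$-derivative are legitimately $\mathcal{C}^{1}$ and the differentiation under the integral sign defining $q_{x}$ is justified. The $\mathcal{C}^{1}$ regularity of the profile $x\mapsto m(t,x)$ is then inherited from that of $m_{0}$ through the representation formula and the diffeomorphism property of $q(t,\cdot)$. Once these points are secured, the transport and ODE steps are entirely routine.
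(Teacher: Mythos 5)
Your proposal is correct and takes essentially the same route as the paper: its Lemmas 3.3 and 3.4 construct exactly your flow (there called $\phi$), prove $\phi_x>0$ via the same exponential formula, and establish the conservation law $m(t,\phi(t,x))\,\phi_x^{a/p}(t,x)=m_0(x)$, which is precisely your representation formula since $\phi_x^{a/p}=\exp\bigl(a\int_0^t v^{p-1}v_x\,ds\bigr)$. The paper's proof of Theorem 1.6 then concludes, just as you do, that $\operatorname{supp}m(t,\cdot)$ is carried to the compact interval $[\phi(a,t),\phi(b,t)]$.
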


Tracking back to 
the proof of the well-posedness and non-uniform dependence problem for the CH equation \cite{HK,HKM,HMP}, the DP equation \cite{HH1}, the $b$-equation \cite{Gray,HH3,HM}, the Novikov equation \cite{HH2}, the FORQ equation \cite{HM22}, and the two-component of Camassa-Holm equation \cite{Thompson}, we plan to use the Galerkin-type approximation method 
to deal with 
Eq.(1.1).
In comparison 
with the  CH equation, the Novikov equation,
 and the two-component CH system, we observe that the well-posedness property for these CH type equations holds for $s>3/2$ whereas the well-posedness problem for  Eq.(1.1)  holds
for $s>5/2$. This difference 
may be caused 
by the presence of the
extra high-order derivative  terms  $u^{q-1}u_xv_{xx}$ and $v^{p-1}v_xu_{xx}$
in the  nonlocal form (see Eq. (3.18) below).
These terms 
make  Eq.(1.1)  not being treated as a first-order ODE.
The novelty in the proof of well-posedness for Eq.(1.1) is the following fact:
  $||m||_{H^{s-2}}=||u||_{H^s}$ and $||n||_{H^{s-2}}=||v||_{H^s}$, which provide the delicate proof for several required nonlinear estimates we need.
  Let us give the well-posedness of the Cauchy problem for Eq.(1.1) in Sobolev spaces in the sense of Hadamard below.

\begin{thm}\label{result7}
Assume that  $z_0=(u_0,v_0)\in H^s \times H^s$ with $s>5/2$. Let $T^*$
be the maximal
existence time of the solution $z=(u,v)$ to equation (1.1) with the initial data $z_0$ for both periodic and non-periodic cases. Then $T^*$
satisfies
\begin{equation}
\begin{split}
T^*\geq T_0:= \frac{2^\kappa-1}{2^{\kappa+1}\kappa C_s||  z_0||_{H^{s  } }^{\kappa} },
\end{split}
\end{equation}
where $C_s$ is a constant depending on $s$ and $\kappa=\max\{p,q\}$. Moreover, we have
\begin{equation}
||u(t)||_{H^s}\leq 2 ||u_0||_{H^s}, \text { for } 0\leq t \leq T_0.
\end{equation}
\end{thm}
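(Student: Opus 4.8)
The plan is to reduce the whole statement to a single scalar differential inequality for the combined energy
$$E(t) := \|u(t)\|_{H^s} + \|v(t)\|_{H^s},$$
and then to compare it against the explicit solution of a Bernoulli ODE. By the local well-posedness already obtained in the Sobolev setting (Theorem \ref{result1} and the Remark that follows it, taken at $l=r=2$, since $H^s = B^s_{2,2}$), for $z_0 \in H^s\times H^s$ with $s>5/2$ there is a unique maximal solution $z=(u,v)\in \mathcal{C}([0,T^*);H^s\times H^s)$. It therefore suffices to produce a quantitative bound on $E(t)$ that both forbids blow-up before the claimed $T_0$ and yields the doubling estimate (1.13).

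The crux is the $H^s$ energy estimate. I would exploit the exact identities $\|u\|_{H^s}=\|m\|_{H^{s-2}}$ and $\|v\|_{H^s}=\|n\|_{H^{s-2}}$, apply $\Lambda^{s-2}=(1-\partial_x^2)^{(s-2)/2}$ to the two transport equations, and pair the results in $L^2$ with $\Lambda^{s-2}m$ and $\Lambda^{s-2}n$. For the convection term I would split
$$\langle \Lambda^{s-2}(v^p m_x),\Lambda^{s-2}m\rangle = \langle v^p\,\Lambda^{s-2}m_x,\Lambda^{s-2}m\rangle + \langle [\Lambda^{s-2},v^p]\,m_x,\Lambda^{s-2}m\rangle,$$
integrate by parts in the first piece to get $\tfrac12\|(v^p)_x\|_{L^\infty}\|\Lambda^{s-2}m\|_{L^2}^2$ and bound the commutator by the Kato--Ponce estimate, while the remaining term $a\langle\Lambda^{s-2}(v^{p-1}v_x m),\Lambda^{s-2}m\rangle$ is handled by a Moser-type product estimate in $H^{s-2}$. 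Here $s>5/2$ is exactly what makes $s-2>1/2$, so that $H^{s-2}$ is an algebra and embeds into $L^\infty$ (on both $\mathbb{R}$ and $\mathbb{T}$), allowing every $L^\infty$ factor to be replaced by an $H^{s-2}$, hence an $H^s$, norm. The powers $v^p$ and $u^q$ produce the exponent $\kappa=\max\{p,q\}$, since each costs a factor $\|v\|_{H^s}^{p}$ or $\|u\|_{H^s}^{q}$. Summing the two component estimates gives a differential inequality of the form
$$\frac{d}{dt}\big(\|u\|_{H^s}+\|v\|_{H^s}\big)\le 2C_s\big(\|u\|_{H^s}+\|v\|_{H^s}\big)^{\kappa+1},$$
which is precisely the infinitesimal form of the a priori bound quoted before Theorem \ref{result1}, read at $l=r=2$.

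I would then integrate this Bernoulli inequality. Setting $W=E^{-\kappa}$ turns $E'\le 2C_sE^{\kappa+1}$ into $W'\ge -2\kappa C_s$, so $W(t)\ge W(0)-2\kappa C_s t$ and hence
$$E(t)\le \frac{E(0)}{\big(1-2\kappa C_s E(0)^\kappa t\big)^{1/\kappa}},\qquad E(0)=\|u_0\|_{H^s}+\|v_0\|_{H^s}=\|z_0\|_{H^s},$$
reproducing the quoted bound. The right-hand side stays finite exactly while $1-2\kappa C_s E(0)^\kappa t>0$; combined with the continuation/blow-up criterion (Theorem \ref{result3}), this prevents $T^*$ from lying below the time where the denominator degenerates. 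To pin down $T_0$ and (1.13) at once, I would impose $E(t)\le 2E(0)$: this holds as long as $(1-2\kappa C_sE(0)^\kappa t)^{1/\kappa}\ge \tfrac12$, i.e. $2\kappa C_sE(0)^\kappa t\le 1-2^{-\kappa}$, which is exactly $t\le T_0=\frac{2^\kappa-1}{2^{\kappa+1}\kappa C_s\|z_0\|_{H^s}^\kappa}$. On $[0,T_0]$ the energy therefore remains below $2\|z_0\|_{H^s}$, giving both $T^*\ge T_0$ and, via $\|u(t)\|_{H^s}\le E(t)\le 2\|z_0\|_{H^s}$, the doubling bound (1.13).

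The main obstacle is the energy estimate of the second step, and in particular the handling of the top-order contributions responsible for the threshold $s>5/2$ (rather than $s>3/2$ as for the scalar CH and Novikov equations); the whole argument rests on working with $(m,n)$ in $H^{s-2}$ through $\|u\|_{H^s}=\|m\|_{H^{s-2}}$, so that no term forces a derivative one cannot afford. A secondary technical point is the passage from the Galerkin/Friedrichs approximations, for which the bound is first derived, to the genuine $H^s$ solution; this is legitimized by the already-established local well-posedness and by the uniform-in-$k$ character of the estimate. The periodic and non-periodic cases require no separate treatment, since the commutator, Moser, and embedding estimates employed are valid on $\mathbb{T}$ and $\mathbb{R}$ alike.
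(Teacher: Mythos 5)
Your proposal is correct and follows essentially the same route as the paper: the heart of the argument in both cases is the $H^{s-2}$ energy estimate on $(m,n)$ (equivalently the $H^{s}$ estimate on $(u,v)$ via $\|u\|_{H^s}=\|m\|_{H^{s-2}}$), obtained by the commutator/integration-by-parts splitting of the convection term and Moser-type bounds for the remaining products, which yields the Bernoulli inequality $\frac{d}{dt}\|z\|_{H^s}\lesssim \|z\|_{H^s}^{\kappa+1}$ and, after explicit integration, the stated $T_0$ and the doubling bound. The only (organizational) difference is that the paper runs this estimate on the Friedrichs-mollified system (4.1) and then constructs the solution as a limit of the $z_\epsilon$, whereas you invoke the Besov well-posedness of Theorem 1.1 for existence and a continuation argument, relegating the mollification to a remark; both are legitimate for the statement as quoted.
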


Furthermore, 
based on the method of approximate solutions and well-posedness estimates, we show that the data-to-solution map is continuous, but not uniformly continuous on any bounded subset of $H^s\times H^s$ with $s>5/2$. In particular, the novelty of our proof 
does not employ 
any conserved quantity, which is a big difference from the proof for the
CH equation \cite{HK,HKM} and two-component of Camassa-Holm system \cite{Thompson} relying on the conservation laws in the $H^1$ norm
and also from the proof for the DP  equation \cite{HH1} dependent on the conservation laws in a twisted $L^2$
norm.  Inspired by the proof of non-uniform dependence for the Novikov equation \cite{HH2} and the FORQ equation \cite{HM22}, Our
proof is based on the technique of approximate solutions containing terms of
both high and low frequencies. We show that the error between the solution of the Cauchy problem (1.1) and the solution of the Cauchy problem with the approximate initial data composed of low and high frequencies
may be negligible. Due to the different degrees of nonlinearities ($p\neq q$) in Eq.(1.1), the low and high frequency parts of the approximate solutions $(u^{\omega,\lambda}, v^{\omega,\lambda} )$ (see (5.1) - (5.3) below), are more complicated than the usual ones taken in \cite{HK,HKM,Thompson,HH1,HH2,HM22}, but our procedure in this paper may simplify computations.

\begin{thm}\label{result8}
 If $ s > 5/2$, then the data-to-solution map $z_0=(u_0,v_0)\rightarrow z(t)=(u(t),v(t))$ for both the periodic and the non-periodic cases of Eq.(1.1)  is not uniformly continuous from any bounded subset
of $H^s\times H^s$ into $\mathcal{C} ([0, T ]; H^s
 )\times \mathcal{C} ([0, T ]; H^s
 )$.
\end{thm}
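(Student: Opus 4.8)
The plan is to prove non-uniform dependence by exhibiting two sequences of solutions whose initial data stay bounded and become arbitrarily close, yet whose solutions remain bounded apart at a fixed positive time. Following the approximate-solutions technique of \cite{HH2,HM22}, I would construct a family of approximate solutions $(u^{\omega,\lambda},v^{\omega,\lambda})$ (the ones announced in (5.1)--(5.3)) consisting of a low-frequency part scaled by a small parameter together with a high-frequency oscillation of the form $\lambda^{-s}\cos(\lambda x-\omega t)$, where $\lambda\to+\infty$ indexes the sequence and $\omega\in\{0,1\}$ (or $\pm1$) distinguishes the two sequences. First I would verify that these approximate data are uniformly bounded in $H^s\times H^s$ and that, for the two choices of $\omega$, the data converge together in $H^s$ as $\lambda\to\infty$ (typically the difference is $O(\lambda^{-1})$ in the $H^s$ norm because the leading high-frequency terms coincide). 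Simultaneously one checks that the two approximate solutions stay a fixed distance apart at $t>0$: the $H^s$ distance is bounded below by a positive constant independent of $\lambda$, coming from the phase shift $\omega t$ in the leading oscillatory term.

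The second main step is to control the error between the genuine solutions and the approximate ones. I would denote by $(u^{\omega,\lambda},v^{\omega,\lambda})$ the actual solutions of (1.1) emanating from the approximate initial data and estimate the residual obtained by plugging the ansatz into the equation; the goal is to show this residual is small in a high-order norm $H^\sigma$ with $\sigma>s$, so that after an energy estimate the error $\|u^{\omega,\lambda}_{\mathrm{actual}}-u^{\omega,\lambda}_{\mathrm{approx}}\|_{H^s}\to0$ as $\lambda\to\infty$, uniformly on $[0,T]$. This is where Theorem \ref{result7} is essential: the lifespan bound $T^*\geq T_0$ together with the a~priori bound (1.2) guarantees that all the solutions exist on a common time interval $[0,T]$ and remain uniformly bounded, which is exactly what makes the energy/Gronwall argument for the error close. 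The delicate point is that Eq.(1.1) carries nonlinearities of degree $\kappa+1$ with $\kappa=\max\{p,q\}$ and the extra high-order terms $u^{q-1}u_xv_{xx}$, $v^{p-1}v_xu_{xx}$ flagged in the introduction; these must be handled by commutator and product estimates in Besov/Sobolev spaces, using repeatedly the equivalence $\|u\|_{H^s}\approx\|m\|_{H^{s-2}}$ to keep the loss of derivatives under control.

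The final step assembles the pieces by the triangle inequality. Writing $z^{1,\lambda}$ and $z^{0,\lambda}$ (or $z^{\pm,\lambda}$) for the two genuine solution sequences, I would estimate
\begin{equation*}
\|z^{1,\lambda}(t)-z^{0,\lambda}(t)\|_{H^s\times H^s}\geq \|z^{1,\lambda}_{\mathrm{approx}}(t)-z^{0,\lambda}_{\mathrm{approx}}(t)\|_{H^s\times H^s}-\sum_{\omega}\|z^{\omega,\lambda}(t)-z^{\omega,\lambda}_{\mathrm{approx}}(t)\|_{H^s\times H^s}.
\end{equation*}
The first term on the right is bounded below by a positive constant times $|\sin(\omega t)|$ (hence bounded away from $0$ for a suitable fixed $t\in(0,T]$), while the subtracted error terms tend to $0$ by the previous step; meanwhile the difference of the two initial data tends to $0$. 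This contradicts uniform continuity and proves the theorem for both the periodic and non-periodic cases, the two cases differing only in whether the high-frequency profile is $\cos(\lambda x)$ on $\mathbb{R}$ (handled by a cutoff to ensure $H^s$ membership) or on $\mathbb{T}$.

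The hard part will be the error estimate in the middle step: because the degrees $p$ and $q$ can differ and are arbitrary positive integers, the nonlinear terms generate many cross products of high- and low-frequency factors, and bounding the residual sharply enough (order $\lambda^{-\varepsilon}$ for some $\varepsilon>0$ in $H^s$) requires careful bookkeeping of how each power of the oscillatory factor contributes to the frequency support. The second-order terms $u^{q-1}u_xv_{xx}$ and $v^{p-1}v_xu_{xx}$ are the most dangerous, since the extra $x$-derivative on the high-frequency factor produces a power of $\lambda$ that must be absorbed by the $\lambda^{-s}$ prefactor; keeping $s>5/2$ is precisely what makes this absorption succeed.
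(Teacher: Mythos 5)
Your overall architecture (two-parameter family of approximate solutions with low- and high-frequency parts, residual estimate, error control via well-posedness, triangle inequality with a $|\sin t|$ lower bound) is the same as the paper's. But the middle step as you describe it would fail. You propose to show the residual is small in a high-order norm $H^{\sigma}$ with $\sigma>s$ and then run an energy estimate to get the error small in $H^{s}$. This cannot work: the high-frequency profile $\lambda^{-\delta/(2p)-s}\phi(x/\lambda^{\delta/p})\cos(\lambda x-\omega^{p}t)$ has $H^{\sigma}$ norm of order $\lambda^{\sigma-s}\to\infty$ for any $\sigma>s$, and the residual inherits this growth. Moreover, the energy estimate for the error at level $H^{\sigma}$ requires bounds on the known solutions at level $H^{\sigma+1}$ (the error equation contains terms like $V\,\partial_{x}u_{2}\sum v_{1}^{p-1-i}v_{2}^{i}$, which cost one extra derivative on $u_{2}$), and those norms also blow up in $\lambda$. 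The same objection applies to your closing remark that the residual should be $O(\lambda^{-\varepsilon})$ in $H^{s}$ itself.

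The missing idea is the interpolation argument that is the actual engine of the paper's proof (and of the Himonas--Kenig/Himonas--Holliman scheme you cite). One estimates the residual and the approximate-versus-actual error only in a \emph{lower} norm $H^{\theta}$ with $3/2<\theta<s-1$, where Theorem 5.1 and Proposition 5.1 give decay $\lambda^{-\theta_{s}}$ with $\theta_{s}=1+s-\delta-\theta>0$ (the constraint $\theta+1\leq s$ is exactly what lets the well-posedness bounds close the Gronwall loop). Separately, the well-posedness estimate (1.13) at the \emph{higher} level $k=[s]+2$ shows both the approximate and actual solutions, hence their difference, are $O(\lambda^{k-s})$ in $H^{k}$. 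Interpolating between $H^{\theta}$ and $H^{k}$ then yields
\begin{equation*}
\|u^{\omega,\lambda}(t)-u_{\omega,\lambda}(t)\|_{H^{s}}\lesssim \lambda^{-(1-\delta)(k-s)/(k-\theta)},
\end{equation*}
which decays precisely when $\delta<1$ — a constraint your outline never surfaces because it skips the interpolation. Without this step your triangle-inequality conclusion has no justified bound on the subtracted error terms at the exact regularity $H^{s}$, so the proof does not close as written.
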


In the following, we show that Eq.(1.1) also admits peaked solitary wave and multi-peaked solitray wave solutions.

\begin{thm}\label{result9}
For $c>0$ and $p,q\geq1$, the single peaked solitary wave   takes on the form
\begin{align}
&\text{on the line: } u(t,x)=\alpha e^{-|x-ct-x_{0}|},v(t,x)=\beta e^{-|x-ct-x_{0}|};\\
&\text{on the circle: } u(t,x)=  \frac{\alpha}{\cosh(\pi)}\cosh ([x-ct]_{\pi}-\pi),v(t,x)=\frac{\beta}{\cosh(\pi)} \cosh ([x-ct]_{\pi}-\pi),
\end{align}
which are global weak solutions { if and only if} $\alpha=c^{1/q},\beta=c^{1/p}$. Moreover,
Eq.(1.1) has multi-peaked solitary wave solutions in the form of
\begin{align}
&\text{in the non-periodic case: }  u(t,x)=\sum_{i=1}^M f_i(t)e^{-|x-g_i(t)|}, v(t,x)=\sum_{j=1}^N h_j(t)e^{-|x-k_j(t)|};\\
&\text{in the  periodic case: } u(t,x)=\sum_{i=1}^M f_i(t)\cosh([x-g_i(t)]_{\pi}-\pi),  v(t,x)=\sum_{j=1}^N h_j(t)\cosh([x-k_i(t)]_{\pi}-\pi),
\end{align}
whose peaked positions $g_i(t),h_j(t)$ and amplitudes $f_i(t),k_j(t)$ satisfy the following dynamical system
\begin{equation}
\begin{split}
\dot{g}_i&=v^p(g_i),\dot{f}_i=(p-a)  v^{p-1}(g_i)\langle v_x(g_i) \rangle f_i,\\
\dot{k}_j&=u^q(k_j),\dot{h}_j=(q-b) u^{q-1}(k_j)\langle u_x(k_j)\rangle h_j,
\end{split}
\end{equation}
with $\langle f(x) \rangle=\frac{1}{2}(f(x^-)+f(x^+))$ and $ [x-ct]_{\pi} $ defined by Eq. (1.10).
\end{thm}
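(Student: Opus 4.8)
The plan is to verify directly that the proposed profiles solve (1.1) in the weak (distributional) sense. The starting point is that $K(x)=\tfrac12 e^{-|x|}$ is the Green's function of $1-\partial_x^2$ on the line (and $\tfrac{\cosh([x]_\pi-\pi)}{2\sinh\pi}$ on the circle), so that $(1-\partial_x^2)e^{-|x-\xi|}=2\delta(\cdot-\xi)$ in $\mathcal{D}'$. Consequently the single-peak profiles $u=\alpha e^{-|x-\xi|}$, $v=\beta e^{-|x-\xi|}$ with $\xi=ct+x_0$ correspond to the momenta $m=2\alpha\,\delta(\cdot-\xi)$ and $n=2\beta\,\delta(\cdot-\xi)$, while the multi-peak ansatz corresponds to $m=2\sum_i f_i\,\delta(\cdot-g_i)$ and $n=2\sum_j h_j\,\delta(\cdot-k_j)$. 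First I would fix the notion of weak solution, namely that $(u,v)\in C([0,T);H^1)^2$ solves (1.1) after pairing against a test function $\phi$; the pairing is what gives rigorous meaning to the products of the Dirac masses with the merely Lipschitz coefficients $v^p,\,u^q,\,v_x,\,u_x$, and it is precisely this pairing that forces the one-sided limits to enter through the symmetric average $\langle f(x)\rangle=\tfrac12\bigl(f(x^-)+f(x^+)\bigr)$.

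For the single peakon I would substitute $m=2\alpha\,\delta(\cdot-\xi)$ into the first equation of (1.1), use $m_t=-2\alpha c\,\delta'(\cdot-\xi)$, and apply the distributional product rules $\psi\,\delta'(\cdot-\xi)=\psi(\xi)\delta'(\cdot-\xi)-\langle\psi'(\xi)\rangle\delta(\cdot-\xi)$ for continuous $\psi=v^p$ and $\psi\,\delta(\cdot-\xi)=\langle\psi(\xi)\rangle\delta(\cdot-\xi)$ for $\psi=a v^{p-1}v_x$. The equation then collapses to
\begin{equation*}
2\alpha\bigl(v^p(\xi)-c\bigr)\delta'(\cdot-\xi)-2\alpha\bigl(\langle(v^p)_x(\xi)\rangle-a\,v^{p-1}(\xi)\langle v_x(\xi)\rangle\bigr)\delta(\cdot-\xi)=0 .
\end{equation*}
Since the profile is even about its peak, $\langle v_x(\xi)\rangle=0$, so the $\delta$-coefficient vanishes automatically, while the normalization makes $v^p(\xi)=\beta^p$; the $\delta'$-coefficient then vanishes if and only if $\beta^p=c$. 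Repeating the computation for the second equation yields $\alpha^q=c$, which gives the asserted equivalence $\alpha=c^{1/q},\ \beta=c^{1/p}$, and global validity is immediate because the profile merely translates. The circle case is identical once $e^{-|x|}$ is replaced by $\tfrac1{\cosh\pi}\cosh([x]_\pi-\pi)$, after checking the single corner at $[x-ct]_\pi=0$.

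For the multi-peakon I would run the same substitution with $m=2\sum_i f_i\,\delta(\cdot-g_i)$, so that
\begin{equation*}
m_t=2\sum_i\bigl(\dot f_i\,\delta(\cdot-g_i)-f_i\dot g_i\,\delta'(\cdot-g_i)\bigr),
\end{equation*}
and expand $v^p m_x$ and $a v^{p-1}v_x m$ with the same product rules, now evaluated at each $g_i$. Because the peaks of $u$ and of $v$ are generically at distinct points, $\langle v_x(g_i)\rangle$ need not vanish. Matching the coefficient of $\delta'(\cdot-g_i)$ gives the position law $\dot g_i=v^p(g_i)$, and matching the coefficient of $\delta(\cdot-g_i)$, together with $\langle(v^p)_x(g_i)\rangle=p\,v^{p-1}(g_i)\langle v_x(g_i)\rangle$, yields $\dot f_i=(p-a)v^{p-1}(g_i)\langle v_x(g_i)\rangle f_i$; the second equation of (1.1) produces the $\dot k_j,\dot h_j$ laws symmetrically. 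This reproduces exactly the claimed dynamical system.

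The step I expect to be the main obstacle is the rigorous justification of these products of singular momenta with Lipschitz coefficients, i.e. giving meaning to $v^p\delta'$ and $v_x\delta$ when $v$ has corners at the peak positions. I would handle this not by formal distributional multiplication but through the weak formulation: pair with $\phi$, integrate, and move the derivatives onto $\phi$ and the almost-everywhere-defined bounded derivatives of the profiles. The corner contributions then combine into the symmetric averages $\langle\cdot\rangle$, which is what makes the bookkeeping above legitimate; verifying that no spurious boundary or corner terms survive, and that in the periodic case the wrap-around at $[x]_\pi=0$ contributes nothing beyond the single corner, is the delicate point.
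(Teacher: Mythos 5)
Your proposal is correct in substance and, for the multi-peakon and periodic parts, follows essentially the same route as the paper: the paper also writes $m=2\sum_i f_i\,\delta_{g_i}$, pairs $m_t+v^pm_x+\frac{a}{p}(v^p)_xm$ against a test function $\varphi$, and reads off the laws $\dot g_i=v^p(g_i)$ and $\dot f_i=(p-a)v^{p-1}(g_i)\langle v_x(g_i)\rangle f_i$ from the coefficients of $\partial_x\delta_{g_i}$ and $\delta_{g_i}$, with the one-sided averages entering exactly as you describe. Where you diverge is the single peakon on the line: you run the same $\delta$--$\delta'$ calculus on the $m$-equation, whereas the paper avoids the ill-defined products $v^p\delta'$ and $v_x\delta$ entirely by passing to the nonlocal form (6.1) and computing the convolution $I_1(u,v)=\frac12\int e^{-|x-y|}[\,\cdots]\,dy$ explicitly on the regions $x<ct$ and $x>ct$, obtaining $I_1=(-\beta^pu+v^pu)\,\mathrm{sgn}(x-ct)$ and hence the condition $\beta^p=c$, $\alpha^q=c$ without ever multiplying a distribution by a Lipschitz function. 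This is precisely the point you flag as ``the main obstacle'': the paper itself concedes that the symmetric-average assignment $\langle\cdot\rangle$ is an ad hoc definition whose rigorous justification requires the nonlocal form, so your plan to resolve it ``through the weak formulation'' is the right instinct, but the actual content of that resolution is the convolution computation you have not carried out. Your route buys uniformity (single and multi-peakons are treated identically) and transparency of the bookkeeping; the paper's route buys rigor for the ``if and only if'' claim on the line, at the cost of a longer explicit integration. For the periodic case your claim that it is ``identical'' after replacing the kernel understates the work: the paper needs the hyperbolic identities $\cosh^2=1+\sinh^2$ to reduce $\partial_x^2(v^pu_x)$ and the other cubic-in-$\cosh$ terms to a pure multiple of $\delta_{ct}'$, and it is this cancellation (not just the single corner at $[x-ct]_\pi=0$) that makes the periodic profile work.
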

 In Theorem 1.9, one may readily see that the peaked solitary wave solutions for Eq.(1.1) are similar to the peakon solutions for the CH equation.  The authors in \cite{GH,ZhouSM} already proved that  the data-to-solution map for the regular form of peakon solutions  is not uniformly continuous in Sobolev spaces with $s<3/2$. This result supplements Theorem 1.8, which shows that the data-to-solution map for
(1.1)  is not uniformly continuous on bounded subsets of $H^s$ with $s>5/2$.

Theorem 1.8 shows that 
the data-solution map depends on the initial data continuously, but sharp as established in Theorem 1.7. In other words, Eq.(1.1) is well-posed in Sobolev spaces $H^s$
both on the line and the circle for $s>5/2$, and its
data-to-solution map is continuous but not uniformly continuous.
Our next result will provide information about stability of the  data-solution
map.
Due to the presence of the high-order derivative terms  $u^{q-1}u_xv_{xx}$ and $v^{p-1}v_xu_{xx}$, we need to  transform Eq.(1.1) into a system 
through differentiating  Eq. (1.1)  
with respect to $x$ (see equation (7.2) for details).
Then transforming back to the original system leads 
the data-solution map for  Eq.(1.1)  to be H\"{o}lder continuous in
$H^r$-topology for all $0\leq r<s$. Let us describe this procedure below.

\begin{thm}\label{result10}
 Let $s>5/2$ and $0\leq r<s$. Then the data-to-solution map for Eq.(1.1) on the line and the circle is H\"{o}lder continuous in $H^s\times H^s$ under the   $H^r\times H^r$ norm. In particular,  for initial data  $ z_0=(u_{0,1},v_{0,1})$ and $w_0=(u_{0,2},v_{0,2} )$ in the ball $B(0,\rho)=\{\psi\in H^s\times H^s: ||\psi||_{H^s}\leq \rho\}$ of $H^s\times H^s$, the corresponding solutions $ z=(u_1,v_1)$ and $ w=(u_2,v_2)$  to Eq.(1.1) satisfy the following inequality
\begin{equation*}
\begin{split}
||z(t)-w(t)||_{H^{r}} \leq C_{r,s,p,q,\rho} ||z_0-w_0||_{H^{r}}^\alpha ,
\end{split}
\end{equation*}
where the parameter $\alpha$ is given by
\begin{equation}
\alpha=\left\{
\begin{array}{llll}
1, &(s,r)\in A_1\doteq\{ 0\leq r\leq 3/2,3-r\leq r \leq s-2\}\cup\{ r>3/2,  r \leq s-1\},\\
\frac{2s-3}{s-r}, &(s,r)\in A_2\doteq\{(s,r): 5/2<s<3, 0\leq r\leq 3-s \},\\
\frac{s-r}{2}, &(s,r)\in A_3\doteq\{(s,r): s>5/2,s-2\leq r\leq 3/2\},\\
s-r, &(s,r)\in A_3\doteq\{(s,r): s>5/2,s-1\leq r<s\}.
\end{array}
\right.
\end{equation}
The lifespan $T$ and the constant $c$ only depend on $s,r,p,q$ and $\rho$.
\end{thm}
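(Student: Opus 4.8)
The plan is to run a single $H^r$ energy estimate on the difference of two solutions and to trade the unavoidable derivative loss for a Hölder exponent by interpolating against the uniform $H^s$ bound supplied by Theorem~\ref{result7}. First I would fix $z_0,w_0\in B(0,\rho)$ and use Theorem~\ref{result7} to obtain the solutions $z=(u_1,v_1)$, $w=(u_2,v_2)$ on a common interval $[0,T]$, $T=T(\rho)$, with $\sup_{[0,T]}(\|z\|_{H^s}+\|w\|_{H^s})\le C\rho$. Writing $(U,V)=z-w$, I would subtract the two nonlocal forms (3.18). Since $s>5/2$ is exactly the threshold placing $u_{xx},v_{xx}\in L^\infty$, the difference system makes sense, but the terms $u^{q-1}u_xv_{xx}$ and $v^{p-1}v_xu_{xx}$ still carry two derivatives; following (7.2) I would differentiate once in $x$ so that these appear as the top-order part of a transported quantity rather than as a source, and the isomorphism $\|u\|_{H^s}\approx\|m\|_{H^{s-2}}$ lets me pass between the differentiated system and the original unknowns.

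Next I would apply $\Lambda^r=(1-\partial_x^2)^{r/2}$, pair with $\Lambda^r(U,V)$ in $L^2$, and integrate. The transport parts $v_1^p\partial_xU$, $u_1^q\partial_xV$ are controlled by integration by parts and the Kato--Ponce commutator estimate, contributing $C_{s,\rho}\|(U,V)\|_{H^r}^2$; here I use that $v_1^p,u_1^q\in H^s$ (algebra property) and $\partial_x(v_1^p),\partial_x(u_1^q)\in L^\infty$. Every remaining term is a difference of products or of convolutions, hence factors as one copy of $(U,V)$ (or a single $x$-derivative of it) times a coefficient built from $z,w$. The decisive point is the second-order piece: in the $u$-equation it carries the $H^{s-1}$-coefficient $v_2^{p-1}v_{2,x}$, namely $G\big(v_2^{p-1}v_{2,x}\,\partial_x^2U\big)$, so that $\|G(v_2^{p-1}v_{2,x}\,\partial_x^2U)\|_{H^r}=\|v_2^{p-1}v_{2,x}\,\partial_x^2U\|_{H^{r-2}}$, and multiplication by $v_2^{p-1}v_{2,x}$ preserves $H^{r-2}$ only when $r-2\ge-(s-1)$, i.e. $r\ge3-s$. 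Thus for $r\ge3-s$ this term is clean, $\le C\rho\,\|U\|_{H^r}$, and Gronwall yields the Lipschitz bound $\|(U,V)(t)\|_{H^r}\le C\|z_0-w_0\|_{H^r}$ ($\alpha=1$); the other thresholds $3/2$ (first derivatives of the difference in $L^\infty$), $s-1$ and $s-2$ arise the same way from the commutator and the first-order nonlinearities.

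The exponents in (7.1) are then the bookkeeping of which base index $r_*$ carries a clean Lipschitz estimate and how one reaches $r$ from it. When $r>r_*$ I interpolate the solution,
$$\|(U,V)(t)\|_{H^r}\le\|(U,V)(t)\|_{H^{r_*}}^{1-\lambda}\,\|(U,V)(t)\|_{H^s}^{\lambda},\qquad \lambda=\frac{r-r_*}{s-r_*},$$
and, using the Lipschitz bound at $r_*$ together with $\|(U,V)\|_{H^s}\le C\rho$, obtain $\alpha=1-\lambda=\frac{s-r}{s-r_*}$: this gives $\alpha=s-r$ from $r_*=s-1$ near the top ($A_4$), $\alpha=\tfrac{s-r}{2}$ from $r_*=s-2$ on $A_3$, and $\alpha=1$ wherever a clean estimate already holds at the level $r$ itself ($A_1$). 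When instead $r<r_*=3-s$ (possible only for $5/2<s<3$), I bound $\|(U,V)(t)\|_{H^r}\le\|(U,V)(t)\|_{H^{3-s}}\le C\|z_0-w_0\|_{H^{3-s}}$ and interpolate the data,
$$\|z_0-w_0\|_{H^{3-s}}\le\|z_0-w_0\|_{H^r}^{1-\kappa}\,\|z_0-w_0\|_{H^s}^{\kappa},\qquad \kappa=\frac{(3-s)-r}{s-r},$$
which produces $\alpha=1-\kappa=\frac{2s-3}{s-r}$ on $A_2$. The periodic case is identical once $G$ is replaced by the $\cosh$-kernel, since all estimates are Fourier-analytic.

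I expect the genuine obstacle to be region $A_2$, where $s$ is barely above $5/2$ and $r<1/2$: there $H^r\not\hookrightarrow L^\infty$ and $s-2$ only just exceeds $1/2$, so the multiplication estimates for the second-order coefficients run out of room precisely at $r=3-s$, and one must interpolate the data rather than the solution while checking that the base index $r_*=3-s$ degrades continuously and stays $\ge0$ as $s\downarrow5/2$. A secondary difficulty is that the admissible base $r_*$ is governed by different bad terms in different ranges (the $\partial_x^2U$-term fixes $3-s$, the first-order terms and the commutator fix $s-2$, $s-1$ and the embedding threshold $3/2$), so the final exponent at each $r$ is the worst over these competing constraints; assembling them into the four regions of (7.1) is the careful part, while the underlying product and commutator inequalities are routine.
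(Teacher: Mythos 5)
Your proposal is correct and follows essentially the same route as the paper: differentiate the system once in $x$ to tame the second-order terms (the paper's system (7.2) in the unknowns $(u,v,u_x,v_x)$), run an $H^r$ energy estimate with Kato--Ponce commutators and the product laws to get the Lipschitz bound in $A_1$, then interpolate the solution against the uniform $H^s$ bound for $A_3$ and $A_4$ and interpolate the data through the base index $3-s$ for $A_2$, yielding exactly the exponents $\frac{s-r}{2}$, $s-r$, and $\frac{2s-3}{s-r}$. The thresholds $3-s$, $s-2$, $s-1$, $3/2$ you identify are precisely the ones governing the paper's case analysis.
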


The entire paper is organized as follows. In next section,   we establish the local well-posedness in Besov spaces of Eq.(\ref{Eq.(1.1)}) through proving Theorems 1.1-1.2.
In section 3, we deal with the global existence and blow up phenomena of solutions to the problem (\ref{Eq.(1.1)}) and analyze the propagation behaviors of compactly supported solutions 
through proving Theorems 1.3-1.6. 
In section 4,  with the aid of the Galerkin-type approximation scheme we propose, we obtain the well-posedness of the Cauchy problem for Eq.(1.1) in Sobolev spaces in the sense of Hadamard with proving Theorem 1.7.
In section 5, in the light of the Galerkin-type approximation approach and well-posed estimates obtained in section 4,
we show that the data-to-solution map is continuous but not uniformly continuous on any bounded subset of $H^s\times H^s$ with $s>5/2$ through proving Theorem 1.8.
In section 6, the peakon and multi-peakons are derived through proving Theorem 1.9. In the last section, we demonstrate that
the solution map for Eq.(1.1) is H\"{o}lder continuous in
$H^r$-topology for all $0\leq r<s$ through proving Theorem 1.10.

\section{Local well-posedness in Besov spaces}
In this section, we shall discuss the local well-posedness of the Cauchy problem
 Eq.(1.1)  in the nonhomogeneous Besov spaces, i.e., prove Theorem 1.1 and 1.2 . The Littlewood-Paley theory and the properties
of the Besov-Sobolev spaces can refer to \cite{ZhouSM,ZMW} and references therein.

\subsection{Local well-posedness in Besov spaces $B_{l,r}^s$ }
First, we present the following definition.
 \begin{df}
For $T>0,s\in\mathbb{R}$ and $1\leq l\leq +\infty$ and $s\neq 2+\frac{1}{l}$, we define
$$E_{l,r}^s(T)\doteq \mathcal{C}([0,T];B_{l,r}^s)\cap \mathcal{C}^1([0,T];B_{l,r}^{s-1})  \quad \text{ if } r<+\infty,$$
$$E_{l,\infty}^s(T)\doteq L^\infty([0,T];B_{l,\infty}^s)\cap Lip([0,T];B_{l,\infty}^{s-1}),$$
and $E_{l,r}^s\doteq \cap_{T>0}E_{l,r}^s(T)$.
\end{df}

Next, we denote $C>0$ a generic constant only depending on $l, r, s,p,q,|a|,|b|$.
Uniqueness and continuity with respect to the initial data are an immediate consequence of the following result.
\begin{lm}
 Let $1\leq l, r\leq +\infty$ and $s>\max\{2+\frac{1}{l},\frac{5}{2},3-\frac{1}{l}\}$. Suppose that  $(u_{i}, v_{i})\in \{L^\infty([0,T];B_{l,r}^s)\cap \mathcal{C}([0,T];\mathcal{S}')\}^2$ $(i=1,2)$
be two
given solutions of the initial-value problem (1.1) with the initial data $(u_{i}(0),v_{i}(0))\in B_{l,r}^s\times B_{l,r}^s$  ($i=1,2$), and denote $u_{12}=u_{1}-u_{2},v_{12}=v_{1}-v_{2}$, and thus $m_{12}=m_{1}-m_{2},n_{12}=n_{1}-n_{2}$. Then for every $t\in[0,T]$, we have

(i) if $s>\max\{1+\frac{1}{l},\frac{3}{2}\}$ and $s\neq 4+1/p$, then
\begin{equation}
\begin{split}
\|u_{12}\|&_{B^{s-1}_{l,r}}+\|v_{12}\|_{B^{s-1}_{l,r}} \leq \left(\| u_{12} (0)\|_{B^{s-1}_{l,r}} +\| u_{12} (0)\|_{B^{s-1}_{l,r}} \right)  \exp
\left(C\int^{T}_{0}\Gamma_{s}(t,\cdot)d\tau\right) .
\end{split}
\end{equation}
where
\begin{equation*}
\Gamma_{s}(t,\cdot)=\left( \|  v_{1}  \|_{B^{s}_{l,r}}^{p} + \|  u_{1}  \|_{B^{s}_{l,r}}^{q}+\|u_{2}\|_{B^{s}_{l,r}}\sum^{p-1}_{i=0}\| v_{1}\|_{B^{s}_{l,r}}^{p-1-i}\|v_{2}\|_{B^{s}_{l,r}}^{ j} +\|v_{2}\|_{B^{s}_{l,r}}\sum^{q-1}_{j=0}\| u_{1}\|_{B^{s}_{l,r}}^{q-1-i}\|u_{2} \|_{B^{s}_{l,r}}^{ j} \right).
\end{equation*}
(ii) if   $s= 4+1/p$, then
\begin{equation*}
\begin{split}
\|u_{12}\|&_{B^{s-1}_{l,r}}+\|v_{12}\|_{B^{s-1}_{l,r}} \leq C\left(\| u_{12} (0)\|_{B^{s-1}_{l,r}} +\| u_{12} (0)\|_{B^{s-1}_{l,r}} \right) ^\theta \Gamma_{s}^{1-\theta}(t,\cdot)   \exp
\left(C\theta\int^{T}_{0}\Gamma_{s}(t,\cdot)d\tau\right),
\end{split}
\end{equation*}
where $\theta\in(0,1)$(i.e., $\theta=\frac{1}{2}(1-\frac{1}{2l})$) and $\Gamma_{s}(t,\cdot)$ as in case (i).
\end{lm}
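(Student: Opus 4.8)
The plan is to recast the difference of the two solutions as a pair of linear transport equations with source terms, then run the Besov transport estimate and close with Gronwall's inequality. Subtracting the first equation of (1.1) for $(u_1,v_1)$ and $(u_2,v_2)$, and splitting $v_1^pm_{1,x}-v_2^pm_{2,x}=v_1^p\partial_xm_{12}+(v_1^p-v_2^p)m_{2,x}$, I obtain
\begin{equation*}
\partial_tm_{12}+v_1^p\partial_xm_{12}=-(v_1^p-v_2^p)m_{2,x}-a\bigl(v_1^{p-1}v_{1,x}m_1-v_2^{p-1}v_{2,x}m_2\bigr)=:F,
\end{equation*}
and, symmetrically, $\partial_tn_{12}+u_1^q\partial_xn_{12}=:G$ with the roles of $(u,v,p,a)$ and $(v,u,q,b)$ interchanged. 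Since $m=u-u_{xx}$ and $n=v-v_{xx}$, the equivalences $\|u_{12}\|_{B^{s-1}_{l,r}}\approxeq\|m_{12}\|_{B^{s-3}_{l,r}}$ and $\|v_{12}\|_{B^{s-1}_{l,r}}\approxeq\|n_{12}\|_{B^{s-3}_{l,r}}$ reduce everything to controlling $m_{12}$ and $n_{12}$ in $B^{s-3}_{l,r}$, i.e. one derivative below the solutions themselves; this is exactly the simplification advertised in the introduction.

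First I would apply the transport estimate in $B^{s-3}_{l,r}$, which is legitimate here because the hypothesis $s>\max\{2+\tfrac1l,\tfrac52,3-\tfrac1l\}$ forces $s-3$ to lie above the lower Besov threshold $-\min\{\tfrac1l,\tfrac1{l'}\}$ required for such an estimate. This gives
\begin{equation*}
\|m_{12}(t)\|_{B^{s-3}_{l,r}}\le\exp\Bigl(C\!\int_0^tV_1\,d\tau\Bigr)\Bigl(\|m_{12}(0)\|_{B^{s-3}_{l,r}}+\int_0^t\|F(\tau)\|_{B^{s-3}_{l,r}}\,d\tau\Bigr),
\end{equation*}
where $V_1$ controls the convection velocity $v_1^p$ and satisfies $V_1\lesssim\|v_1\|_{B^s_{l,r}}^p$ by the algebra property of $B^s_{l,r}$; the analogous bound holds for $n_{12}$ with $V_2\lesssim\|u_1\|_{B^s_{l,r}}^q$. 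These two exponential rates already account for the $\|v_1\|_{B^s_{l,r}}^p$ and $\|u_1\|_{B^s_{l,r}}^q$ contributions to $\Gamma_s$.

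The core computation is the bound on $\|F\|_{B^{s-3}_{l,r}}$. I would telescope the high-order differences, writing $v_1^p-v_2^p=v_{12}\sum_{i=0}^{p-1}v_1^iv_2^{p-1-i}$ and splitting $v_1^{p-1}v_{1,x}m_1-v_2^{p-1}v_{2,x}m_2=v_1^{p-1}v_{1,x}m_{12}+(v_1^{p-1}v_{1,x}-v_2^{p-1}v_{2,x})m_2$, then repeatedly invoke the algebra property of $B^{s-2}_{l,r}$ (valid since $s>2+\tfrac1l$) together with the product law $B^{s-1}_{l,r}\times B^{s-3}_{l,r}\to B^{s-3}_{l,r}$ for the borderline factor $(v_1^p-v_2^p)m_{2,x}$. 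Converting $\|m_i\|_{B^{s-2}_{l,r}}=\|u_i\|_{B^s_{l,r}}$ and $\|n_i\|_{B^{s-2}_{l,r}}=\|v_i\|_{B^s_{l,r}}$ throughout, the terms carrying $m_2$ produce precisely the $\|u_2\|_{B^s_{l,r}}\sum_{i=0}^{p-1}\|v_1\|^{p-1-i}_{B^s_{l,r}}\|v_2\|^{i}_{B^s_{l,r}}$ block of $\Gamma_s$, while $v_1^{p-1}v_{1,x}m_{12}$ reinforces the $\|v_1\|^p_{B^s_{l,r}}$ rate. Adding the bound for $m_{12}$ to the symmetric bound for $n_{12}$ and invoking Gronwall's inequality then yields case (i).

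The hard part is the borderline index $s-3=1+\tfrac1l$, i.e. $s=4+\tfrac1l$ (written $4+1/p$ in the statement), which is exactly where the transport estimate in $B^{s-3}_{l,r}$ degenerates for $r\ne1$: the velocity-gradient norm $\|\partial_x(v_1^p)\|$ can no longer be absorbed into $L^\infty\cap B^{1/l}_{l,\infty}$ and the naive Gronwall step fails. To treat case (ii) I would avoid the critical index altogether, running the transport and product estimates at two symmetric non-critical regularities $\sigma_1<1+\tfrac1l<\sigma_2$ and recombining them through the logarithmic interpolation inequality; this is the mechanism that produces the convex-combination exponent $\theta=\tfrac12(1-\tfrac1{2l})\in(0,1)$ and the factor $\Gamma_s^{1-\theta}$. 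Beyond this endpoint, the only genuine technical obstacles are the apparent derivative loss in $(v_1^p-v_2^p)m_{2,x}$, resolved by noting that $v_1^p-v_2^p$ sits in the high-regularity algebra $B^{s-1}_{l,r}$ so that multiplication keeps the product in $B^{s-3}_{l,r}$ without loss, and the bookkeeping of the telescoped nonlinear terms, whose number grows with $p$ and $q$ and accounts for the two finite sums appearing in $\Gamma_s$.
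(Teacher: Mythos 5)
Your proposal follows essentially the same route as the paper's proof: the same transport formulation for $m_{12},n_{12}$ with the source split into $(v_1^p-v_2^p)\partial_x m_2$, a $\partial_x v_1^p\, m_{12}$ term and a telescoped difference term, the same reduction via $\|u_{12}\|_{B^{s-1}_{l,r}}\approxeq\|m_{12}\|_{B^{s-3}_{l,r}}$, the same transport/product estimates in $B^{s-3}_{l,r}$ closed by Gronwall, and the same interpolation between $B^{2+1/(2l)}_{l,r}$ and $B^{4+1/(2l)}_{l,r}$ with $\theta=\tfrac12(1-\tfrac1{2l})$ at the critical index. The only cosmetic difference is your labeling of the standard Besov convexity inequality as ``logarithmic'' interpolation; the mechanism and exponent are identical to the paper's.
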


\begin{proof}   It is obvious that
$u_{12},v_{12}\in L^\infty([0,T];B_{l,r}^s)\cap \mathcal{C}([0,T];\mathcal{S}'),$
which implies that $u_{12},v_{12}\in \mathcal{C}([0,T];B_{l,r}^{s-1})$, and $(u_{12},v_{12})$ and $m_{12},n_{12}$ solves the transport equations
\begin{equation*}
\left\{
\begin{array}{llll}
&\partial_{t}m_{12}+ v_{1} ^{p}\partial_{x} m_{12}=-[ v_{1} ^{p}- v_{2} ^{p}]\partial_{x}m_{2} -\frac{a}{p}\partial_x v_{1} ^{p} m_{12}  -\frac{a}{p}[\partial_x v_{1} ^{p}-\partial_x v_{2} ^{p}] m_{2} ,\\
&\partial_{t}n_{12}+u_{1}^{q}\partial_{x} n_{12} =-[u_{1}^{q}-u_{2}^{q}]\partial_{x}n_{2} -\frac{b}{q}\partial_xu_{1}^{q} n_{12}  -\frac{b}{q}[\partial_xu_{1}^{q}-\partial_xu_{2}^{q}] n_{2} ,\\
&m_{12}|_{t=0}=m_{12}(0)\doteq m_{1}(0)-m_{2}(0),n_{12}|_{t=0}=n_{12}(0)\doteq n_{1}(0)-n_{2}(0).
\end{array}
\right.
\end{equation*}
According to Lemma 2.2(i) in \cite{ZMW}, we have
\begin{equation}
\begin{split}
 \|m_{12}\|_{B^{s-3}_{l,r}} \leq&  \|m_{12}(0)\|_{B^{s-3}_{l,r}} +C\int_0^t\left(\|\partial_{x} v_{1} ^{p}\|_{B^{s-4}_{l,r}} +\|\partial_{x} v_{1} ^{p}\|_{B^{\frac{1}{l}}_{p,r}\cap L^\infty}\right) \|m_{12}\|_{B^{s-3}_{l,r}}d\tau\\
 &+C\int_0^t \|[ v_{1} ^{p}- v_{2} ^{p}]\partial_xm_{2}-\frac{a}{p}\partial_x v_{1} ^{p} m_{12}  -\frac{a}{p}[\partial_x v_{1} ^{p}-\partial_x v_{2} ^{p}] m_{2} \|_{B^{s-3}_{l,r}}d\tau.
\end{split}
\end{equation}
Since $s>\max\{2+\frac{1}{l},\frac{5}{2},3-\frac{1}{l}\}\geq 2+\frac{1}{l}$, we obtain
\begin{align*}
 \|\partial_{x} v_{1} ^{p}\|_{B^{s-4}_{l,r}} +\|\partial_{x} v_{1} ^{p}\|_{B^{\frac{1}{l}}_{l,r}\cap L^\infty} \leq 2\|\partial_{x} v_{1} ^{p}\|_{B^{s-2}_{l,r}}\leq C\|v_{1}\|_{B^{s}_{l,r}}^{p}.
\end{align*}
where the property $(1-\partial_x^2)\in OP (S^2)$ is applied.
Thus, by Proposition 2.2 (7) in \cite{ZMW}, we find that for all $s\in \mathbb{R}$ 
$$\|u_{i}\|_{B^{s}_{l,r}} \approxeq \|m_{i}\|_{B^{s-2}_{l,r}}\text{ and }\|v_{i}\|_{B^{s}_{l,r}} \approxeq \|n_{i}\|_{B^{s-2}_{l,r}}.$$
If $\max\{2+\frac{1}{l},\frac{5}{2}\}<s\leq 3+\frac{1}{l}$, by Proposition 2.5 (2) in \cite{ZMW} and $B^{s-2}_{l,r}$
being an algebra, we arrive at
\begin{equation}
\begin{split}
 \|[ v_{1} ^{p}&- v_{2} ^{p}]\partial_xm_{2}-\frac{a}{p}\partial_x v_{1} ^{p} m_{12}  -\frac{a}{p}[\partial_x v_{1} ^{p}-\partial_x v_{2} ^{p}] m_{2} \|_{B^{s-3}_{l,r}}   +\|\partial_x v_{1} ^{p}-\partial_x v_{2} ^{p}\|_{B^{s-3}_{l,r}} \| m_{2} \|_{B^{s-2}_{l,r}}  \\
&\leq C\left( \|  v_{1}  \|_{B^{s-1}_{l,r}}^{p} \|u_{12} \|_{B^{s-1}_{l,r}} +\|u_{2}\|_{B^{s}_{l,r}}\|v_{12}  \|_{B^{s-1}_{l,r}}\sum^{p-1}_{i=0}\| v_{1}\|_{B^{s}_{l,r}}^{p-1-i}\|v_{2} \|_{B^{s}_{l,r}}^{ i} \right).
\end{split}
\end{equation}
For $s>3+\frac{1}{l}$, the inequality (2.3) also holds true in view of the fact that $B^{s-3}_{l,r}$ is an algebra. Thus, we may derive
\begin{align*}
 \|u_{12}\|_{B^{s-1}_{l,r}} \leq&  \|u_{12}(0)\|_{B^{s-1}_{l,r}} +C\int_0^t  \left( \|  v_{1}  \|_{B^{s}_{l,r}}^{p} \|u_{12} \|_{B^{s-1}_{l,r}} +\|u_{2}\|_{B^{s}_{l,r}}\|v_{12}  \|_{B^{s-1}_{l,r}}\sum^{p-1}_{i=0}\| v_{1}\|_{B^{s}_{l,r}}^{p-1-i}\|v_{2} \|_{B^{s}_{l,r}}^{ i} \right)(\tau)d\tau.
\end{align*}
Similarly, we can also treat the inequality for the component $v$:
\begin{align*}
 \|v _{12}\|_{B^{s-1}_{l,r}} \leq&  \|v_{12}(0)\|_{B^{s-1}_{l,r}} +C\int_0^t  \left( \|  u_{1}  \|_{B^{s}_{l,r}}^{p} \|v_{12} \|_{B^{s-1}_{l,r}} +\|v_{2}\|_{B^{s}_{l,r}}\|u_{12}  \|_{B^{s-1}_{l,r}}\sum^{q-1}_{j=0}\| u_{1}\|_{B^{s}_{l,r}}^{q-1-i}\|u_{2} \|_{B^{s}_{l,r}}^{ j} \right)(\tau)d\tau.
\end{align*}
Therefore
\begin{align*}
  \|u _{12}\|_{B^{s-1}_{l,r}}  +\|v _{12}\|_{B^{s-1}_{l,r}} \leq &\|u_{12}(0)\|_{B^{s-1}_{l,r}}  + \|v_{12}(0)\|_{B^{s-1}_{l,r}}+C\int_0^t\left(  \|u_{12}  \|_{B^{s-1}_{l,r}} +\|v_{12}  \|_{B^{s-1}_{l,r}}\right)\Gamma_s(\tau,\cdot)d\tau.
\end{align*}
Applying Gronwall's lemma to the above inequality leads to (i).

For the critical case (ii) $s=4+1/p$, we here use the interpolation method to deal with it. Indeed, if we choose $\theta=\frac{1}{2}(1-\frac{1}{2l})\in (0,1)$, then $s-1=3+\frac{1}{l}=\theta (2+1/2l)+(1-\theta)(4+1/2l)$ . According to Proposition 2.2(5) in \cite{ZMW} and the above inequality, we have
\begin{align*}
  \|u _{12}\|_{B^{3+1/p}_{l,r}}  +\|v _{12}\|_{B^{3+1/p}_{l,r}} &\leq \|u _{12}\|_{B^{2+1/2l}_{l,r}}^{\theta} \|u _{12}\|_{B^{4+1/2l}_{l,r}}^{1-\theta} +\|v_{12}\|_{B^{2+1/2l}_{l,r}}^{\theta} \|v _{12}\|_{B^{4+1/2l}_{l,r}}^{1-\theta} \\
  &\leq \left(\|u _{12}\|_{B^{2+1/2l}_{l,r}}+\|v_{12}\|_{B^{2+1/2l}_{l,r}}\right)^{\theta} \left(\|u _{12}\|_{B^{4+1/2l}_{l,r}}^{1-\theta}+\|v _{12}\|_{B^{4+1/2l}_{l,r}}^{1-\theta}\right)\\
   &\leq C\left(\| u_{12} (0)\|_{B^{s-1}_{l,r}} +\| u_{12} (0)\|_{B^{s-1}_{l,r}} \right) ^\theta \Gamma_{s}^{1-\theta}(t,\cdot)   \exp
\left(C\theta\int^{T}_{0}\Gamma_{s}(t,\cdot)d\tau\right),
\end{align*}
which yields the desired result.
\end{proof}
Now, let us start the proof of Theorem 1.1, which is motivated by the proof of local existence theorem about Camassa-Holm type equations  in \cite{D22,ZMW}. We shall use the classical Friedrichs regularization method to construct the approximate solutions to  Eq.(1.1) .
\begin{lm}
Let $l, r$ and $s$ be as in the statement of Lemma 2.1. Assume that $u(0) =
v(0):=0$. There exists a sequence of smooth functions $(u_k,v_k)\in \mathcal{C}(\mathbb{R}^+; B_{p,r}^\infty)^2$ solving
$$(T_k) \quad\left\{
\begin{array}{llll}
\partial_{t}m_{k+1}+v^{p}_{k }\partial_xm_{k+1}+\frac{a}{p} \partial_xv^{p}_{k }m_{k }=0, \\
\partial_{t}n_{k+1}+u^{q}_{k }\partial_xn_{k+1}+\frac{b}{q} \partial_xu^{q}_{k }n_{k }=0,\\
m_{k+1}(0)=S_{k+1}m(0),n_{k+1}(0)=S_{k+1}n(0),
\end{array}
\right.$$
 Moreover, there is a positive time $T$ such that the solutions satisfying the following properties:

(i) $(u_k,v_k)_{k\in \mathbb{N}}$ is uniformly bounded in $E_{p,r}^{s}(T)\times E_{p,r}^{s}(T)$.

(ii) $(u_k,v_k)_{k\in \mathbb{N}}$ is a Cauchy sequence in $\mathcal{C}([0,T];B_{p,r}^{s-1})\times \mathcal{C}([0,T];B_{p,r}^{s-1})$.
\end{lm}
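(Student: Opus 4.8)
The plan is to read $(T_k)$ as a sequence of \emph{linear} transport problems and run the classical three–step iteration: first produce the smooth sequence, then bound it uniformly by comparison with a Riccati majorant, and finally show that consecutive iterates contract. I would set up the induction on $k$ for the existence of the sequence as follows. The scheme is initialised at the zeroth iterate $(u_0,v_0)\equiv 0$ (this is the meaning of $u(0)=v(0):=0$), and the prescribed datum enters only through $m_{k+1}(0)=S_{k+1}m(0)$, $n_{k+1}(0)=S_{k+1}n(0)$. Suppose $(u_k,v_k)\in\mathcal C(\mathbb R^+;B^{\infty}_{p,r})^2$. Then $v_k^p$, $u_k^q$ and the sources $\tfrac ap\partial_x v_k^p\,m_k$, $\tfrac bq\partial_x u_k^q\,n_k$ are smooth, so the two equations of $(T_k)$ are \emph{decoupled} linear transport equations for $m_{k+1}$ and $n_{k+1}$ with smooth coefficients; the transport existence theory quoted from \cite{ZMW} yields global solutions in $B^{\sigma}_{p,r}$ for every $\sigma$, and inverting the Helmholtz operators $1-\alpha^2\partial_x^2$, $1-\beta^2\partial_x^2$ returns $(u_{k+1},v_{k+1})\in\mathcal C(\mathbb R^+;B^{\infty}_{p,r})^2$, closing the induction.

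For the uniform bound (i), write $D:=\|u(0)\|_{B^s_{p,r}}+\|v(0)\|_{B^s_{p,r}}$ for the datum and $U_k(t):=\|u_k(t)\|_{B^s_{p,r}}+\|v_k(t)\|_{B^s_{p,r}}$ for the iterates, so $U_0\equiv0$. Applying the transport estimate to the $m_{k+1}$–equation in $B^{s-2}_{p,r}$, converting to the $u$–norm via the equivalence $\|u_k\|_{B^s_{p,r}}\approxeq\|m_k\|_{B^{s-2}_{p,r}}$, and using that $B^{s-2}_{p,r}$ is an algebra (here $s>2+\tfrac1p$ is used) to estimate $\|\partial_x v_k^p\|_{B^{s-2}_{p,r}}\le C\|v_k\|_{B^s_{p,r}}^p$ and $\|\partial_x v_k^p\,m_k\|_{B^{s-2}_{p,r}}\le C\|v_k\|_{B^s_{p,r}}^p\|u_k\|_{B^s_{p,r}}$, together with the analogous bound for $n_{k+1}$, I obtain (with $\kappa=\max\{p,q\}$)
\[
U_{k+1}(t)\le D+C\int_0^t U_k(\tau)^{\kappa}\bigl(U_{k+1}(\tau)+U_k(\tau)\bigr)\,d\tau .
\]
The natural majorant is the solution $f(t)=D\bigl(1-2\kappa C D^{\kappa}t\bigr)^{-1/\kappa}$ of $f'=2Cf^{\kappa+1}$, $f(0)=D$. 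Choosing $T$ as in Theorem 1.7 so that $1-2\kappa C D^{\kappa}T=2^{-\kappa}$, one has $f\le 2D$ on $[0,T]$, and an induction on $k$ — feeding $U_k\le f$ into the integral inequality, using $2C\int_0^t f^{\kappa+1}=f-D$ and Gronwall — reproduces $U_{k+1}\le f$ on $[0,T]$; the short lifespan is exactly what keeps the exponential factors generated by the transport estimate bounded and the majorant below $2D$. Finally, reading $\partial_t m_{k+1}=-v_k^p\partial_x m_{k+1}-\tfrac ap\partial_x v_k^p\,m_k\in B^{s-3}_{p,r}$ off $(T_k)$ upgrades the $L^\infty_T$ control to a uniform bound in $E^s_{p,r}(T)\times E^s_{p,r}(T)$.

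For the contraction (ii), set $\delta u_k:=u_k-u_{k-1}$, etc. Subtracting $(T_{k-1})$ from $(T_k)$, the pair $(\delta m_{k+1},\delta n_{k+1})$ solves transport equations of exactly the form treated in Lemma 2.1, with the two ``solutions'' being consecutive iterates. Running the Lemma 2.1 estimate one regularity level lower, in $B^{s-1}_{p,r}$ (equivalently $\delta m$ in $B^{s-3}_{p,r}$, which is where the hypothesis $s>3-\tfrac1p$ enters), and inserting the uniform bounds from (i), I expect a recursion
\[
\sup_{[0,T]}\bigl(\|\delta u_{k+1}\|_{B^{s-1}_{p,r}}+\|\delta v_{k+1}\|_{B^{s-1}_{p,r}}\bigr)\le C_T\,2^{-k}+C_T\int_0^t\bigl(\|\delta u_{k}\|_{B^{s-1}_{p,r}}+\|\delta v_{k}\|_{B^{s-1}_{p,r}}\bigr)\,d\tau ,
\]
where the $2^{-k}$ arises from $\delta m_{k+1}(0)=(S_{k+1}-S_k)m(0)$, whose $B^{s-3}_{p,r}$ norm decays like $2^{-k}$ and is summable in $k$. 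Iterating (a direct induction producing a $C^k t^k/k!$ factor) gives a bound summable over $k$, whence $(u_k,v_k)$ is Cauchy in $\mathcal C([0,T];B^{s-1}_{p,r})^2$.

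I expect the uniform bound in (i) to be the main obstacle: closing the estimate for a nonlinearity of degree $\kappa+1$ with \emph{different} powers $p\neq q$ in the two components. The two structural facts are indispensable here — the equivalence $\|u\|_{B^s_{p,r}}\approxeq\|m\|_{B^{s-2}_{p,r}}$, which lets one estimate on the favourable transport equation for $m$, and the algebra property of $B^{s-2}_{p,r}$ for the products $v_k^p$ and $\partial_x v_k^p\,m_k$. One must also keep the exponential factors from the transport estimate under control on $[0,T]$ so that the Riccati majorant $f$ is genuinely preserved by the iteration, and this is precisely what forces the short lifespan $T$.
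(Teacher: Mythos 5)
Your proposal is correct and follows essentially the same route as the paper: linear transport theory for the decoupled equations $(T_k)$, the algebra property of $B^{s-2}_{p,r}$ together with $\|u\|_{B^s_{p,r}}\approxeq\|m\|_{B^{s-2}_{p,r}}$ to close the uniform bound against the Riccati majorant $Z_0(1-2\kappa C t Z_0^\kappa)^{-1/\kappa}$, and a transport estimate one level lower plus the $2^{-k}$ decay of $(S_{k+j+1}-S_{k+1})m(0)$ to get the Cauchy property via the $C_T^k t^k/k!$ iteration. The only cosmetic difference is that you contract consecutive iterates while the paper estimates $u_{k+j+1}-u_{k+1}$ for arbitrary offset $j$; both yield the same conclusion.
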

\begin{proof}
Since all the data $S_{k+1}u_0$  belong to $B_{l,r}^\infty$, Lemma 2.3 in \cite{ZMW} indicates 
that for all $n\in\mathbb{N}$, the equation $(T_k)$ has a global solution in 
$\mathcal{C}(\mathbb{R}^+;B_{l,r}^\infty)^2$.

For $s>\max\{2+\frac{1}{2},\frac{5}{2},3-\frac{1}{l}\}$ and $s\neq 3+\frac{1}{l}$, 
Lemma 2.1(i) implies the following
inequality
\begin{equation*}
\begin{split}
 \|m_{k+1}\|_{B^{s-2}_{l,r}} &\leq \exp\left(C\int_0^t\left(\|\partial_{x} v_{k} ^{p}(\tau)\|_{B^{s-4}_{l,r}} +\|\partial_{x} v_{k} ^{p}(\tau)\|_{B^{\frac{1}{l}}_{p,r}\cap L^\infty}\right) d\tau\right)\|m(0)\|_{B^{s-2}_{l,r}}\\
&+C\int_0^t\exp\left(C\int_\tau^t\left(\|\partial_{x} v_{k} ^{p}(\tau')\|_{B^{s-4}_{l,r}} +\|\partial_{x} v_{k} ^{p}(\tau')\|_{B^{\frac{1}{l}}_{p,r}\cap L^\infty}\right) d\tau'\right)\|\partial_xv^{p}_{k }m_{k }\|_{B^{s-2}_{l,r}}d\tau.
\end{split}
\end{equation*}
is true for all  $k\in\mathbb{N}$.
Due to $s>2+\frac{1}{l}$, we know that $B^{s-2}_{l,r}$ is an algebra and $B^{s-2}_{l,r}\hookrightarrow L^\infty$. Thus, we have
\begin{equation*}
\begin{split}
 \|m_{k+1}\|_{B^{s-2}_{l,r}} &\leq \exp\left(C\int_0^t \|  v_{k} (\tau)\|^{p}_{B^{s }_{l,r}}   d\tau\right)\|m(0)\|_{B^{s-2}_{l,r}}\\
&+C\int_0^t\exp\left(C\int_\tau^t \|  v_{k} (\tau')\|^{p}_{B^{s }_{l,r}}  d\tau'\right)\|v_{k }\|^{p}_{B^{s }_{l,r}}\|m_{k }\|_{B^{s-2}_{l,r}}d\tau.
\end{split}
\end{equation*}
Similarly, one can easily get the estimate for $ \|n_{k+1}\|_{B^{s-2}_{l,r}}$. Thus, adding the two resulted inequalities yields
\begin{equation}
\begin{split}
& \|u_{k+1}\|_{B^{s }_{l,r}}+ \|v_{k+1}\|_{B^{s }_{l,r}} \leq \exp\left(C\int_0^t\left( \|  v_{k} (\tau)\|^{p}_{B^{s }_{l,r}}+ \|  u_{k} (\tau)\|^{q}_{B^{s }_{l,r}} \right) d\tau\right)\left(\|u(0)\|_{B^{s }_{l,r}}+\|v(0)\|_{B^{s }_{l,r}}\right)\\
&\quad+C\int_0^t\exp\left(C\int_\tau^t \left( \|  v_{k} (\tau')\|^{p}_{B^{s }_{l,r}}+ \|  u_{k} (\tau')\|^{q}_{B^{s }_{l,r}} \right) d\tau'\right)\left(\|v_{k }\|^{p}_{B^{s }_{l,r}}\|u_{k }\|_{B^{s }_{l,r}}+\|u_{k }\|^{q}_{B^{s }_{l,r}}\|v_{k }\|_{B^{s }_{l,r}}\right)d\tau\\
&\quad\leq e^{CU_k(t)}\left(\|u(0)\|_{B^{s }_{l,r}}+\|v(0)\|_{B^{s }_{l,r}}+C\int_0^te^{-CU_k(\tau)}\left(\|v_{k }\|_{B^{s }_{l,r}} +\|u_{k }\|_{B^{s }_{l,r}} \right)^{\kappa+1}d\tau \right),
\end{split}
\end{equation}
where $U_k(t):=\int_0^t\left( \|  v_{k} (\tau)\| _{B^{s }_{l,r}}+ \|  u_{k} (\tau)\|_{B^{s }_{l,r}} \right)^{\kappa} d\tau\geq\int_0^t\left( \|  v_{k} (\tau)\|^{\kappa}_{B^{s }_{l,r}}+ \|  u_{k} (\tau)\|^{\kappa}_{B^{s }_{l,r}} \right) d\tau$ and $\kappa= \max\left\{p,q\right\}$.  Choosing $0<T<\frac{1}{2\kappa C\left( \|u_0 \|_{B^{s}_{l,r}}+\|v_0\|_{B^{s}_{l,r}}\right) ^\kappa}$, and suppose by induction that for all $t\in [0,T)$
\begin{align}
 \|u_{k}(t) \|_{B^{s}_{l,r}}+\|v_{k}(t) \|_{B^{s}_{l,r}}\leq\frac{\| u_{0}\|_{B^{s}_{l,r}}+\| v_{0}\|_{B^{s}_{l,r}}}{ \left(1-2 \kappa C t(\| u_{0}\|_{B^{s}_{l,r}} +\| v_{0}\|_{B^{s}_{l,r}})^\kappa\right)^{1/\kappa} } \doteq\frac{Z_0}{ \left(1-2 \kappa C tZ_0^\kappa\right)^{1/\kappa} } .
\end{align}
Noticing
$$\exp\left(C\int_\tau^t\left( \|  v_{k} (\tau')\| _{B^{s }_{l,r}}+ \|  u_{k} (\tau')\| _{B^{s }_{l,r}} \right)^{\kappa} d\tau'\right)\leq \exp\left(\int_\tau^t\left( \frac{CZ_0^\kappa}{\left(1-2 \kappa C  Z_0^\kappa\tau'\right)} \right) d\tau'\right)=\left(\frac{1-2 \kappa C  Z_0^\kappa\tau}{1-2 \kappa C  Z_0^\kappa t}\right)^\frac{1}{2\kappa}, $$
and inserting the above inequality and (2.5) into (2.4), we obtain
\begin{align*}
 \|u_{k+1}\|_{B^{s }_{l,r}}+ \|v_{k+1}\|_{B^{s }_{l,r}}&\leq\frac{Z_0}{\left(1-2 \kappa C  Z_0^\kappa t\right)^{\frac{1}{2\kappa}}}+\frac{C}{\left(1-2 \kappa C  Z_0^\kappa t\right)^{\frac{1}{2\kappa}}} \int^{t}_{0} \left(1-2 \kappa C  Z_0^\kappa \tau\right)^{\frac{1}{2\kappa}}\frac{Z_0^{\kappa+1}}{\left(1-2 \kappa C  Z_0^\kappa \tau\right)^{\frac{\kappa+1}{\kappa}}}d\tau\\
 &\leq \frac{Z_0}{\left(1-2 \kappa C  Z_0^\kappa t\right)^{\frac{1}{2\kappa}}}+\frac{Z_0}{-2w\left(1-2 \kappa C  Z_0^\kappa t\right)^{\frac{1}{2\kappa}}} \int^{t}_{0} \frac{d(1-2 \kappa C  Z_0^\kappa \tau)}{(1-2 \kappa C  Z_0^\kappa \tau)^\frac{2\kappa+1}{2\kappa}}\\
 &\leq \frac{Z_0}{\left(1-2 \kappa C  Z_0^\kappa t\right)^{\frac{1}{2\kappa}}}+\frac{Z_0}{ \left(1-2 \kappa C  Z_0^\kappa t\right)^{\frac{1}{2\kappa}}} \left(\frac{1}{\left(1-2 \kappa C  Z_0^\kappa t\right)^{\frac{1}{2\kappa}}}-1\right)\\
 &=\frac{Z_0}{ \left(1-2 \kappa C tZ_0^\kappa\right)^{1/\kappa} },
\end{align*}
which implies that $(u_k,v_k)_{k\in\mathbb{R}}$ is uniformly bounded in $\mathcal{C}([0,T];B_{l,r}^{s}) \times   \mathcal{C}([0,T];B_{l,r}^{s})$. Using the equation $(T_k)$ and the similar argument in the proof Lemma 2.1, one can easily prove that $(\partial_tu_k,\partial_tv_k)_{k\in\mathbb{R}}$ is uniformly bounded in $\mathcal{C}([0,T];B_{l, r}^{s-1}) \times \mathcal{C}([0,T];B_{l, r}^{s-1})$. Hence, $(u_k,v_k)_{k\in\mathbb{R}}$ is uniformly bounded in $E_{l, r}^{s }(T) \times E_{l, r}^{s }(T)$.

Let us now show that $(u_k,v_k)_{n\in\mathbb{R}}$ is a Cauchy sequence in   $\mathcal{C}([0,T];B_{l,r}^{s-1})\times \mathcal{C}([0,T];B_{l,r}^{s-1})$. In fact, for all $k,j\in \mathbb{N}$, from $(T_k)$, we have
$$ \left\{
\begin{array}{llll}
(\partial_t+v_{k+j}^p\partial_x)(m_{k+j+1}-m_{k+1})=F(t,x),\\
(\partial_t+u_{k+j}^q\partial_x)(n_{k+j+1}-n_{k+1})=G(t,x),\\
\end{array}
\right.$$
with
\begin{align*}
F(t,x)=(v_{k+j}^p-v_k^p)\partial_xm_{k+1}+\frac{a}{p}(m_k-m_{k+j})\partial_xv_{k+j}^p+\frac{a}{p}m_k\partial_x(v_k^p-v_{k+j}^p),\\
G(t,x)=(u_{k+j}^q-u_k^q)\partial_xn_{k+1}+\frac{b}{q}(n_k-n_{k+j})\partial_xu_{k+j}^q+\frac{b}{q}n_k\partial_x(u_k^q-u_{k+j}^q).
\end{align*}

Apparently, we have
\begin{align*}
 \| F(t,\cdot) \|_{B^{s-3}_{l,r}} +& \| G(t,\cdot) \|_{B^{s-3}_{l,r}}\leq C\left(\| u_{k+j} -u_k  \|_{B^{s-1}_{l,r}}+ \| v_{k+j} -v_k  \|_{B^{s-1}_{l,r}} \right)H(t),
\end{align*}
with
\begin{align*}
H(t)=&\left[\left(\| u_{k}  \|_{B^{s }_{l,r}}+\| u_{k+1}  \|_{B^{s }_{l,r}}\right)\sum_{i=0}^{p-1}\| v_{k+j}   \|_{B^{s }_{l,r}}^{p-1-i}\|  v_k  \|_{B^{s }_{l,r}}^{i}+\|  v_{k+j}  \|_{B^{s }_{l,r}}^{p}\right.\\
&\quad\left.+\left(\|v_{k}  \|_{B^{s }_{l,r}}+\|v_{k+1}  \|_{B^{s }_{l,r}}\right)\sum_{i=0}^{q-1}\| u_{k+j}   \|_{B^{s }_{l,r}}^{q-1-i}\|  u_k  \|_{B^{s }_{l,r}}^{i}+\|  u_{k+j}  \|_{B^{s }_{l,r}}^{q}\right].
\end{align*}
Similar to the proof of Lemma 2.1, for $s>\max\{2+\frac{1}{l},3-\frac{1}{l},\frac{5}{2}\}$ and $s\neq 3+\frac{1}{l},4+\frac{1}{l} $, we arrive at
\begin{align*}
V_{k+1}^j(t)&\doteq\| u_{k+j+1} -u_{k+1}  \|_{B^{s-1}_{l,r}}+ \| v_{k+j+1} -v_{k+1}  \|_{B^{s-1}_{l,r}} \\
&\leq \exp\left(CU_{k+j}(t)\right)\left(\| u_{k+j+1}(0) -u_{k+1} (0) \|_{B^{s-1}_{l,r}}+ \| v_{k+j+1}(0) -v_{k+1}(0)  \|_{B^{s-1}_{l,r}} \right.\\
&\quad\left.+C\int_0^t  \exp\left(-CU_{k+j}(\tau)\right)\left(\| F(\tau) \|_{B^{s-3}_{l,r}} + \| G(\tau) \|_{B^{s-3}_{l,r}}\right) d\tau\right)\\
&\leq \exp\left(CU_{k+j}(t)\right)\left(V_{k+1}^j(0)   +C\int_0^t  \exp\left(-CU_{k+j}(\tau)\right)V_{k}^j(\tau)
H(\tau) d\tau\right).
\end{align*}
As per Proposition 2.1 in \cite{ZMW}, we have
\begin{equation*}
\begin{split}
\| u_{k+j+1}(0) -u_{k+1} (0) \|_{B^{s-1}_{l,r}}&=\| S_{k+j+1}u(0) -S_{k+1} u(0) \|_{B^{s-1}_{l,r}}=
\left\|\sum_{d=k+1}^{k+j}\Delta_du_0\right\|_{B_{l,r}^{s-1}}\\
&=\left(\sum_{k\geq -1}2^{k(s-1)r}\left\|\Delta_k\left(\sum_{d=k+1}^{k+j}\Delta_du_0\right)\right\|_{L^p}^r\right)^\frac{1}{r}\\
&\leq C\left(\sum_{d=k}^{k+j+1}2^{-dr}2^{drs}\left\|\Delta_d u_0 \right\|_{L^p}^r\right)^\frac{1}{r}\leq C2^{-k}||u_0||_{B_{l,r}^{s }}.
\end{split}
\end{equation*}
Similarly, we obtain
 $$\| v_{k+j+1}(0) -v_{k+1} (0) \|_{B^{s-1}_{l,r}}\leq C2^{-k}||v_0||_{B_{l,r}^{s }}.$$
Due to $\{u_k,v_k\}_{k\in\mathbb{N}}$ being uniformly bounded in $E_{l, r}^{s }(T) $, one may  find a positive constant $C_T$ independent of $k,i$ such that
$$V_{k+1}^j(t)\leq C_T\left(2^{-k}+\int_0^tV_{k}^j(\tau)d\tau\right), \ \forall t\in [0,T]. $$
Employing 
the induction procedure with respect to the index $n$ leads to
\begin{equation*}
\begin{split}
V_{k+1}^j(t)&\leq C_T\left(2^{-k}\sum_{i=0}^k\frac{(2TC_T)^i}{i!}+C_T^{k+1}\int\frac{(t-\tau)^k}{k!}d\tau\right)\leq 2^{-k}\left(C_T\sum_{i=0}^k\frac{(2TC_T)^i}{i!}\right)+C_T  \frac{(TC_T)^{k+1}}{(k+1)!},
\end{split}
\end{equation*}
which reveals the desired result as $k\rightarrow\infty$.
Finally, 
applying the interpolation method 
to the critical case  $s=4+\frac{1}{l}$ concludes the proof of Lemma 2.2.
\end{proof}

Finally, we prove the existence and uniqueness for  Eq.(\ref{Eq.(1.1)}) in Besov space.
\begin{proof} [Proof of Theorem \ref{result1}]
 Let us first show that $(u,v)\in E_{l,r}^{s}(T)\times E_{l,r}^{s}(T)$ solves system (1.1).
According to Lemma 2.2 and Proposition 2.2(6) in \cite{ZMW}, one may see 
$$(u,v)\in L^\infty([0,T];B_{l,r}^{s })\times L^\infty([0,T];B_{l,r}^{s}).$$
For all $s'<s$, applying Lemma 2.2 
with an interpolation argument yields
$$(u_n,v_n)\rightarrow (u,v), \quad \text{ as } n\rightarrow\infty, \quad \text{ in }\mathcal{C}([0,T];B_{l,r}^{s' })\times \mathcal{C}([0,T];B_{l,r}^{s'}).$$
Taking limit in $T_n$ reveals 
that $(u,v)$ satisfy system (1.1) in the sense of $\mathcal{C}([0,T];B_{l,r}^{s'-1 })\times \mathcal{C}([0,T];B_{l,r}^{s'-1})$ for all $s'<s$.

Combining 
the fact $B_{l,r}^s$ is an algebra as $s>2+\frac{1}{l}$
with Lemma 2.2 and Lemma 2.3 in \cite{ZMW} yields $(u,v)\in E_{l,r}^{s}(T)\times E_{l,r}^{s}(T)$.

On the other hand, the continuity with respect to the initial data in
$$\mathcal{C}([0,T];B_{l,r}^{s'}) \cap \mathcal{C}^1([0,T];B_{l,r}^{s'-1})\times \mathcal{C}([0,T];B_{l,r}^{s'}) \cap \mathcal{C}^1([0,T];B_{l,r}^{s'-1})\quad \text {for all } s'<s,$$
can be obtained by Lemma 2.1 and a simple interpolation argument. While the continuity in $\mathcal{C}([0,T];B_{l,r}^{s }) \cap \mathcal{C}^1([0,T];B_{l,r}^{s -1})\times \mathcal{C}([0,T];B_{l,r}^{s }) \cap \mathcal{C}^1([0,T];B_{l,r}^{s -1}$
  when  $r<\infty$  can be proved
through the use of a sequence of viscosity approximation solutions
$(u_\epsilon,v_\epsilon)_{\epsilon>0}$	   for System (1.1) which
converges uniformly in $\mathcal{C}([0,T];B_{l,r}^{s }) \cap \mathcal{C}^1([0,T];B_{l,r}^{s -1})\times \mathcal{C}([0,T];B_{l,r}^{s }) \cap \mathcal{C}^1([0,T];B_{l,r}^{s -1})$.   This completes  the proof of Theorem 1.1.
\end{proof}

\subsection{Local well-posedness in a critical Besov space}
In this section, we shall
establish the local well-posedness of the Eq.(1.1) in critical Besov spaces. More precisely, we give the proof of Theorem 1.2.  More precisely, we give the proof of Theorem 1.2.  Motivated by the proof of
local existence about CH equation \cite{D23}, at first, we construct the approximate solutions of Eq.(1.1) by the classical Friedrichs regularization method.
{ \begin{lm}
Given $(u^0,v^0)=0$ and the initial data $(u_0,v_0)\in B_{2,1}^{\frac{5}{2}}$. Then there exists a sequence $\{(u^k,v^k)\}_{k\in\mathbb{N}}\in \mathcal{C} (\mathbb{R}^+;B_{2,1}^\infty)$ solving the linear initial value problem by induction $(T_k)$ (see, Lemma 2.1). Moreover,   the solutions $(u^k,v^k)$ satisfy the following properties:

(i) $(u^k,v^k)_{k\in \mathbb{N}}$ is uniformly bounded in $E_{2,1}^{\frac{5}{2}}(T)\times E_{2,1}^{\frac{5}{2}}(T)$.

(ii) $(u^k,v^k)_{k\in \mathbb{N}}$ is a Cauchy sequence in $\mathcal{C}([0,T];B_{2,\infty}^{\frac{3}{2}})\times \mathcal{C}([0,T];B_{2,\infty}^{\frac{3}{2}})$.
\end{lm}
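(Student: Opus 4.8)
The plan is to repeat the construction of Lemma 2.2 with every estimate carried out at the critical regularity, exploiting that the momentum variables $m=u-u_{xx}$ and $n=v-v_{xx}$ obey $\|u\|_{B^{5/2}_{2,1}}\approx\|m\|_{B^{1/2}_{2,1}}$ and $\|v\|_{B^{5/2}_{2,1}}\approx\|n\|_{B^{1/2}_{2,1}}$ by Proposition 2.2(7) in \cite{ZMW}. Existence of the iterates is immediate: since $S_{k+1}m(0),S_{k+1}n(0)\in B^\infty_{2,1}$, the linear transport problems $(T_k)$ admit global solutions in $\mathcal{C}(\mathbb{R}^+;B^\infty_{2,1})^2$ by Lemma 2.3 in \cite{ZMW}. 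It then remains to prove the uniform bound (i) and the Cauchy property (ii).

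For (i) I would run the same induction as in Lemma 2.2. The index $1/2$ is the favorable critical exponent $d/p$ with $d=1$, $p=2$: the space $B^{1/2}_{2,1}$ is an algebra and embeds in $L^\infty$, so the transport a priori estimate of Lemma 2.2 in \cite{ZMW} applies to $m_{k+1}$ transported by $v_k^p$ with no logarithmic loss. Since $v_k\in B^{5/2}_{2,1}$ gives $\partial_x v_k^p\in B^{3/2}_{2,1}\hookrightarrow B^{1/2}_{2,1}\cap L^\infty$, the commutator term is controlled by $\|v_k\|_{B^{5/2}_{2,1}}^p$ and the source $\partial_x v_k^p\,m_k$ by $\|v_k\|_{B^{5/2}_{2,1}}^p\|m_k\|_{B^{1/2}_{2,1}}$. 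Adding the $m$- and $n$-estimates reproduces the differential inequality (2.4) with $s=5/2$, $l=2$, $r=1$, and choosing $T<\big(2\kappa C(\|u_0\|_{B^{5/2}_{2,1}}+\|v_0\|_{B^{5/2}_{2,1}})^\kappa\big)^{-1}$ closes the induction (2.5), yielding uniform boundedness in $E^{5/2}_{2,1}(T)^2$; the bound on $(\partial_t u_k,\partial_t v_k)$ in $\mathcal{C}([0,T];B^{3/2}_{2,1})^2$ follows from $(T_k)$ exactly as before.

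The genuinely critical step is (ii), and it is where the topology must be weakened. Forming the difference equations for $m_{k+j+1}-m_{k+1}$ and $n_{k+j+1}-n_{k+1}$, with source terms $F,G$ and the function $H(t)$ as in Lemma 2.2, the natural place to estimate the differences is at regularity $s-3=-1/2$ for $m$ (equivalently $s-1=3/2$ for $u$). But $-1/2$ is precisely the borderline lower regularity $-\min\{1/p,1/p'\}$ of the transport a priori estimate in the $B_{2,1}$ scale, where the $r=1$ estimate is unavailable; this is exactly why the statement asks for a Cauchy sequence in $B^{3/2}_{2,\infty}$ rather than $B^{3/2}_{2,1}$. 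I would therefore estimate $m_{k+j+1}-m_{k+1}$ in $B^{-1/2}_{2,\infty}$, where the transport estimate does hold, bound $\|F\|_{B^{-1/2}_{2,\infty}}+\|G\|_{B^{-1/2}_{2,\infty}}\leq C\big(\|u_{k+j}-u_k\|_{B^{3/2}_{2,\infty}}+\|v_{k+j}-v_k\|_{B^{3/2}_{2,\infty}}\big)H(t)$ by the paraproduct estimates at this borderline regularity together with the uniform $B^{5/2}_{2,1}$-bound from (i), and use $\|S_{k+j+1}u_0-S_{k+1}u_0\|_{B^{3/2}_{2,\infty}}\leq C2^{-k}\|u_0\|_{B^{5/2}_{2,1}}$ for the data difference. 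This produces the recursion $V^j_{k+1}(t)\leq C_T\big(2^{-k}+\int_0^t V^j_k(\tau)\,d\tau\big)$, and the same induction on $k$ as in Lemma 2.2 gives the Cauchy property in $\mathcal{C}([0,T];B^{3/2}_{2,\infty})^2$.

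The main obstacle is thus not the uniform bound---carried out at the benign critical index---but the Cauchy estimate, where one must accept the loss of summability and verify that both the transport a priori estimate and the product bounds for $F,G$ survive at the endpoint regularity $-1/2$ in the $B_{2,\infty}$ scale. Once convergence in the weaker $B^{3/2}_{2,\infty}$ topology is combined with the uniform $B^{5/2}_{2,1}$ bound and an interpolation argument, exactly as in the passage to the limit for Theorem 1.1, the limit is recovered as a solution in the critical space, completing the proof.
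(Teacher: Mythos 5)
Your proposal is correct and follows essentially the same route as the paper, which simply refers back to the iteration scheme of Lemma 2.2 and asserts the Cauchy property in $B^{3/2}_{2,\infty}$ followed by interpolation against the uniform $B^{5/2}_{2,1}$ bound. In fact you supply more detail than the paper does — in particular the explanation of why the endpoint regularity $-1/2$ forces the drop from $r=1$ to $r=\infty$ in the Cauchy estimate — but the decomposition, the key transport and product estimates, and the final interpolation step are the ones the paper intends.
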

\begin{proof}
Similar to   the proof of Lemma 3.1.  We first prove that the sequence $(u^k,v^k)_{k\in\mathbb{N}}$ (defined by ($T_k$))
is a Cauchy sequence
in $\mathcal{C}([0,T); B_{2,\infty}^{ \frac{3}{2}})\times \mathcal{C}([0,T); B_{2,\infty}^{ \frac{3}{2}})$. Then by $||u^k||_{B_{2,1}^{ \frac{5}{2}}}+||v^k||_{B_{2,1}^{ \frac{5}{2}}}\leq M$ and the interpolation inequality we infer 
   that $(u^k,v^k)_{k\in\mathbb{N}}$ tends to $(u^k,v^k)$ in $\mathcal{C}([0,T);B_{2,1}^{s}),\forall s<\frac{5}{2}.$
\end{proof}
By the Osgood Lemma \cite{D23}, we obtain the uniqueness of the solution to Eq.(1.2).
\begin{lm}
Assume that $z_0\doteq(u_0,v_0)$ and $w_0\doteq(u_0',v_0')$  $\in
B^{\frac{5}{2}}_{2,1}\times B^{\frac{5}{2}}_{2,1}$ such that $z\doteq(u,v)$ and $w\doteq(u',v')$  $\in
L^\infty([0,T];B^{\frac{5}{2}}_{2,1}\cap W^{1,\infty})
 \times L^\infty([0,T];B^{\frac{5}{2}}_{2,1}\cap W^{1,\infty})$ two solutions to the Cauchy problem
(1.1) with initial data $z_0$ and $w_0$. Let $(P,Q)=(u-u',v-v')$,
$(P_0,Q_0)=(u_0-u_0',v_0-v_0')$ and $Z(t)=||u||_{B^{\frac{5}{2}}_{2,\infty}\cap W^{1,\infty}} + ||u'||_{B_{2,\infty}^{\frac{5}{2}}\cap W^{1,\infty}}
+ ||v||_{B^{\frac{5}{2}}_{2,\infty}\cap W^{1,\infty}}  +||v'||_{B^{\frac{5}{2}}_{2,\infty}\cap W^{1,\infty}} $.
If there exists a time $T>0$ such that
\begin{equation}
\begin{split}
  \frac{ \|P_{0}-Q_{0}\|_{B^{\frac{3}{2}}_{2,\infty}}}{e}\leq e^{-\exp\left(c\int_0^TZ(\tau)\ln(e+Z(\tau))d\tau\right)} .
\end{split}
\end{equation}
 Then for every $t\in[0,T]$, we have 
\begin{equation}
\begin{split}
&\frac{\|P(t)-Q(t)\|_{B^{\frac{3}{2}}_{2,\infty}}}{e}\leq e^{-c\int_0^TZ(\tau)d\tau}\left(  \frac{ \|P_{0}-Q_{0}\|_{B^{\frac{3}{2}}_{2,\infty}}}{e}\right) ^{-\exp\left(c\int_0^TZ(\tau)\ln(e+Z(\tau))d\tau\right)}.
\end{split}
\end{equation}
\end{lm}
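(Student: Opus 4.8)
The plan is to reduce the statement to a single application of the Osgood lemma for an integral inequality with a logarithmic modulus of continuity. First I would record the equations satisfied by the differences. Since $z=(u,v)$ and $w=(u',v')$ both solve (1.1), writing $M=m-m'$ and $N=n-n'$ for the momentum differences and using the equivalence $\|u\|_{B^s}\approxeq\|m\|_{B^{s-2}}$ (so that $\|P\|_{B^{3/2}_{2,\infty}}\approxeq\|M\|_{B^{-1/2}_{2,\infty}}$ and $\|Q\|_{B^{3/2}_{2,\infty}}\approxeq\|N\|_{B^{-1/2}_{2,\infty}}$), these differences solve transport equations of exactly the same shape as in Lemma 2.1, namely $\partial_t M+v^p\partial_x M=F$ and $\partial_t N+u'^q\partial_x N=G$, with right-hand sides $F,G$ built from $(v^p-v'^p)$, $\partial_x(v^p-v'^p)$ and the zeroth-order terms $\partial_x(v^p)M$, $\partial_x(u^q)N$ (and the analogues in $u$). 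The quantity $Z(t)$ is precisely the combination of $B^{5/2}_{2,\infty}\cap W^{1,\infty}$ norms of $u,u',v,v'$ that controls these transport velocity fields and their first derivatives.

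The decisive point is the regularity index. The velocity differences are measured in $B^{3/2}_{2,\infty}$, i.e. the momentum differences live in $B^{-1/2}_{2,\infty}$, which sits at the borderline endpoint of the range in which the $B^\sigma_{2,\infty}$ transport estimate holds without loss; at this endpoint the estimate carries a logarithmic factor. I would therefore apply the endpoint (critical) transport estimate of the transport-equation theory used throughout the paper, and then bound $F,G$ in $B^{-1/2}_{2,\infty}$ by the product and commutator estimates in the critical algebra, converting all momentum norms back into the velocity norms that constitute $Z$. Writing $\rho(t)=\|P(t)\|_{B^{3/2}_{2,\infty}}+\|Q(t)\|_{B^{3/2}_{2,\infty}}$ for the full size of the difference (which dominates the left-hand side of (2.8)) and summing the two inequalities yields an Osgood-type inequality
$$\rho(t)\le\rho(0)+c\int_0^t Z(\tau)\,\rho(\tau)\ln\!\Big(e+\tfrac{C}{\rho(\tau)}\Big)d\tau,$$
where the extra factor $\ln(e+Z)$ visible in (2.7)--(2.8) emerges after splitting the logarithm as $\ln(e+\|\cdot\|_{\mathrm{high}}/\rho)\lesssim\ln(e+Z)+\ln(e+1/\rho)$ and estimating the high-regularity factor by $Z$.

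Finally I would invoke the Osgood lemma from \cite{D23} with the modulus $\mu(r)=r\ln(e+1/r)$. Its primitive $\mathcal{M}(x)=\int_x^1 dr/\mu(r)$ is comparable to $\ln\ln(e+1/x)$, so inverting $\mathcal{M}$ produces an iterated-exponential (double-exponential) bound; carrying the constants through gives precisely the form displayed in (2.8), while the smallness hypothesis (2.7) is exactly what keeps $\rho(0)/e$ together with the time integral of $Z\ln(e+Z)$ inside the range on which the inverse of $\mathcal{M}$ is valid on all of $[0,T]$. Uniqueness is then the special case $z_0=w_0$: the initial difference vanishes, so $\rho(0)=0$, the right-hand side of (2.8) collapses, and hence $z\equiv w$ on $[0,T]$.

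The step I expect to be the main obstacle is the endpoint transport estimate together with the source-term bounds. Away from the endpoints the transport inequality is linear in $\rho$ and plain Gronwall suffices, but at the borderline index $B^{-1/2}_{2,\infty}$ one must retain the logarithmic loss and track it sharply; the genuine work is to show, using the higher-order structure $v^{p-1}v_x m$ and $u^{q-1}u_x n$ together with the logarithmic interpolation inequality, that the driving coefficient is exactly $Z\ln(e+Z)$ up to a constant and that the effective modulus is the Osgood modulus $r\ln(e+1/r)$, so that (2.7)--(2.8) come out with the stated constants rather than with a spurious extra loss.
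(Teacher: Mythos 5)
Your proposal is correct and follows essentially the same route as the paper, which in fact gives no argument of its own for this lemma beyond the remark that it is ``much similar to Proposition 1 in \cite{D23}'': your reduction to transport equations for the differences, the logarithmic interpolation producing the factor $\ln(e+Z)$, and the Osgood lemma with modulus $r\ln(e+1/r)$ inverted into a double-exponential bound is precisely Danchin's critical-space uniqueness argument transposed to system (1.1). One minor imprecision: the logarithmic loss does not come from the transport estimate itself (the index $-1/2$ lies in the interior of the admissible range $(-3/2,3/2)$ for $B^{\sigma}_{2,\infty}$ estimates), but, as your own splitting $\ln(e+\|\cdot\|_{\mathrm{high}}/\rho)\lesssim\ln(e+Z)+\ln(e+1/\rho)$ correctly reflects, from the failure of the product laws at the critical index with third exponent $r=\infty$, repaired by the logarithmic interpolation inequality.
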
}
\begin{proof}
The proof of this lemma is much similar to  Proposition 1 in \cite{D23}.  we omit  it
here.
\end{proof}

\begin{lm}
Denote $\bar{\mathbb{N}}\doteq \mathbb{N}\cup\infty$, For any $z_0\doteq (u_0,v_0)\in B^{\frac{5}{2}}_{2,1}\times B^{\frac{5}{2}}_{2,1}$, there
exists a neighborhood $V$ of $z_0$ in $B^{\frac{5}{2}}_{2,1}$ and a
time $T>0$ such that for any solution of the Cauchy problem (1.1)
$z\in V$, the map $\Phi:$
$$V\rightarrow \mathcal{C}([0,T];B^{\frac{5}{2}}_{2,1})\cap
\mathcal{C}^1([0,T];B^{\frac{3}{2}}_{2,1})\times \mathcal{C}([0,T];B^{\frac{5}{2}}_{2,1})\cap
\mathcal{C}^1([0,T];B^{\frac{3}{2}}_{2,1})$$
$z_0\mapsto z$ is continuous.
\end{lm}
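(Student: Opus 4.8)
The plan is to establish sequential continuity of $\Phi$. Existence, uniqueness and a uniform lifespan $[0,T]$ depending only on $\|z_0\|_{B^{5/2}_{2,1}}$ are already furnished by Lemma 2.3 and Lemma 2.4, so only continuous dependence remains. I would therefore fix $z_0=(u_0,v_0)$, take an arbitrary sequence $z_0^{(n)}=(u_0^{(n)},v_0^{(n)})\to z_0$ in $B^{5/2}_{2,1}\times B^{5/2}_{2,1}$, shrink $V$ so that all $z_0^{(n)}$ share the lifespan $[0,T]$, and show that the corresponding solutions satisfy $z^{(n)}\to z$ in $E^{5/2}_{2,1}(T)\times E^{5/2}_{2,1}(T)$.

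The essential difficulty, which dictates the whole strategy, is that the stability estimate of Lemma 2.4 controls the difference of two solutions only in the weaker norm $B^{3/2}_{2,\infty}$ and only through an Osgood (double-exponential) modulus of continuity. This logarithmic loss is intrinsic to the critical index: the transport and commutator estimates at regularity $B^{5/2}_{2,1}$, equivalently $m\in B^{1/2}_{2,1}$, generate a factor $\ln(e+\cdot)$ that cannot be absorbed by Gronwall's lemma. Consequently a single energy estimate on $z^{(n)}-z$ in the critical norm is unavailable, and one must instead approximate by smoothing.

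Accordingly, for each integer $N$ let $z_N^{(n)}$ (resp.\ $z_N$) denote the solution of (1.1) issued from the smoothed data $S_N z_0^{(n)}$ (resp.\ $S_N z_0$); these exist on $[0,T]$ by Lemma 2.3 and lie in $E^{\sigma}_{2,1}(T)$ for every $\sigma$. I would use the splitting
$$z^{(n)}-z=\bigl(z^{(n)}-z_N^{(n)}\bigr)+\bigl(z_N^{(n)}-z_N\bigr)+\bigl(z_N-z\bigr).$$
For the middle term, $S_N z_0^{(n)}\to S_N z_0$ as $n\to\infty$ in every $B^{\sigma}_{2,1}$ (all these norms being equivalent on the $2^N$-localized data), so the continuous-dependence part of Theorem 1.1, applied at the \emph{super}critical index $\sigma=3>5/2$ where Lemma 2.1 closes a clean Gronwall estimate with no logarithmic loss, gives $z_N^{(n)}\to z_N$ in $E^{3}_{2,1}(T)\hookrightarrow E^{5/2}_{2,1}(T)$. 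For the two outer terms, relative compactness of $\{z_0^{(n)}\}\cup\{z_0\}$ in $B^{5/2}_{2,1}$ yields uniform smallness of the high-frequency tails, namely $\sup_n\|(I-S_N)z_0^{(n)}\|_{B^{5/2}_{2,1}}+\|(I-S_N)z_0\|_{B^{5/2}_{2,1}}\to0$ as $N\to\infty$.

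The heart of the argument, and the step I expect to be the main obstacle, is to convert this smallness of high-frequency data into smallness of $\|z^{(n)}-z_N^{(n)}\|_{E^{5/2}_{2,1}}$ uniformly in $n$ (and likewise for $z_N-z$). I would split the critical norm of $\theta:=z^{(n)}-z_N^{(n)}$ into low and high frequencies: the low part $S_N\theta$ is estimated by a subcritical transport bound (as in Lemma 2.1) fed by the $B^{3/2}_{2,\infty}$-Osgood control of Lemma 2.4, which is small because the data difference $(I-S_N)z_0^{(n)}$ is small in the weaker norm; the high part $(I-S_N)\theta$ is controlled by bounding the high frequencies of $z^{(n)}$ and of $z_N^{(n)}$ separately through frequency-localized (Bernstein plus commutator) transport estimates by the high-frequency data norm, again uniformly small. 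Combining gives $\sup_n\|\theta\|_{E^{5/2}_{2,1}}\to0$ and $\|z_N-z\|_{E^{5/2}_{2,1}}\to0$ as $N\to\infty$. A standard three-$\varepsilon$ argument --- first fix $N$ large to make both outer terms small uniformly in $n$, then let $n\to\infty$ to kill the middle term --- completes the $B^{5/2}_{2,1}$ convergence, and the $\mathcal{C}^1([0,T];B^{3/2}_{2,1})$ convergence follows by reading $\partial_t m,\partial_t n$ off the equations in $(T_k)$.
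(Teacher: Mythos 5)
Your architecture is the Bona--Smith scheme (mollify the data with $S_N$, three-term splitting, supercritical estimate for the middle term), which is genuinely different from the paper's route. The paper, following Danchin \cite{D23}, first gets convergence in the weaker topology $\mathcal{C}([0,T];B^{3/2}_{2,1})$ from the Osgood estimate of Lemma 2.4, and then, to upgrade to $B^{5/2}_{2,1}$, splits $m_k=\alpha_k+\beta_k$ where $\beta_k$ solves the linear transport equation along the velocity $v_k^p$ with the \emph{fixed} source and data of the limit solution, and $\alpha_k$ carries the differences of data and source. The convergence $\beta_k\to m_\infty$ comes from the continuity of linear transport flows with respect to the velocity field (Proposition 3 in \cite{D23}), which needs only convergence of the velocities in the weak norm plus uniform bounds; $\alpha_k$ is then killed by a linear energy estimate at the index $B^{1/2}_{2,1}$ and Gronwall. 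No high-frequency tail of the solution ever has to be estimated.

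The step you flag as the heart of your argument does not close, and for exactly the reason you identify at the outset. For the low-frequency part of $\theta:=z^{(n)}-z_N^{(n)}$, Bernstein gives $\|S_N\theta\|_{B^{5/2}_{2,1}}\lesssim 2^{N}\|\theta\|_{B^{3/2}_{2,\infty}}$, but Lemma 2.4 only yields the Osgood modulus $\|\theta(t)\|_{B^{3/2}_{2,\infty}}\lesssim\|\theta(0)\|_{B^{3/2}_{2,\infty}}^{\eta}$ with $\eta=\exp\left(-c\int_0^T Z\ln(e+Z)\,d\tau\right)<1$; since $\|\theta(0)\|_{B^{3/2}_{2,\infty}}\lesssim 2^{-N}\delta_N$, the resulting bound is $2^{(1-\eta)N}\delta_N^{\eta}$, which diverges as $N\to\infty$. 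The Bona--Smith trade-off (one derivative gained on the mollified solution against one derivative lost on the difference) requires a Lipschitz stability estimate one derivative down, which is available for $s>5/2$ but degenerates to Osgood precisely at the critical index. Your high-frequency estimate is also not right as stated: for a transport equation the commutator/paraproduct terms feed low frequencies of the data into high frequencies of the solution, so $\|(I-S_N)z^{(n)}(t)\|_{B^{5/2}_{2,1}}$ is not controlled by the high-frequency data norm alone; proving that these tails are equi-small in $n$ and $t$ is essentially equivalent to the continuity statement you are trying to establish, and is exactly what the linear-transport-flow continuity argument of the paper is designed to deliver.
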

\begin{proof}
Firstly, we prove the continuity of the map $\Phi$
in $\mathcal{C}([0,T];B^{\frac{3}{2}}_{2,1})\times \mathcal{C}([0,T];B^{\frac{3}{2}}_{2,1})$. Fix $z_0\in
B^{\frac{5}{2}}_{2,1}\times B^{\frac{5}{2}}_{2,1}$ and $\delta>0$. Now we claim that there
exists a $T>0$ and $M>0$ such that for any $z'_0\in
B^{\frac{5}{2}}_{2,1}\times B^{\frac{5}{2}}_{2,1}$ with $\|z'_0-z_0\|_{B^{\frac{5}{2}}_{2,1}\times B^{\frac{5}{2}}_{2,1}}\leq
\delta$, the solution $z=\Phi(z)$ of the Cauchy problem (1.1)
belongs to $\mathcal{C}([0,T];B^{\frac{5}{2}}_{2,1}))^2$ and satisfies
$\|z\|_{L^\infty([0,T];B^{\frac{5}{2}}_{2,1}))^2}\leq M.$ In fact,
according to the proof of the local well-posedness, we have that if
we fix a $T$ such that
$$0<T<\frac{1}{2\kappa C   \|z_0\|_{B^{s}_{l,r}} ^\kappa} ,$$
then
\begin{equation}
\begin{split}
 ||z'(t)||_{B_{2,1}^{\frac{5}{2}}}\leq \frac{   C||z'_0||_{B_{2,1}^{\frac{5}{2}}} }{   \left(1-2 \kappa C t||z'_0||_{B_{2,1}^{\frac{5}{2}}} ^\kappa\right)^{1/\kappa} }  \quad \text{ for all }t\in[0,T].
\end{split}
\end{equation}
Due to $\|z'_0-z_0\|_{B^{\frac{5}{2}}_{2,1}}\leq \delta$, it follows that
$\|z'_0\|_{B^{\frac{5}{2}}_{2,1}}\leq\|z_0\|_{B^{\frac{5}{2}}_{2,1}}+\delta$.
Here, one can actually choose some suitable constant $C$ such that
$$T=\frac{1}{4\kappa C(\|z_0\|_{B^{\frac{5}{2}}_{2,1}}+\delta )^\kappa}<\min\left\{\frac{1}{2\kappa C  \|z_0' \|_{B^{s}_{l,r}}  ^\kappa},\frac{1}{2C}\right\}$$
and
$$M=2^{1/\kappa}(\|z_0\|_{B^{\frac{5}{2}}_{2,1}}+\delta).$$
Combining the above uniform bounds with Lemma 2.4 yields
$$\|\Phi(v_0)-\Phi(u_0)\|_{L^\infty(0,T;B^{\frac{5}{2}}_{2,1})}\leq\delta
e^{2\kappa C M  ^\kappa T}.$$
Hence $\Phi$ is H\"{o}lder continuous from
$B^{\frac{5}{2}}_{2,1}\times B^{\frac{5}{2}}_{2,1}$ into $\mathcal{C}([0,T];B^{\frac{3}{2}}_{2,1})\times \mathcal{C}([0,T];B^{\frac{3}{2}}_{2,1})$.

Next we prove the continuity of the map $\Phi$ in
$\mathcal{C}([0,T];B^{\frac{5}{2}}_{2,1})\times \mathcal{C}([0,T];B^{\frac{3}{2}}_{2,1})$. Let $z^\infty(0)\doteq(u^\infty(0),v^\infty(0))\in
B^{\frac{5}{2}}_{2,1}\times B^{\frac{5}{2}}_{2,1}$ and $(z_k(0))_{k\in \mathbb{N}}\doteq(u_k(0),v_k(0))_{k\in \mathbb{N}}$ tend to $z^\infty_0$ in $B^{\frac{5}{2}}_{2,1}\times B^{\frac{5}{2}}_{2,1}$ as $k\rightarrow\infty$. Let $z_k\doteq (u_k,v_k)$ be the
solution of the Cauchy problem (1.1) with the initial data
$z_k(0)$. From the above procedure  for any $n\in
\mathbb{N}, t\in T$, we may obtain 

\begin{equation}
\sup_{k\in
\mathbb{N}}\|z_k\|_{L^\infty_T(B^{\frac{5}{2}}_{2,1})}\leq M.
\end{equation}
Apparently,  proving $z_k\rightarrow z_\infty$ in
$\mathcal{C}([0,T];B^{\frac{5}{2}}_{2,1})\times\mathcal{ C}([0,T];B^{\frac{5}{2}}_{2,1})$ is equivalent to proving that
$m_k=u_k-\partial_x^2u_k ,n_k=v_k-\partial_x^2v_k $ tends to
$p^{(\infty)}=u^{(\infty)}-u^{(\infty)}_{xx},q^{(\infty)}=v^{(\infty)}-v^{(\infty)}_{xx}$ in
$\mathcal{C}([0,T];B^{\frac{1}{2}}_{2,1})$ as $k\rightarrow\infty$.

Let us recall that $(u_k,v_k)$ solves the linear transport equation:
\begin{equation}
\left\{
\begin{array}{llll}
\partial_{t}m_{k}+v^{p}_{k }\partial_xm_{k}+\frac{a}{p} \partial_xv^{p}_{k }m_{k }=0, \\
\partial_{t}n_{k}+u^{q}_{k }\partial_xn_{k}+\frac{b}{q} \partial_xu^{q}_{k }n_{k }=0,\\
u_k|_{t=0}= u_k(0),v_k|_{t=0}= v_k(0).
\end{array}
\right.
\end{equation}
Thanks to the Kato theory \cite{D23}, we decompose $(m_k,n_k)$ into
$m_k=\alpha_k+\beta_k,q_k=\phi_k+\varphi_k$ with

\begin{equation}
\left\{
\begin{array}{llll}
\left[\partial_t+v^{p}_{k }\partial_x\right]\alpha_{(n)}=-\frac{a}{p} \partial_xv^{p}_{k }m_{k }+\frac{a}{p} \partial_xv^{p}_{\infty }m_{\infty },\\
\left[\partial_t+u^{q}_{k }\partial_x\right]\phi_k=-\frac{b}{q} \partial_xu^{q}_{k }n_{k }+\frac{b}{q} \partial_xu^{q}_{\infty }n_{\infty },\\
u_k|_{t=0}= u_k(0)-u_\infty(0),v_k|_{t=0}= v_k(0)- v_\infty(0),
\end{array}
\right.
\end{equation}
and
\begin{equation}
\left\{
\begin{array}{llll}
\left[\partial_t+v^{p}_{k }\partial_x\right]\alpha_k=-\frac{a}{p} \partial_xv^{p}_{\infty }m_{\infty },\\
\left[\partial_t+u^{q}_{k }\partial_x\right]\phi_k=-\frac{b}{q} \partial_xu^{q}_{\infty }n_{\infty },\\
u_k|_{t=0}= u_\infty(0),v_k|_{t=0}= v_\infty(0).
\end{array}
\right.
\end{equation}

In light of the properties of Besov spaces \cite{D23}, one may easily see that $( m_{k},  n_{k})_{k\in \overline{\mathbb{N}}}$ are uniformly bounded in $\mathcal{C}([0,T];B^\frac{1}{2}_{2,1})$. Moreover,
\begin{equation*}
\begin{split}
\| \frac{a}{p} \partial_xv^{p}_{k }m_{k }&-\frac{a}{p} \partial_xv^{p}_{\infty }m_{\infty }\|_{B^{\frac{1}{2}}_{2,1}}\leq C
 \|\partial_xv_k^p\|_{B^{\frac{1}{2}}_{2,1}}
\|m_k -m_\infty\|_{B^{\frac{1}{2}}_{2,1}}\\
&+C\left(  \| \partial_xv_k^p-\partial_xv_k^\infty\|_{B^{\frac{3}{2}}_{2,1}}\right)\|m_\infty\|_{B^{\frac{1}{2}}_{2,1}}.
\end{split}
\end{equation*}
Applying Lemma 4.3 in \cite{D23} and the product law in the
Besov spaces to equation (2.11) leads to
\begin{equation}
\begin{split}
\|\alpha_{(n)}\|_{B^{\frac{1}{2}}_{2,1}}&\leq\exp\left\{C\int^{t}_{0}\| v_k^p  (\tau)\|_{B^{\frac{3}{2}}_{2,1}}\,d\tau\right\}\\
&\quad\cdot\left(\|m_k(0)-m_\infty(0)\|_{B^{\frac{1}{2}}_{2,1}} + \int_0^t \left\| \frac{a}{p} \partial_xv^{p}_{k }m_{k } -\frac{a}{p} \partial_xv^{p}_{\infty }m_{\infty }\right\|_{B^{\frac{1}{2}}_{2,1}}d\tau \right) .
\end{split}
\end{equation}

On the other hand, 
we know that the sequence
$(u_k,v_k)_{n\in \overline{\mathbb{N}} \doteq\mathbb{N}\cup\{\infty\}}$ is uniformly bounded in
$\mathcal{C}([0,T];B^{\frac{5}{2}}_{2,1})\times \mathcal{C}([0,T];B^{\frac{5}{2}}_{2,1})$ and tends to $(u^{\infty},v^{\infty})$ in
$\mathcal{C}([0,T];B^{\frac{3}{2}}_{2,1})\times \mathcal{C}([0,T];B^{\frac{3}{2}}_{2,1})$ as $k\rightarrow\infty$.
Therefore, adopting Proposition 3 in \cite{D23} reveals that $(\alpha_k,\phi_k)$ tends to $(m_{\infty},n_{\infty})$
in $\mathcal{C}([0,T];B^{\frac{1}{2}}_{2,1})\times \mathcal{C}([0,T];B^{\frac{1}{2}}_{2,1})$.
Hence, combining the above result of convergence with estimates (4.3) and (4.7), for large enough $n\in\mathbb{N}$ we obtain
\begin{equation*}
\begin{split}
 &\|m_k-m_\infty \|_{B^{\frac{1}{2}}_{2,1}}+\|n_k-n_\infty \|_{B^{\frac{1}{2}}_{2,1}}\leq\varepsilon+C  M^{p+q}   e^{C M^{p+q} T }\biggl[\|m_k(0)-m_\infty (0) \|_{B^{\frac{1}{2}}_{2,1}}+\|n_k(0)-n_\infty(0) \|_{B^{\frac{1}{2}}_{2,1}}\\
            &+\int^{t}_{0}\left(\|m_k-m_\infty \|_{B^{\frac{1}{2}}_{2,1}}+\|n_k-n_\infty \|_{B^{\frac{1}{2}}_{2,1}} \right) d\tau+\int^{t}_{0} \|u_k-u_\infty \|_{B^{\frac{1}{2}}_{2,1}}+\|v_k-v_\infty \|_{B^{\frac{1}{2}}_{2,1}}d\tau\biggr].
\end{split}
\end{equation*}
By the Gronwall's inequality, we have
$$\|m_k-m_\infty \|_{L^\infty(0,T;B^{\frac{1}{2}}_{2,1})}+\|n_k-n_\infty \|_{L^\infty(0,T;B^{\frac{1}{2}}_{2,1})}\leq C\left(\|m_k(0)-m_\infty(0) \|_{B^{\frac{1}{2}}_{2,1}}+\|n_k(0)-n_\infty(0) \|_{B^{\frac{1}{2}}_{2,1}}+\varepsilon\right) $$
where the constant $C$ depends only on $M$ and $T$. 
Continuity of the map $\Phi$ in
$\mathcal{C}([0,T];B^{\frac{5}{2}}_{2,1})\times \mathcal{C}([0,T];B^{\frac{5}{2}}_{2,1})$ is now completed.
Applying $\partial_t$ to equation (1.1) and  repeating the same
procedure to the resulting equation in terms of $(\partial_t u,\partial_t v)$,
we may check continuity of the map $\Phi$ in
$\mathcal{C}^1([0,T];B^{\frac{3}{2}}_{2,1})\times \mathcal{C}^1([0,T];B^{\frac{3}{2}}_{2,1})$.
\end{proof}

\begin{proof} [Proof of Theorem \ref{result2}]
Thanks to Lemma 2.3, $\{u^n,v^n\}_{n\in \mathbb{N}}$
is uniformly
bounded in $B_{2,1}^\frac{3}{2}$
and converges to $(u,v)$ in $\mathcal{C}([0,T],B_{2,1}^\frac{3}{2})$, applying interpolation
theorem, one can easily get that the convergence holds in  $\mathcal{C}([0,T],B_{2,1}^{s_1}),s_1<5/2$.
Passing to limit in $(T_k)$ (see, Lemma 2.1), we obtain that $(u,v)$ is a solution to Eq.(1.1) and
satisfies
  $$ (u,v)\in \mathcal{C}([0,T];B^{\frac{5}{2}}_{2,1})\cap \mathcal{C}^1([0,T];B^{\frac{3}{2}}_{2,1})\times \mathcal{C}([0,T];B^{\frac{5}{2}}_{2,1})\cap \mathcal{C}^1([0,T];B^{\frac{3}{2}}_{2,1}).$$
Thanks to Lemma 2.4 and  Lemma 2.5, it follows that the uniqueness and
continuity with the initial data  $ (u_0,v_0)\in \mathcal{C}([0,T];B^{\frac{5}{2}}_{2,1})\times  \mathcal{C}([0,T];B^{\frac{5}{2}}_{2,1})$. This concludes the
proof of Theorem 1.2.
\end{proof}

\section{Blow-up criterion}
In this section, we shall build up 
a blow-up criteria.
Let us first recall the following two lemmas.
\begin{lm}
(See \cite{ZMW}) If $r>0$, then $H^r\cap L^\infty$ is an algebra. and
$$||fg||_{H^r}\leq c(||f||_{L^\infty}||g||_{h^r}+||g||_{L^\infty}||h||_{h^r}),$$
where $c$ is a constant depending only on $r$.
\end{lm}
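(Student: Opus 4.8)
The plan is to recognize this as the standard Sobolev product (Moser-type) estimate and to prove it by Bony's paraproduct decomposition, reducing the algebra claim to the displayed inequality. Indeed, once the inequality is in hand, taking $g=f$ gives $\|f^2\|_{H^r}\le 2c\|f\|_{L^\infty}\|f\|_{H^r}<\infty$, and since $\|fg\|_{L^\infty}\le\|f\|_{L^\infty}\|g\|_{L^\infty}$, the product of two elements of $H^r\cap L^\infty$ again lies in $H^r\cap L^\infty$; so it suffices to establish the estimate. Throughout I would use the Littlewood-Paley blocks $\Delta_j$, $S_j$ already employed in Section 2 and the equivalence $\|h\|_{H^r}^2\approx\sum_{j\ge-1}2^{2jr}\|\Delta_j h\|_{L^2}^2$.

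First I would split $fg=T_fg+T_gf+R(f,g)$ into the two paraproducts $T_fg=\sum_j S_{j-1}f\,\Delta_j g$, $T_gf=\sum_j S_{j-1}g\,\Delta_j f$, and the remainder $R(f,g)=\sum_{|j-k|\le1}\Delta_jf\,\Delta_kg$. For the paraproduct $T_fg$, the spectrum of each summand $S_{j-1}f\,\Delta_jg$ lies in a fixed annulus of size $\sim2^j$, so $\Delta_{j'}(T_fg)$ collects only finitely many indices $j\approx j'$; combined with $\|S_{j-1}f\|_{L^\infty}\le c\|f\|_{L^\infty}$ this gives $2^{j'r}\|\Delta_{j'}(T_fg)\|_{L^2}\le c\|f\|_{L^\infty}\,d_{j'}$ with $(d_{j'})\in\ell^2$ and $\|(d_{j'})\|_{\ell^2}\le c\|g\|_{H^r}$. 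Taking the $\ell^2(j')$ norm yields $\|T_fg\|_{H^r}\le c\|f\|_{L^\infty}\|g\|_{H^r}$, and by the symmetric computation $\|T_gf\|_{H^r}\le c\|g\|_{L^\infty}\|f\|_{H^r}$.

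The remainder is the only step that genuinely uses $r>0$. Since the spectrum of $\Delta_jf\,\Delta_kg$ with $|j-k|\le1$ sits in a ball of radius $\sim2^k$, the block $\Delta_{j'}R(f,g)$ only sees indices $k\ge j'-N$ for a fixed $N$. Writing $c_k:=2^{kr}\|\Delta_kg\|_{L^2}$, so that $\|(c_k)\|_{\ell^2}=\|g\|_{H^r}$, and bounding $\|\Delta_kf\|_{L^\infty}\le c\|f\|_{L^\infty}$, I obtain
$$2^{j'r}\|\Delta_{j'}R(f,g)\|_{L^2}\le c\|f\|_{L^\infty}\sum_{k\ge j'-N}2^{(j'-k)r}c_k.$$
Because $r>0$, the factor $2^{(j'-k)r}$ defines a summable convolution kernel on $\{k\ge j'-N\}$, so Young's inequality for convolution in $\ell^2$ bounds the $\ell^2(j')$ norm of the left-hand side by $c\|f\|_{L^\infty}\|(c_k)\|_{\ell^2}=c\|f\|_{L^\infty}\|g\|_{H^r}$. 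Adding the three contributions and relabeling constants produces the asserted inequality.

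The main obstacle is exactly the remainder term $R(f,g)$: one must verify that strict positivity of $r$---rather than the stronger $r>1/2$ needed to bound $\|f\|_{L^\infty}$ by $\|f\|_{H^r}$---already guarantees convergence of the sum over $k$. The geometric factor $2^{(j'-k)r}$ degenerates to a non-summable constant at $r=0$, and this is the precise role played by the hypothesis $r>0$ in the lemma.
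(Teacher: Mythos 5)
Your argument is correct and complete. The paper itself offers no proof of this lemma at all: it is stated with the citation ``(See \cite{ZMW})'' and used as a black box, so there is nothing internal to compare your proof against. What you have written is the standard paraproduct proof of the Moser-type product estimate, and each step checks out: the two paraproducts $T_fg$ and $T_gf$ are handled for arbitrary $r\in\mathbb{R}$ by the annulus localization together with the uniform bound $\|S_{j-1}f\|_{L^\infty}\le c\|f\|_{L^\infty}$, and the remainder $R(f,g)$ is where $r>0$ enters, via summability of the kernel $2^{(j'-k)r}$ over $k\ge j'-N$ and Young's convolution inequality in $\ell^2$. You also correctly observe that the estimate, together with $\|fg\|_{L^\infty}\le\|f\|_{L^\infty}\|g\|_{L^\infty}$, yields the algebra property of $H^r\cap L^\infty$, and you correctly isolate the role of the hypothesis $r>0$ (as opposed to the stronger $r>1/2$ that would make $H^r$ itself an algebra). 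The only cosmetic remark is that the displayed inequality in the paper contains typographical slips ($h^r$ for $H^r$ and $\|h\|$ for $\|f\|$); your proof establishes the intended symmetric estimate $\|fg\|_{H^r}\le c(\|f\|_{L^\infty}\|g\|_{H^r}+\|g\|_{L^\infty}\|f\|_{H^r})$, which is the form actually used throughout Section 3.
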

\begin{lm}
(See \cite{ZMW}) If $r>0$, then
$$||[D ^r,f]g||_{L^2}\leq c(||\partial_x f||_{L^\infty}||D ^{r-1}g||_{L^2}+||D ^r f||_{L^2}||g||_{L^\infty}),$$
where $[A,B]$ denote the commutator of linear operator $A$ and $B$, and $c$ is a constant depending only on $r$.
\end{lm}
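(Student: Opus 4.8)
The operator $D^r$ appearing here is the Bessel potential $\langle D\rangle^r=(1-\partial_x^2)^{r/2}$, whose symbol $\langle\xi\rangle^r=(1+\xi^2)^{r/2}$ is smooth and never vanishes, so that $\|D^\sigma h\|_{L^2}=\|h\|_{H^\sigma}$ for every $\sigma\in\mathbb{R}$ (in particular $D^{r-1}$ makes sense even when $r<1$). The plan is to analyze the commutator on the Fourier side: since
$$\widehat{[D^r,f]g}(\xi)=\int_{\mathbb{R}}\bigl(\langle\xi\rangle^r-\langle\eta\rangle^r\bigr)\hat f(\xi-\eta)\hat g(\eta)\,d\eta,$$
the whole estimate is governed by the bilinear symbol $\sigma_r(\xi,\eta)=\langle\xi\rangle^r-\langle\eta\rangle^r$. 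I would split the $\eta$-integration (equivalently, perform a Bony paraproduct decomposition through the Littlewood--Paley projectors $\Delta_j$ and $S_j$) according to whether the frequency $\xi-\eta$ of $f$ is small or large compared with the frequency $\eta$ of $g$, and estimate $\sigma_r$ separately in each regime.

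First I would treat the low--high region, where $|\xi-\eta|\lesssim\langle\eta\rangle$ so that $\langle\xi\rangle\approx\langle\eta\rangle$. Here a first-order Taylor expansion (mean value theorem) of the symbol yields $|\sigma_r(\xi,\eta)|\lesssim|\xi-\eta|\,\langle\eta\rangle^{r-1}$, since $|\tfrac{d}{d\xi}\langle\xi\rangle^r|\le r\langle\xi\rangle^{r-1}$. The factor $|\xi-\eta|$ is precisely the Fourier weight of $\partial_x f$ and $\langle\eta\rangle^{r-1}$ that of $D^{r-1}g$, so this portion is a bilinear Fourier multiplier of Coifman--Meyer type acting on $(\partial_x f,\,D^{r-1}g)$. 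Estimating it by placing $\partial_x f$ in $L^\infty$ and $D^{r-1}g$ in $L^2$, after the routine $\ell^2$-summation of the dyadic pieces $[D^r,S_{j-1}f]\Delta_j g$, produces the first term $\|\partial_x f\|_{L^\infty}\|D^{r-1}g\|_{L^2}$.

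Next I would treat the complementary region, where the frequency of $f$ dominates, $\langle\eta\rangle\lesssim\langle\xi-\eta\rangle$ and hence $\langle\xi\rangle\lesssim\langle\xi-\eta\rangle$. There is no cancellation to exploit, so I would simply use the crude bound $|\sigma_r(\xi,\eta)|\le\langle\xi\rangle^r+\langle\eta\rangle^r\lesssim\langle\xi-\eta\rangle^r$, which transfers the full order-$r$ derivative onto $f$. The corresponding dyadic pieces are the paraproduct $\sum_j D^r(S_{j-1}g\,\Delta_j f)$ and the high--high remainder $\sum_{|j-k|\le 1}D^r(\Delta_j f\,\Delta_k g)$; placing $D^r f$ in $L^2$ and $g$ (through $S_{j-1}g$ or $\Delta_k g$, via Bernstein) in $L^\infty$ gives the second term $\|D^r f\|_{L^2}\|g\|_{L^\infty}$.

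The delicate step is the high--high (remainder) interaction: there $\Delta_j f$ and $\Delta_k g$ both carry comparable high frequency while the output frequency $\xi$ may be small, so the naive symbol bound must be combined with the almost-orthogonality of the output frequencies and with Bernstein's inequality to convert the $L^\infty$ control of $g$ into the correct $\ell^2$-summation, avoiding any logarithmic loss. I expect this term, together with making the Coifman--Meyer multiplier bound in the low--high region fully rigorous (verifying that $\sigma_r(\xi,\eta)\big/\bigl(i(\xi-\eta)\langle\eta\rangle^{r-1}\bigr)$ is a bounded bilinear multiplier on the relevant frequency cone), to be the main technical obstacle; the remaining paraproduct estimates are then routine consequences of Bernstein's inequality and the $\ell^2$ structure of $H^\sigma$.
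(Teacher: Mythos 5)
The paper does not prove this lemma at all: it is the classical Kato--Ponce commutator estimate, quoted verbatim with a citation to \cite{ZMW} (ultimately going back to Kato--Ponce and to Taylor's commutator estimates), so there is no in-paper argument to compare yours against. Your outline is, in substance, the standard modern proof of that estimate, and its architecture is sound: the Fourier-side symbol $\langle\xi\rangle^r-\langle\eta\rangle^r$, the paraproduct trichotomy, the mean-value bound $|\sigma_r(\xi,\eta)|\lesssim|\xi-\eta|\langle\eta\rangle^{r-1}$ in the low--high regime (legitimate there because $\langle\xi\rangle\approx\langle\eta\rangle$ along the segment joining $\xi$ and $\eta$, so the case $r<1$ causes no trouble), and the crude bound $\lesssim\langle\xi-\eta\rangle^r$ elsewhere are exactly the right moves, and they produce the two terms on the right-hand side with the stated $L^\infty$/$L^2$ pairings. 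You also correctly identify the one place where the hypothesis $r>0$ is genuinely used: in the high--high remainder the naive estimate yields $\sum_j 2^{jr}\|\Delta_j f\|_{L^2}\|g\|_{L^\infty}$, i.e.\ a $B^r_{2,1}$ norm of $f$ rather than $H^r$, and one must first localize the output with $\Delta_i$ (which forces $i\le j+C$) so that the sum becomes a convolution of an $\ell^2$ sequence with the $\ell^1$ sequence $2^{-kr}\mathbf{1}_{k\ge -C}$; this is summable precisely because $r>0$. With that step and the Coifman--Meyer bound for the low--high multiplier spelled out, your proposal closes the argument; it is a legitimate self-contained alternative to the paper's black-box citation, at the cost of importing the bilinear multiplier machinery.
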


\begin{proof}[Proof of Theorem \ref{result3}]
We prove this theorem by an inductive
method with respect to index $s$. This can be achieved as follow.

\textbf{Step 1.} For $s\in\left(\frac{1}{2},1\right)$, applying Lemma 2.2 in \cite{ZMW} to the first equation
(1.1), one gets
\begin{equation}
\begin{split}
\|m(t)\|_{H^s}+\|n(t)\|_{H^s}\leq& \|m_0\|_{H^s}+C\int^t_0\left(\| m\partial_xv^{p }  (\tau)\|_{H^s}+\|m (\tau)\|_{H^s}\|\partial_xv^{p} (\tau)\|_{L^\infty}\right) d\tau
\end{split}
\end{equation}
for all $t\in(0,T^*_{m_0,n_0})$. 
Let $u=G\ast
m=(1-\partial^2_x)^{-1}m,v=G\ast
n=(1-\partial^2_x)^{-1}n,$ where $G=\frac{1}{2}e^{-|x|}$ and $\ast$
stands for the convolution on $\mathbb{R}$. Then
$u_x=\partial_xG\ast m$ where
$\partial_xG(x)=-\frac{1}{2}sign(x)e^{-|x|}.$
As per the  Young inequality, we have
\begin{equation}
\begin{split}
\|u\|_{L^\infty}\leq\|G\|_{L^1}\|m\|_{L^\infty}\leq
C\|m\|_{L^\infty},\|v\|_{L^\infty} \leq
C\|n\|_{L^\infty},\\
\|u_x\|_{L^\infty}\leq\|\partial_xG\|_{L^1}\|m\|_{L^\infty}\leq
C\|m\|_{L^\infty},\|v_x\|_{L^\infty} \leq
C\|n\|_{L^\infty}.
\end{split}
\end{equation}
Utilizing Eq. (3.2),  $\|u_x\|_{H^s}\leq C\|m\|_{H^s},\|v_x\|_{H^s}\leq C\|n\|_{H^s}$ and the Moser-type estimates leads
to
 \begin{equation}
\begin{split}
\| m\partial_xv^{p } (\tau)\|_{H^s}&\leq C\left(\| \partial_xv^{p }\|_{L^\infty}\|m\|_{H^s}+\| m\|_{L^\infty}\|\partial_xv^{p }\|_{H^s}\right)\\
&\leq C\left(\|  n\|_{L^\infty}^{p }\|m\|_{H^s}+\| m\|_{L^\infty}\|n\|^{p -1}_{L^\infty}\|n\|_{H^s} \right),\\
\end{split}
 \end{equation}
and
 \begin{equation}
\begin{split}
\|m (\tau)\|_{H^s}\|\partial_xv^{p} (\tau)\|_{L^\infty}&\leq C \|  n\|_{L^\infty}^{p }\|m\|_{H^s}.
\end{split}
 \end{equation}

Plugging Eqs. (3.3) and (3.4) into (3.1) generates
\begin{equation}
\begin{split}
\|m(t)\|_{H^s} \leq& \|m_0\|_{H^s} +C\int^t_0  \left((\|  n\|_{L^\infty}^{p } +\| m\|_{L^\infty}\|n\|^{p -1}_{L^\infty}) (\|m\|_{H^s}+\|n\|_{H^s})\right)\,d\tau.
\end{split}
\end{equation}
Similarly, applying Lemma 2.2 in \cite{ZMW} to the second equation of the system (1.1) yields
\begin{equation*}
\begin{split}
\|n(t)\|_{H^s} \leq&  \|n_0\|_{H^s}+C\int^t_0  \left((\|  m\|_{L^\infty}^{q } +\| n\|_{L^\infty}\|m\|^{q-1}_{L^\infty}) (\|m\|_{H^s}+\|n\|_{H^s})\right)\,d\tau.
\end{split}
\end{equation*}
Therefore
\begin{equation*}
\begin{split}
\|m(t)\|_{H^s} &+\|n(t)\|_{H^s}\leq\|m_0\|_{H^s}+\|n_0\|_{H^s}\\&+C\int^t_0  \left(\|  n\|_{L^\infty}^{p } +\| m\|_{L^\infty}\|n\|^{p -1}_{L^\infty} +\|  m\|_{L^\infty}^{q } +\| n\|_{L^\infty}\|m\|^{q-1}_{L^\infty}\right)(\|n\|_{H^s}+ \|m\|_{H^s})  \,d\tau.
\end{split}
\end{equation*}

Then, by the Gronwall's inequality we obtain
\begin{equation}
\|m(t)\|_{H^s}+\|n(t)\|_{H^s}\leq(\|m_0\|_{H^s}+\|n_0\|_{H^s})\exp\left\{C\int^t_0\left(\|  n\|_{L^\infty}^{p } +\| m\|_{L^\infty}\|n\|^{p -1}_{L^\infty} +\|  m\|_{L^\infty}^{q } +\| n\|_{L^\infty}\|m\|^{q-1}_{L^\infty}\right)d\tau\right\}.
\end{equation}
Moreover, if there exists a maximal time $T^*_{m_0,n_0}<\infty$ such that
\begin{equation*}
\int^{T^*_{m_0,n_0}}_0\left(\|  n\|_{L^\infty}^{p } +\| m\|_{L^\infty}\|n\|^{p -1}_{L^\infty} +\|  m\|_{L^\infty}^{q } +\| n\|_{L^\infty}\|m\|^{q-1}_{L^\infty}\right) d\tau<\infty,
\end{equation*}
then Eq. (3.6) implies that
\begin{equation}
\limsup_{t\rightarrow T^*_{m_0,n_0}}(\|m(t)\|_{H^s}+\|n(t)\|_{H^s})<\infty.
\end{equation}
which is contradicted to the assumption 
$T^*_{m_0,n_0}<\infty.$

\textbf{Step 2.} For $s\in[1,2),$  differentiating (1.1) with respect to $x$ yields
\begin{equation}
\begin{split}
&\partial_t(m_x)+ v^{p}\partial_x(m_x)=-\frac{a+p}{p}(v^{p})_{x}m_x-\frac{a}{p}(v^{p }) _{xx}m,\\
&\partial_t(n_x)+ u^{q}\partial_x(n_x)=-\frac{b+q}{q}(u^{q})_{x}n_x-\frac{b}{q}(u^{q } )_{xx}n.
\end{split}
\end{equation}
By Lemma 2.2 in \cite{ZMW}, we have
\begin{equation}
\begin{split}
\|\partial_xm(t)\|_{H^{s-1}} \leq \|\partial_xm_0\|_{H^{s-1}}
 +C\int^t_0 \| (v^{p})_{x}m_x+(v^{p }) _{xx}m\|_{H^{s-1}} \,d\tau  +C\int^t_0 \| m_x\|_{H^{s-1}}\|(v^p)_x\| _{L^\infty}d\tau\\
\end{split}
\end{equation}
According to the Moser-type estimates in \cite{ZMW} and (3.2), we obtain
\begin{equation}
\begin{split}
\|  (v^{p }) _{xx}m\|_{H^{s-1}}&\leq C \left(\|  (v^{p }) _{xx}\|_{L^\infty}\|m\|_{H^{s-1}}+\| m\|_{L^\infty}\|(v^{p }) _{xx}\|_{H^{s-1}}\right)\\
&\leq C\left(\|  n\|_{L^\infty}^{p }\|m\|_{H^s}+\| m\|_{L^\infty}\|n\|^{p -1}_{L^\infty}\|n\|_{H^s} \right),\\
\end{split}
 \end{equation}
and
\begin{equation}
\begin{split}
\| m_x\partial_xv^{p } \|_{H^{s-1}}&=\| \partial_x[m(v^{p })_x]- (v^{p }) _{xx}m\|_{H^{s-1}}
\leq C\left(\| m_xv^{p } \|_{H^{s }}+\|(v^{p }) _{xx}m\|_{H^{s-1}}\right)\\
&\leq C\left(\|  n\|_{L^\infty}^{p }\|m\|_{H^s}+\| m\|_{L^\infty}\|n\|^{p -1}_{L^\infty}\|n\|_{H^s} \right).
\end{split}
 \end{equation}
Plugging Eqs. (3.10)  and (3.11) into Eq. (3.9) gives
\begin{equation*}
\begin{split}
\|\partial_xm(t)\|_{H^{s-1}}&\leq \|\partial_xm_0\|_{H^{s-1}}
 +C\int^t_0\left(\|  n\|_{L^\infty}^{p }\|m\|_{H^s}+\| m\|_{L^\infty}\|n\|^{p -1}_{L^\infty}\|n\|_{H^s} \right) \,d\tau.
\end{split}
\end{equation*}
A similar procedure run for the second equation in (3.8) produces
\begin{equation*}
\begin{split}
\|\partial_xm(t)\|_{H^{s-1}}&+\|\partial_xn(t)\|_{H^{s-1}}\leq \|\partial_xm_0\|_{H^{s-1}}+\|\partial_xn_0\|_{H^{s-1}}\\
&+C\int^t_0  \left(\|  n\|_{L^\infty}^{p } +\| m\|_{L^\infty}\|n\|^{p -1}_{L^\infty} +\|  m\|_{L^\infty}^{q } +\| n\|_{L^\infty}\|m\|^{q-1}_{L^\infty}\right)(\|n\|_{H^s}+ \|m\|_{H^s})  \,d\tau
\end{split}
\end{equation*}
Considering the estimate for (3.6) and the fact
$$\|\partial_xm(t,\cdot)\|_{H^{s-1}}\leq C\|m(t)\|_{H^s},\|\partial_xn(t,\cdot)\|_{H^{s-1}}\leq C\|n(t)\|_{H^s},$$
one may see 
that (3.6) holds for all $s\in[1,2)$.
Repeating the same procedure 
as shown in Step 1, we know that Theorem 1.3 holds for
all $s\in[1,2)$.

\textbf{Step 3.} Let us assume that Theorem 1.3 holds for $k-1\leq s< k$ and
$2\leq k\in\mathbb{N}$. By the mathematical induction, we need to prove
that it is true for $k\leq s<k+1$ as well. Differentiating (1.1) $k$ times
with respect to $x$ leads to
\begin{equation}
\begin{split}
&\partial_t(\partial_x^km_x)+ v^p\partial_x^k(m_x)=-\frac{a}{p} \partial_x^{k }[(v^p )_x m]-\sum_{l=0}^{k-1} C_k^l\partial_x^{k-l}(v^p ) \partial_x^{l}(m_x),\\
&\partial_t(\partial_x^kn_x)+ u^q\partial_x^k(n_x)=-\frac{b}{q}\partial_x^{k }[(u^q)_x n]-\sum_{l=0}^{k-1} C_k^l\partial_x^{k-l}(v^p) \partial_x^{l}(n_x).\\
\end{split}
\end{equation}
According to Lemma 2.2 in \cite{ZMW}, we have 
\begin{equation}
\begin{split}
\|\partial_x^km(t)\|_{H^{s-k}}   \leq&\|\partial_x^km_0\|_{H^{s-k}}+C\int^t_0\| (v^p)_x  (\tau)\|_{L^\infty} \|\partial_x^km(\tau)\|_{H^{s-n}} \,d\tau \\
&+C\int^t_0\left(\left\|\sum_{l=0}^{k-1} C_k^l\sum_{l=0}^{k-1} C_k^l\partial_x^{k-l}(v^p )\partial_x^{l}(m_x)\right\|_{H^{s-k}}+\|\partial_x^{k }[(v^p )_x m]\|_{H^{s-k}}\right) d\tau.
\end{split}
\end{equation}
By the Moser-type estimate and the Sobolev embedding inequality, we derive 
\begin{equation}
\begin{split}
\|\partial_x^k[(v^p )_x m]\|_{H^{s-k}}&\leq C \|(v^p )_x m\|_{H^{s}}\leq C\left(\|  n\|_{L^\infty}^{p }\|m\|_{H^s}+\| m\|_{L^\infty}\|n\|^{p -1}_{L^\infty}\|n\|_{H^s} \right),
\end{split}
\end{equation}
where we used the Sobolev embedding theorem
$H^{s-\frac{1}{2}+\epsilon_0}(\mathbb{R})\hookrightarrow
L^\infty(\mathbb{R})$ ($s\geq2$) and
\begin{equation}
\begin{split}
\left\|\sum_{l=0}^{k-1} C_n^l\partial_x^{k-l}v^p \partial_x^{l+1}m\right\|_{H^{s-k}}&\leq C \sum_{l=0}^{k-1} \bigl(\left\|\partial_x^{k-l}v^p \right\|_{L^{\infty}} \left\|\partial_x^{l+1}m\right\|_{H^{s-k}}+\left\|\partial_x^{k-l}v^p \right\|_{H^{s-k}} \left\|\partial_x^{l+1}m\right\|_{L^{\infty}}\bigr)\\
&\leq C \sum_{l=0}^{n-1} \bigl(\left\|v^p\right\|_{H^{k-l+\frac{1}{2}+\epsilon_0}} \left\| m \right\|_{H^{s-k+l+1}}+\left\| v^p \right\|_{H^{s-l}} \left\| m \right\|_{H^{l+1+\frac{1}{2}+\epsilon_0}}\bigr)\\
&\leq C  \left\| n \right\|_{H^{s-\frac{1}{2} +\epsilon_0}}^p\left\| m\right\|_{H^{s }} ,
\end{split}
\end{equation}
with $\epsilon_0\in(0,\frac{1}{4})$ and
$H^{\frac{1}{2}+\epsilon_0}(\mathbb{R})\hookrightarrow
L^\infty(\mathbb{R})$.
Plugging Eqs. (3.14) and (3.15) into Eq. (3.13) yields 
\begin{equation*}
\begin{split}
\|m(t)\|_{H^s}& \leq \|m_0\|_{H^s}  +C\int^t_0  \left(\left\| n \right\|^p_{H^{s-\frac{1}{2} +\epsilon_0}}+\left\| m \right\|_{H^{s-\frac{1}{2} +\epsilon_0}}\left\| n \right\|^{p-1}_{H^{s-\frac{1}{2} +\epsilon_0}}\right) (\|m(\tau)\|_{H^s}+\|n(\tau)\|_{H^s})\,d\tau.
\end{split}
\end{equation*}
and
\begin{equation}
\begin{split}
&\|m(t)\|_{H^s}+\|n(t)\|_{H^s} \leq \|m_0\|_{H^s}+\|n_0\|_{H^s}+C\int^t_0 (\|m(\tau)\|_{H^s}+\|n(\tau)\|_{H^s}) \\
  &\times \left(\left\| n \right\|^p_{H^{s-\frac{1}{2} +\epsilon_0}}+\left\| m \right\|_{H^{s-\frac{1}{2} +\epsilon_0}}\left\| n \right\|^{p-1}_{H^{s-\frac{1}{2} +\epsilon_0}}+\left\| m \right\|^q_{H^{s-\frac{1}{2} +\epsilon_0}}+\left\| n \right\|_{H^{s-\frac{1}{2} +\epsilon_0}}\left\| m \right\|^{q-1}_{H^{s-\frac{1}{2} +\epsilon_0}}\right)\,d\tau
\end{split}
\end{equation}
Then, by Gronwall's inequality, we obtain
\begin{equation}
\begin{split}
&\|m(t)\|_{H^s}+\|n(t)\|_{H^s}\leq(\|m_0\|_{H^s}+\|n_0\|_{H^s})\\
&\exp\left\{C\int^t_0\left(\left\| n \right\|^p_{H^{s-\frac{1}{2} +\epsilon_0}}+\left\| m \right\|_{H^{s-\frac{1}{2} +\epsilon_0}}\left\| n \right\|^{p-1}_{H^{s-\frac{1}{2} +\epsilon_0}}+\left\| m \right\|^q_{H^{s-\frac{1}{2} +\epsilon_0}}+\left\| n \right\|_{H^{s-\frac{1}{2} +\epsilon_0}}\left\| m \right\|^{q-1}_{H^{s-\frac{1}{2} +\epsilon_0}}\right)\right\}.
\end{split}
\end{equation}
If there exist a maximal existence time $T^*_{m_0,n_0}<\infty$ such that
\begin{equation*}
\int^{T^*_{m_0,n_0}}_0\left(\|  n\|_{L^\infty}^{p } +\| m\|_{L^\infty}\|n\|^{p -1}_{L^\infty} +\|  m\|_{L^\infty}^{q } +\| n\|_{L^\infty}\|m\|^{q-1}_{L^\infty}\right) d\tau<\infty,
\end{equation*}
then by the solution uniqueness in Theorem 1.1,  we know that   $\left\| n \right\|^p_{H^{s-\frac{1}{2} +\epsilon_0}}+\left\| m \right\|_{H^{s-\frac{1}{2} +\epsilon_0}}\left\| n \right\|^{p-1}_{H^{s-\frac{1}{2} +\epsilon_0}}+\left\| m \right\|^q_{H^{s-\frac{1}{2} +\epsilon_0}}+\left\| n \right\|_{H^{s-\frac{1}{2} +\epsilon_0}}\left\| m \right\|^{q-1}_{H^{s-\frac{1}{2} +\epsilon_0}} $ is uniformly bounded in $t\in (0,T^*_{m_0,n_0})$.
As per the mathematical induction assumption,  we have 
\begin{equation*}
\limsup_{t\rightarrow T^*_{m_0,n_0}}(\|m(t)\|_{H^s}+\|n(t)\|_{H^s})<\infty.
\end{equation*}
which is a contradiction.
Therefore,  Steps 1 to 3 complete the proof of Theorem 1.3.
\end{proof}

To prove Theorem 1.4,   let us 
rewrite the Cauchy problem of the transport equation
(\ref{Eq.(1.1)})  as follows
\begin{equation}
\left\{
\begin{array}{llll}
u_t+v^pu_x+I_1(u,v)=0,\\
v_t+u^qv_x+I_2(u,v)=0,
\end{array}
\right.
\end{equation}
where
\begin{equation*}
\left\{
\begin{array}{llll}
I_1(u,v)=(1-\partial_x^2)^{-1}[ a  v^{p-1}v_xu+(p-a)v^{p-1}v_xu_{xx}]+p(1-\partial_x^2)^{-1}\partial_x(v^{p-1}v_xu_x),\\
I_2(u,v)=(1-\partial_x^2)^{-1}[ b  u^{q-1}u_xv+(q-b)u^{q-1}u_xv_{xx}]+q(1-\partial_x^2)^{-1}\partial_x(u^{q-1}u_xv_x).
\end{array}
\right.
\end{equation*}
Let us first provide 
the sufficient conditions for global existence of the solutions to Eq.(1.1).
\begin{thm}
 Let $z_0=(u_0, v_0) \in  H^s \times H^{s } $ with $s> 5/2$, and $T$ be the maximal time of the solution $z=(u, v)$ to system (1.1) with the initial data $z_0$. If there exists $M>0$ such that
$$\Gamma \doteq\left(||u ||^{q-1}_{L^\infty} +||v ||^{p-1}_{L^\infty}  \right)\left(||u_x||_{L^\infty} +||v _x||_{L^\infty}  \right)\leq M,\quad t\in[0,T),$$
then the $H^s\times H^s$-norm of $z(t,\cdot)$ does not blow up on $[0,T)$.
\end{thm}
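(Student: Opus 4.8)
The plan is to run an $H^s$ energy estimate directly on the nonlocal system (3.18) in the variables $u,v$, rather than on $m,n$, and then close the argument with Gronwall's inequality. Working with the $u,v$ form is the decisive choice: in the transport terms the coefficients $v^p,u^q$ are differentiated only once, so the commutator estimate produces precisely the quantities $\|u_x\|_{L^\infty},\|v_x\|_{L^\infty}$ occurring in $\Gamma$, whereas an estimate on the $m$-equation at level $H^{s-2}$ would force the uncontrolled quantity $\|m_x\|_{L^\infty}$ into play. Concretely, I would apply $D^s$ to each equation of (3.18), take the $L^2$ pairing with $D^su$ and $D^sv$ (the accompanying $L^2$ parts being strictly lower order), and add, aiming at a differential inequality of the form $\frac{d}{dt}\big(\|u\|_{H^s}^2+\|v\|_{H^s}^2\big)\le C\,\phi(t)\,\big(\|u\|_{H^s}^2+\|v\|_{H^s}^2\big)$ with $\phi$ bounded on $[0,T)$.

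First I would dispose of the transport contribution $\langle D^s(v^pu_x),D^su\rangle$. Splitting off the commutator gives $\langle[D^s,v^p]u_x,D^su\rangle+\langle v^pD^su_x,D^su\rangle$; the second piece integrates by parts into $-\tfrac12\langle(v^p)_xD^su,D^su\rangle$, bounded by $\|(v^p)_x\|_{L^\infty}\|u\|_{H^s}^2\lesssim\|v\|_{L^\infty}^{p-1}\|v_x\|_{L^\infty}\|u\|_{H^s}^2$, which is controlled by $\Gamma$. For the commutator I would invoke Lemma 3.2 with $r=s$, $f=v^p$, $g=u_x$, obtaining $\|[D^s,v^p]u_x\|_{L^2}\lesssim\|(v^p)_x\|_{L^\infty}\|u\|_{H^s}+\|v^p\|_{H^s}\|u_x\|_{L^\infty}$; using the Moser bound $\|v^p\|_{H^s}\lesssim\|v\|_{L^\infty}^{p-1}\|v\|_{H^s}$ and noting that $\|v\|_{L^\infty}^{p-1}\|u_x\|_{L^\infty}$ is one of the cross terms of $\Gamma$, the whole transport contribution is $\lesssim\Gamma\big(\|u\|_{H^s}^2+\|v\|_{H^s}^2\big)$.

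The main obstacle is the term $(1-\partial_x^2)^{-1}\big[(p-a)v^{p-1}v_xu_{xx}\big]$ in $I_1$: the naive algebra estimate in $H^{s-2}$ produces the top-order factor $\|u_{xx}\|_{L^\infty}\approx\|u\|_{H^s}$, which Gronwall cannot absorb. The resolution is to substitute $u_{xx}=u-m$ and exploit $\|u\|_{H^s}\approx\|m\|_{H^{s-2}}$: writing $\|v^{p-1}v_x(u-m)\|_{H^{s-2}}$ and applying Lemma 3.1 splits every summand into a piece carrying $\|v^{p-1}v_x\|_{L^\infty}$ (controlled by $\Gamma$, against $\|u\|_{H^s}$ or $\|m\|_{H^{s-2}}$) and a piece carrying the low-order factor $\|u\|_{L^\infty}$ or $\|m\|_{L^\infty}$ against $\|v^{p-1}v_x\|_{H^{s-2}}\lesssim\|v\|_{L^\infty}^{p-1}\|v\|_{H^s}$, so the dangerous factor is downgraded from $\|u_{xx}\|_{L^\infty}$ to $\|m\|_{L^\infty}$. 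To see the latter is bounded, I would integrate the first equation of (1.1) along the flow $\dot{x}=v^p$: since $m(t,x(t))=m_0\exp\!\big(-\tfrac{a}{p}\int_0^t(v^p)_x\,d\tau\big)$ and $|(v^p)_x|\le p\|v\|_{L^\infty}^{p-1}\|v_x\|_{L^\infty}\le pM$, one gets $\|m(t)\|_{L^\infty}\le\|m_0\|_{L^\infty}e^{|a|Mt}$, whence $\|u\|_{L^\infty}\lesssim\|m\|_{L^\infty}$ is bounded on every finite subinterval of $[0,T)$; $\|n\|_{L^\infty},\|v\|_{L^\infty}$ are handled symmetrically. The remaining pieces of $I_1$ — the term $av^{p-1}v_xu$ and the order $(-1)$ term $p(1-\partial_x^2)^{-1}\partial_x(v^{p-1}v_xu_x)$ — are treated the same way in $H^{s-2}$ and in $H^{s-1}$ respectively, each time using that $H^{s-2}$ and $H^{s-1}$ are algebras for $s>5/2$ and that $\|u_x\|_{L^\infty}\|v\|_{L^\infty}^{p-1}$ is a cross term of $\Gamma$.

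Collecting all contributions together with their $v$-counterparts yields
\[
\frac{d}{dt}\big(\|u\|_{H^s}^2+\|v\|_{H^s}^2\big)\le C\big(M+\mathcal{B}(t)\big)\big(\|u\|_{H^s}^2+\|v\|_{H^s}^2\big),
\]
where $\mathcal{B}(t)$ is a finite combination of the now-bounded quantities $\|u\|_{L^\infty},\|v\|_{L^\infty},\|m\|_{L^\infty},\|n\|_{L^\infty}$. Gronwall's inequality then gives a finite bound for $\|u(t)\|_{H^s}^2+\|v(t)\|_{H^s}^2$ on each $[0,T']\subset[0,T)$, so the $H^s\times H^s$ norm cannot blow up while $\Gamma\le M$. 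The periodic case is identical, since the commutator estimate, the algebra property, and the characteristic representation all hold verbatim on the circle.
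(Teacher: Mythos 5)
Your proposal follows the same skeleton as the paper's proof of Theorem 3.1: apply $D^s$ to the nonlocal form (3.18), pair with $D^su$ and $D^sv$, split the transport term into a commutator (Lemma 3.2 with $r=s$) plus an integration-by-parts piece, use the Moser bound $\|v^p\|_{H^s}\lesssim\|v\|_{L^\infty}^{p-1}\|v\|_{H^s}$, and close with Gronwall. Where you genuinely diverge is at the crux, the term $(1-\partial_x^2)^{-1}\bigl[(p-a)v^{p-1}v_xu_{xx}\bigr]$. The paper disposes of it in a single line, asserting $|(I_1(z),u)_s|\le c\|(v^p)_x\|_{L^\infty}\|u\|_{H^s}^2$; taken literally, the Moser estimate applied to $\|(v^p)_xu_{xx}\|_{H^{s-2}}$ leaves behind the factor $\|u_{xx}\|_{L^\infty}\|(v^p)_x\|_{H^{s-2}}$, which is controlled neither by $\Gamma$ nor by a quantity Gronwall can absorb, so the paper's one-line bound is at best unjustified as written. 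You instead substitute $u_{xx}=u-m$, exploit $\|u\|_{H^s}\approx\|m\|_{H^{s-2}}$, and downgrade the dangerous factor to $\|m\|_{L^\infty}$, which you then bound a priori by integrating the $m$-equation along the flow $\dot x=v^p$ using $|(v^p)_x|\le p\|v\|_{L^\infty}^{p-1}\|v_x\|_{L^\infty}\le pM$ — in effect importing the content of the paper's Lemma 3.4 (stated there for later theorems) into the proof of this one. This costs you the extra machinery of the characteristic flow (legitimate here since $H^s\hookrightarrow C^1$ for $s>5/2$, cf. Lemma 3.3) and a dependence on $\|m_0\|_{L^\infty}$, which is harmless as $m_0\in H^{s-2}\hookrightarrow L^\infty$; what it buys is a rigorous closure of the differential inequality where the paper's shortcut leaves a gap. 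Your argument is correct, and on the critical term it is more careful than the paper's.
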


\begin{proof}
  Let $z=(u,v)$  be the solution to  system (1.1) with the initial data $z_0\in  H^s \times H^{s} $,  $s> 5/2$, and $T$ be the maximal existence of the solution $z$ 
  as per Theorem 1.1. 

Applying the operator $ D ^s$ to the first and second equations in (3.18), multiplying by $ D ^su$ and $ D ^sv$, and integrating over $\mathbb{R}$, we may arrive at 
\begin{equation}
\frac{1}{2}\frac{d}{dt}||u||^2_{H^s}+ (v^p u_x,u)_s+ (u,I_1(u,v))_s=0,
\end{equation}
\begin{equation}
\frac{1}{2}\frac{d}{dt}||v||^2_{H^s}+ (u^q v_x,v)_s+ (v,I_2(u,v))_s=0,
\end{equation}
where
\begin{equation*}
\begin{split}
I_1(u,v)=(1-\partial_x^2)^{-1}\left[ \frac{a}{p} (v^{p})_xu+ \frac{p-a }{p}(v^{p})_xu_{xx}\right]+(1-\partial_x^2)^{-1}\partial_x[(v^{p})_xu_x],\\
I_2(u,v)=(1-\partial_x^2)^{-1}\left[ \frac{b}{q} (u^{q})_xv+ \frac{q-b }{q}(u^{q})_xv_{xx}\right]+(1-\partial_x^2)^{-1}\partial_x[(u^{q})_xv_x].
\end{split}
\end{equation*}
Let us estimate the right-hand side of (3.19).
\begin{equation*}
\begin{split}
|(v^pu_x,u)_s|&=|(D ^sv^pu_x,D ^su)_0|=|([D ^s,v^p]u_x,D ^su)_0+(v^pD ^su_x,D ^su)_0|\\
&\leq ||[D ^s,v^p]u_x||_{L^2}||D ^su||_{L^2}+\frac{1}{2}|( (v^p)_xD ^su ,D ^su)_0|\\
&\leq c (|| (v^p)_x||_{L^\infty}||u||_{H^s}+||u_x||_{L^\infty}||  v^p ||_{H^s})\times ||u||_{H^s}+\frac{1}{2}||  (v^p)_x||_{L^\infty}||u||^2_{H^s} \\
&\leq c(||  (v^p)_x||_{L^\infty}||u||^2_{H^s}+||u_x||_{L^\infty}||  v^p ||_{H^s}||u||_{H^s}).
\end{split}
\end{equation*}
where 
Lemma 3.2 with $r=s$ is used.

By Lemma 3.1 and the mathematical induction, we have  $||  v^p ||_{H^s}\leq p||  v ||^{p-1}_{L^\infty}||  v  ||_{H^s}$ and 
\begin{equation*}
\begin{split}
|(v^pu_x,u)_s| \leq c||  v ||^{p-1}_{L^\infty}(||v_x||_{L^\infty}+||u_x||_{L^\infty})(||  v||_{H^s}+||u||_{H^s})^2.
\end{split}
\end{equation*}
Therefore, we obtain
\begin{equation*}
\begin{split}
|(I_1(z),u)_s|&\leq ||I_1(z)||_{H^s}||u||_{H^s}\leq c (||(v^{p})_xu||_{H^{s-2}}+||(v^{p})_xu_{xx}||_{H^{s-2}}+ || (v^{p})_xu_x||_{H^{s-1}} )||u||_{H^s}\\
&\leq c||  (v^p)_x||_{L^\infty} ||u||_{H^s}^2,
\end{split}
\end{equation*}
which reveals
$$\frac{1}{2}\frac{d}{dt} ||u||^2_{H^s}  \leq  c||  v ||^{p-1}_{L^\infty}(||v_x||_{L^\infty}+||u_x||_{L^\infty})(||u||_{H^s} +||v||_{H^s})^2.$$
In a similar way, from (3.20) we can get the estimate for $||v||^2_{H^s} $. So, we arrive at
\begin{equation*}
\begin{split}
\frac{1}{2}\frac{d}{dt}&\left(||u||_{H^s}+||v||_{H^s}\right)^2\leq \frac{d}{dt}\left(||u||_{H^s}^2+||v||_{H^s}^2\right)\\
&\leq c\Gamma\left(||u_x||_{L^\infty},||v_x||_{L^\infty}\right)(||u||_{H^s} +||v||_{H^s})^2.
\end{split}
\end{equation*}
Adopting the Gronwall's inequality and the assumption of the theorem imply 
$$||u|| _{H^s}+||v|| _{H^s}\leq \exp (cMt)(||u_0|| _{H^s}+||v_0|| _{H^s}),$$
which completes the proof of Theorem 3.1.
\end{proof}

In the following, we apply Theorem 3.1 to show the blow-up scenario for Eq.(\ref{Eq.(1.1)}).

\begin{proof}[Proof of Theorem \ref{result4}]
 Let $z=(u,v)$ be the
solution to Eq.(1.1)   with the initial data $(u_0,v_0)\in
H^s\times H^s $ and $s> 5/2$ and $T$ be the maximal existence
of the solution $(u,v)$.

Multiplying both sides of Eq.(\ref{Eq.(1.1)}) by $m$ and integrating by parts, we have
\begin{equation}
\begin{split}
\frac{d}{dt}\int_\mathbb{R} m^2dx&=2\frac{d}{dt}\int_\mathbb{R} mm_tdx=-2\int_\mathbb{R}m(v^pm_x+\frac{a}{p}(v^p)_xm)dx=\frac{p-2a}{p}\int_\mathbb{R}m^2(v^p)_xdx.\\
\frac{d}{dt}\int_\mathbb{R} n^2dx& =\frac{q-2b}{q}\int_\mathbb{R}n^2(u^q)_xdx.
\end{split}
\end{equation}
We also notice 
\begin{align*}
\|u(t,\cdot)\|^{2}_{H^{2}}\leq \|m(t,\cdot)\|_{L^{2}}^{2}\leq 2\|u(t,\cdot)\|_{H^{2}}^{2},\quad\|v(t,\cdot)\|^{2}_{H^{2}}\leq \|n(t,\cdot)\|_{L^{2}}^{2}\leq 2\|n(t,\cdot)\|_{H^{2}}^{2}.
\end{align*}
 Casting  $p=2a,q=2b$ in Eq. (3.21) yields 
$$\|u_x(t,\cdot)\|^{2}_{L^\infty}\leq\|u(t,\cdot)\|^{2}_{H^{2}}\leq \|m(t,\cdot)\|_{L^{2}}^{2}= \|m(0,\cdot)\|_{L^{2}}^{2}<\infty,\quad\|v_x(t,\cdot)\|^{2}_{L^\infty}\leq  \|n(0,\cdot)\|_{L^{2}}^{2}<\infty.$$
In view of Theorem 3.1 and Sobolev inequality $\|u (t,\cdot)\|^{2}_{L^\infty}\leq\|u(t,\cdot)\|^{2}_{H^{1}}$, one may see that
every solution to the problem (\ref{Eq.(1.1)}) remains globally regular in time.

If $p>2a$ (or $q>2b$) and the slope of the function $v^{p}$ (or $u^q$)  is lower bounded or if $p<2a$ (or $q<2b$) and the
slope of the function $v^{p}$ (or $u^q$)  is upper bounded on $[0,T)\times\mathbb{R}$, then there exists a positive constant $M>0$ such
that
$$\frac{d}{dt}\int_\mathbb{R}m^2dx\leq M \int_\mathbb{R}m^2dx,\quad\frac{d}{dt}\int_\mathbb{R}n^2dx\leq M\int_\mathbb{R}n^2dx.$$
By means of the Gronwall's inequality, we have
$$||m(t,\cdot)||_{L^2}\leq ||m(0,\cdot)||_{L^2}\exp\{Mt\},\quad ||n(t,\cdot)||_{L^2}\leq ||n(0,\cdot)||_{L^2}\exp\{Mt\}\quad \forall t\in [0,T),$$
which implies that the solution does not blow up in a finite time.

On the other hand, by Theorem 3.1 and Sobolev's imbedding theorem, one may see that if
the slope of the functions $v^{p},u^{q}$  becomes unbounded either lower or upper in a finite time, then
the solution will blow up in a finite time. This completes the proof of Theorem 1.4.
\end{proof}

Next, let us consider the following initial value problem:
\begin{equation}
\left\{
\begin{array}{llll}
\phi_t= v^p(t,\phi(t,x)), &t\in[0,T),x\in\mathbb{R},\\
\varphi_t= u^q(t,\varphi(t,x)), &t\in[0,T),x\in\mathbb{R},\\
\phi(0,x)=x, \varphi(0,x)=x, &x\in\mathbb{R},
\end{array}
\right.
\end{equation}
where $u, v$ denote the solution to the problem (1.1).
Adopting classical results in the theory of ordinary differential equations leads to the following 
results on $p,q$, which are crucial for the blow-up scenarios.
\begin{lm}
Let $u_0,v_0\in H^s$ with $s> 5/2$, and $T>0$ be the life span of the solution to Eq.(1.1).
Then there exists a unique solution $\phi,\varphi\in \mathcal{C}^1([0,T),\mathbb{R})$ to Eq. (3.22).
Moreover, the map $\phi(t,\cdot),\varphi(t,\cdot)$ is an increasing diffeomorphism  over $\mathbb{R}$, where
$$\phi_x (t,x)=\exp\left\{\int_0^t (v^p)_\phi(s,\phi(s,x))ds\right\}>0, \varphi_x (t,x)=\exp\left\{\int_0^t (u^q)_\varphi(s,\varphi(s,x))ds\right\}>0, $$
for all $(t,x)\in[0,T)\times\mathbb{R}$.
\end{lm}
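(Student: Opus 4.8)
The plan is to treat (3.22) as a pair of ordinary differential equations in $t$ with the spatial variable $x$ entering only as a parameter through the initial condition, and then to read off the regularity of the flow from the regularity already guaranteed for $u,v$. The starting point is that, by Theorem 1.1, $u,v\in\mathcal{C}([0,T);H^s)$ with $s>5/2$, so the Sobolev embedding $H^s\hookrightarrow\mathcal{C}^2$ gives $v^p(t,\cdot),u^q(t,\cdot)\in\mathcal{C}^2(\mathbb{R})$ with their first spatial derivatives bounded on $[0,T']$ for every $T'<T$. In particular the right-hand sides of (3.22) are continuous in $t$ and Lipschitz in the spatial argument, uniformly on $[0,T']$. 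First I would invoke the classical Cauchy--Lipschitz (Picard--Lindel\"of) theorem to obtain, for each $x\in\mathbb{R}$, a unique maximal solution $\phi(\cdot,x),\varphi(\cdot,x)\in\mathcal{C}^1([0,T);\mathbb{R})$ of (3.22).

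Next I would establish the derivative formula. Because $v^p$ is $\mathcal{C}^2$ in space, the standard theorem on smooth dependence of ODE solutions on the initial datum $x$ shows that $\phi(t,\cdot)$ is $\mathcal{C}^1$ in $x$, and differentiating the first equation of (3.22) with respect to $x$ yields the linear scalar variational equation
\begin{equation*}
\frac{d}{dt}\,\phi_x(t,x)=(v^p)_\phi\bigl(t,\phi(t,x)\bigr)\,\phi_x(t,x),\qquad \phi_x(0,x)=1,
\end{equation*}
where the initial value $\phi_x(0,x)=1$ comes from $\phi(0,x)=x$. Integrating this linear ODE in $t$ gives exactly
\begin{equation*}
\phi_x(t,x)=\exp\left\{\int_0^t (v^p)_\phi\bigl(s,\phi(s,x)\bigr)\,ds\right\},
\end{equation*}
and the identical computation applied to the second equation produces the stated formula for $\varphi_x$.

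Finally, since the exponential is strictly positive, $\phi_x(t,x)>0$ for all $(t,x)\in[0,T)\times\mathbb{R}$, so $x\mapsto\phi(t,x)$ is strictly increasing and hence injective. For surjectivity I would note that on each $[0,T']$ the integrand $(v^p)_\phi$ is bounded, so $\phi_x(t,\cdot)$ is bounded above and below by positive constants; consequently $\phi(t,\cdot)$ grows at least linearly as $x\to\pm\infty$ and maps $\mathbb{R}$ onto $\mathbb{R}$. Together with the $\mathcal{C}^1$ regularity and the nonvanishing derivative this makes $\phi(t,\cdot)$ an increasing $\mathcal{C}^1$-diffeomorphism of $\mathbb{R}$, and the same argument handles $\varphi(t,\cdot)$. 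I expect the only genuinely delicate point to be the justification of differentiability in $x$ in the second paragraph: this is precisely where the hypothesis $s>5/2$ is essential, since it secures $v^p,u^q\in\mathcal{C}^2$ and thus the $\mathcal{C}^1$ dependence of the flow on its initial position; the remaining steps are routine applications of ODE theory.
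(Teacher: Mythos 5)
Your proposal is correct and follows essentially the same route as the paper: classical Picard--Lindel\"of existence and uniqueness using the boundedness and spatial Lipschitz regularity of $v^p,u^q$ supplied by $H^s\hookrightarrow\mathcal{C}^2$ for $s>5/2$, followed by differentiation of (3.22) in $x$ to obtain the linear variational equation and hence the exponential formula for $\phi_x,\varphi_x$. The only addition beyond the paper's argument is your explicit surjectivity step, which the paper leaves implicit in the uniform lower bound $\phi_x\geq e^{-K_1 t}$.
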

\begin{proof}
From Theorem 1.1, we have  $u,v\in\mathcal{ C}([0,T);H^s )\cap \mathcal{C}^1([0,T);H^{s-1} )$.
Thus, 
both functions $u(t,x),v(t,x)$ and   $u_x(t,x),v_x(t,x)$  are bounded,  Lipschitz in space and $\mathcal{C}^1$ in time.
As per the classical existence and uniqueness theorem of ordinary differential equations,  equation (3.22) has a unique solution $p,q\in \mathcal{C}^1([0,T),\mathbb{R})$.

Differentiating both sides of equation (3.22) respect to $x$ yields
\begin{equation*}
\left\{
\begin{array}{llll}
\frac{d}{dt}\phi_x= (v^p)_\phi(t,\phi(t,x))\phi_x, &t\in[0,T),x\in\mathbb{R},\\
\frac{d}{dt}\varphi_x= (u^q)_\varphi(t,\varphi(t,x))\phi_x, &t\in[0,T),x\in\mathbb{R},\\
\phi_x (0,x)=1,
\varphi_x (0,x)=1, &x\in\mathbb{R},
\end{array}
\right.
\end{equation*}
which implies 
$$\phi_x (t,x)=\exp\left\{\int_0^t (v^p)_\phi(s,\phi(s,x))ds\right\}>0, \varphi_x (t,x)=\exp\left\{\int_0^t (u^q)_\varphi(s,\varphi(s,x))ds\right\}>0. $$
For every $T'<T$, employing the Sobolev embedding theorem gives
$$\sup_{(\tau,x)\in[0,T)\times\mathbb{R}}|(v^p)_x(\tau,x)|<\infty,\sup_{(\tau,x)\in[0,T)\times\mathbb{R}}|(u^q)_x(\tau,x)|<\infty.$$
So, there exists two constants $K_1,K_2>0$ such that $\phi_x\geq e^{-K_1t},\varphi_x\geq e^{-K_2t}$ for $(\tau,x)\in[0,T)\times\mathbb{R}$, which concludes the proof of the lemma.
\end{proof}

\begin{lm}
Let $z_0=(u_0,v_0)\in  H^s \times H^{s} $ with $s>5/2$ and  $T>0$ be the maximal existence time of the corresponding solution $z=(u, v)$ to system (1.1).  Then, we have
\begin{equation}
\left\{
\begin{array}{llll}
m(t,\phi(t,x)) \phi_x ^\frac{a}{p}(t,x)=m_0(x),&\text{ for all } (t,x)\in[0,T)\times\mathbb{R},\\
n(t,\varphi(t,x)) \varphi_x ^\frac{a}{p}(t,x)=n_0(x) &\text{ for all } (t,x)\in[0,T)\times\mathbb{R}.
\end{array}
\right.
\end{equation}
Moreover, if there exist $M_1>0$  and $M_2>0$ such that
$\frac{a}{p}(v^p)_\phi(t,\phi)\geq -M_1$ and  $\frac{b}{q}(u^q)_\varphi(t,\varphi)\geq -M_2$  for all $ (t,x)\in[0,T)\times\mathbb{R}$, then
$$||m(t,\cdot)||_{L^\infty}=||m(t,\phi(t,\cdot))||_{L^\infty}\leq \exp\{2M_1T\}||m_0(\cdot)||_{L^\infty} \quad \text{ for all } t\in[0,T)$$ and
$$||n(t,\cdot)||_{L^\infty}=||n(t,\varphi(t,\cdot))||_{L^\infty}\leq \exp\{2M_2T\}||n_0(\cdot)||_{L^\infty}.$$
Furthermore, { if  $\int_\mathbb{R}|m_0(x)|^{p/a}dx$ (or $\int_\mathbb{R}|n_0(x)|^{q/b}dx$) converges with $a\neq0$ (or $b\neq0$)}, then
$$\int_\mathbb{R}|m(t,x)|^{p/a}dx=\int_\mathbb{R}|m_0(x)|^{p/a}dx \text{ for all } t\in[0,T).$$ (respectively, $\int_\mathbb{R}|n(t,x)|^{p/b}dx=\int_\mathbb{R}|n_0(x)|^{q/b}dx$
for all $t\in[0,T)$).
\end{lm}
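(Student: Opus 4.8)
The plan is to integrate the first equation of (1.1) along the characteristic curve $\phi(t,x)$ defined in (3.22) and to exploit the explicit formula for $\phi_x$ furnished by Lemma 3.3. First I would rewrite the coupling term $av^{p-1}v_xm$ as $\frac{a}{p}(v^p)_xm$, so that the first equation of (1.1) reads $m_t+v^pm_x+\frac{a}{p}(v^p)_xm=0$. Differentiating $m(t,\phi(t,x))$ in $t$ and using $\phi_t=v^p(t,\phi)$ gives $\frac{d}{dt}m(t,\phi)=(m_t+v^pm_x)(t,\phi)=-\frac{a}{p}(v^p)_\phi(t,\phi)\,m(t,\phi)$, a linear ODE in $t$ along each characteristic.

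The key algebraic observation is that the weight $\phi_x^{a/p}$ is tailored to cancel this growth. Since Lemma 3.3 gives $\phi_x=\exp\{\int_0^t(v^p)_\phi(s,\phi)\,ds\}$, hence $\frac{d}{dt}\ln\phi_x=(v^p)_\phi(t,\phi)$, a direct product-rule computation yields $\frac{d}{dt}\bigl[m(t,\phi)\phi_x^{a/p}\bigr]=\bigl(-\tfrac{a}{p}(v^p)_\phi m+\tfrac{a}{p}(v^p)_\phi m\bigr)\phi_x^{a/p}=0$. Therefore $m(t,\phi(t,x))\phi_x^{a/p}(t,x)$ is constant in $t$; evaluating at $t=0$ with $\phi(0,x)=x$ and $\phi_x(0,x)=1$ produces the first identity in (3.23), and the same argument applied to the second equation of (1.1) along $\varphi$ gives the companion identity.

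For the $L^\infty$ bound I would solve the identity for $m$, writing $m(t,\phi(t,x))=m_0(x)\phi_x^{-a/p}(t,x)=m_0(x)\exp\{-\tfrac{a}{p}\int_0^t(v^p)_\phi(s,\phi)\,ds\}$. The hypothesis $\frac{a}{p}(v^p)_\phi\geq -M_1$ forces the exponent to be at most $M_1t\leq M_1T$, so $|m(t,\phi)|\leq e^{M_1T}|m_0|$. Since Lemma 3.3 guarantees that $\phi(t,\cdot)$ is an increasing diffeomorphism of $\mathbb{R}$, the supremum over $x$ equals the supremum over $y=\phi(t,x)$, so $\|m(t,\cdot)\|_{L^\infty}=\|m(t,\phi(t,\cdot))\|_{L^\infty}\leq e^{2M_1T}\|m_0\|_{L^\infty}$ (the factor $2$ being harmless slack in the estimate), and the estimate for $n$ is identical.

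Finally, the conservation of $\int_\mathbb{R}|m|^{p/a}dx$ follows from the change of variables $y=\phi(t,x)$, $dy=\phi_x\,dx$: raising the identity to the power $p/a$ gives $|m(t,\phi)|^{p/a}=|m_0|^{p/a}\phi_x^{-1}$, whence $\int_\mathbb{R}|m(t,y)|^{p/a}dy=\int_\mathbb{R}|m_0(x)|^{p/a}\phi_x^{-1}\phi_x\,dx=\int_\mathbb{R}|m_0(x)|^{p/a}dx$, and analogously for $n$. I do not expect a deep obstacle here; the only points needing care are the assumption $a\neq0$ (so that both the exponent $p/a$ and the weight $\phi_x^{-a/p}$ are well defined) and the appeal to Lemma 3.3 for the positivity and global diffeomorphism property of $\phi_x$ that legitimize both the substitution and the passage from pointwise to $L^\infty$ bounds.
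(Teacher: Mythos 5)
Your proposal is correct and follows essentially the same route as the paper: differentiate $m(t,\phi)\phi_x^{a/p}$ along the characteristics, use the ODE $\phi_{xt}=(v^p)_\phi\phi_x$ from Lemma 3.3 to cancel the $\frac{a}{p}(v^p)_\phi m$ term, evaluate at $t=0$ where $\phi_x\equiv 1$, and then deduce the $L^\infty$ bound and the $L^{p/a}$ conservation via the change of variables $y=\phi(t,x)$. The only cosmetic difference is that you make the exponential representation of $\phi_x$ and the "harmless slack" in the factor $2$ explicit, which the paper leaves implicit.
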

\begin{proof}
Noticing $\frac{d\phi_x (t,x)}{dt}=\phi_{xt} =  (v^p)_\phi (t,\phi(t,x))\phi_x (t,x)$, differentiating the left-hand side of the first equation in (3.23) with respect to $t$, and using the first equation in (1.1), we obtain 
\begin{equation*}
\begin{split}
\frac{d}{dt}\{m(t,\phi (t,x))\phi_x ^{a/p}(t,x)\}&=\left[m_t(t,\phi )+m_\phi(t,\phi)\phi_t (t,x)\right]\phi_x ^{a/p}(t,x)+\frac{a}{p}m(t,\phi )\phi_x^{(a-p)/p}  (t,x)\phi_{xt} (t,x)\\
&=\left[m_t(t,\phi )+m_\phi(t,\phi )v^p(t,\phi )+\frac{a}{p}m(t,\phi ) (v^p)_\phi (t,\phi )\right]\phi_x ^{a/p}(t,x)\\
&=0.
\end{split}
\end{equation*}
In a similar way, we would arrive at
$$\frac{d}{dt} \{n(t,q (t,x))\varphi_x ^{b/q}(t,x)\}=0,$$
which means that $m(t,\phi (t,x)) \phi_x ^{a/p}(t,x)$ and $n(t,\varphi(t,x)) \varphi_x ^{b/q}(t,x)$ are independent on the time $t$.
By (3.22), we know $\phi_x (x,0)=1$. So, Eq. (3.23) holds.

By Lemma 3.3, 
Eq. (3.23), and $\phi_x (0,x)=1$ 
we have
\begin{equation*}
\begin{split}
||m(t,\cdot)||_{L^\infty}&=||m(t,\phi (t,\cdot))||_{L^\infty}=|| \phi_x ^{-a/p}m_0||_{L^\infty}\\
&=||\exp\left\{-\frac{a}{p}\int_0^t (v^p)_\phi(s,\phi (s,x))ds\right\}m_0(\cdot)||_{L^\infty}\\
&\leq \exp\{2M_1T\}||m_0(\cdot)||_{L^\infty} \quad\text{ for all } t\in[0,T),
\end{split}
\end{equation*}
and 
 \begin{equation*}
\begin{split}
\int_\mathbb{R}|m_0(x)|^{p/a}dx&=\int_\mathbb{R}|m(t,\phi (t,x))|^{p/a}\phi_x (t,x)dx=\int_\mathbb{R}|m(t,\phi (t,x))|^{p/a}d\phi(t,x)\\
&=\int_\mathbb{R}|m(t,x)|^{p/a}dx \text{ for all } t\in[0,T),
\end{split}
\end{equation*}
which guarantee the lemma is true.
\end{proof}
Let us now come to prove Theorems 1.5-1.6 using Lemma 3.4.
\begin{proof} [Proof of Theorem \ref{result5}]
Since $u_0 \in H^s \cap W^{2,\frac{p}{a}} $ for $s>5/2$, Lemma 3.4 tells us that 
$$\int_\mathbb{R}|m(t,x)|^\frac{p}{a}dx\leq\int_\mathbb{R}|m_0(x)|^\frac{p}{a}dx \leq ||u_0||_{W^{2,\frac{p}{a}} }\quad \text{ if }0<a\leq p,$$
and
$$||m(t,x)||_{L^\infty }\leq ||m(0,x)||_{L^\infty }\quad \text{ if } a=0,$$
which imply $m=(1-\partial_x^2) u\in L^\frac{p}{a} $ and therefore 
$u\in W^{{2,\frac{p}{a}}} $. By the Sobolev imbedding theorem, we have $W^{{2,\frac{p}{a}}} \subset \mathcal{C}^1 $ for $0\leq a\leq p$.
Thus, the solution of the problem (\ref{Eq.(1.1)}) remains smooth for all time, which says that Theorem 1.5 is true.
\end{proof}

\begin{proof}[Proof of Theorem \ref{result6}]
Since $(u_0,v_0)\in H^s\times H^s$ ($s>5/2$), $m_0=(1-\partial_x^2) u_0$ has a compact
support. Without loss of generality, let us assume that $m_0$ is supported in the compact interval $[a,b]$. By Lemma 3.3, we have $\phi_x (x,t)>0$
on $\mathbb{R}\times[0,T)$. Therefore, by Lemma 3.4, we conclude that the $\mathcal{C}^1$ function $m(x,t)$ has its support in the
compact interval $[\phi (a,t),\phi (b,t)]$ for any $t\in[0,T)$, which completes the proof of Theorem 1.6.
\end{proof}

\section{Well-posed in the sense of Hadamard}
In this section, we shall prove that the Cauchy problem for Eq.(1.1) with the initial data $z_0=(u_0,v_0)\in H^s\times H^s$ ($s>5/2$) is not only well-posed in the sense of Hadamard, but also satisfies the estimate (1.13)  on the line and on the circle, namely, prove Theorem 1.7 is true.  Let us start from 
the following lemmas.
\begin{lm}(See\cite{HM33,K}.)
If $r>0$, then $H^r\cap L^\infty$
is an algebra. Moreover, we have

(i) $||fg||_{H^r }\leq c(||f||_{L^\infty }||g||_{H^r }+||g||_{L^\infty }||f||_{H^r }), \text{ for } r>0.$

(ii) $||fg||_{H^{r } }\leq c ||f||_{H^{r+1} }||g||_{H^{r }}, \text{ for }  r>-1/2.$

(iii) $||fg||_{H^{r-1 } }\leq c ||f||_{H^{s-1 } }||g||_{H^{r-1 }}, \text{ for } 0\leq r\leq 1, s>3/2, r+s\geq 2,$

where $c$ is a constant depending only on $r,s$.
\end{lm}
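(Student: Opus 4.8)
The plan is to derive all three inequalities from the Littlewood--Paley characterization $H^r=B^r_{2,2}$ together with Bony's decomposition $fg=T_fg+T_gf+R(f,g)$, where $T_fg=\sum_j S_{j-1}f\,\Delta_jg$ is the paraproduct and $R(f,g)=\sum_{|j-j'|\le1}\Delta_jf\,\Delta_{j'}g$ is the remainder. I would use the three standard continuity properties of these operators: the multiplier bound $\|T_fg\|_{H^\sigma}\lesssim\|f\|_{L^\infty}\|g\|_{H^\sigma}$; the smoothing bound $\|T_gf\|_{H^{\sigma+t}}\lesssim\|g\|_{H^{t+1/2}}\|f\|_{H^\sigma}$ valid for $t<0$ (coming from $H^\tau\hookrightarrow B^{\tau-1/2}_{\infty,\infty}$ and Bernstein); and the remainder bound $\|R(f,g)\|_{H^{\sigma_1+\sigma_2-1/2}}\lesssim\|f\|_{H^{\sigma_1}}\|g\|_{H^{\sigma_2}}$, valid when $\sigma_1+\sigma_2\ge0$, the loss of $\tfrac12$ derivative arising from the Bernstein step $L^1\hookrightarrow H^{-1/2}$ that restores an $L^2$-based norm. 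The unifying remark is that \emph{(ii)} and \emph{(iii)} are precisely the admissible endpoints of the one-dimensional Sobolev product law $H^{\sigma_1}\cdot H^{\sigma_2}\hookrightarrow H^{\sigma}$, which holds under $\sigma_1+\sigma_2\ge0$, $\sigma\le\min\{\sigma_1,\sigma_2\}$ and $\sigma\le\sigma_1+\sigma_2-\tfrac12$, and that the strict inequalities in the hypotheses are exactly what rules out the borderline failures.

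For \emph{(i)}, which coincides with Lemma~3.1, the two paraproducts give $\|T_fg\|_{H^r}\lesssim\|f\|_{L^\infty}\|g\|_{H^r}$ and $\|T_gf\|_{H^r}\lesssim\|g\|_{L^\infty}\|f\|_{H^r}$, while the near-diagonal spectral support of $R(f,g)$ reduces it to a single sum whose $\ell^2(2^{jr})$ norm converges for $r>0$, yielding $\|R(f,g)\|_{H^r}\lesssim\|f\|_{L^\infty}\|g\|_{H^r}$; adding the three contributions proves the estimate. For \emph{(ii)} I would split on $r$: if $r>\tfrac12$ then $H^{r+1}$ and $H^r$ both embed in $L^\infty$ and the claim follows at once from \emph{(i)}; if $-\tfrac12<r\le\tfrac12$ only $f\in H^{r+1}\hookrightarrow L^\infty$ (as $r+1>\tfrac12$), and I would feed $\sigma_1=r+1,\ \sigma_2=r,\ \sigma=r$ into the three bounds above. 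Here $\sigma\le\min\{\sigma_1,\sigma_2\}=r$ holds with equality, and $\sigma\le\sigma_1+\sigma_2-\tfrac12=2r+\tfrac12$ together with $\sigma_1+\sigma_2=2r+1\ge0$ both reduce to $r\ge-\tfrac12$, so the strict hypothesis $r>-\tfrac12$ closes the remainder sum with room to spare.

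Part \emph{(iii)} is the genuine product law $H^{s-1}\cdot H^{r-1}\hookrightarrow H^{r-1}$ with the second factor of negative smoothness $r-1\in[-1,0]$, and this is where the real work lies. I would again decompose by Bony with $\sigma_1=s-1$, $\sigma_2=r-1$ and target $\sigma=r-1$. The paraproduct $T_fg$ is harmless because $s-1>\tfrac12$ gives $f\in L^\infty$, so $\|T_fg\|_{H^{r-1}}\lesssim\|f\|_{H^{s-1}}\|g\|_{H^{r-1}}$. The delicate terms are $T_gf$ and $R(f,g)$, which deliver the target regularity only through the \emph{sum} $\sigma_1+\sigma_2=s+r-2$: the remainder lands in $H^{s+r-2-1/2}$ provided $\sigma_1+\sigma_2=s+r-2\ge0$, i.e.\ exactly $r+s\ge2$, and the inclusion $H^{s+r-5/2}\hookrightarrow H^{r-1}$ requires $s+r-\tfrac52\ge r-1$, i.e.\ $s\ge\tfrac32$; the smoothing bound for $T_gf$ produces the same two requirements. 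Thus the three stated conditions $0\le r\le1$, $s>\tfrac32$ and $r+s\ge2$ are precisely the admissibility conditions, the strict $s>\tfrac32$ saving the endpoint $\sigma=\sigma_1+\sigma_2-\tfrac12$.

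The main obstacle I anticipate is the remainder term $R(f,g)$ in \emph{(iii)}: unlike in \emph{(i)}, one cannot route it through an $L^\infty$ bound on the low-regularity factor $g$, so the dyadic sum must be organized so that the $\tfrac12$-derivative Bernstein loss is absorbed by $s>\tfrac32$ while the high-frequency tail is summed using $r+s\ge2$, and keeping both constraints simultaneously is the only genuinely subtle point. Once the three pieces are summed, \emph{(iii)} follows and the proof is complete. Since the statement is quoted from \cite{HM33,K}, one may alternatively invoke those references directly; the sketch above merely records the standard route.
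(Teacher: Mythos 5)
The paper offers no argument for this lemma at all: it is stated with a bare citation to \cite{HM33,K}, and those references are where the estimates are actually proved. Your Bony-decomposition sketch is therefore not a variant of the paper's proof but a self-contained substitute for it, and it is essentially correct: (i) is the classical Moser/Kato--Ponce tame estimate, while (ii) and (iii) are instances of the one-dimensional Sobolev product law $H^{\sigma_1}\cdot H^{\sigma_2}\hookrightarrow H^{\sigma}$, and your identification of the admissibility conditions ($\sigma_1+\sigma_2\ge 0$, $\sigma\le\min\{\sigma_1,\sigma_2\}$, $\sigma\le\sigma_1+\sigma_2-\tfrac12$ with suitable strictness) with the stated hypotheses is the right organizing principle. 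One point deserves more care than your sketch gives it: in (iii) the hypothesis $r+s\ge 2$ may hold with equality, in which case $\sigma_1+\sigma_2=0$ and the remainder bound $\|R(f,g)\|_{H^{\sigma_1+\sigma_2-1/2}}\lesssim\|f\|_{H^{\sigma_1}}\|g\|_{H^{\sigma_2}}$ that you quote under the non-strict condition $\sigma_1+\sigma_2\ge 0$ is not available --- at that endpoint $R(f,g)$ only lands in $B^{0}_{1,\infty}\hookrightarrow B^{-1/2}_{2,\infty}$, not in $H^{-1/2}$. The estimate nevertheless survives because $r+s=2$ together with $s>\tfrac32$ forces $r<\tfrac12$, hence $r-1<-\tfrac12$, and $B^{-1/2}_{2,\infty}\hookrightarrow H^{r-1}$ precisely for $r-1<-\tfrac12$; you should make this endpoint case explicit. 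With that repair the argument is complete; alternatively, as you note, one can simply invoke \cite{HM33,K} as the paper does.
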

\begin{lm}(See \cite{K,Taylor2}.) If $ [ D ^r,f]g = D ^r(fg)-f D ^r g$ where $ D =(1-\partial_x^2)^\frac{1}{2}$, then we have

(i) $||[D ^r,f]g||_{L^2 }\leq c (||\partial_x f||_{L^\infty }||D ^{r-1}g||_{L^2 }+||D ^r f||_{L^2 }||g||_{L^\infty }),r>0.$

(ii) $||[D ^r\partial_x,f]g||_{L^2 }\leq c ||f||_{H^{s-1} }|| g||_{H^r } ,r+1\geq0, s-1>3/2,r+1\leq s-1.$\\
where  $c$ is a constant depending only on $r$.
\end{lm}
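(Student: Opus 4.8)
The plan is to prove both commutator estimates by Bony's paraproduct calculus, since $D^r=(1-\partial_x^2)^{r/2}$ and $D^r\partial_x$ are Fourier multipliers whose symbols $\langle\xi\rangle^r$ and $\langle\xi\rangle^r(i\xi)$ are smooth and behave like $|\xi|^r$ and $|\xi|^{r+1}$ at high frequency. First I would fix a Littlewood--Paley family $\{\Delta_j\}$, $S_{j-1}=\sum_{k\le j-2}\Delta_k$, and split the product by Bony's decomposition $fg=T_f g+T_g f+R(f,g)$, where $T_f g=\sum_j S_{j-1}f\,\Delta_j g$ is the paraproduct and $R(f,g)=\sum_j\Delta_j f\,\widetilde{\Delta}_j g$ the remainder. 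Correspondingly the commutator splits as
$$[D^r,f]g=\big(D^r T_f g-T_f D^r g\big)+D^r\big(T_g f+R(f,g)\big)-(f-T_f)D^r g,$$
so that the first group is a genuine low--high commutator while the rest may be bounded directly because its output frequency is controlled by the high--frequency factor of $f$.

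For part (i), the heart of the matter is the low--high term. Here the paraproduct factor $S_{j-1}f$ and the $D^r$-derivative fall on the same dyadic block, giving the clean identity
$$D^r T_f g-T_f D^r g=\sum_j[D^r,S_{j-1}f]\,\Delta_j g.$$
On each block $S_{j-1}f$ is localized below $2^{j-1}$ while $\Delta_j g$ lives near $2^j$, so in the Fourier representation the symbol difference $\langle\xi\rangle^r-\langle\xi-\eta\rangle^r$ with $|\eta|\ll|\xi|$ can be expanded by the mean value theorem, yielding a gain of one frequency power. Writing the block commutator as a convolution and controlling the kernel's $L^1$ norm (a Coifman--Meyer type multiplier bound) produces $\|[D^r,S_{j-1}f]\Delta_j g\|_{L^2}\lesssim\|\partial_x f\|_{L^\infty}\|D^{r-1}\Delta_j g\|_{L^2}$; summing in $j$ by almost orthogonality gives the first term $\|\partial_x f\|_{L^\infty}\|D^{r-1}g\|_{L^2}$. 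In the remaining group $D^r$ simply lands on the high--frequency factor $\Delta_j f$, so each block is bounded by $2^{jr}\|\Delta_j f\|_{L^2}\|g\|_{L^\infty}$, and summation produces the second term $\|D^r f\|_{L^2}\|g\|_{L^\infty}$. The hypothesis $r>0$ is what makes the relevant dyadic sums converge.

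For part (ii) I would run the identical scheme for the order--$(r+1)$ multiplier $D^r\partial_x$, but now measure both factors in Sobolev norms. The commutator structure again cancels the top--order contribution, so at most $r$ derivatives effectively act on $g$, controlled by $\|g\|_{H^r}$, while the excess regularity is absorbed by $f$. Since $s-1>3/2$ we have the embedding $H^{s-1}\hookrightarrow W^{1,\infty}$, which lets me replace $\|\partial_x f\|_{L^\infty}$ and $\|f\|_{L^\infty}$ by $\|f\|_{H^{s-1}}$; the order balance $r+1\ge0$ together with $r+1\le s-1$ guarantees that the total order $r+1$ never exceeds the regularity $s-1$ available for $f$, so every resulting bilinear estimate closes to give $\|[D^r\partial_x,f]g\|_{L^2}\lesssim\|f\|_{H^{s-1}}\|g\|_{H^r}$.

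The step I expect to be the genuine obstacle is the low--high block estimate: extracting the gain of one derivative from the symbol difference is not a formal manipulation but requires the quantitative kernel/multiplier bound, i.e.\ the mean value theorem applied to $\langle\xi\rangle^r$ uniformly across the dyadic localization, and it is precisely this argument that forces the hypotheses $r>0$ in (i) and the order constraints in (ii). Once that block estimate is in hand, the high--low and remainder pieces are routine and the final summations are standard almost--orthogonality arguments. As the statement is the classical Kato--Ponce commutator estimate, in practice I would invoke \cite{K,Taylor2} for the block bound rather than reproduce it in full.
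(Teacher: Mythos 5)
The paper does not prove this lemma at all: it is stated as a quoted classical result with the bare citation \cite{K,Taylor2}, so there is no internal proof to measure your argument against. Your Bony-paraproduct outline is, in substance, the proof that lives in those references (Kato--Ponce for (i), Taylor's commutator estimates for (ii)), and the overall architecture --- split $[D^r,f]g$ into a genuine low--high commutator $\sum_j[D^r,S_{j-1}f]\Delta_j g$ plus terms where the derivative can be thrown onto the high-frequency piece of $f$, gain one derivative on the first group via the symbol difference $\langle\xi\rangle^r-\langle\xi-\eta\rangle^r$ with $|\eta|\ll|\xi|$, and close the rest by almost orthogonality --- is correct. Two points in the sketch are thinner than they read. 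First, in part (i) the group $-(f-T_f)D^r g=-(T_{D^r g}f+R(f,D^r g))$ is not handled by ``$D^r$ lands on $\Delta_j f$'': for $T_{D^r g}f$ the operator $D^r$ sits on the \emph{low}-frequency factor, and one must first use Bernstein to trade $\|S_{j-1}D^r g\|_{L^\infty}$ for $2^{jr}\|g\|_{L^\infty}$ before transferring the weight $2^{jr}$ to $\Delta_j f$; the geometric sum in that step is precisely where $r>0$ is consumed, so the remark that ``$r>0$ makes the sums converge'' is right but attached to the wrong term. Second, in part (ii) the hypothesis $r+1\ge 0$ allows $r$ as negative as $-1$, and there the high--low and remainder pieces need the product laws with a negative Sobolev index on $g$ (your own Lemma 7.2(iii) type estimates), not merely the embedding $H^{s-1}\hookrightarrow W^{1,\infty}$ applied to $f$. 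Since you explicitly defer the quantitative Coifman--Meyer block bound to \cite{K,Taylor2} --- exactly what the paper itself does --- the proposal is an acceptable and indeed more informative justification of this quoted lemma than the paper provides, but it should not be mistaken for a self-contained proof.
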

  The proof of Theorem 1.7 consists of the following several steps.
\subsection{Priori estimates  for $u_\epsilon$ and $v_\epsilon$}
Applying $J_\epsilon$ to system (1.1) leads to the following system
\begin{equation}
\left\{
\begin{array}{llll}
&\partial_t m_\epsilon+(J_\epsilon v)^{p}\partial_{x}J_\epsilon m+\frac{a}{p}\partial_x(J_\epsilon v)^p J_\epsilon m=0, \\
&\partial_t n_\epsilon+(J_\epsilon u)^{q}\partial_{x}J_\epsilon n+\frac{b}{q}\partial_x(J_\epsilon u)^q J_\epsilon n=0, \\
&m=u-u_{xx},n=v-v_{xx},u(x,0)=u_0(x), v(x,0)=v_0(x)
\end{array}
\right.
\end{equation}
where  the operator $J_\varsigma$ is called the Friedrichs mollifier defined by
$$J_\varsigma f(x)=J_\varsigma(f)(x)=j_\varsigma*f,  \ \ \forall \varsigma\in(0,1],$$
$j_\varsigma(x)=\frac{1}{\varsigma}j(\frac{x}{\varsigma})$, and $j(x)$ is a $\mathcal{C}^\infty$ function supported in the interval $[-1,1]$ such that $j(x)\geq0, \int_\mathbb{R}j(x)dx=1$.
 Multiplying both sides of the first equation in (4.1) by $D^{s-2}J_\epsilon mD^{s-2} $ and integrating with respect to $x\in \mathbb{R}$, we get
\begin{align}
&\frac{1}{2}\frac{d}{dt}|| m_\epsilon ||_{H^{s-2}}^2 =   \int_\mathbb{R}D^{s-2}J_\epsilon m D^{s-2}J_\epsilon ((J_\epsilon v)^{p}\partial_{x}J_\epsilon m)dx+ \frac{a}{p}\int_\mathbb{R}D^{s-2}J_\epsilon m D^{s-2}J_\epsilon\left(\partial_x(J_\epsilon v)^p J_\epsilon m \right)dx.
\end{align}
We need to estimate the right-hand side of (4.2). Apparently, both $D^{s}$ and $J_\epsilon$ are commutative and $J_\epsilon$ satisfies
\setcounter{equation}{4}
\begin{equation}
(J_\epsilon f,g)_0=(f,J_\epsilon g)_0,||J_\epsilon u||_{H^{s}}\leq ||u||_{H^{s}}.
\end{equation}
Lemma 4.2(ii) reveals
\begin{equation}
\begin{split}
&\left|\int_\mathbb{R}D^{s-2}J_\epsilon m D^{s-2}J_\epsilon ((J_\epsilon v)^{p}\partial_{x}J_\epsilon m)dx \right|
\lesssim\left|\int_\mathbb{R}D^{s-2}J_\epsilon m D^{s-2}J_\epsilon [\partial_{x}((J_\epsilon v)^{p}J_\epsilon m)-J_\epsilon m \partial_{x} (J_\epsilon v)^{p}]dx \right|\\
&\lesssim\left| \int_\mathbb{R}[D^{s-2}\partial_{x},(J_\epsilon v)^{p} ]J_\epsilon m D^{s-2}J_\epsilon mdx\right|+\left|\int_\mathbb{R}(J_\epsilon v)^{p}D^{s-2} \partial_xJ_\epsilon m D ^{s-2} J_\epsilon mdx \right|+||m_\epsilon||^2_{H^{s-2}}||v_\epsilon||^p_{H^{s-1}}\\
&\lesssim\left|\int_\mathbb{R}[D^{s-2}\partial_{x},(J_\epsilon v)^{p} ]J_\epsilon m D^{s-2}J_\epsilon mdx \right|+\frac{1}{2}\left|\int_\mathbb{R}(J_\epsilon v)^{p}   \partial_x(D ^{s-2} J_\epsilon m)^2dx \right|+||m_\epsilon||^2_{H^{s-2}}||v_\epsilon||^p_{H^{s-1}}\\
&\lesssim\left\|[D^{s-2}\partial_{x},(J_\epsilon v)^{p} ]J_\epsilon m   \right\|_{L^2}||m_\epsilon||_{H^{s-2}}+\frac{1}{2}\left|\int_\mathbb{R} \partial_x(J_\epsilon v)^{p}  (D ^{s-2} J_\epsilon m)^2dx \right|+||m_\epsilon||^2_{H^{s-2}}||v_\epsilon||^p_{H^{s-1}}\\
&\lesssim(||\partial_x (J_\epsilon v)^{p} ||_{L^\infty }||D^{s-3}\partial_xJ_\epsilon m  ||_{L^2 }+||D^{s-2} (J_\epsilon v)^{p} ||_{L^2 }||J_\epsilon m ||_{L^\infty })||m_\epsilon||_{H^{s-2}}+  ||m_\epsilon||^2_{H^{s-2}}||v_\epsilon||^p_{H^{s-1}}\\
&\lesssim ||m_\epsilon||^2_{H^{s-2}}||v_\epsilon||^p_{H^{s-1}}.
\end{split}
\end{equation}
Employing Lemma 4.1(ii) and (4.5) yields
\begin{equation}
\begin{split}
\left|\int_\mathbb{R}D^{s-2}J_\epsilon m D^{s-2}J_\epsilon\left(\partial_x(J_\epsilon v)^p J_\epsilon m \right)dx\right|
&\lesssim  ||v_\epsilon||^p_{s-1}||m_\epsilon||^2_{H^{s-2}}.
\end{split}
\end{equation}
For all $s\in \mathbb{R}$, we have
\begin{equation}
\|u \|_{H^s} = \|m \|_{H^{s-2} }\text{ and }\|v \|_{H^{s} } = \|n \|_{H^{s-2} }.
\end{equation}
Therefore,
\begin{equation*}
\begin{split}
 \frac{d}{dt} || u_\epsilon(t)||_{H^{s} }  &\lesssim C_s \left(|| u_\epsilon|| _{H^{s }} +|| v_\epsilon||_{H^{s }}\right)^{p+1}.
\end{split}
\end{equation*}
Adopting a similar procedure for $v_\epsilon$ produces 
\begin{equation}
\begin{split}
  \frac{d}{dt}  || z_\epsilon|| _{H^{s }}  \doteq\frac{d}{dt} \left(|| u_\epsilon|| _{H^{s }} +|| v_\epsilon||_{H^{s }}\right)\lesssim  2C_s\left(|| u_\epsilon|| _{H^{s }} +|| v_\epsilon||_{H^{s }}\right)^{\kappa+1}=2C_s|| z_\epsilon||_{H^{s }} ^{\kappa+1},
\end{split}
\end{equation}
where $\kappa=\max\{p,q\}$.

Solving the differential inequality (4.9) generates
\begin{equation}
\begin{split}
 ||  z_\epsilon (t)||_{H^{s } } \leq \frac{ ||  z_0||_{  H^{s-1 }}}{  \sqrt[\kappa]{1-  2\kappa C_s||  z_0||_{H^{s  } }^{\kappa} t}  }.
\end{split}
\end{equation}
Let $T_0=\frac{2^\kappa-1}{2^{\kappa+1}\kappa C_s||  z_0||_{H^{s  } }^{\kappa} }$, then from Eq. (4.10) we see that there exist the solutions $u,v$ for $0\leq t\leq T_0$ with the following bound
\begin{equation}
\begin{split}
 ||  z(t)||_{H^s}  &\leq 2||  z_0||_{H^s}, \text{  for } 0\leq t\leq T_0.
\end{split}
\end{equation}
Moreover, by Eq.(4.1) we may obtain the following estimates for $\partial_t   u_\epsilon(t) $ and $\partial_t   v_\epsilon(t) $:
\begin{equation}
\begin{split}
  ||\partial_t   u_\epsilon(t) ||_{H^{s-1} }&\cong||\partial_t   m_\epsilon(t) ||_{H^{s-3} }\lesssim  || (J_\epsilon v)^{p}\partial_{x}J_\epsilon m ||_{H^{s-3} }+||\partial_x(J_\epsilon v)^p J_\epsilon m ||_{H^{s-3} }\lesssim ||z_0||^{p+1}_{H^{s} },\\
    ||\partial_t   v_\epsilon(t) ||_{H^{s-1} }&\cong||\partial_t   n_\epsilon(t) ||_{H^{s-3} } \lesssim ||z_0||^{q+1}_{H^{s} }.
\end{split}
\end{equation}

\subsection{Existence of solutions on the line}
\begin{thm}
 There exists a solution $z=(u,v)$ to the Cauchy problem (4.1) in the space
$\mathcal{C}([0,T];H^{s  }\times H^s) $ with $s  >5/2$. Furthermore, the  $H^{s  }\times H^{s  }$ norm of $z$ satisfies Eqs.(4.11) and (4.12).
\end{thm}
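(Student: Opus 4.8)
The plan is to solve the regularized system (4.1) for each fixed $\epsilon\in(0,1]$ as an ordinary differential equation in the Banach space $H^s\times H^s$, and then extract a limit as $\epsilon\to0$ using the uniform bounds already established in Subsection~4.1. First I would observe that, because the Friedrichs mollifier $J_\epsilon$ maps $H^{s-2}$ into $H^\infty$ and satisfies (4.5), and because $H^{s-2}$ is a Banach algebra for $s>5/2$, the right-hand side of (4.1), viewed as a map $z_\epsilon\mapsto F_\epsilon(z_\epsilon)$ on $H^s\times H^s$, is locally Lipschitz and bounded on bounded sets. Hence the classical Cauchy--Lipschitz (Picard--Lindel\"of) theorem in Banach spaces yields, for each $\epsilon$, a unique maximal solution $z_\epsilon=(u_\epsilon,v_\epsilon)\in\mathcal{C}^1([0,T_\epsilon);H^s\times H^s)$ with $z_\epsilon(0)=z_0$.

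Next I would upgrade this local-in-time existence to a common interval. The differential inequality (4.9) and its integrated form (4.10) give the bound $\|z_\epsilon(t)\|_{H^s}\leq 2\|z_0\|_{H^s}$ on $[0,T_0]$, with $T_0=\frac{2^\kappa-1}{2^{\kappa+1}\kappa C_s\|z_0\|_{H^s}^\kappa}$ independent of $\epsilon$. Since this a priori bound rules out blow-up of the $H^s$-norm before $T_0$, the standard continuation criterion forces $T_\epsilon\geq T_0$, so every $z_\epsilon$ exists on $[0,T_0]$; setting $T=T_0$, the family $\{z_\epsilon\}_{0<\epsilon\leq1}$ is uniformly bounded in $\mathcal{C}([0,T];H^s\times H^s)$, while estimate (4.12) bounds $\{\partial_t z_\epsilon\}$ uniformly in $\mathcal{C}([0,T];H^{s-1}\times H^{s-1})$.

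To pass to the limit I would show that $\{z_\epsilon\}$ is Cauchy in the weaker space $\mathcal{C}([0,T];H^{s-1}\times H^{s-1})$. Subtracting the equations for $z_\epsilon$ and $z_{\epsilon'}$, the difference solves a transport system whose source terms are controlled, exactly as in Lemma 2.1, by the product and commutator estimates of Lemmas 4.1--4.2 together with the mollifier convergence $\|(J_\epsilon-J_{\epsilon'})f\|_{H^{s-1}}\lesssim(\epsilon+\epsilon')\|f\|_{H^s}$ and the uniform bounds above; a Gronwall argument then gives $\|z_\epsilon-z_{\epsilon'}\|_{H^{s-1}}\lesssim\epsilon+\epsilon'$ on $[0,T]$. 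Thus $z_\epsilon\to z$ in $\mathcal{C}([0,T];H^{s-1}\times H^{s-1})$, and this convergence, combined with the uniform $H^s$ bound, lets me pass to the limit in (4.1) and identify $z$ as a solution of the original Cauchy problem (1.1) satisfying (4.11) and (4.12).

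The main obstacle is the strong continuity of $z$ in the top-order norm, namely $z\in\mathcal{C}([0,T];H^s\times H^s)$ rather than merely $z\in L^\infty([0,T];H^s)\cap\mathcal{C}([0,T];H^{s-1})$. The uniform bound only yields weak-$*$ compactness, giving at first $z\in C_w([0,T];H^s\times H^s)$. To remove the weakness I would apply the energy estimate (4.9) to the limit solution itself to show that $t\mapsto\|z(t)\|_{H^s}$ is continuous; combining continuity of the norm with weak continuity in $H^s$ (using time reversal for left-continuity) promotes weak to strong continuity, which completes the proof.
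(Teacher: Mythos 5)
Your proposal is correct in outline but reaches the limit by a genuinely different route than the paper. Where you make the family $\{z_\epsilon\}$ converge by proving it is Cauchy in the lower norm $\mathcal{C}([0,T];H^{s-1}\times H^{s-1})$ (a Kato/Bona--Smith style difference estimate, using the mollifier error $\|(J_\epsilon-J_{\epsilon'})f\|_{H^{\sigma-1}}\lesssim(\epsilon+\epsilon')\|f\|_{H^{\sigma}}$ plus Gronwall), the paper instead runs a pure compactness argument: weak-$*$ pre-compactness in $L^{\infty}(I;H^{s}\times H^{s})$ via Alaoglu, then Ascoli--Arzel\`a applied to $\phi z_\epsilon$ for a positive Schwartz weight $\phi$ (Rellich supplies pointwise-in-$t$ compactness in $H^{s-\sigma}$, and the uniform bound (4.12) on $\partial_t z_\epsilon$ supplies equicontinuity), followed by Sobolev embedding into $\mathcal{C}(I;\mathcal{C}^1)$ and a separate Ascoli argument for $\partial_t(\phi z_\epsilon)$ to identify the limit as a solution. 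Your route buys convergence of the whole family without subsequence extraction and without the spatial weight, at the price of carrying out the difference estimate at level $H^{s-3}$ for the $m$-variables --- an estimate which is legitimate for all $s>5/2$ and which the paper in effect performs anyway in its uniqueness and continuous-dependence sections; the paper's route defers all difference estimates but needs more functional-analytic scaffolding. Two points in your final step deserve care. First, you cannot literally ``apply the energy estimate (4.9) to the limit solution itself,'' since $z$ is a priori only in $L^{\infty}(I;H^{s})$ and the top-order pairing is not justified there; the correct (and standard) fix is exactly what the paper does, namely show that $t\mapsto\|J_\epsilon z(t)\|_{H^{s}}$ is Lipschitz uniformly in $\epsilon$ and converges pointwise to $\|z(t)\|_{H^{s}}$, whence the norm is continuous. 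Second, once you have a two-sided (Lipschitz) bound on the norm you do not need the time-reversal trick; that device is only required if you settle for the one-sided inequality $\limsup_{t\to t_0^{+}}\|z(t)\|\leq\|z(t_0)\|$. With these adjustments your argument is complete and delivers the same conclusion, including the bounds (4.11) and (4.12).
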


So far, we have studied the existence of a unique solution $z_\epsilon \in  \mathcal{C}([0,T];H^{s }\times H^s) ,s>5/2$ to the initial value problem (4.1) with life span $T =\frac{2^\kappa-1}{2^{\kappa+1}\kappa C_s||  z_0||_{H^{s  } \times H^s}^{\kappa} }$ as well as 
the size estimates (4.11) and (4.12).
Next, we need to show that $z_\epsilon\rightarrow z\doteq (u,v ) \in  \mathcal{C}([0,T];H^{s }\times H^s)  $ where $z$ is the solution to Eq. (1.1).
Our proof is carried out 
through refining the convergence of the family $\{z_\epsilon\}=\{ u_\epsilon,v_\epsilon  \}$ several
times by extracting its subsequences. 
After each extraction, for our convenient discussion and simplicity, 
the resulting subsequence is still labeled as $\{z_\epsilon\}=\{ u_\epsilon,v_\epsilon \}$.

 \textbf{Weak convergence in $L^\infty(I;H^{s }\times H^s) $}.
 The set of functions $\{z_\epsilon\}_{\epsilon\in(0,1]}$ is bounded in the
space $\mathcal{C}(I;H^{s }\times H^s)  \subset L^\infty (I;H^{s }\times H^s) $. By the inequality (4.11), we have
\begin{align}
||z_\epsilon||_{L^\infty (I;H^{s }\times H^s) }=\sup_{t\in I} ||z_\epsilon||_{ H^{s } \times H^s }\leq 4 ||z_0||_{ H^{s }\times H^{s } },\nonumber\ \\
\Rightarrow\{z_\epsilon\}_{\epsilon\in(0,1]}\subset\overline{B}(0,2||z_0||_{H^{s }\times H^{s }})\subset L^\infty (I;H^{s }\times H^s) ,
\end{align}
Alaoglu's theorem  tells us that $\{z_\epsilon\}$ is pre-compact in $\overline{B}(0,2||z_0||_{H^{s }\times H^{s }})\subset L^\infty (I;H^{s }\times H^s) $
with respect to the weak topology. Therefore, we may extract a subsequence $\{z_{\epsilon'}\}$ that converges to an element $z\in\overline{B}(0,2||z_0||_{H^{s }\times H^s})$ { in a weak sense}. Provided we have 
such a construction, then $z$ would be our desired function satisfying the solution estimate (1.13).

\textbf{Convergence in $\mathcal{C}^\sigma(I;H^{{s }-\sigma}\times H^{s-\sigma})$ with $\sigma\in(0,1)$.}
We shall apply the Ascoli's theorem to prove convergence in the spaces. 
Let $\phi\in \mathcal{S}(\mathbb{R})$ and  $\phi(x)>0$, $\forall x\in\mathbb{R}$. We shall prove that there exists a subsequence of $\{\phi z_\epsilon(t)\}$ converging to a function {  $\phi z (t) \in \mathcal{C}(I;H^{{s }-\sigma}\times H^{s-\sigma})$}. Define $z(t)\doteq \{\phi z_\epsilon(t)\}_{\epsilon\in(0,1]}$. Then we know that for each $t\in I$ the set $z(t)\subset H^{s } \times H^{s }$ is pre-compact in $H^{{s }-\sigma}\times H^{s-\sigma} $ as a consequence of the Rellich's theorem. {Thus, the first condition of Ascoli's theorem is met. the remaining is to show that the second condition is also met, i.e. $\phi z_\epsilon(t)$ is equi-continuous.}
To see this,
we shall first show $z_\epsilon\in \mathcal{C}^\sigma(I,H^{{s}-\sigma}\times H^{s-\sigma}) $ for $\epsilon, \sigma\in(0,1)$., 
Then, we prove that the $\mathcal{C}^\sigma(I,H^{{s}-\sigma}\times H^{s-\sigma})$ norm of $z_\epsilon$ satisfies 
\begin{equation}
\begin{split}
 || z_\epsilon(t) ||_{\mathcal{C}^\sigma(I,H^{{s }-\sigma}\times H^{s-\sigma}) } \lesssim ||  z_0||_{H^s\times H^s}+ ||  z_0||_{H^s\times H^s}^{\kappa+1}.
\end{split}
\end{equation}
Let us begin with the following norm definition 
\begin{equation}
\begin{split}
 || z_\epsilon(t) ||_{\mathcal{C}^\sigma(I,H^{{s }-\sigma}\times H^{s-\sigma}) }\doteq \sup_{t\in I}  ||  z_\epsilon(t)||_{H^{{s }-\sigma}\times H^{s-\sigma}}+\sup_{t_1\neq t_2}\frac{||z_\epsilon(t_1)-z_\epsilon(t_2)||_{H^{{s }-\sigma} \times H^{s-\sigma}}}{|t_1-t_2|^\sigma}.
\end{split}
\end{equation}
Because the first term on the right hand side of (4.15) is bounded by Eq. (4.11), we have 
\begin{equation*}
\begin{split}
\sup_{t\in I} ||  z_\epsilon(t)||_{H^{{s }-\sigma}\times H^{s-\sigma} }\leq \sup_{t\in I} || z_\epsilon(t)||_{H^{{s } } \times H^{s }}\leq 2||  z_0||_{H^{s }\times H^{s}}.
\end{split}
\end{equation*}
The second term on the right hand side of (4.15), due to the inequality $x^\sigma\leq 1+x$, generates the following estimate:
\begin{equation*}
\begin{split}
 \sup_{t_1\neq t_2}&\frac{||z_\epsilon(t_1)-z_\epsilon(t_2)||_{H^{{s }-\sigma} \times H^{s-\sigma}}}{|t_1-t_2|^\sigma}= \sup_{t_1\neq t_2}\left(\int_\mathbb{R}(1+\xi^2)^{s }\frac{ |\hat{z}_\epsilon(\xi,t_1)-\hat{z}_\epsilon(\xi,t_2)|^2}{(1+\xi^2)^\sigma|t_1-t_2|^{2\sigma}}d\xi\right)^{1/2}\\
 &\quad\quad\leq \sup_{t_1\neq t_2}\left(\int_\mathbb{R}(1+\xi^2)^{s }\left[1+\frac{1}{(1+\xi^2)|t_1-t_2|^{2}} \right] |\hat{z}_\epsilon(\xi,t_1)-\hat{z}_\epsilon(\xi,t_2)|^2 d\xi\right)^{1/2}\\
  &\quad\quad\leq \sup_{t\in I}(2||z_\epsilon(t ) ||_{H^{{s }}\times H^{s } }+||\partial_t z_\epsilon(t ) ||_{H^{s-1} \times H^{s-1}})\\
  &\quad\quad\lesssim ||z_0 ||_{H^{s}\times H^{s}}+||  z_0 ||_{H^{s }\times H^{s }} ^{\kappa+1},
\end{split}
\end{equation*}
where (4.11) and (4.12) are applied in the last inequality. Combining these bounds together leads to the desired estimate (4.14).
On the other hands, the inequality (4.14) apparently implies 
\begin{equation}
\begin{split}
 || z_\epsilon(t_1)- z_\epsilon(t_2)||_ {H^{{s }-\sigma} \times H^{s-\sigma} }&\leq || z_\epsilon(t) ||_{\mathcal{C}^\sigma(I,H^{{s }-\sigma}\times H^{s-\sigma}) }|t_1-t_2|^\sigma\\
& \lesssim \left(||  z_0||_{H^s\times H^s}+ ||  z_0||_{H^s\times H^s}^{\kappa+1}\right)|t_1-t_2|^\sigma,\quad t_1,t_2\in I, \ \  \forall\epsilon, \sigma \in(0,1).
\end{split}
\end{equation}
Furthermore, we have the following equicontinuity property for $\{\phi z_\epsilon(t)\}_{\epsilon\in(0,1]}$
\begin{equation}
\begin{split}
 || \phi z_\epsilon(t_1)- \phi z_\epsilon(t_2)||_ {H^{{s }-\sigma}\times H^{s-\sigma}   }&\leq ||\phi||_{H^{{s }-\sigma}}  || z_\epsilon(t_1)- z_\epsilon(t_2)||_ {H^{{s }-\sigma}\times H^{s-\sigma} }\\
 & \lesssim \left(||  z_0||_{H^s\times H^s}+ ||  z_0||_{H^s\times H^s}^{\kappa+1}\right)|t_1-t_2|^\sigma.
\end{split}
\end{equation}
The Ascoli's theorem admits
\begin{equation}
\begin{split}
 || \phi z_\epsilon - \phi z ||_  {\mathcal{C}^\sigma(I,H^{{s }-\sigma}\times H^{s-\sigma} ) }=\sup_{t\in I} || \phi z_\epsilon(t) - \phi z(t) ||_  { H^{{s }-\sigma} \times H^{s-\sigma} }\rightarrow0 \text{ for } \epsilon\rightarrow0.
\end{split}
\end{equation}

\textbf{Convergence in $\mathcal{C}(I;\mathcal{C}^1(\mathbb{R})\times \mathcal{C}^1(\mathbb{R})) $.}
Let $\sigma$ satisfy ${s-2}-\sigma>3/2$. Then, applying the Sobolev lemma and (4.18) reveals
\begin{equation}
\begin{split}
 || \phi z_\epsilon - \phi z  ||_{\mathcal{C} (I,\mathcal{C}^1(\mathbb{R})\times \mathcal{C}^1(\mathbb{R}) ) }&=\sup_{t\in I} || \phi z_\epsilon(t) - \phi z(t) ||_{\mathcal{C}^1(\mathbb{R})\times \mathcal{C}^1(\mathbb{R})   }\\
 &\lesssim || \phi z_\epsilon(t) - \phi z(t) ||_  { H^{{s }-\sigma} \times H^{{s }-\sigma}  }\rightarrow0 \text{ for } \epsilon\rightarrow0.
\end{split}
\end{equation}
Therefore, convergence in $\mathcal{C}(I;\mathcal{C}^1(\mathbb{R})\times \mathcal{C}^1(\mathbb{R})) $ has been established. Next, we prove that $z$ solves Eq.(1.1).

\textbf{Verification of $z$ being a solution to the Cauchy problem (1.1).}
Let us first recall 
the generalization of Sobolev spaces 
from the real analysis. Suppose that a system of functions $f_n:I\rightarrow\mathbb{R}$ are continuous. If there is some $t_0\in I$ such that $f_n(t_0)\rightarrow f(t_0)$ as $n\rightarrow \infty$ and $f_n'$ is uniformly convergent to $f'(t)$ on $I$, then $f_n$
also uniformly converges  to $f$ on $I$ and $f'(t)=\lim_{n\rightarrow\infty}f_n'(t)$.
We shall apply this result to the sequence $\{z_\epsilon(t)\}_{\epsilon\in(0,1]}$.
we already show $\{z_\epsilon(t)\}_{\epsilon\in(0,1]}$ is convergent to $z(t)$ in $\mathcal{C}(I;\mathcal{C}^1(\mathbb{R})\times \mathcal{C}^1(\mathbb{R})) $, which implies  the first condition of the theorem is satisfied.
From the convergence $\phi z_\epsilon \rightarrow \phi z$ in $\mathcal{C}(I;\mathcal{C}^1(\mathbb{R})\times \mathcal{C}^1(\mathbb{R}))$, we know  that $z_\epsilon\rightarrow z$ and $\partial_x z_\epsilon\rightarrow \partial_xz$ are pointwisely convergent.
This reveals that the right hand side of (4.1) converges to the corresponding terms without $\epsilon$.  The remaining task is to show that $\partial_t(\phi z_\epsilon)=\left(\partial_t(\phi u_\epsilon),\partial_t(\phi v_\epsilon)\right)$ in $\mathcal{C}(I;\mathcal{C}^1(\mathbb{R})\times \mathcal{C}^1(\mathbb{R}))$ is uniformly convergent.

Due to $m$ and $n$ are in a parallelled situation,   let us only consider the component $m$ while the other component can be treated in the same way.
Starting from 
the first equation of (4.1) and multiplying both sides of the equation by $\phi$, we have
\begin{equation}
\partial_t   (\phi m_\epsilon)  =-\phi J_\epsilon ((J_\epsilon v)^{p}\partial_{x}J_\epsilon m)-\frac{a}{p}\phi J_\epsilon (\partial_x(J_\epsilon v)^p J_\epsilon m).
\end{equation}
Casting $\sigma$ into $s-\sigma-3>-1/2$ with $s-\sigma-3\neq1$ allows us to derive the equicontinuity.  Regarding the first term on the right-hand-side of (4.20), Lemma 4.1(ii) tells us 
\begin{equation*}
\begin{split}
||\phi J_\epsilon [v_\epsilon^p (t_1)&\partial_xm_\epsilon(t_1)]-\phi J_\epsilon [v_\epsilon^p (t_2)\partial_xm_\epsilon(t_2)]||_{H^{s-\sigma-3}}\\
&\lesssim
|| v_\epsilon^p (t_1) ||_{H^{s-\sigma-2}}
||  m_\epsilon(t_1) - m_\epsilon(t_2) ||_{H^{s-\sigma-2}}+|| m_\epsilon (t_1) ||_{H^{s-\sigma-2}}
||  v_\epsilon^p(t_1) - v_\epsilon(t_2)^p ||_{H^{s-\sigma-2}}\\
&\lesssim
|| v_\epsilon^p (t_1) ||_{H^{s-\sigma-2}}
||  m_\epsilon(t_1) - m_\epsilon(t_2) ||_{H^{s-\sigma-2}}\\
&\quad+|| m_\epsilon (t_1) ||_{H^{s-\sigma-2}}
||  v_\epsilon (t_1) - v_\epsilon(t_2)  ||_{H^{s-\sigma-2}}\sum_{j=0}^{p-1}||  v_\epsilon (t_1) ||_{H^{s-\sigma-2}}^{p-1-j}|| v_\epsilon(t_2)  ||_{H^{s-\sigma-2}}^j,
\end{split}
\end{equation*}
which yields 
\begin{equation*}
\begin{split}
 ||\phi J_\epsilon [v^p_\epsilon (t_1)&\partial_xm_\epsilon(t_1)]-\phi J_\epsilon [v^p_\epsilon (t_2)\partial_xm_\epsilon(t_2)]||_{H^{s-\sigma-3}}
 \lesssim \left(||  z_0||_{H^s\times H^s}^{p+1}+ ||  z_0||_{H^s\times H^s}^{p+\kappa+1}\right)|t_1-t_2|^\sigma .
\end{split}
\end{equation*}
The second term on the right hand side of Eq. (2.20) can similarly be estimated
\begin{equation*}
\begin{split}
 ||\phi J_\epsilon [\partial_xv^p_\epsilon (t_1) m_\epsilon (t_1)-\partial_xv^p_\epsilon (t_2) m_\epsilon (t_2)]||_{H^{s-\sigma-3}}
 \lesssim \left(||  z_0||_{H^s\times H^s}^{p+1}+ ||  z_0||_{H^s\times H^s}^{p+\kappa+1}\right)|t_1-t_2|^\sigma .
\end{split}
\end{equation*}
Therefore, by Ascoli's theorem, we conclude that a subsequence { of $\{\phi z_\epsilon(t)\}_{\epsilon\in(0,1]}$} satisfies
\begin{equation}
\partial_t(\phi Z_\epsilon)=
\left\{
\begin{array}{llll}
&\partial_t(\phi m_\epsilon)\rightarrow -\phi v^{p}m_{x} -\frac{a}{p}\phi v^{p }_x m ,\\
&\partial_t( \phi n_\epsilon)\rightarrow -\phi u^{q}n_{x} -\frac{b}{q} \phi u^{q }_x n ,
\end{array}
\right.
\text{ in } \mathcal{C}(I;H^{s-\sigma-3}).
\end{equation}
As per $s-\sigma-1>1/2$ and the Sobolev lemma, we find that $\mathcal{C}(I,\mathcal{C}(\mathbb{R})\times \mathcal{C}(\mathbb{R}))\hookrightarrow \mathcal{C}(I;H^{s-\sigma-1}\times H^{s-\sigma-1})$.

\textbf{Claim: $z\in L^\infty(I;H^{s }\times H^{s })\cap Lip(I;H^{s-1}\times H^{s-1 })$ is a solution to (1.1).}
First, we notice
that $\phi z_\epsilon\rightarrow \phi z$ and $\partial_t(\phi z_\epsilon)\rightarrow \partial_t(\phi z)$ in the space $\mathcal{C}(I,\mathcal{C}(\mathbb{R})\times \mathcal{C}(\mathbb{R}))$, which 
imply that $t \mapsto \phi z(t)$ is a differentiable map.
As we chose { $\phi $  with no zeros}, the formula 
$\partial_t(\phi z_\epsilon)=\phi \partial_t( z_\epsilon)$ allows us to get rid of $\phi$ and $\epsilon$ in (4.21). Thus, we are able to locate 
$z\in Lip(I;H^{s }\times H^{s })$ with the following property
\begin{equation}
||z(t_1)-z(t_2)||_{H^{s }\times H^{s }}\leq \sup_{t\in I} ||\partial_t z(t)||_{H^{s }\times H^{s }}|t_1-t_2|\lesssim ||z_0||_{H^s\times H^s}^{\kappa+1}|t_1-t_2|.
\end{equation}

\textbf{Regularity improvement of $z$ up to $\mathcal{C}(I;H^{s }\times H^{s })$}.
We already  know  $z\in L^\infty(I;H^{s }\times H^{s })\cap Lip(I;H^{s-1 }\times H^{s-1 })$. Let us now prove
$z\in \mathcal{C}(I;H^{s }\times H^{s })$, namely, if $t_n\in I$ converge to $t\in I$ as $n$ goes to infinity, then $\lim_{n\rightarrow\infty}||z(t_n)-z(t)||_{H^{s }\times H^{s }}=0$.
According to the norm definition in
$H^{s }\times H^{s }$, this is equivalent to showing 
\begin{equation}
\lim_{n\rightarrow\infty}\left(||z(t_n)||_{H^{s }\times H^{s }}^2-\langle z(t_n),z(t)\rangle_{H^{s }\times H^{s }}+||z(t)||_{H^{s }\times H^{s }}^2\right)=0.
\end{equation}

\begin{lm} The solution $z\in L^\infty(I;H^{s }\times H^{s })\cap Lip(I;H^{s-1 }\times H^{s -1})$ is continuous on $I$ in the sense of 
weak topology in $H^{s}\times H^{s }$, i.e.
$$\langle z(t_n)-z(t),\varphi\rangle_{H^{s }\times H^s}=0, \text{ for any } \varphi\in H^{s }\times H^s.$$
\end{lm}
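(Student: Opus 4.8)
The plan is to establish weak continuity by a standard density argument that interpolates between the two facts already in hand: the uniform bound $\sup_{t\in I}\|z(t)\|_{H^{s}\times H^{s}}\le 2\|z_0\|_{H^{s}\times H^{s}}$ coming from (4.11), and the strong Lipschitz continuity of $t\mapsto z(t)$ into $H^{s-1}\times H^{s-1}$ coming from (4.23). Since the inner product on $H^{s}\times H^{s}$ is the sum of the two scalar $H^{s}$ inner products, it suffices to argue componentwise, so I will treat the pairing $\langle z(t_n)-z(t),\varphi\rangle_{H^{s}\times H^{s}}$ directly. Fix $t\in I$, a sequence $t_n\to t$ in $I$, a test element $\varphi\in H^{s}\times H^{s}$, and $\varepsilon>0$.

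First I would approximate the test function. Because $\mathcal{S}(\mathbb{R})\times\mathcal{S}(\mathbb{R})$, and in particular $H^{s+1}\times H^{s+1}$, is dense in $H^{s}\times H^{s}$, choose $\psi\in H^{s+1}\times H^{s+1}$ with $\|\varphi-\psi\|_{H^{s}\times H^{s}}<\varepsilon$. Splitting the pairing and using the Cauchy--Schwarz inequality together with (4.11) on the remainder term gives
\[
|\langle z(t_n)-z(t),\varphi\rangle_{H^{s}\times H^{s}}|\le |\langle z(t_n)-z(t),\psi\rangle_{H^{s}\times H^{s}}|+4\|z_0\|_{H^{s}\times H^{s}}\,\varepsilon,
\]
where the second contribution is already uniform in $n$.

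Second, for the principal term I would transfer one unit of regularity onto the (now smooth) test function. Writing the $H^{s}$ inner product in Fourier variables and factoring $(1+\xi^2)^{s}=(1+\xi^2)^{(s-1)/2}(1+\xi^2)^{(s+1)/2}$, one obtains the identity
\[
\langle z(t_n)-z(t),\psi\rangle_{H^{s}\times H^{s}}=\langle D^{s-1}(z(t_n)-z(t)),D^{s+1}\psi\rangle_{L^{2}\times L^{2}},
\]
so that by Cauchy--Schwarz this term is bounded by $\|z(t_n)-z(t)\|_{H^{s-1}\times H^{s-1}}\,\|\psi\|_{H^{s+1}\times H^{s+1}}$. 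Since $\psi$ is a fixed element of $H^{s+1}\times H^{s+1}$ and $t\mapsto z(t)$ is Lipschitz, hence continuous, into $H^{s-1}\times H^{s-1}$ by (4.23), the factor $\|z(t_n)-z(t)\|_{H^{s-1}\times H^{s-1}}\to0$ as $n\to\infty$. Therefore $\limsup_{n\to\infty}|\langle z(t_n)-z(t),\varphi\rangle_{H^{s}\times H^{s}}|\le 4\|z_0\|_{H^{s}\times H^{s}}\,\varepsilon$, and letting $\varepsilon\to0$ yields the desired weak continuity.

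There is no serious analytical obstacle here; the whole content is the bookkeeping in assembling the two ingredients. The one point requiring care is the regularity transfer: the pairing identity is only available because $\psi\in H^{s+1}$, which matches the fact that strong continuity is known only one derivative below the natural space, and this is exactly why the uniform $H^{s}$ bound (4.11) is indispensable for absorbing the low-regularity defect $\varphi-\psi$ uniformly in $n$. I would also note that this weak-continuity statement is precisely what feeds the identity (4.23): it forces $\langle z(t_n),z(t)\rangle_{H^{s}\times H^{s}}\to\|z(t)\|_{H^{s}\times H^{s}}^{2}$, which, together with the forthcoming control of $\|z(t_n)\|_{H^{s}\times H^{s}}$, upgrades weak continuity to the strong continuity $z\in\mathcal{C}(I;H^{s}\times H^{s})$.
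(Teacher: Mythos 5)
Your proof is correct and follows essentially the same route as the paper: approximate $\varphi$ by a Schwartz function $\psi$, control the remainder $\varphi-\psi$ uniformly in $n$ via the $H^{s}\times H^{s}$ bound (4.11), and send the principal term to zero using the Lipschitz continuity of $z$ into $H^{s-1}\times H^{s-1}$. If anything, you are more careful than the paper at the one delicate step — explicitly shifting a derivative onto $\psi$ so that the pairing is controlled by $\|z(t_n)-z(t)\|_{H^{s-1}\times H^{s-1}}\|\psi\|_{H^{s+1}\times H^{s+1}}$, whereas the paper writes the bound with $\|z(t_n)-z(t)\|_{H^{s}\times H^{s}}\|\psi\|_{H^{s}\times H^{s}}$ and then invokes the Lipschitz estimate that is really only available one derivative down.
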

\begin{proof}
Let $\varphi \in H^{s }\times H^s$. For any $\epsilon>0$, choose $\psi\in \mathcal{S}(\mathbb{R})$ such that 
$||\varphi-\psi||_{H^{s }\times H^s}<\epsilon/(4||z_0||_{H^s\times H^s})$. The triangle inequality yields 
\begin{equation*}
\begin{split}
|\langle z(t_n)-z(t),\varphi\rangle_{H^{s }\times H^s}|&\leq  |\langle z(t_n)-z(t),\varphi-\psi\rangle_{H^{s }\times H^s}| +|\langle z(t_n)-z(t),\psi\rangle_{H^{s }\times H^s}|\\
 &\leq  || z(t_n)-z(t)||_{H^{s-2}}||\varphi-\psi||_{H^{s }\times H^{s }}  +|| z(t_n)-z(t)||_{H^{s }\times H^s}||\psi||_{H^{s }\times H^s} \\
  &\leq  \epsilon/2+||z_0||^{\kappa+1}_{H^s\times H^s}||\psi||_{H^{s }} |t_n-t|,
\end{split}
\end{equation*}
  where the inequalities   (4.21) and (4.11) are applied.
  Since $||\psi||_{H^{s}}$ is bounded, we select $N$ such that  for any $n>N$, $|t_n-t|<\epsilon/(2||z_0||^{\kappa+1}_{H^s\times H^s}||\psi||_{H^{s }})$.
  Hence, we have 
\begin{equation*}
\begin{split}
|\langle z(t_n)-z(t),\varphi\rangle_{H^{s }\times H^s}|<\epsilon, \forall n>N,
\end{split}
\end{equation*}
which concludes the proof.
\end{proof}
Employing Lemma 4.3 reduces 
(4.23) 
to $\lim_{n\rightarrow\infty} ||z(t_n)||_{H^{s }\times H^s}= ||z(t )||_{H^{s }\times H^s}$, i.e. we prove the map
$t\mapsto ||z(t)||_{H^{s }\times H^s}$ is continuous.    We already know that $||J_\epsilon z(t)||_{H^{s }\times H^s}$
converges to $||  z(t)||_{H^{s }\times H^s}$ pointwise
in $t$ as $\epsilon\rightarrow0$. Thus, it suffices to show that
each $||J_\epsilon z(t)||_{H^{s }\times H^s}$ is Lipschitz with  the bounded Lipschitz constants for the whole family of functions.
Taking a similar work procedure of $||z_\epsilon||_{H^{s }\times H^s}$ in section 4.1 we arrive at
$\frac{d}{dt}||J_\epsilon z(t)||_{H^{s }\times H^s}\lesssim ||z_0||_{H^s\times H^s}^{\kappa+1}$.
Therefore, we conclude that $||z(t)||_{H^{s }\times H^s}$ is Lipschitz and the solution $z$ is in $\mathcal{C}(I;H^{s }\times H^s)$.

\subsection{Uniqueness of solution on a line}
We have already shown that there exists a solution $Z=(m,n)$ to the Cauchy problem (1.1) in $\mathcal{C}(I;H^{s-2})$, which satisfies the estimates (4.11) and (4.12) with lifespan $T =\frac{2^\kappa-1}{2^{\kappa+1}\kappa C_s||  z_0||_{H^{s  } }^{\kappa} }$. In this section we shall prove the solution is unique.
\begin{thm}
For the initial data $z_0\in H^s\times H^s$ with $s>5/2$, the Cauchy problem (1.1) has a unique solution $z=(u,v)$ in the space $\mathcal{C}(I;H^{s })$.
\end{thm}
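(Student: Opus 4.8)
The plan is to split the argument into the two halves demanded by the wording ``has a unique solution''. The existence half has already been carried out in Section~4.2: Theorem~4.1 produces solutions $z_\epsilon=(u_\epsilon,v_\epsilon)$ of the regularized system (4.1), uniformly bounded through the estimates (4.11) and (4.12); the successive compactness extractions (weak-$*$ limit in $L^\infty(I;H^s\times H^s)$, the Ascoli argument in $\mathcal{C}^\sigma(I;H^{s-\sigma}\times H^{s-\sigma})$, and convergence in $\mathcal{C}(I;\mathcal{C}^1\times\mathcal{C}^1)$) deliver a limit $z=(u,v)$ that solves (1.1), and the final regularity-improvement step upgrades it to $z\in\mathcal{C}(I;H^s\times H^s)$ with lifespan $T=\frac{2^\kappa-1}{2^{\kappa+1}\kappa C_s\|z_0\|_{H^s}^{\kappa}}$. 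Thus I would simply invoke that construction for existence and devote the present proof to uniqueness.

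For uniqueness I would take two solutions $z_1=(u_1,v_1)$ and $z_2=(u_2,v_2)$ in $\mathcal{C}(I;H^s\times H^s)$ issuing from the same datum $z_0$, set $u_{12}=u_1-u_2,\ v_{12}=v_1-v_2$ and $m_{12}=m_1-m_2,\ n_{12}=n_1-n_2$, and note that $(m_{12},n_{12})$ obeys precisely the linear transport system displayed in the proof of Lemma~2.1. The key decision is to estimate the difference \emph{one derivative below} the solution regularity, i.e.\ to control $\|m_{12}\|_{H^{s-3}}+\|n_{12}\|_{H^{s-3}}$, which by the isometry (4.7), $\|u\|_{H^s}=\|m\|_{H^{s-2}}$, equals $\|u_{12}\|_{H^{s-1}}+\|v_{12}\|_{H^{s-1}}$. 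Applying $D^{s-3}$ to the $m_{12}$-equation and pairing with $D^{s-3}m_{12}$ in $L^2$, the transport contribution is split as a commutator $[D^{s-3},v_1^{p}]\partial_x m_{12}$ plus $\int_\mathbb{R} v_1^{p}\,D^{s-3}\partial_x m_{12}\,D^{s-3}m_{12}\,dx$, the latter being reduced by integration by parts to $-\tfrac12\int_\mathbb{R}(v_1^{p})_x(D^{s-3}m_{12})^2\,dx$.

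The commutator is bounded by Lemma~4.2(ii) as $\|[D^{s-3},v_1^{p}]\partial_x m_{12}\|_{L^2}\lesssim\|v_1^{p}\|_{H^{s-2}}\|m_{12}\|_{H^{s-3}}$, and by Lemma~4.1 together with $v_1^{p}-v_2^{p}=v_{12}\sum_{i=0}^{p-1}v_1^{p-1-i}v_2^{i}$ one has $\|v_1^{p}\|_{H^{s-2}}\lesssim\|v_1\|_{L^\infty}^{p-1}\|v_1\|_{H^{s-1}}$. The three forcing terms $[v_1^{p}-v_2^{p}]\partial_x m_2$, $\tfrac{a}{p}(\partial_x v_1^{p})m_{12}$ and $\tfrac{a}{p}[\partial_x v_1^{p}-\partial_x v_2^{p}]m_2$ are each controlled in $H^{s-3}$ by the same factorization and the algebra property of $H^{s-2}$, yielding a bound of the form $C\,\Gamma(t)\big(\|u_{12}\|_{H^{s-1}}+\|v_{12}\|_{H^{s-1}}\big)$, where $\Gamma(t)$ is a polynomial in $\|u_i\|_{H^s},\|v_i\|_{H^s}$ of degrees $p,q$ and mixtures thereof. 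Treating the $n_{12}$-equation symmetrically and adding gives
$$\frac{d}{dt}\Big(\|u_{12}\|_{H^{s-1}}+\|v_{12}\|_{H^{s-1}}\Big)\le C\,\Gamma(t)\Big(\|u_{12}\|_{H^{s-1}}+\|v_{12}\|_{H^{s-1}}\Big).$$
Since $z_1,z_2\in\mathcal{C}(I;H^s\times H^s)$ obey the a~priori bound (4.11), $\Gamma\in L^1(I)$, and as $u_{12}(0)=v_{12}(0)=0$, Gronwall's inequality forces $u_{12}\equiv v_{12}\equiv0$ on $I$, which is the uniqueness assertion.

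The main obstacle I anticipate is the high-order derivative coupling: the terms $v^{p-1}v_x u_{xx}$ and $u^{q-1}u_x v_{xx}$ in the nonlocal form (3.18) prevent treating (1.1) as a first-order ODE, so the difference estimate must be taken one derivative below and closed against itself rather than in the top norm $H^s$. The delicate point is to show that both the transport commutator and the top-order pieces of the forcing lose at most one derivative --- secured by Lemma~4.2(ii) and the fact that $H^{s-2}$ is an algebra for $s>5/2$ --- and, because $p\neq q$ in general, to run the two component estimates simultaneously so that the combined Gronwall factor $\Gamma(t)$ closes on the single quantity $\|u_{12}\|_{H^{s-1}}+\|v_{12}\|_{H^{s-1}}$.
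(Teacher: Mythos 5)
Your proposal follows essentially the same route as the paper's proof of this theorem: existence is delegated to the mollification--compactness construction of Section~4.2, and uniqueness is obtained by an energy estimate on the difference of the momentum variables $M=m_1-m_2$, $N=n_1-n_2$ one level below the solution regularity, closed via the Calderon--Coifman--Meyer commutator bound of Lemma~4.2(ii), the product estimates of Lemma~4.1 together with the factorization $v_1^p-v_2^p=v_{12}\sum_{i=0}^{p-1}v_1^{p-1-i}v_2^i$, a symmetric treatment of the two components, Gronwall with zero initial difference, and recovery of $U,V$ from $M,N$ by convolution with $\tfrac12 e^{-|x|}$ --- the paper merely runs this for a range of indices $\sigma$ rather than fixing $\sigma=s-3$. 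One cosmetic correction: Lemma~4.2(ii) yields the commutator bound with the factor $\|v_1^p\|_{H^{s-1}}$ rather than $\|v_1^p\|_{H^{s-2}}$ (the latter would require $s>7/2$), but since $v_1\in\mathcal{C}(I;H^s)$ this changes nothing in the Gronwall closure.
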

\begin{proof}
Let $z_0\in H^s\times H^s$ and $s>5/2$. Suppose $Z_1\doteq (u_1,v_1,m_1,n_1)$ and $Z_2\doteq (u_2,v_2,m_2,n_2)$ are two solutions to the Cauchy problem (1.1) with $u_1(x,0)=u_0(x)=u_2(x,0)$ and $v_1(x,0)=v_0(x)=v_2(x,0)$.    Let $M=m_1-m_2,N=n_1-n_2,U=u_1-u_2,V=V_1-V_2$, then we have
\begin{equation*}
\left\{
\begin{array}{llll}
&\partial_{t}M+ \partial_{x}(v_{1} ^{p} M)=-\partial_{x}[ (v_{1} ^{p}- v_{2} ^{p})m_{2}] +\frac{p-a}{p}\partial_x v_{1} ^{p} M +\frac{p-a}{p}\partial_x ( v_{1} ^{p}-  v_{2} ^{p}) m_{2} ,\\
&\partial_{t}N+\partial_{x} (u_{1}^{q}N) =-\partial_{x}[(u_{1}^{q}-u_{2}^{q})n_{2}]  +\frac{q-b}{q}\partial_xu_{1}^{q} N  +\frac{q-b}{q}\partial_x(u_{1}^{q}-u_{2}^{q}) n_{2} ,\\
&M|_{t=0}= N|_{t=0}=0.
\end{array}
\right.
\end{equation*}
 Without losing generality,
 let us just consider the third equation while other two can be handled in a similar way.
 Assume $\sigma\in (1/2,s-3)$, then applying $D^\sigma$ to 
 the first equation and multiplying the result by $D^\sigma E$, we obtain 
\begin{equation}
\begin{split}
\frac{1}{2}\frac{d}{dt}|| M ||_{H^\sigma} ^2 = & - \int_\mathbb{R}D^\sigma M D^\sigma\partial_{x}(v_{1} ^{p} M)dx\\
&-\int_\mathbb{R}D^\sigma M  D^{\sigma } \left(-\partial_{x}[ (v_{1} ^{p}- v_{2} ^{p})m_{2}] +\frac{p-a}{p}\partial_x v_{1} ^{p} M +\frac{p-a}{p}\partial_x ( v_{1} ^{p}-  v_{2} ^{p}) m_{2} \right)dx.
\end{split}
\end{equation}
Apparently, one can verify the following inequalities
\begin{equation}
\begin{split} \left| \int_\mathbb{R}D^\sigma M D^\sigma \partial_x(v_1^pM  )dx\right|
&\lesssim  \left| \int_\mathbb{R} D^\sigma M\left( [D^\sigma \partial_x, v_1^p]M+ v_1^p \partial_xD^\sigma M\right)dx\right| \\
&\lesssim  || [D^\sigma \partial_x, v_1^p]M ||_{L^2}|| M ||_{H^\sigma}+ \frac{1}{2}\left| \int_\mathbb{R} \partial_xv_1^p(D^\sigma M  )^2dx\right| \\
&\lesssim || v_1^p||_{H^\rho}|| M ||_{H^\sigma}^2+||(v_1^p)_x ||_{L^\infty} || M ||_{H^\sigma}^2\\
&\lesssim || v_1^p||_{H^s}|| M ||_{H^\sigma}^2,
\end{split}
\end{equation}
where 
Lemma 4.2(ii) with $\rho>3/2$ and $\sigma+1\leq \rho$ are applied.

Since $Z'$ and $Z$  satisfy the same estimate (4.11), for the first term on the right hand side of (4.24) we obtain
\begin{equation}
\begin{split}
&\left| \int_\mathbb{R}D^\sigma M D^\sigma \partial_{x}\left[ (v_{1} ^{p}- v_{2} ^{p})m_{2} \right]dx\right|
\lesssim || M ||_{H^\sigma} \left| \int_\mathbb{R}  \left( [D^\sigma \partial_x, v_1^p]m_{2}+(v_{1} ^{p}- v_{2} ^{p})\partial_xD^\sigma m_{2}\right)dx\right| \\
&\lesssim  || [D^\sigma \partial_x, (v_{1} ^{p}- v_{2} ^{p})]m_{2} ||_{L^2}|| M ||_{H^\sigma}+ \frac{1}{2}\left| \int_\mathbb{R} \partial_x(v_{1} ^{p}- v_{2} ^{p}) D^\sigma m_{2}   dx\right| || M ||_{H^\sigma}\\
&\lesssim || v_{1} ^{p}- v_{2} ^{p}||_{H^\rho}||m_{2}||_{H^\sigma}|| M ||_{H^\sigma} +||(v_{1} ^{p}- v_{2} ^{p})_x ||_{L^\infty} || m_{2}||_{H^\sigma}|| M ||_{H^\sigma} \\
&\lesssim || M ||_{H^\sigma} ||N||_{H^\sigma}||u_{2}||_{H^s}\sum_{i=0}^{p-1}|| v_{1} ||_{H^s}^{p-1-i} ||v_{2} ||^{i}_{H^s}.
\end{split}
\end{equation}
Meanwhile, the nonlocal term  on the right-hand side of (4.24) becomes
\begin{equation}
\begin{split}
&\left| \int_\mathbb{R}D^\sigma M  D^{\sigma } \left( \frac{p-a}{p}\partial_x v_{1} ^{p} M +\frac{p-a}{p}\partial_x ( v_{1} ^{p}-  v_{2} ^{p}) m_{2} \right)dx\right|\\
&\quad \quad\lesssim  ||M||_{H^\sigma} ||\partial_x v_{1} ^{p} M||_{H^{\sigma }}+||M||_{H^\sigma} ||\partial_x ( v_{1} ^{p}-  v_{2} ^{p}) m_{2}||_{H^{\sigma }}\\
&\quad\quad\lesssim || v_1^p||_{H^s}|| M ||_{H^\sigma}^2+|| M ||_{H^\sigma} ||N||_{H^\sigma}||u_{2}||_{H^s}\sum_{i=0}^{p-1}|| v_{1} ||_{H^s}^{p-1-i} ||v_{2} ||^{i}_{H^s},
\end{split}
\end{equation}
where Lemma 4.1(ii) with $\sigma\in(1/2,s-3)$ is used.

Combining (4.26) with (4.27) gives
\begin{equation*}
\begin{split}
\frac{1}{2}\frac{d}{dt}|| M ||_{H^\sigma} ^2\lesssim  ||z_0||_{H^s\times H^s} ^p\left(|| M||_{H^\sigma}^2   +||M||_{H^\sigma}||N||_{H^\sigma}\right), \ \forall \sigma \in (1/2,s-2),
\end{split}
\end{equation*}
which yields
\begin{equation*}
\begin{split}
 \frac{d}{dt}|| M ||_{H^\sigma}  \lesssim  ||z_0||_{H^s\times H^s}^{p/2} \left(||M||_{H^\sigma}  +||N||_{H^\sigma}\right).
\end{split}
\end{equation*}
Considering a similar estimate by $\frac{d}{dt}||N ||_{H^\sigma}$, we obtain
\begin{equation*}
\begin{split}
 \frac{d}{dt}\left(||M||_{H^\sigma}  +||N||_{H^\sigma}\right)  \lesssim  (||z_0||_{H^s\times H^s}^{p/2} +||z_0||_{H^s\times H^s}^{q/2} )\left(|| M||_{H^\sigma} +||N||_{H^\sigma}  \right).
\end{split}
\end{equation*}
Solving this inequality yields
\begin{equation*}
\begin{split}
   ||M||_{H^\sigma}  +||N||_{H^\sigma}  \lesssim \left( ||M(0)||_{H^\sigma}  +||N(0)||_{H^\sigma}\right)\exp( ||z_0||_{H^s\times H^s}^{p/2} +||z_0||_{H^s\times H^s}^{q/2} )t,
\end{split}
\end{equation*}
 which implies $M=N=0$ due to $ M(0)=N(0)=0$. 
 Furthermore, we have $u_1-u_2=U=\frac{1}{2}e^{-|x|}*M=0,v_1-v_2=U=\frac{1}{2}e^{-|x|}*N=0$ which convey the solution to (1.1) is unique in the spaces $\mathcal{C}(I;H^{s})$.
 So,
 we obtain the size estimate and lifespan for the solution $(u,v)$ given in Theorem 1.1.
\end{proof}

\subsection{Continuity of the data-to-solution map on a line}
In this section, we shall complete the proof of the Hadamard well-posedness for the Cauchy problem (1.1) on the line through showing
that the data-to-solution map $z_0=(u_0(x),v_0(x))\mapsto z(x,t)=(u(x,t),v(x,t))\in \mathcal{C}(I;H^s)\times \mathcal{C}(I;H^s)$ is continuous.
More precisely speaking, 
\begin{thm}
assume  $z_k(x,t)=(u_k(x,t),v_k(x,t))$  and $z(x,t)=((u (x,t),v (x,t)))$ are the solutions corresponding to the initial data $z_{0,k}(x)=(u_{0,k}(x),v_{0,k}(x))$ and $z_{0 }(x)=(u_{0 }(x),v_{0 }(x))$ respectively, and
 $z_{0,k}(x)=(u_{0,k}(x),v_{0,k}(x))\rightarrow (u_{0 }(x),v_{0 }(x))$ in
$H^s\times H^s$, then we have $z_{0,k}(x)=(u_k(x,t),v_k(x,t))\rightarrow z(x,t)=((u (x,t),v (x,t)))$ in $ \mathcal{C}(I;H^s)\times \mathcal{C}(I;H^s)$.
\end{thm}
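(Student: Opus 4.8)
The plan is to deduce continuity of the data-to-solution map from the uniform a priori bounds of \S4.1, the difference estimate embedded in the uniqueness proof (Theorem 4.2), and a mollification (Bona--Smith type) argument that recovers the top norm. First I would fix the uniform data: since $z_{0,k}\to z_0$ in $H^s\times H^s$, the family $\{z_{0,k}\}$ is bounded, say $\|z_{0,k}\|_{H^s\times H^s}\le R$, so by estimate (4.11) all the solutions $z_k$ and the limit $z$ exist on one common interval $I=[0,T_0]$, with $T_0$ determined by $R$, and obey $\sup_{t\in I}\big(\|z_k(t)\|_{H^s\times H^s}+\|z(t)\|_{H^s\times H^s}\big)\le 4R$.

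Next I would rerun the computation of Theorem 4.2 with the two (now distinct) data $z_{0,k}$ and $z_0$. That computation was carried out for a general pair of solutions and specialized to equal data only at the very end, so it yields, writing $M_k=m_k-m$ and $N_k=n_k-n$,
$$\|M_k(t)\|_{H^\sigma}+\|N_k(t)\|_{H^\sigma}\lesssim \big(\|M_k(0)\|_{H^\sigma}+\|N_k(0)\|_{H^\sigma}\big)\,e^{C_R t},\qquad t\in I,$$
valid for $\tfrac12<\sigma<s-2$. Since $\|m\|_{H^\sigma}=\|u\|_{H^{\sigma+2}}$ and $\|n\|_{H^\sigma}=\|v\|_{H^{\sigma+2}}$ by (4.7), this becomes
$$\|z_k(t)-z(t)\|_{H^{\sigma+2}\times H^{\sigma+2}}\lesssim \|z_{0,k}-z_0\|_{H^{\sigma+2}\times H^{\sigma+2}}\,e^{C_R t}.$$
As $z_{0,k}\to z_0$ in $H^s$ and $\sigma+2$ ranges over $(\tfrac52,s)$, the right-hand side vanishes; combined with $H^{s''}\hookrightarrow H^{s'}$ for $s'\le s''$ and interpolation against the uniform $H^s$ bound $4R$, this gives $z_k\to z$ in $\mathcal C(I;H^{s'})\times\mathcal C(I;H^{s'})$ for every $s'<s$.

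The decisive step is to push convergence up to $s'=s$, which I would do by a Bona--Smith argument. Mollify the data by $z_{0,k}^\varepsilon=J_\varepsilon z_{0,k}$ and $z_0^\varepsilon=J_\varepsilon z_0$, and let $z_k^\varepsilon,z^\varepsilon$ be the corresponding solutions, which now lie one derivative higher. Using the mollifier estimates $\|z_0-z_0^\varepsilon\|_{H^{s-1}}\lesssim\varepsilon\|z_0\|_{H^s}$ and $\|z_0^\varepsilon\|_{H^{s+1}}\lesssim\varepsilon^{-1}\|z_0\|_{H^s}$, an energy estimate for $z^\varepsilon-z$ in $H^s$ --- in which the extra regularity of $z^\varepsilon$ absorbs the derivative lost by the transport terms --- gives $\sup_{t\in I}\|z^\varepsilon(t)-z(t)\|_{H^s\times H^s}\to0$ as $\varepsilon\to0$, uniformly over the bounded family $\{z_{0,k}\}\cup\{z_0\}$. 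Then I split
$$\|z_k-z\|_{H^s\times H^s}\le\|z_k-z_k^\varepsilon\|_{H^s\times H^s}+\|z_k^\varepsilon-z^\varepsilon\|_{H^s\times H^s}+\|z^\varepsilon-z\|_{H^s\times H^s}.$$
Given $\eta>0$, choose $\varepsilon$ so small that the first and third terms are below $\eta/3$ for all $k$ (uniform Bona--Smith bound); with $\varepsilon$ now fixed, the middle term tends to $0$ as $k\to\infty$, because $J_\varepsilon z_{0,k}\to J_\varepsilon z_0$ in every $H^m$ while $z_k^\varepsilon,z^\varepsilon$ are uniformly bounded in $H^{s+1}$, so the Step-2 low-norm convergence interpolated against this higher uniform bound forces $\|z_k^\varepsilon-z^\varepsilon\|_{H^s}\to0$.

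The main obstacle is exactly this top-norm estimate. The difference estimate of the second step closes only for $\sigma$ strictly below $s-2$ (equivalently for $z_k-z$ in $H^{s'}$, $s'<s$), because the product and commutator bounds of Lemmas 4.1 and 4.2 require strict inequalities among the Sobolev indices and fail at the endpoint; consequently the raw difference $z_k-z$ cannot be controlled in $H^s$ by a direct energy/Gronwall argument. The Bona--Smith mollification is what overcomes this: it replaces the missing endpoint by trading the gain $\|z_0-z_0^\varepsilon\|_{H^{s-1}}\lesssim\varepsilon\|z_0\|_{H^s}$ against the loss $\|z_0^\varepsilon\|_{H^{s+1}}\lesssim\varepsilon^{-1}\|z_0\|_{H^s}$, and the delicate part is to run the $H^s$ energy estimate for $z^\varepsilon-z$ with the commutator bookkeeping of \S4.1 (Lemmas 4.1 and 4.2) so that the resulting bound is uniform over the bounded data family and vanishes as $\varepsilon\to0$.
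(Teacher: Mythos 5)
Your proposal is correct and follows essentially the same route as the paper: both rely on mollifying the data with $J_\epsilon$, splitting $\|z_k-z\|_{\mathcal{C}(I;H^{s})}$ into the three terms $\|z_k-z_k^\epsilon\|+\|z_k^\epsilon-z^\epsilon\|+\|z^\epsilon-z\|$, and making each smaller than $\eta/3$ with $\epsilon$ chosen first (uniformly over the bounded data family) and $k$ chosen afterwards. The only cosmetic difference is that the paper bounds the middle term by running the $H^s$ energy--Gronwall estimate directly on the difference of the mollified solutions, whereas you obtain it by interpolating the sub-critical convergence against the uniform higher-norm bound of the mollified family; both variants are standard and both close.
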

Due to the presence of the   high-order  derivative  terms $u_xv_{xx}$
and $v_xu_{xx}$, we employ the approach of transforming the original solution $z(x,t)=( u (x,t),v (x,t) )$ into the solution $Z(x,t)=( u (x,t),v (x,t),m (x,t),n (x,t))$ given by Eq.(1.1),
and use the convolution operator $J_\epsilon$ ($\epsilon\in(0,1]$)  to smooth out the initial data.
Let $Z^\epsilon=( u^\epsilon,v^\epsilon,m^\epsilon,n^\epsilon)$ be the solution to Eq.(1.1) with initial data  $J_\epsilon Z_0=(j_\epsilon*u_0, j_\epsilon*v_0,j_\epsilon*(u_0-\partial_x^2u_0),j_\epsilon*(u_0-\partial_x^2u_0))$ and $Z^\epsilon_k=( u_k^\epsilon,v_k^\epsilon,m_k^\epsilon,n_k^\epsilon)$ be the solution with initial data $J_\epsilon Z_{0,k}=(j_\epsilon*u_{0,k}, j_\epsilon*v_{0,k},j_\epsilon*(u_{0,k}-\partial_x^2u_{0,k}),j_\epsilon*(u_{0,k}-\partial_x^2u_{0,k}))$.

By the following triangle inequality
\begin{equation}
\begin{split}
 ||z_k-z||_{\mathcal{C}(I;H^{s})} \leq ||z_k-z_k^\epsilon||_{\mathcal{C}(I;H^{s })}+||z_k^\epsilon-z^\epsilon||_{\mathcal{C}(I;H^{s })}+||z^\epsilon-z||_{\mathcal{C}(I;H^{s })},
\end{split}
\end{equation}
we will prove that, for any $n>N$, each of these terms can be bounded by $\eta/3$ for suitable choices of $\epsilon$ and $N$, but
$\epsilon$ 
only depends on $\eta$, whereas the choice of $N$ is dependent of both $\eta$ and $\epsilon$.

\textbf{Estimation of  $||z_k^\epsilon-z^\epsilon||_{\mathcal{C}(I;H^{s })}$, $||z_k^\epsilon-z^\epsilon||_{\mathcal{C}(I;H^{s })}$ and $||z^\epsilon-z||_{\mathcal{C}(I;H^{s })}$.}
Let $U^\epsilon=u^\epsilon-u_k^\epsilon$,$V^\epsilon=v^\epsilon-v_k^\epsilon,M^\epsilon=m^\epsilon-m_k^\epsilon,N^\epsilon=n^\epsilon-n_k^\epsilon$. Then through a direct calculation we know that 
 $(U^\epsilon,V^\epsilon,M^\epsilon,N^\epsilon)$
 solves the following equation
 \begin{equation*}
\left\{
\begin{array}{llll}
&\partial_{t}M^\epsilon+ \partial_{x}[(v^\epsilon)  ^{p} M^\epsilon]=-\partial_{x}\{[ (v^\epsilon)  ^{p}- (v_{k}^\epsilon) ^{p}]m_{k}^\epsilon\} +\frac{p-a}{p}\partial_x (v^\epsilon)  ^{p} M^\epsilon +\frac{p-a}{p}\partial_x [ (v^\epsilon)  ^{p}-  (v_{k}^\epsilon) ^{p}] m_{k}^\epsilon ,\\
&\partial_{t}N^\epsilon+ \partial_{x}[(u^\epsilon)  ^{q} N^\epsilon]=-\partial_{x}\{[ (u^\epsilon)  ^{q}- (u_{k}^\epsilon) ^{q}]n_{k}^\epsilon\} +\frac{q-b}{q}\partial_x (u^\epsilon)  ^{q} N^\epsilon +\frac{q-b}{q}\partial_x [ (u^\epsilon)  ^{q}-  (u_{k}^\epsilon) ^{q}] n_{k}^\epsilon ,\\
&M=U-\partial_x^2U,N=V-\partial_x^2V,\quad U|_{t=0}= J_\epsilon u_{0 } -J_\epsilon u_{0,k} ,V|_{t=0}=J_\epsilon v_{0 } -J_\epsilon v_{0,k} .
\end{array}
\right.
\end{equation*}
For the sake of simplicity, 
let us only consider the third equation while  other two cases can be treated in a similar way.
Applying the operator $D^{s-2}$ to the left hand side of the first equation,  multiplying by $D^{s-2} P$, and then integrating over $\mathbb{R}$, we obtain
\begin{equation*}
\begin{split}
\frac{1}{2}\frac{d}{dt}& ||M^\epsilon||_{H^{s-2}} ^2 =  \int_\mathbb{R}D^{s-2} M^\epsilon D^{s-2} \partial_{x}[(v^\epsilon)  ^{p} M^\epsilon]dx\\
&+\int_\mathbb{R}D^{s-2} M^\epsilon  D^{{s-2}} \left(-\partial_{x}\{[ (v^\epsilon)  ^{p}- (v_{k}^\epsilon) ^{p}]m_{k}^\epsilon\} +\frac{p-a}{p}\partial_x (v^\epsilon)  ^{p} M^\epsilon +\frac{p-a}{p}\partial_x [ (v^\epsilon)  ^{p}-  (v_{k}^\epsilon) ^{p}] m_{k}^\epsilon\right)dx.
\end{split}
\end{equation*}
Apparently, we have the following fact
\begin{equation*}
\begin{split}
\left| \int_\mathbb{R}D^{s-2} M^\epsilon D^{s-2} \partial_{x}[(v^\epsilon)  ^{p} M^\epsilon]dx\right|&\lesssim\left| \int_\mathbb{R}[D^{s-2}\partial_x,(v^\epsilon)  ^{p}]M^\epsilon\cdot D^{s-2} M^\epsilon dx\right|+\left| \int_\mathbb{R}(v^\epsilon)  ^{p}\partial_x (D^{s-2} M^\epsilon)^2 dx\right| \\
&\lesssim ||[D^{s-2}\partial_x,(v^\epsilon)  ^{p}]M^\epsilon||_{L^2} || D^{s-2}  M^\epsilon||_{L^2}   +||\partial_x(v^\epsilon)^p ||_{L^\infty} ||  M^\epsilon ||_{H^{s-2}}^2\\
&\lesssim || v^\epsilon ||_{H^{s-1}}^p ||  M^\epsilon ||_{H^{s-2}}^2,
\end{split}
\end{equation*}
and
\begin{equation*}
\begin{split}
&\left| \int_\mathbb{R}D^{s-2} M^\epsilon D^{s-2} \partial_{x}\{[ (v^\epsilon)  ^{p}- (v_{k}^\epsilon) ^{p}]m_{k}^\epsilon\} dx\right|\\
&\lesssim\left| \int_\mathbb{R}[D^{s-2}\partial_x,(v^\epsilon)  ^{p}- (v_{k}^\epsilon) ^{p}]m_{k}^\epsilon dx\right|||  M^\epsilon ||_{H^{s-2}}+\left| \int_\mathbb{R}[(v^\epsilon)  ^{p}- (v_{k}^\epsilon) ^{p}]\partial_x  D^{s-2} m_{k}^\epsilon dx\right|||  M^\epsilon ||_{H^{s-2}} \\
&\lesssim ||[D^{s-2}\partial_x,(v^\epsilon)  ^{p}- (v_{k}^\epsilon) ^{p}]m_{k}^\epsilon||_{L^2}  ||  M^\epsilon ||_{H^{s-2}}   +||\partial_x[(v^\epsilon)  ^{p}- (v_{k}^\epsilon) ^{p}] ||_{L^\infty} ||  m_{k}^\epsilon||_{H^{s-2}}||  M^\epsilon ||_{H^{s-2}} \\
&\lesssim ||(v^\epsilon)  ^{p}- (v_{k}^\epsilon) ^{p} ||_{H^{s-1}} ||  m_{k}^\epsilon||_{H^{s-2}} ||  M^\epsilon ||_{H^{s-2}},
\end{split}
\end{equation*}
where we use Kato-Ponce Lemma 4.2(i),  algebra property Lemma 4.1(iii), and the Sobolev's inequality with $s-1>3/2$.
Therefore, the first term on the right-hand side of Eq. (4.29) yields 
\begin{equation*}
\begin{split}
&\left| \int_\mathbb{R}D^{s-2} M^\epsilon D^{s-2}  \left(\frac{p-a}{p}\partial_x (v^\epsilon)  ^{p} M^\epsilon+\frac{p-a}{p}\partial_x [ (v^\epsilon)  ^{p}-  (v_{k}^\epsilon) ^{p}] m_{k}^\epsilon \right) dx\right|\\
&\quad\quad\lesssim\left\| \partial_x (v^\epsilon)  ^{p} M^\epsilon \right\|_{H^{s-2}}||  M^\epsilon ||_{H^{s-2}}+\left\|\partial_x [ (v^\epsilon)  ^{p}-  (v_{k}^\epsilon) ^{p}] m_{k}^\epsilon \right\|_{H^{s-2}}||  M^\epsilon ||_{H^{s-2}}\\
&\quad\quad\lesssim\left\| v^\epsilon    \right\|^{p} _{H^{s-1}}||  M^\epsilon ||^2_{H^{s-2}}+\left\|  (v^\epsilon)  ^{p}-  (v_{k}^\epsilon) ^{p}  \right\|_{H^{s-1}}||  m_{k}^\epsilon ||_{H^{s-2}}||  M^\epsilon ||_{H^{s-2}}\\
&\quad\quad\lesssim\left\| v^\epsilon    \right\|^{p} _{H^{s-1}}||  M^\epsilon ||^2_{H^{s-2}}+ ||  M^\epsilon ||_{H^{s-2}}||  N^\epsilon ||_{H^{s-1}}||  u_{k}^\epsilon ||_{H^{s}}\sum_{i=0}^{p-1}\left\|   v^\epsilon     \right\|^{p-1-i}_{H^{s-1}}\left\|  v_{k}^\epsilon         \right\|^{i}_{H^{s-1}},
\end{split}
\end{equation*}
where Lemma 2.1(ii) is applied as $s-2>1/2$.
Since $Z^\epsilon$ and $Z^\epsilon_k$ satisfy the estimate $||z^\epsilon_k||_{H^{s}},||z^\epsilon ||_{H^{s}}\lesssim ||z_0||_{H^{s }\times H^{s }}\lesssim 1$ (see  (4.11)) and 
 $||M^\epsilon||_{H^{s-2}}\cong|| U^\epsilon||_{H^{s}},||N^\epsilon||_{H^{s-2}}\cong|| V^\epsilon||_{H^{s}}$, we may derive 
\begin{equation*}
\begin{split}
 \frac{d}{dt}||U^\epsilon||_{H^{s }}  \lesssim C_s\left( ||   U^\epsilon||_{H^{s }}  +|| V^\epsilon ||_{H^{s }}  \right).
\end{split}
\end{equation*}
A similar approach could lead to the following result: 
\begin{equation*}
\begin{split}
 \frac{d}{dt}\left(  ||   U^\epsilon||_{H^{s }}  +|| V^\epsilon ||_{H^{s }}   \right) \leq  C_s  \left( ||   U^\epsilon||_{H^{s }}  +|| V^\epsilon ||_{H^{s }} \right),
\end{split}
\end{equation*}
which can be solved 
for all $t\in[0,T]$ with $T$ defined by Eq. (1.12) to generate 
\begin{equation*}
\begin{split}
  &||   U^\epsilon(t)||_{H^{s }}  +|| V^\epsilon(t) ||_{H^{s }}  \leq\left( ||   U^\epsilon(0)||_{H^{s}}  +|| V^\epsilon(0) ||_{H^{s}}\right)\exp(C_sT).
\end{split}
\end{equation*}
After $\epsilon$ is chosen, we take $N$ sufficiently large so that $$||   U^\epsilon(0)||_{H^{s}}  +|| V^\epsilon(0) ||_{H^{s }}   <\frac{\eta}{3}\exp(- C_s T ).$$
Therefore, we have $||z^\epsilon-z_k^\epsilon||_{\mathcal{C}(I;H^{s })}<\eta/3$.
Adopting the same approach as  above, we can get
$||z_k-z_k^\epsilon||_{\mathcal{C}(I;H^{s })}$, $||z^\epsilon-z||_{\mathcal{C}(I;H^{s })}<\eta/3$. Thus, the continuity of the data-to-solution map for Eq. (1.1)  has been proved.

\subsection{Well-posedness on a circle}
If the initial data $(u_0,v_0)\in H^s(\mathbb{T}), s>5/2$ is given on a circle, then the CCCH system can be dealt with a approach similar to the line (or nonperiodic) problem with a few modifications.
We need a construction of the mollifier $j_\epsilon$ and may begin with a function $j\in\mathcal{S}(\mathbb{R})$ and the periodic functions $j_\epsilon$ by $j_\epsilon\doteq \frac{1}{2\pi}\sum_{n\in\mathbb{Z}}\hat{j}(\epsilon n) e^{inx}$, which 
admit an analogous estimate to (2.2). Then, the subsequent procedure makes us observe that the proof of existence, uniqueness and continuous dependence can be replicated without any difficulty.

\section{Nonuniform dependence of the strong solution to Eq.(1.1)}

\subsection{Approximate solutions}
Let us first construct a two-parameter family of approximate solutions through using a similar method to \cite{HM22,HK}, then estimate the error and at last the $H^r$-norm error.
The approximate solution $u^{\omega,\lambda}=u^{\omega,\lambda}(t,x)$ and $v^{\omega,\lambda}=v^{\omega,\lambda}(t,x)$ to (1.1) will consist of a low and a high frequency parts, i.e.,
$$u^{\omega,\lambda}=u_l+u_h,v^{\omega,\lambda}=v_l+v_h,$$
where $\omega $  is a bounded constant, and $\lambda>0$. The high frequency part is given by
\begin{align}
u_h=u^{h,\omega,\lambda}=\lambda^{-\frac{ \delta}{2p}-s}\phi\left(\frac{x}{\lambda^{\delta/p}}\right)\cos (\lambda x-\omega^p  t),
v_h=v^{h,\omega,\lambda}=\lambda^{-\frac{\delta}{2q}-s}\varphi\left(\frac{x}{\lambda^{\delta/q }}\right)\cos (\lambda x-\omega^q  t),
\end{align}
with the cutoff function $\phi,\varphi\in C ^\infty$  satisfying
\begin{equation*}
\phi(x),\varphi(x)=\left\{
\begin{array}{llll}
1, \text{ if } |x|<1,\\
0, \text{ if } |x|\geq2.
\end{array}
\right.
\end{equation*}
Simultaneously, the low frequency parts $u_l=u_{l,\omega,\lambda}(t,x)$ and $v_l=v_{l,\omega,\lambda}(t,x)$ are the solution  to Eq.(1.1) with the following initial data
\begin{equation}
\left\{
\begin{array}{llll}
\partial_{t}u_l+v_l^{p}\partial_{x}u_l+D^{-2}\left[ \frac{a}{p} u_l\partial_xv_l^{p}+ \frac{p-a }{p}\partial_xv^{p}_l\partial_{x}^2u_l\right]+D^{-2}\partial_x[\partial_{x}v^{p}_l\partial_{x}u_l]=0, &t\in\mathbb{R},x\in\mathbb{R},\\
\partial_{t}v_l+u_l^{q}\partial_{x}v_l+D^{-2}\left[ \frac{b}{q} v_l\partial_xu_l^{q}+ \frac{q-b }{q}\partial_xu^{q}_l\partial_{x}^2v_l\right]+D^{-2}\partial_x[\partial_{x}u^{q}_l\partial_{x}v_l]=0, &t\in\mathbb{R},x\in\mathbb{R},\\
u_l(0,x)=\omega\lambda^{-\frac{1}{ q}}\tilde {\phi} \left(\frac{x}{\lambda^{\frac{\delta}{ q}}}\right),  v_l(0,x)=\omega\lambda^{-\frac{1}{p }}\tilde{\varphi}\left(\frac{x}{\lambda^{\frac{\delta}{p }}}\right),& t=0,x\in\mathbb{R},
\end{array}
\right.
\end{equation}
where $\tilde{\phi},\tilde{\varphi}\in C_0^\infty$ satisfies 
\begin{equation}
\tilde{\phi}^q(x)=1,  \quad \tilde{\varphi}^p(x)=1 \text{ if } x\in  \text{supp} \varphi\cup\text{supp} \phi.
\end{equation}
Let us now study properties of $(u_l,v_l)$ and $(u_h,v_h)$. The high frequency part $(u_h,v_h)$ 
satisfies
\begin{align*}
  &||u_h(t)||_{L^\infty} \lesssim \lambda^{-\frac{\delta}{2p}-s },||v_h(t)||_{L^\infty} \lesssim \lambda^{-\frac{\delta}{2q}-s },\\
  &||\partial_xu_h(t)||_{L^\infty} \lesssim \lambda^{-\frac{\delta}{2p}-s+1 },||\partial_xv_h(t)||_{L^\infty} \lesssim \lambda^{-\frac{\delta}{2q}-s +1}.
\end{align*}
To estimate $||u_h(t)||_{H^r} ||v_h(t)||_{H^r}$, we need the following results.
\begin{lm} (See \cite{HK}.)
Let $\psi\in \mathcal{S}(\mathbb{R})$, $\delta>0$ and $\alpha\in\mathbb{R}$. Then for any $s\geq 0$ we have 
\begin{equation}
\lim_{\lambda\rightarrow\infty}\lambda^{-\frac{1}{2}\delta-s}\left\|\psi\left(\frac{x}{\lambda^\delta}\right)\cos(\lambda x-\alpha)\right\|_{H^s}=\frac{1}{\sqrt{2}}||\psi||_{L^2}.
\end{equation}
Relation (5.4) is also true if $\cos$
 is replaced by $\sin$.
\end{lm}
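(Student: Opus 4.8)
The plan is to pass to the Fourier side and evaluate the $H^s$ norm directly through Plancherel's theorem. Writing $f_\lambda(x)=\psi(x/\lambda^\delta)\cos(\lambda x-\alpha)$ and using $\cos(\lambda x-\alpha)=\tfrac12(e^{-i\alpha}e^{i\lambda x}+e^{i\alpha}e^{-i\lambda x})$ together with the scaling law $\widehat{\psi(\cdot/\lambda^\delta)}(\xi)=\lambda^\delta\hat\psi(\lambda^\delta\xi)$ and the modulation law, I would first record
$$\hat{f}_\lambda(\xi)=\frac{\lambda^\delta}{2}\left(e^{-i\alpha}\hat\psi(\lambda^\delta(\xi-\lambda))+e^{i\alpha}\hat\psi(\lambda^\delta(\xi+\lambda))\right),$$
so that, with the unitary normalization of the Fourier transform, $\|f_\lambda\|_{H^s}^2=\int_{\mathbb{R}}(1+\xi^2)^s|\hat{f}_\lambda(\xi)|^2\,d\xi$ splits into two diagonal terms (the squared moduli of the two bumps centered at $\xi=\pm\lambda$) plus a cross term.

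For the diagonal term coming from the bump at $\xi=\lambda$ I would substitute $\eta=\lambda^\delta(\xi-\lambda)$, obtaining
$$\frac{\lambda^\delta}{4}\int_{\mathbb{R}}\left(1+(\lambda+\lambda^{-\delta}\eta)^2\right)^s|\hat\psi(\eta)|^2\,d\eta.$$
Dividing by $\lambda^{2s}$, the integrand tends pointwise to $\tfrac14|\hat\psi(\eta)|^2$, since $(1+(\lambda+\lambda^{-\delta}\eta)^2)^s/\lambda^{2s}\to 1$ for each fixed $\eta$; a uniform majorant $C(1+\eta^2)^s|\hat\psi(\eta)|^2$, valid for $\lambda\ge 1$ because $(\lambda+\lambda^{-\delta}\eta)^2\le 2\lambda^2(1+\eta^2)$, is integrable as $\psi\in\mathcal{S}$, so dominated convergence applies and this term equals $\tfrac14\lambda^{\delta+2s}\|\hat\psi\|_{L^2}^2(1+o(1))$. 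The bump at $\xi=-\lambda$ contributes identically, and Plancherel gives $\|\hat\psi\|_{L^2}^2=\|\psi\|_{L^2}^2$, whence $\|f_\lambda\|_{H^s}^2=\tfrac12\lambda^{\delta+2s}\|\psi\|_{L^2}^2(1+o(1))$. Multiplying by $\lambda^{-\delta-2s}$ and taking square roots then yields exactly $\tfrac{1}{\sqrt2}\|\psi\|_{L^2}$ in the limit.

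The step I expect to be the main obstacle is controlling the cross term
$$\frac{\lambda^{2\delta}}{2}\,\mathrm{Re}\,e^{-2i\alpha}\int_{\mathbb{R}}(1+\xi^2)^s\hat\psi(\lambda^\delta(\xi-\lambda))\overline{\hat\psi(\lambda^\delta(\xi+\lambda))}\,d\xi$$
and showing it is $o(\lambda^{\delta+2s})$, so that it vanishes after normalization. The point is that the two bumps are centered a distance $2\lambda$ apart, so for every $\xi$ at least one argument obeys $|\lambda^\delta(\xi\mp\lambda)|\ge\lambda^{1+\delta}$; invoking the rapid decay $|\hat\psi(y)|\le C_N(1+|y|)^{-N}$ with $N$ chosen large (depending on $s$ and $\delta$) lets me dominate both the weight $(1+\xi^2)^s$ and the prefactor $\lambda^{2\delta}$ and conclude the cross term decays faster than $\lambda^{\delta+2s}$. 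The same computation, with $\sin$ replacing $\cos$, alters only the phases $e^{\mp i\alpha}$ and a sign in the cross term, so the identical estimates give the stated $\sin$ variant.
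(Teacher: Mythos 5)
Your proof is correct, and it is the standard argument: the paper itself gives no proof of this lemma (it simply defers to the cited reference \cite{HK}), and the Fourier-side computation you carry out — two bumps at $\xi=\pm\lambda$, dominated convergence for the diagonal terms after the substitution $\eta=\lambda^\delta(\xi-\lambda)$, and the rapid decay of $\hat\psi$ killing the cross term because the bumps are separated by $2\lambda$ — is exactly the proof given there.
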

So, this Lemma tells us
\begin{align*}
||u_h(t)||_{H^r}\lesssim \lambda^{r-s}\lambda^{-\frac{\delta}{2p}-r}\left\|\psi\left(\frac{x}{\lambda^{\delta/p}}\right)\cos(\lambda x-\alpha)\right\|_{H^r} \lesssim \lambda^{r-s}, \quad ||v_h(t)||_{H^s}\lesssim \lambda^{r-s}, \text{ for } \lambda\gg1.
\end{align*}

 Obviously, 
 the low frequency part 
 $(u_l,v_l)=0$ under the zero initial condition of Eq. (5.2) with $\omega=0$. 
 For $\omega \not=0$, but bounded, basic properties of $(u_l,v_l)$ are summarized in the following lemma.
\begin{lm}
Let $\omega$ be bounded, $0<\delta<2$, and $\lambda\gg1$. Then the initial-value problem (5.2) has a unique solution $(u_l,v_l)\in \mathcal{C}([0,T];H^s(\mathbb{R}))\times \mathcal{C}([0,T];H^s(\mathbb{R}))$ with $s>5/2$. Moreover, for all $r\geq0$ this solution satisfies the following estimate
\begin{equation}
||u_l(t)||_{H^r}\leq c_r \lambda^{ \frac{\delta-2}{2q} },||v_l(t)||_{H^r}\leq c_r \lambda^{ \frac{\delta-2}{2p} }, \quad 0\leq t\leq 1.
\end{equation}
\end{lm}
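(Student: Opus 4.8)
The plan is to produce the solution on the \emph{fixed} interval $[0,1]$ directly from the Hadamard well-posedness already established in Theorem 1.7, and then to propagate the decay of the initial data in each $H^r$-norm by an energy estimate that exploits the structural fact that the $u_l$-equation in (5.2) is \emph{linear} in $u_l$ with coefficients built out of $v_l$ alone (and symmetrically for $v_l$). First I would record the size of the data. For a rescaled profile $g(x)=\psi(x/\mu)$ with $\mu=\lambda^{\delta/q}\ge1$, Plancherel and the substitution $\eta=\mu\xi$ give $\|g\|_{H^r}\le \mu^{1/2}\|\psi\|_{H^r}$; applied to the data in (5.2) this yields, for every $r\ge0$,
\begin{equation*}
\|u_l(0)\|_{H^r}\le c_r\,\lambda^{\frac{\delta-2}{2q}},\qquad \|v_l(0)\|_{H^r}\le c_r\,\lambda^{\frac{\delta-2}{2p}} .
\end{equation*}
In particular $\|z_0\|_{H^s}=\|u_l(0)\|_{H^s}+\|v_l(0)\|_{H^s}\lesssim\lambda^{\frac{\delta-2}{2\kappa}}$, which tends to $0$ as $\lambda\to\infty$ because $0<\delta<2$. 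Hence the lifespan $T_0=\frac{2^\kappa-1}{2^{\kappa+1}\kappa C_s\|z_0\|_{H^s}^\kappa}$ from Theorem 1.7 tends to $+\infty$, so $T_0>1$ for $\lambda\gg1$, giving a unique $(u_l,v_l)\in\mathcal{C}([0,1];H^s)^2$ with the uniform bound $\|u_l(t)\|_{H^s}+\|v_l(t)\|_{H^s}\le2\|z_0\|_{H^s}\lesssim\lambda^{\frac{\delta-2}{2\kappa}}$ on $[0,1]$. Since the data lie in $C_0^\infty\subset H^{r'}$ for all $r'$, persistence of regularity places $(u_l,v_l)$ in every $H^{r'}$ as well.

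Next I would run an $H^r$ energy estimate on each equation of (5.2) separately. Applying $D^r$ to the first equation and pairing with $D^r u_l$, the transport contribution is handled by integration by parts, $-(D^r u_l,v_l^p D^r\partial_x u_l)_{L^2}=\tfrac12(\partial_x v_l^p,(D^r u_l)^2)_{L^2}$, bounded by $\|\partial_x v_l^p\|_{L^\infty}\|u_l\|_{H^r}^2\lesssim\|v_l\|_{H^s}^p\|u_l\|_{H^r}^2$. The remaining commutator $[D^r,v_l^p]\partial_x u_l$ and the three nonlocal terms (each carrying the smoothing factor $D^{-2}$ or $D^{-2}\partial_x$ and one factor $v_l^p$ or $\partial_x v_l^p$) are controlled by the commutator estimate Lemma 4.2 and the product estimates Lemma 4.1. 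The essential point is that every nonlinear term of the $u_l$-equation is linear in $u_l$ with a coefficient of degree $p$ in $v_l$; using $s>5/2$ these coefficients are absorbed into $\|v_l\|_{H^s}^p$, producing (for $r\le s$)
\begin{equation*}
\frac{d}{dt}\|u_l(t)\|_{H^r}\le C_r\|v_l\|_{H^s}^p\|u_l\|_{H^r},\qquad \frac{d}{dt}\|v_l(t)\|_{H^r}\le C_r\|u_l\|_{H^s}^q\|v_l\|_{H^r}.
\end{equation*}

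Finally, the first step gives $\int_0^1\|v_l\|_{H^s}^p\,d\tau\lesssim\lambda^{p\frac{\delta-2}{2\kappa}}\le1$ (and likewise for $u_l$), so Gronwall's inequality yields $\|u_l(t)\|_{H^r}\lesssim\|u_l(0)\|_{H^r}$ and $\|v_l(t)\|_{H^r}\lesssim\|v_l(0)\|_{H^r}$ on $[0,1]$; combined with the data bounds of the first step this is exactly (5.5). The hardest point will be the energy estimate for the high-order nonlocal term $D^{-2}\!\left(\partial_x v_l^p\,\partial_x^2u_l\right)$ — the very term responsible for the threshold $s>5/2$ rather than $3/2$ — where keeping the expression linear in $u_l$ while extracting a clean factor $\|v_l\|_{H^s}^p$ requires pairing the two derivatives gained from $D^{-2}$ against $\partial_x^2 u_l$ through Lemma 4.1, with the ranges $0\le r\le1$ and $r>1$ treated by Lemma 4.1(iii) and 4.1(ii) respectively. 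For $r>s$ the commutator produces a genuinely coupled term $\|v_l^p\|_{H^r}\|\partial_x u_l\|_{L^\infty}\lesssim\|v_l\|_{H^s}^{p-1}\|u_l\|_{H^s}\,\|v_l\|_{H^r}$, so there one closes a coupled linear Gronwall system in $(\|u_l\|_{H^r},\|v_l\|_{H^r})$ with coefficients of size $\lambda^{p\frac{\delta-2}{2\kappa}}$ and $\lambda^{q\frac{\delta-2}{2\kappa}}$; a short check that $\tfrac{p}{\kappa}+\tfrac1p\ge\tfrac1q$ and $\tfrac{q}{\kappa}+\tfrac1q\ge\tfrac1p$ shows the cross-coupling contributions are dominated by $\lambda^{\frac{\delta-2}{2q}}$ and $\lambda^{\frac{\delta-2}{2p}}$, so the sharp separate rates survive.
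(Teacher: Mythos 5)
Your proposal is correct, and the first half (the scaling bound $\|\psi(\cdot/\mu)\|_{H^r}\leq\mu^{1/2}\|\psi\|_{H^r}$ for the data, hence $\|z_0\|_{H^s}\lesssim\lambda^{(\delta-2)/2\kappa}\to0$ and lifespan $T_0\geq1$ via (1.12)) is exactly the paper's argument. Where you diverge is the propagation of the $H^r$ decay: the paper dispatches (5.5) in one line by invoking the size estimate (1.13) of Theorem~1.7 at the regularity level $r+3$ together with the embedding $H^{r+3}\hookrightarrow H^r$, i.e.\ $\|u_l(t)\|_{H^r}\leq\|u_l(t)\|_{H^{r+3}}\leq C\|u_l(0)\|_{H^{r+3}}\leq C\lambda^{(\delta-2)/2q}$, exploiting that the smooth rescaled profiles have every Sobolev norm decaying at the same rate. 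Your route instead re-runs an energy/Gronwall estimate directly on (5.2), using that each equation is linear in its own unknown with coefficients of degree $p$ (resp.\ $q$) in the other. This costs more work (the commutator and the nonlocal terms must be re-estimated) but buys a more transparent justification of why the \emph{separate} rates $\lambda^{(\delta-2)/2q}$ and $\lambda^{(\delta-2)/2p}$ survive, since Theorem~1.7 as proved really controls the sum norm $\|u\|_{H^s}+\|v\|_{H^s}$. One small wrinkle: your decoupled inequality $\frac{d}{dt}\|u_l\|_{H^r}\leq C\|v_l\|_{H^s}^p\|u_l\|_{H^r}$ does not literally close for $0\leq r<3/2$, because the commutator term $\|v_l^p\|_{H^r}\|\partial_xu_l\|_{L^\infty}$ forces the $H^{s}$ (not $H^r$) norm of $u_l$ to appear; but this is harmless, since the range $r<s$ follows from the $r=s$ case by the same embedding the paper uses, and your coupled-system argument already handles $r>s$.
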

\begin{proof}
Apparently, for any function $\phi\in \mathcal{S} (\mathbb{R})$ we have
\begin{equation}
\left\|\phi\left(\frac{x}{\lambda^{k\delta}}\right)\right\|_{H^r}\leq \lambda^{k\delta/2}\left\|\phi\right\|_{H^r}.
\end{equation}
In fact, as per the relation $\widehat{\phi\left(\frac{x}{\rho}\right)}(\xi)=\rho\widehat{\phi }(\rho\xi)$, making the change of variables $\eta=\lambda^{k\delta} \xi$ leads to 
\begin{equation*}
\begin{split}
 \left\|\phi\left(\frac{x}{\lambda^{k\delta}}\right)\right\|_{H^r}^2&=\frac{1}{2\pi}\int_\mathbb{R} (1+\xi^2)^r \left|\lambda^{k\delta}\widehat{\phi}(\lambda^{k\delta})\right|^2d\xi\\
 &= \frac{\lambda^{k\delta} }{2\pi}\int_\mathbb{R} \left(1+\frac{\eta^2}{\lambda^{2k\delta}}\right)^r \left| \widehat{\phi}(\eta)\right|^2d\eta\\
  &= \frac{\lambda^{k\delta} }{2\pi}\int_\mathbb{R} \left(1+\eta^2\right)^r \left| \widehat{\phi}(\eta)\right|^2d\eta\\
 & =\lambda^{k\delta }\left\|\phi\right\|_{H^r}^2.
\end{split}
\end{equation*}
Therefore, from the inequlity (5.6) we know that the initial data $u_l(0),v_l(0)$
satisfy the following estimate
 \begin{equation*}
|| u_l(0)||_{H^r}\leq |\omega|\lambda^{\frac{\delta-2}{2q}}\left\|\tilde {\phi} \right\|_{H^r}
\end{equation*}
which  decays if $\delta\leq 2$ and $\omega$ is bounded.
Furthermore, using the estimate (1.12) from Theorem 1.7, we obtain the lifespan $T\geq \frac{2^\kappa-1}{2^{\kappa+1}\kappa C_s||  z_0||_{H^{s  } }^{\kappa} }\geq1$ for $\lambda\gg1$ and $\delta\leq 2$.
So, if $r\geq0$ then the estimate (1.13) of Theorem 1.7 yields 
\begin{equation*}
||u_l(t)||_{H^r}\leq  ||u_l(t)||_{H^{r+3}}\leq C_s ||u_l(0)||_{H^{r+3}}\leq C_s \lambda^{\frac{\delta-2}{2q}}.
\end{equation*}
This complete Lemma 5.2.
\end{proof}

Substituting the approximate solutions $(u^{\omega,\lambda},v^{\omega,\lambda})$
 into Eq.(1.1),  and noticing that $(u_l,v_l)$ is a solution to Eq.(5.2), we obtain the following error
\begin{equation*}
\begin{split}
F(u^{\omega,\lambda},v^{\omega,\lambda})=F_1+F_2+F_3+F_4+F_5,\\
\tilde{F}(u^{\omega,\lambda},v^{\omega,\lambda})=  \tilde{F}_1+\tilde{F}_2+\tilde{F}_3+\tilde{F}_4+\tilde{F}_5,
\end{split}
\end{equation*}
where
\begin{equation*}
\begin{split}
F_1&=\partial_t u_h+v_l^p\partial_xu_h,\\
F_2&=\left(\sum_{j=1}^pC_jv_l^{p-j}v_h^j\right)\partial_x(u_l+u_h),\\
F_3&=\frac{p-a }{p}D^{-2}\left[ \partial_x^2(u_l+u_h)\partial_x\left(\sum_{j=1}^pC_jv_l^{p-j}v_h^j\right)+\partial_x^2u_h\partial_xv_l^p\right],\\
F_4&=\frac{a}{p}D^{-2}\left[ (u_l+u_h)\partial_x\left(\sum_{j=1}^pC_jv_l^{p-j}v_h^j\right)+u_h\partial_xv_l^p\right],\\
F_5&= D^{-2} \partial_x\left[ \partial_x (u_l+u_h)\partial_x\left(\sum_{j=1}^pC_jv_l^{p-j}v_h^j\right)+\partial_x u_h\partial_xv_l^p\right].
\end{split}
\end{equation*}

\subsection{Estimation of the error $F$ in the $H^\theta$-norm}
For our convenience, let us just focus on 
the estimates for the case $r\geq0$ and $1<\delta<2$:
\begin{equation*}
\begin{split}
 &||u_l(t)||_{H^r}\lesssim \lambda^{\frac{\delta-2}{2q}}, ||v_l(t)||_{H^r}\lesssim \lambda^{\frac{\delta-2}{2p}},\\
  &||u_h(t)||_{L^\infty} \lesssim \lambda^{-\frac{\delta}{2p}-s },||v_h(t)||_{L^\infty} \lesssim \lambda^{-\frac{\delta}{2q}-s },\\
  &||\partial_xu_h(t)||_{L^\infty} \lesssim \lambda^{-\frac{\delta}{2p}-s+1 },||\partial_xv_h(t)||_{L^\infty} \lesssim \lambda^{-\frac{\delta}{2q}-s +1},\\
  &||u_h(t)||_{H^r}, ||v_h(t)||_{H^r} \lesssim \lambda^{-s+r  }.
\end{split}
\end{equation*}
If $\theta>1/2$, by the Sobolev's lemma, Lemma 4.1(i) presents the following algebraic property
\begin{equation*}
||fg||_{H^\theta}\leq c_\theta||f||_{\mathcal{C}^1}||g||_{H^\theta}.
\end{equation*}
Next, let us estimate the error $F$ in the $H^\theta$ norm where $\theta\in(3/2,s)$.\\

\textbf{Estimation of $F_1$ in the $H^\theta$-norm}
Apparently,  $\tilde{\varphi}^p\left(\frac{x}{\lambda^{\delta/p}}\right) \phi\left(\frac{x}{\lambda^{\delta/p}}\right) =\phi\left(\frac{x}{\lambda^{\delta/p}}\right) $ and $\partial_t u_h$ can be rewritten as
\begin{equation*}
\begin{split}
\partial_tu_h(x,t)&=\omega^p  \lambda^{-\frac{\delta}{2p}-s} \tilde{\varphi}^p\left(\frac{x}{\lambda^{\delta/p}}\right)  \phi\left(\frac{x}{\lambda^{\delta/p}}\right)\sin (\lambda x-\omega^p t)\\
&=\lambda^{1-\frac{\delta}{2p}-s}  v_l^p(x,0) \phi\left(\frac{x}{\lambda^{\delta/p}}\right)\sin (\lambda x-\omega^p  t),
\end{split}
\end{equation*}
and
\begin{equation*}
\begin{split}
\partial_x u_h(x,t)= -\lambda^{1-\frac{\delta}{2p}-s}\phi\left(\frac{x}{\lambda^{\delta/p}}\right)\sin (\lambda x-\omega^p  t)+\lambda^{ -\frac{3\delta  }{2p}-s} \phi'\left(\frac{x}{\lambda^{\delta/p}}\right)\cos (\lambda x-\omega^p  t).
\end{split}
\end{equation*}
Furthermore, we have 
\begin{align}
F_1&=-[v_l^p(t,x)-v_l^p(0,x)] \lambda^{1-\frac{\delta}{2p}-s}\phi\left(\frac{x}{\lambda^{\delta/p}}\right)\sin (\lambda x-\omega^p t)\tag{*}\\
&\quad+ [v_l^p(t,x)]\lambda^{ -\frac{3\delta}{2p}-s} \phi'\left(\frac{x}{\lambda^{\delta/p}}\right)\cos (\lambda x-\omega^p  t).\tag{**}
\end{align}
Applying the algebraic property leads to
\begin{equation*}
\begin{split}
\|(*)\|_{H^\theta}&\lesssim \lambda^{1-\frac{\delta}{2p}-s} \left\| [v_l^p(t)-v_l^p(0)]  \phi\left(\frac{x}{\lambda^{\delta/p}}\right)\sin (\lambda x-\omega^p  t)\right\|_{H^\theta}\\
&\lesssim \lambda^{1-\frac{\delta}{2p}-s}\cdot \left\|v_l^p(t)-v_l^p(0) \right\|_{H^\theta}\left\|  \phi\left(\frac{x}{\lambda^{\delta/p}}\right)\sin (\lambda x-\omega^p  t)\right\|_{H^\theta}\\
&\lesssim \lambda^{1-\frac{\delta}{2p}-s}\lambda^{\theta+\delta/2p} \left\|v_l^p(t)-v_l^p(0) \right\|_{H^\theta}\\
&\lesssim \lambda^{1-s+\theta}\cdot \left\|v_l^p(t)-v_l^p(0) \right\|_{H^\theta}.
\end{split}
\end{equation*}
To estimate the $H^\theta$-norm of the difference $v_l^p(t)-v_l^p(0)$, we adopt the fundamental theorem of calculus in time variable to obtain
\begin{equation*}
\begin{split}
  \left\|v_l^p(t)-v_l^p(0)\right\|_{H^\theta} \leq\int_0^t\left\| v_l^{p-1}(x,\tau) \right\|_{H^\theta}\left\|\partial_tv_l(x,\tau) \right\|_{H^\theta} d\tau, t\in[0,T].\\
\end{split}
\end{equation*}
Thus, it follows from Eq.(5.2) that
\begin{equation*}
\begin{split}
 \left\|\partial_tv_{l} \right\|_{H^\theta}&\lesssim  \left\|u_l^{q}\partial_{x}v_l\right\|_{H^\theta}+\left\|v_l\partial_xu_l^{q}\right\|_{H^{\theta-2}} +\left\|\partial_xu^{q}_l\partial_{x}^2v_l\right\|_{H^{\theta-2}}+\left\| \partial_{x}u^{q}_l\partial_{x}v_l \right\|_{H^{\theta-1}} \\
&\lesssim\left\|  u_l\right\|_{H^{s+1}} ^q \left\|  v_l  \right\|_{H^{s+1}}.
\end{split}
\end{equation*}
Therefore, we have
\begin{equation*}
\begin{split}
||(*)||_{H^\theta}\lesssim \lambda^{1-s+\theta} \left\|  u_l\right\|_{H^{s+1}} ^q \left\|  v_l  \right\|_{H^{s+1}}^p \lesssim \lambda^{-1-s  +\delta +\theta},\quad\lambda \gg1.
\end{split}
\end{equation*}
On the other hands, we know
\begin{equation*}
\begin{split}
||(**)||_{H^\theta}&=\left\| v_l^p(x,t) \lambda^{ -\frac{3\delta}{2p}-s}\partial_x\phi\left(\frac{x}{\lambda^{\delta/p}}\right)\cos (\lambda x-\omega^p   t)\right\|_{H^\theta}\\
&\lesssim\lambda^{ -\frac{3\delta}{2p}-s} \left\|v_l^p (x,t)\right\|_{H^{\theta}} \left\| \phi'\left(\frac{x}{\lambda^{\delta/p}}\right)\cos (\lambda x-\omega^p  t)\right\|_{H^\theta}\\
&\lesssim\lambda^{-\frac{3\delta}{2p}-s} \lambda^{\theta+\delta/2p}\lambda^{ (\delta-2)/2}\\
&\lesssim\lambda^{-s +\delta/2-\delta/p-1+\theta},\quad\lambda \gg1.
\end{split}
\end{equation*}
Hence, we obtain
\begin{equation*}
\begin{split}
||F_1||_{H^\theta} &\lesssim||(*)||_{H^\theta}+ ||(**)||_{H^\theta}  \lesssim   \lambda^{-1-s  +\delta +\theta},\quad\lambda \gg1.
\end{split}
\end{equation*}

\textbf{Estimation of $F_2$ in the $H^\theta$-norm} is given by
\begin{equation*}
\begin{split}
||F_2||_{H^\theta}&=   \left\|\left(\sum_{j=1}^pC_jv_l^{p-j}v_h^j\right)\partial_x(u_l+u_h)\right\|_{H^\theta} \\
&\lesssim\left\|\left(\sum_{j=1}^pC_jv_l^{p-j}v_h^j\right)\partial_x u_l \right\|_{H^\theta} +\left\| \sum_{j=1}^pC_jv_l^{p-j}v_h^j \partial_x u_h \right\|_{H^\theta} \\
&\lesssim\sum_{j=1}^p\left\|  v_l\right\|^{p-j}_{H^\theta}  \left\|v_h \right\|^j_{H^\theta}  \left\|\partial_x u_l \right\|_{H^\theta} +\sum_{j=1}^p\left\|  v_l\right\|^{p-j}_{H^\theta} \left( \left\|v_h \right\|^j_{H^\theta}  \left\|\partial_x u_h \right\|_{L^\infty} +\left\|v_h \right\|^j_{L^\infty}  \left\|\partial_x u_h \right\|_{H^\theta}\right)\\
&\lesssim\sum_{j=1}^p\lambda^{\frac{(p+1-j)(\delta-2)}{2p}+j(-s+\theta) }
+ \sum_{j=1}^p\lambda^{\frac{(p -j)(\delta-2)}{2p}+j(-s+\theta) -\frac{ \delta }{2p}-s+1}+ \sum_{j=1}^p\lambda^{\frac{(p-j)(\delta-2)}{2p}-j(s+\frac{ \delta }{2p})-s+\theta+1}\\
&\lesssim\sum_{j=1}^p\lambda^{ j(-s+\theta) }
+ \sum_{j=1}^p\lambda^{ j(-s+\theta) -\frac{ \delta }{2p}-s+1} + \sum_{j=1}^p\lambda^{ -j(s+\frac{ \delta }{2p})-s+\theta+1}\\
&\lesssim \lambda^{ -s+\theta }
+ \lambda^{   -2s+\theta  -\frac{ \delta }{2p} +1} + \lambda^{ - 2s-\frac{ \delta }{2p}  +\theta+1}.
\end{split}
\end{equation*}

 \textbf{Estimation of $F_{3}$ in the $H^\theta$-norm.} For $\theta-1>1/2$, it follows from Lemma 4.1(ii) that
\begin{equation*}
\begin{split}
||F_3||_{H^\theta}&=\left\|\partial_x^2(u_l+u_h)\right\|_{H^{\theta-2}}\left\|\partial_x\left(\sum_{j=1}^pC_jv_l^{p-j}v_h^j\right)\right\|_{H^{\theta-1}}
+\left\|\partial_x^2u_h\right\|_{H^{\theta-2}} \left\|\partial_xv_l^p\right\|_{H^{\theta-1}}\\
&\lesssim  (\left\|  u_l \right\|_{H^{\theta}}+\left\|  u_h \right\|_{H^{\theta}}) \sum_{j=1}^p\left\|  v_l\right\|^{p-j}_{H^{\theta}}  \left\|v_h \right\|^j_{H^{\theta}}+\left\|u_h\right\| _{H^{\theta}} \left\| v_l \right\|^p_{H^{{\theta} }}\\
&\lesssim  (\lambda^{\frac{\delta-2}{2p}}+\lambda^{-s+\theta})\sum_{j=1}^p\lambda^{\frac{(p-j)(\delta-2)}{2p}+j( \theta-s)}+\lambda^{-s+\theta+\frac{p(\delta-2)}{2p}}.
\end{split}
\end{equation*}

 \textbf{Estimation of $F_{4}$ in the $H^\theta$-norm.} For $\theta-1>1/2$, it follows from Lemma 4.1 that
\begin{equation*}
\begin{split}
||F_4||_{H^\theta}&=\left\|(u_l+u_h)\partial_x\left(\sum_{j=1}^pC_jv_l^{p-j}v_h^j\right)+u_h\partial_xv_l^p\right\|_{H^{\theta-2}} \\
  &\leq\left\|(u_l+u_h)\right\|_{H^{\theta-1}}\left\| \sum_{j=1}^pC_jv_l^{p-j}v_h^j \right\|_{H^{\theta-1}}+\left\|u_h\right\|_{H^{\theta-1}} \left\| v_l^p \right\|_{H^{\theta-1}} \\
&\lesssim  (\left\|  u_l \right\|_{H^{\theta-1}}+\left\|  u_h \right\|_{H^{\theta-1}}) \sum_{j=1}^p\left\|  v_l\right\|^{p-j}_{H^{\theta-1}}  \left\|v_h \right\|^j_{H^{\theta-1}}+\left\|u_h\right\| _{H^{\theta-1}} \left\| v_l \right\|^p_{H^{{\theta-1} }}\\
&\lesssim  (\left\|  u_l \right\|_{H^{\theta}}+\left\|  u_h \right\|_{H^{\theta}}) \sum_{j=1}^p\left\|  v_l\right\|^{p-j}_{H^{\theta}}  \left\|v_h \right\|^j_{H^{\theta}}+\left\|u_h\right\| _{H^{\theta}} \left\| v_l \right\|^p_{H^{{\theta} }}\\
&\lesssim  (\lambda^{\frac{\delta-2}{2p}}+\lambda^{-s+\theta})\sum_{j=1}^p\lambda^{\frac{(p-j)(\delta-2)}{2p}+j( \theta-s)}+\lambda^{-s+\theta+\frac{p(\delta-2)}{2p}}.
\end{split}
\end{equation*}

 \textbf{Estimation of $F_{5}$ in the $H^\theta$-norm.} For $\theta-1>1/2$, it follows from Lemma 4.1 that
\begin{equation*}
\begin{split}
||F_5||_{H^\theta}&=\left\|\partial_x (u_l+u_h)\partial_x\left(\sum_{j=1}^pC_jv_l^{p-j}v_h^j\right)+\partial_x u_h\partial_xv_l^p\right\|_{H^{\theta-1}} \\
&\lesssim  (\left\|  u_l \right\|_{H^{\theta}}+\left\|  u_h \right\|_{H^{\theta}}) \sum_{j=1}^p\left\|  v_l\right\|^{p-j}_{H^{\theta}}  \left\|v_h \right\|^j_{H^{\theta}}+\left\|u_h\right\| _{H^{\theta}} \left\| v_l \right\|^p_{H^{{\theta} }}\\
&\lesssim  (\lambda^{\frac{\delta-2}{2p}}+\lambda^{-s+\theta})\sum_{j=1}^p\lambda^{\frac{(p-j)(\delta-2)}{2p}+j( \theta-s)}+\lambda^{-s+\theta+\frac{p(\delta-2)}{2p}}.
\end{split}
\end{equation*}

Collecting all error estimates together produces the following theorem.
\begin{thm}
Let $s>5/2 $, $3/2<\theta< s $, and $0<\delta<\min\{2,1+s  -\theta\}$.
If $\omega$ is bounded in $\mathbb{R}$ and $\lambda\gg1$, then we have 
\begin{equation}
||F||_{H^\theta},||\tilde{F}||_{H^\theta}\lesssim \lambda^{-\theta_s}, \text{ for } \lambda\gg1,0<t<T,
\end{equation}
where
\begin{equation}\theta_s =1+s-\delta-\theta>0.
\end{equation}
\end{thm}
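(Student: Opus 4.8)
The plan is to treat Theorem 5.1 as the bookkeeping step that gathers the five term-by-term bounds on $F_1,\dots,F_5$ (and, by the symmetry of the system, on $\tilde F_1,\dots,\tilde F_5$) established just above, and then extracts the single worst power of $\lambda$. Since each of those bounds is a finite sum of monomials $\lambda^{e}$ whose exponent $e$ is an explicit affine function of $s,\theta,\delta$ and $p$ (resp. $q$), the whole statement reduces to checking that, for $\lambda\gg1$, every such exponent satisfies $e\le-\theta_s=-(1+s-\delta-\theta)$, with at least one of them attaining equality so that the rate is sharp. First I would record the list of exponents and then verify the finitely many inequalities $e\le-\theta_s$.

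The key point I would emphasize is that the leading contribution comes from $F_1$: its dominant piece $(*)$ has size $\lambda^{-1-s+\delta+\theta}$, which is exactly $\lambda^{-\theta_s}$, while $(**)$ carries the strictly smaller exponent $-s+\delta/2-\delta/p-1+\theta$ (smaller by $\delta/2+\delta/p>0$). Hence $\|F_1\|_{H^\theta}\lesssim\lambda^{-\theta_s}$ and the bound is optimal. For the remaining terms I would compare exponents directly: the governing piece of $F_2$ is $\lambda^{-s+\theta}$, and $-s+\theta\le-\theta_s$ is equivalent to $\delta\ge1$, which holds in the regime $1<\delta<2$; the terms in $F_3,F_4,F_5$ all share the template $(\lambda^{\frac{\delta-2}{2p}}+\lambda^{-s+\theta})\sum_{j=1}^{p}\lambda^{\frac{(p-j)(\delta-2)}{2p}+j(\theta-s)}+\lambda^{-s+\theta+\frac{\delta-2}{2}}$, whose last monomial already has exponent $-s+\theta+\frac{\delta-2}{2}=-\theta_s-\frac{\delta}{2}<-\theta_s$. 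Taking the maximum over the finite list then yields $\|F\|_{H^\theta}\lesssim\lambda^{-\theta_s}$.

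Because Eq.(1.1) is invariant under the interchange $(u,v,p,a)\leftrightarrow(v,u,q,b)$ and the ansatz in (5.1)--(5.2) is set up symmetrically, the bounds for $\tilde F_1,\dots,\tilde F_5$ follow verbatim with $p$ replaced by $q$, giving $\|\tilde F\|_{H^\theta}\lesssim\lambda^{-\theta_s}$ as well. Finally, the positivity $\theta_s>0$ in (5.8) is immediate from the hypothesis $\delta<\min\{2,1+s-\theta\}$, since $\delta<1+s-\theta$ rearranges to $1+s-\delta-\theta>0$.

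The step I expect to be the main obstacle is the analysis of the $j$-indexed sums in $F_3,F_4,F_5$, since I must confirm that their exponent $\frac{\delta-2}{2}+j\bigl(\frac{2-\delta}{2p}+\theta-s\bigr)$, together with the prefactor $\lambda^{\frac{\delta-2}{2p}}$ or $\lambda^{-s+\theta}$, never exceeds $-\theta_s$. The difficulty is that the sign of the $j$-coefficient is not fixed, so I would examine both endpoints: at $j=1$ the exponent is bounded by $\theta-s\le-\theta_s$ (again using $\delta\ge1$), while at $j=p$ it reduces to $p(\theta-s)$, and $p(\theta-s)\le-\theta_s$ follows because it is equivalent to $(p-1)(\theta-s)\le\delta-1$, whose left side is $\le0$ and whose right side is $>0$. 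This endpoint comparison is the only place where the full strength of all three constraints on $(s,\theta,\delta)$ is genuinely needed; everything else is a routine comparison of affine exponents.
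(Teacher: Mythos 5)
Your proposal is correct and follows essentially the same route as the paper: Theorem 5.1 is obtained there simply by collecting the term-by-term estimates of $F_1,\dots,F_5$ from Section 5.2 and observing that the dominant contribution is the piece $(*)$ of $F_1$ with exponent $-1-s+\delta+\theta=-\theta_s$, exactly as you argue. Your explicit exponent comparisons (including the observation that the $F_2$ term $\lambda^{-s+\theta}$ is controlled only when $\delta\geq 1$, matching the paper's working restriction to $1<\delta<2$ in that section) are a faithful, slightly more detailed rendering of the paper's bookkeeping.
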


\subsection{Estimation between approximate and actual solutions}
Let us now estimate the difference between the approximate and actual solutions.
Let $z_{\omega,\lambda}(t,x)=(u_{\omega,\lambda}(t,x),v_{\omega,\lambda}(t,x))$ be the solution to Eq.(1.1) with initial data given by the approximate solution
 $z^{\omega,\lambda}(t,x)=(u^{\omega,\lambda}(t,x),v^{\omega,\lambda}(t,x))$ evaluated at zero time, that is, $z_{\omega,\lambda}(t,x)$ satisfies
\begin{equation}
\left\{
\begin{array}{llll}
\partial_{t}u_{\omega,\lambda}+v_{\omega,\lambda}^{p}\partial_{x}u_{\omega,\lambda}+D^{-2}\left[ \frac{a}{p} u_{\omega,\lambda}\partial_xv_{\omega,\lambda}^{p}+ \frac{p-a }{p}\partial_xv^{p}_{\omega,\lambda}\partial_{x}^2u_{\omega,\lambda}\right]+D^{-2}\partial_x[\partial_{x}v^{p}_{\omega,\lambda}\partial_{x}u_{\omega,\lambda}]=0,\\
\partial_{t}v_{\omega,\lambda}+u_{\omega,\lambda}^{q}\partial_{x}v_{\omega,\lambda}+D^{-2}\left[ \frac{b}{q} v_{\omega,\lambda}\partial_xu_{\omega,\lambda}^{q}+ \frac{q-b }{q}\partial_xu^{q}_{\omega,\lambda}\partial_{x}^2v_{\omega,\lambda}\right]+D^{-2}\partial_x[\partial_{x}u^{q}_{\omega,\lambda}\partial_{x}v_{\omega,\lambda}]=0,  \\
u_{\omega,\lambda}(0,x)=u^{\omega,\lambda}(0,x)=\omega\lambda^{-\frac{1}{ q}}\tilde {\phi} \left(\frac{x}{\lambda^{ \delta/ q}}\right)+\lambda^{-\frac{ \delta}{2p}-s}\phi\left(\frac{x}{\lambda^{\delta/p}}\right)\cos (\lambda x ),\\
v_{\omega,\lambda}(0,x)=v^{\omega,\lambda}(0,x)=\omega\lambda^{-\frac{1}{ p}}\tilde {\phi} \left(\frac{x}{\lambda^{ \delta/ p}}\right)+\lambda^{-\frac{ \delta}{2q}-s}\phi\left(\frac{x}{\lambda^{\delta/q}}\right)\cos (\lambda x ).
\end{array}
\right.
\end{equation}
Noticing that $z_{\omega,\lambda}(0,x)=(u_{\omega,\lambda}(0,x),v_{\omega,\lambda}(0,x))\in H^s\times H^s,s>0$, it follows from Lemmas 5.1 and 5.2 that
\begin{equation}
\begin{split}
||u_{\omega,\lambda}(t,x)||_{H^s}\lesssim||u_{\omega,\lambda}(0,x)||_{H^s}\leq ||u_{l}(0)||_{H^s}+||u_h(0)||_{H^s}\lesssim \lambda^{ \frac{\delta-2}{2q}}+1\lesssim1, \lambda\gg1,\\
||v_{\omega,\lambda}(t,x)||_{H^s}\lesssim||v_{\omega,\lambda}(0,x)||_{H^s}\leq ||v_{l}(0)||_{H^s}+||v_h(0)||_{H^s}\lesssim \lambda^{ \frac{\delta-2}{2p}}+1\lesssim1, \lambda\gg1.
\end{split}
\end{equation}
Thus, by Theorem 1.7 with $s>5/2$ we have that for any bounded $\omega$ 
and $\lambda\gg1$, (5.9) has a unique solution $z_{\omega,\lambda}\in \mathcal{C}([0,T];H^s)\times \mathcal{C}([0,T];H^s)$ where 
\begin{equation*}
T\gtrsim \frac{2^\kappa-1}{2^{\kappa+1}\kappa C_s||  z_{\omega,\lambda}(0)||_{H^{s  } }^{\kappa} }\gtrsim \frac{1}{(\lambda^{ \frac{\delta-2}{2\kappa}}+1)^\kappa},\lambda\gg1.
\end{equation*}
Now, let us check if the difference between two sequences of the approximate solutions $z^{\omega,\lambda}(t)-z_{\omega,\lambda}(t)$ goes to zero at time $t=0$ and { stay apart at $t>0$}.
Let
$$U=u^{\omega,\lambda}-u_{\omega,\lambda}\doteq u_1-u_2,V=v^{\omega,\lambda}-v_{\omega,\lambda}\doteq v_1-v_2.$$
Then, $(U,V)$ solves the following equation
\begin{equation}
\left\{
\begin{array}{llll}
&\partial_t U= F(u^{\omega,\lambda},v^{\omega,\lambda} )-  E_1-E_2-E_3-E_4,\\
&\partial_t V= \tilde{F}(u^{\omega,\lambda},v^{\omega,\lambda})- \tilde{E}_1-\tilde{E}_2-\tilde{E}_3-\tilde{E}_4 ,\\
&U(x,0)=V(x,0)=0, x\in \mathbb{R},
\end{array}
\right.
\end{equation}
where
\begin{equation*}
\begin{split}
E_1&=v^p_1 \partial_xu_1-v^p_2 \partial_xu_2=v_1^p\partial_xU+\left(\sum_{i=0}^{p-1}v_1^{p-1-i}v_2^{i}\right)V\partial_xu_2,\\
E_2&=\frac{p-a }{p}D^{-2}\left[  \partial_xv^{p}_1\partial_{x}^2u_1-\partial_xv^{p}_2\partial_{x}^2u_2\right]=\frac{p-a }{p}D^{-2}\left[ \partial_x^2U\partial_xv_1^p+\partial_x^2u_2\partial_x\left(V\sum_{i=0}^{p-1}v_1^{p-1-i}v_2^{i}\right)\right],\\
E_3&=\frac{a}{p} D^{-2}\left[ u_1\partial_xv_1^{p}-u_1\partial_xv_1^{p}\right]=\frac{a}{p} D^{-2}\left[ U\partial_xv_1^p+u_2\partial_x\left(V\sum_{i=0}^{p-1}v_1^{p-1-i}v_2^{i}\right)\right],\\
E_4&=D^{-2}\partial_x[\partial_{x}v^{p}_1\partial_{x}u_1-\partial_{x}v^{p}_2\partial_{x}u_2]= D^{-2}\partial_x\left[ \partial_x U\partial_xv_1^p+\partial_x u_2\partial_x\left(V\sum_{i=0}^{p-1}v_1^{p-1-i}v_2^{i}\right)\right].
\end{split}
\end{equation*}

\begin{pro}
Let $s>5/2 $, $3/2<\theta< s-1 $, and $0<\delta<\min\{2,1+s  -\theta\}$. If $\omega$ is  bounded in  $\mathbb{R}$ and $\lambda\gg1$, then
\begin{equation}
||U(t)||_{H^\theta},||V(t)||_{H^\theta} \lesssim\lambda^{-\theta_s}, \text{ for } \lambda\gg1,0\leq t\leq T,
\end{equation}
where $\theta_s$ is defined by (5.8).
\end{pro}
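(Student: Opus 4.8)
The plan is to run a standard energy estimate on the system (5.11) for the difference $(U,V)$ and to close it with Gronwall's inequality. The point is that $U$ and $V$ vanish at $t=0$, while the forcing terms $F,\tilde{F}$ are already bounded by $\lambda^{-\theta_s}$ through Theorem 5.1; hence everything reduces to verifying that the remaining contributions $E_1,\dots,E_4$ (and their tilde analogues) are \emph{linear} in $(U,V)$ with coefficients bounded uniformly in $\lambda$. Concretely, I would apply $D^\theta$ to each equation in (5.11), pair with $D^\theta U$ (respectively $D^\theta V$) in $L^2$, and integrate over $\mathbb{R}$, so that the left-hand side becomes $\tfrac12\frac{d}{dt}\|U\|_{H^\theta}^2$. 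Throughout I use the uniform bounds $\|u_i\|_{H^s},\|v_i\|_{H^s}\lesssim 1$ supplied by (5.10) and Lemma 5.2, which make all coefficient norms $O(1)$ in $\lambda$.

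The only genuinely top-order contribution is the transport piece $v_1^p\partial_x U$ inside $E_1$. I would split $D^\theta(v_1^p\partial_x U)=[D^\theta,v_1^p]\partial_x U+v_1^pD^\theta\partial_x U$; the second term integrates by parts into $-\tfrac12\int \partial_x v_1^p\,(D^\theta U)^2\,dx$, bounded by $\|\partial_x v_1^p\|_{L^\infty}\|U\|_{H^\theta}^2$, and the commutator is estimated by the Kato--Ponce inequality Lemma 4.2(i), giving $\|[D^\theta,v_1^p]\partial_x U\|_{L^2}\lesssim\|\partial_x v_1^p\|_{L^\infty}\|U\|_{H^\theta}+\|v_1^p\|_{H^\theta}\|\partial_x U\|_{L^\infty}$. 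It is essential to use Lemma 4.2(i) here rather than Lemma 4.2(ii), since the latter forces $\theta\le s-2$ whereas the proposition permits $\theta$ up to $s-1$; the resulting factor $\|\partial_x U\|_{L^\infty}$ is controlled by $\|U\|_{H^\theta}$ precisely because $\theta-1>1/2$. Together with $\|v_1^p\|_{H^\theta}\lesssim\|v_1\|_{H^\theta}^p\lesssim 1$ (valid for $\theta\le s$), the transport contribution is $\lesssim\|U\|_{H^\theta}^2$.

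For the remaining pieces, the nonlocal terms $E_2,E_3,E_4$ carry $D^{-2}$, so $\|E_j\|_{H^\theta}\lesssim\|\cdot\|_{H^{\theta-2}}$ of the corresponding brackets; the two-derivative gain lets the product law Lemma 4.1(ii) close the estimate in $H^{\theta-2}$ (legitimate since $\theta-2>-1/2$), yielding $\|E_2\|_{H^\theta}+\|E_3\|_{H^\theta}+\|E_4\|_{H^\theta}\lesssim\|U\|_{H^\theta}+\|V\|_{H^\theta}$. The lower-order advection term $\big(\sum_{i=0}^{p-1}v_1^{p-1-i}v_2^{i}\big)V\partial_x u_2$ is handled by the algebra property Lemma 4.1(i), using $\|\partial_x u_2\|_{H^\theta}=\|u_2\|_{H^{\theta+1}}\lesssim 1$ because $\theta+1<s$; pairing with $D^\theta U$ produces $\lesssim\|U\|_{H^\theta}\|V\|_{H^\theta}$. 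The $\tilde F$-equation is symmetric. Collecting all bounds, dividing by $\|U\|_{H^\theta}$, and adding the two inequalities gives
\[
\frac{d}{dt}\big(\|U\|_{H^\theta}+\|V\|_{H^\theta}\big)\lesssim \lambda^{-\theta_s}+\big(\|U\|_{H^\theta}+\|V\|_{H^\theta}\big).
\]
Since $U(0)=V(0)=0$ and the lifespan $T$ is bounded below uniformly in $\lambda$, Gronwall's inequality yields $\|U(t)\|_{H^\theta}+\|V(t)\|_{H^\theta}\lesssim\lambda^{-\theta_s}\big(e^{CT}-1\big)\lesssim\lambda^{-\theta_s}$ on $[0,T]$, which is the assertion. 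The main obstacle I anticipate is exactly the treatment of the top-order transport term when $\theta$ approaches $s-1$: selecting the correct commutator estimate and absorbing $\|\partial_x U\|_{L^\infty}$ back into $\|U\|_{H^\theta}$ are what keep the argument valid on the full admissible range of $\theta$.
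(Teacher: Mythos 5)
Your proposal is correct and follows essentially the same route as the paper: an $H^\theta$ energy estimate on (5.11), with the forcing bounded by Theorem 5.1, the top-order transport term handled by a Kato--Ponce commutator plus integration by parts (the paper writes the commutator as $[D^\theta\partial_x,v_1^p]U$ rather than $[D^\theta,v_1^p]\partial_xU$, but the bound is the same and, as you note, it is the Lemma 4.2(i)-type estimate that is needed to reach $\theta$ up to $s-1$), the nonlocal $D^{-2}$ terms closed by the product laws of Lemma 4.1, and Gronwall with zero initial data. No gaps.
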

\begin{proof}
Plugging the operator $D^\theta$ onto both sides of the first equation in (5.11), multiplying by $ D^\theta p$, and integrating the resulting equation, we obtain
\begin{equation}
\begin{split}
\frac{1}{2}\frac{d}{dt}||U(t)||_{H^\theta}^2&=\int_\mathbb{R}  D^\theta F D^\theta U dx - \int_\mathbb{R}  D^\theta (E_1+E_2+E_3+E_4) D^\theta Udx.
\end{split}
\end{equation}
Next, let us estimate each term on the right-hand side of (5.13).\\
\textbf{Estimation of $E_1$.}  As per 
the Cauchy-Schwartz inequality and Theorem 5.1, we have
\begin{equation}
\begin{split}
 \left|\int_\mathbb{R}  D^\theta F D^\theta Udx\right|\leq ||F||_{H^\theta}||p||_{H^\theta}\lesssim \lambda^{-\theta_s}||U(t)||_{H^\theta}.
\end{split}
\end{equation}
\textbf{Estimation of $E_2$.} 
The second term on the right-hand side of (5.13) can be estimated by
\begin{equation}
\begin{split}
\left|\int_\mathbb{R}  D^\theta  v_1^p\partial_xU  D^\theta Udx\right|&\leq\left|\int_\mathbb{R}  D^\theta \left(\partial_x(v_1^pU)- U\partial_xv_1^p  \right)D^\theta Udx\right|\\
&\leq \left|\int_\mathbb{R}  [D^\theta \partial_x,v_1^p]U    D^\theta Udx\right|+\left|\int_\mathbb{R} v_1^p D^\theta \partial_xU D^\theta Udx\right|+||v_1||_{H^\theta}^p ||U||_{H^\theta}^2\\
&\leq \left\| [D^\theta \partial_x,v_1^p]U     \right\|_{L^2}||U||_{H^\theta}+\frac{1}{2}\left|\int_\mathbb{R} \partial_xv_1^p (D^\theta U)^2dx\right|+||v_1||_{H^\theta}^p ||U||_{H^\theta}^2\\
&\leq ( ||v_1||^p_{H^\theta}||U||_{L^\infty}+||\partial_x v_1^p||_{L^\infty}||U||_{H^\theta})||U||_{H^\theta}+\frac{3}{2}||v_1||_{H^\theta}^p ||U||_{H^\theta}^2\\
&\lesssim||v_1||_{H^\theta}^p ||U||_{H^\theta}^2.
\end{split}
\end{equation}
Thus, we get
\begin{equation}
\begin{split}
\left|\int_\mathbb{R}  D^\theta  E_1  D^\theta Udx\right| & \leq  \left|\int_\mathbb{R}  D^\theta  v_1^p\partial_xU  D^\theta Udx\right|+ \left|\int_\mathbb{R}  D^\theta \left(\sum_{i=0}^{p-1}v_1^{p-1-i}v_2^{i}\right)V\partial_xu_2     D^\theta Udx\right|\\
&\lesssim||v_1||_{H^\theta}^p ||U||_{H^\theta}^2 + ||U||_{H^\theta}||V||_{H^\theta}\left(\sum_{i=0}^{p-1}||v_1||^{p-1-i}_{H^\theta}||v_2||_{H^\theta} ^{i}\right)||u_2||_{H^{\theta+1}}.
\end{split}
\end{equation}
If $\theta-1>\frac{1}{2}$, by Lemma 4.1 we have 
\begin{equation}
\begin{split}
\left|\int_\mathbb{R}  D^\theta  E_2 D^\theta Udx\right| &\lesssim \left\|  \partial_x^2U\partial_xv_1^p\right\|_{H^{\theta-2}}\left\|U\right\|_{H^{\theta }}+ \left\|\partial_x^2u_2\partial_x\left(V\sum_{i=0}^{p-1}v_1^{p-1-i}v_2^{i}\right)\right\|_{H^{\theta-2}}\left\|U\right\|_{H^{\theta }}\\
 &\lesssim \left\|  \partial_x^2U\right\|_{H^{\theta-2}}\left\|\partial_xv_1^p\right\|_{H^{\theta-1}}\left\|U\right\|_{H^{\theta }}+ \left\|\partial_x^2u_2\right\|_{H^{\theta-2}}\left\|\partial_x\left(V\sum_{i=0}^{p-1}v_1^{p-1-i}v_2^{i}\right)\right\|_{H^{\theta-1}}\left\|U\right\|_{H^{\theta }}\\
 &\lesssim  \left\| v_1^p\right\|_{H^{\theta}}\left\|U\right\|^2_{H^{\theta }}+ \left\|V\right\|_{H^{\theta}}\left\|U\right\|_{H^{\theta }} \left\|u_2\right\|_{H^{\theta}}\sum_{i=0}^{p-1} \left\|v_1\right\|^{p-1-i}_{H^{\theta}}
 \left\|v_2\right\|^{i}_{H^{\theta}},
\end{split}
\end{equation}
\begin{equation}
\begin{split}
\left|\int_\mathbb{R}  D^\theta  E_3 D^\theta Udx\right| &\lesssim \left\|  U\partial_xv_1^p\right\|_{H^{\theta-1}}\left\|U\right\|_{H^{\theta }}+ \left\|u_2\partial_x\left(V\sum_{i=0}^{p-1}v_1^{p-1-i}v_2^{i}\right)\right\|_{H^{\theta-1}}\left\|U\right\|_{H^{\theta }}\\
 &\lesssim  \left\| v_1^p\right\|_{H^{\theta}}\left\|U\right\|^2_{H^{\theta }}+ \left\|V\right\|_{H^{\theta}}\left\|U\right\|_{H^{\theta }} \left\|u_2\right\|_{H^{\theta}}\sum_{i=0}^{p-1} \left\|v_1\right\|^{p-1-i}_{H^{\theta}}
 \left\|v_2\right\|^{i}_{H^{\theta}},
\end{split}
\end{equation}
and
\begin{equation}
\begin{split}
\left|\int_\mathbb{R}  D^\theta  E_4 D^\theta Udx\right| &\lesssim \left\|  \partial_x U\partial_xv_1^p\right\|_{H^{\theta-1}}\left\|U\right\|_{H^{\theta }}+ \left\|\partial_x u_2\partial_x\left(V\sum_{i=0}^{p-1}v_1^{p-1-i}v_2^{i}\right)\right\|_{H^{\theta-1}}\left\|U\right\|_{H^{\theta }}\\
 &\lesssim  \left\| v_1\right\|^p_{H^{\theta}}\left\|U\right\|^2_{H^{\theta }}+ \left\|V\right\|_{H^{\theta}}\left\|U\right\|_{H^{\theta }} \left\|u_2\right\|_{H^{\theta}}\sum_{i=0}^{p-1} \left\|v_1\right\|^{p-1-i}_{H^{\theta}}
 \left\|v_2\right\|^{i}_{H^{\theta}}.
\end{split}
\end{equation}
\textbf{The final differential inequality.}
Combining the above estimates (5.14)-(5.19) leads to
$$\frac{1}{2}\frac{d}{dt} ||U||_{H^\theta}^2 \lesssim A_1  ||U||_{H^\theta}^2 +B_1   ||U||_{H^\theta}||V||_{H^\theta} +\lambda^{-\theta_s} ||U||_{H^\theta},\text { for }\lambda \gg1,$$
that is
$$ \frac{d}{dt}||U||_{H^\theta}\lesssim A_1 ||U||_{H^\theta}+ B_1 ||V||_{H^\theta} +\lambda^{-\theta_s},$$
where $A\lesssim ||v^{\omega,\lambda}||_{H^\theta}^p =||v^{\omega,\lambda}||^p_{H^s}   \lesssim1$,  and $B=(\left\|u_2\right\|_{H^{\theta}}+\left\|u_2\right\|_{H^{\theta+1}})\sum_{i=0}^{p-1} \left\|v_1\right\|^{p-1-i}_{H^{\theta}}
 \left\|v_2\right\|^{i}_{H^{\theta}}\lesssim1.$
In the above calculation, we utilized the following well-posedness inequality
\begin{equation}
\begin{split}
 \left\|u_1\right\|_{H^{\theta}}=\left\|u^{\omega,\lambda}\right\|_{H^{\theta}}\leq \left\|u^{\omega,\lambda}\right\|_{H^{s}}\lesssim \lambda^{ \frac{\delta-2}{2q}}+1, \left\|u_2\right\|_{H^{\theta}}=\left\|u_{\omega,\lambda}\right\|_{H^{\theta}}\leq \left\|u_{\omega,\lambda}\right\|_{H^{s}}\lesssim \lambda^{ \frac{\delta-2}{2q}}+1.\\
 \left\|v_1\right\|_{H^{\theta}}=\left\|v^{\omega,\lambda}\right\|_{H^{\theta}}\leq \left\|v^{\omega,\lambda}\right\|_{H^{s}}\lesssim \lambda^{ \frac{\delta-2}{2p}}+1, \left\|v_2\right\|_{H^{\theta}}=\left\|v_{\omega,\lambda}\right\|_{H^{\theta}}\leq \left\|v_{\omega,\lambda}\right\|_{H^{s}}\lesssim \lambda^{ \frac{\delta-2}{2p}}+1.\\
\end{split}
\end{equation}
In a similar way, we are able to get
$$ \frac{d}{dt}||V||_{H^\theta}\lesssim A_1 ||V||_{H^\theta}+ B_1 ||U||_{H^\theta} +\lambda^{-\theta_s}.$$
Therefore, we have
$$ \frac{d}{dt}(||U||_{H^\theta}+||V||_{H^\theta})\lesssim M( ||U||_{H^\theta}+||V||_{H^\theta})   +\lambda^{-\theta_s},$$
which can be solved with the initial condition $||U(0)||_{H^r}=||V(0)||_{H^r}=0$ in the form of 
\begin{equation}
||U(t)||_{H^r} \lesssim\lambda^{-\theta_s}, ||V(t)||_{H^r} \lesssim\lambda^{-\theta_s},\text{ for } \lambda\gg1,0\leq t\leq T,
\end{equation}
where $\theta_s$ is defined by (5.8). This concludes the proof of proposition 5.1.
\end{proof}

Let us now present the proof of Theorem \ref{result8}.
\begin{proof}[Proof of Theorem \ref{result8}]
Let $s>5/2$, 
and assume $(u_{1,\lambda}(t,x),v_{1,\lambda}(t,x))$ and $(\tilde{u}_{0,\lambda}(t,x),\tilde{v}_{0,\lambda}(t,x))$ are the unique solutions to (5.9) with initial data $(u_{1,\lambda}(0,x),v_{1,\lambda}(0,x))$ and $(\tilde{u}_{0,\lambda}(0,x),\tilde{v}_{0,\lambda}(0,x))$, respectively.

It follows from Theorem 1.7 that these solutions belong to $\mathcal{C}([0,T];H^s)\times \mathcal{C}([0,T];H^s)$.
By (5.10) and the assumptions right after Theorem 1.7, we see that $T$ is independent of $\lambda\gg1$ and $0<\delta<\min\{2,1+s  -\theta\}$.
Denoting $k=[s]+2 $ and applying estimate (1.13), we have
 \begin{equation}
||u_{\omega,\lambda}(t)||_{H^k},||v_{\omega,\lambda}(t)||_{H^k} \lesssim
||z^{\omega,\lambda}(0)||_{H^k}\lesssim \lambda^{k-s}, \text{ for } \lambda\gg1,0\leq t\leq T.
\end{equation}
By (5.4) and (5.5), we obtain
\begin{equation}
\begin{split}
||u^{\omega,\lambda}(t)||_{H^k},||v^{\omega,\lambda}(t)||_{H^k}\leq||z_l(t)||_{H^k}+
||z_h(t)||_{H^k} \lesssim   \lambda^{k-s}.
\end{split}
\end{equation}
So, from (5.22)-(5.23) we have the following estimate for the difference between $z^{\omega,\lambda}$ and $z_{\omega,\lambda}$ in the $H^k$-norm:
\begin{equation}
\begin{split}
||u^{\omega,\lambda}(t)-u_{\omega,\lambda}(t)||_{H^k},||v^{\omega,\lambda}(t)-v_{\omega,\lambda}(t)||_{H^k} \lesssim  \lambda^{k-s}, 0\leq t\leq T.
\end{split}
\end{equation}
On the other hand, applying (5.12) with the choice of $\omega\in\{0,1\}$ generates 
\begin{equation}
\begin{split}
||u^{\omega,\lambda}(t)-u_{\omega,\lambda}(t)||_{H^\theta} ,||v^{\omega,\lambda}(t)-v_{\omega,\lambda}(t)||_{H^\theta}\lesssim  \lambda^{-\theta_s}, 0\leq t\leq T.
\end{split}
\end{equation}
By the interpolation inequality with $s_1=\theta$ and $s_2=[s]+2=k$
\begin{equation}
\begin{split}
||f||_{H^s} \leq ||f||_{H^{s_1}}^\frac{s_2-s}{s_2-s_1} ||f||_{H^{s_2}}^\frac{s-s_1}{s_2-s_1},
\end{split}
\end{equation}
 and the estimates (5.24) and (5.25), we obtain
\begin{equation}
\begin{split}
||u^{\omega,\lambda}(t)-u_{\omega,\lambda}(t)||_{H^s}&\leq ||u^{\omega,\lambda}(t)-u_{\omega,\lambda}(t)||_{H^\theta}^\frac{k-s}{k-\theta} ||u^{\omega,\lambda}(t)-u_{\omega,\lambda}(t)||_{H^k}^\frac{s-\theta}{k-\theta}\\
&\lesssim \lambda^\frac{-\theta_s(k-s)}{k-\theta}\lambda^\frac{(k-s)(s-\theta)}{k-\theta} \\
&\lesssim \lambda^\frac{-(\theta_s-s+\theta)(k-s)}{k-\theta} \\
&\lesssim \lambda^\frac{-(1-\delta )(k-s)}{k-\theta},\\
||v^{\omega,\lambda}(t)-v_{\omega,\lambda}(t)||_{H^s}&\lesssim \lambda^\frac{-(1-\delta )(k-s)}{k-\theta}.
\end{split}
\end{equation}
Obviously, we must have $\frac{(1- \delta )(k-s)}{k-r}>0$, which is equivalent to $\delta< 1$.

Next, we shall apply the estimate (5.27) to prove nonuniform dependence when $s>5/2$.\\
\textbf{Behavior at $t=0$.} Since $0<\delta<1$, at $t=0$ we have
\begin{equation}
\begin{split}
||u_{1,\lambda}(0)-u_{0,\lambda}(0)||_{H^s}
&\lesssim \lambda^\frac{ \delta-2 }{2p}||\tilde{\phi}||_{H^s}\rightarrow0 \text{ as } \lambda\rightarrow\infty.\\
||v_{1,\lambda}(0)-v_{0,\lambda}(0)||_{H^s}
&\lesssim \lambda^\frac{ \delta-2 }{2q}||\tilde{\varphi}||_{H^s}\rightarrow0 \text{ as } \lambda\rightarrow\infty.
\end{split}
\end{equation}
\textbf{Behavior at time $t>0$.} Let us write
\begin{equation}
\begin{split}
||u_{1,\lambda}(t)-u_{0,\lambda}(t)||_{H^s}&\geq ||u^{1,\lambda}(t)-u^{0,\lambda}(t)||_{H^s} -||u^{1,\lambda}(t)-u_{1,\lambda}(t)||_{H^s}-||u^{0,\lambda}(t)-u_{0,\lambda}(t)||_{H^s}.
\end{split}
\end{equation}
Applying the estimate (5.25) to the last two terms in (5.29) generates 
\begin{equation}
\begin{split}
||u_{1,\lambda}(t)-u_{0,\lambda}(t)||_{H^s}&\geq ||u^{1,\lambda}(t)-u^{0,\lambda}(t)||_{H^s}-c \lambda^\frac{-(1- \delta)(k-s)}{k-r},
\end{split}
\end{equation}
that is
\begin{equation}
\begin{split}
\liminf_{\lambda\rightarrow\infty}||u_{1,\lambda}(t)-u_{0,\lambda}(t)||_{H^s}&\geq \liminf_{\lambda\rightarrow\infty} ||u^{1,\lambda}(t)-u^{0,\lambda}(t)||_{H^s},
\end{split}
\end{equation}
So, it suffices to estimate $||u^{1,\lambda}(t)-u^{0,\lambda}(t)||_{H^s}$ below. Since
\begin{equation}
\begin{split}
u^{1,\lambda}(t)-u^{0,\lambda}(t)&=\lambda^{-\delta/2p-s}\phi\left(\frac{x}{\lambda^{\delta/p}}\right)\left[\cos(\lambda x- t)-\cos(\lambda x)\right]+u_{l,1,\lambda}(t)\\
&=\lambda^{-\delta/2p-s}\phi\left(\frac{x}{\lambda^{\delta/p}}\right) \sin(\lambda x-\frac{t}{2})\sin \frac{t}{2} +u_{l,1,\lambda}(t),\\
\end{split}
\end{equation}
we have
\begin{equation}
\begin{split}
||u^{1,\lambda}(t)-u^{0,\lambda}(t)||_{H^s}\geq &\lambda^{-\delta/2p-s}\left\|\phi\left(\frac{x}{\lambda^{\delta/p}}\right) \sin(\lambda x-\frac{t}{2})\right\|_{H^s}|\sin t/2| -c_s\lambda^{  (\delta-2) /2p}.
\end{split}
\end{equation}
Thus, using the relation
\begin{equation}
\begin{split}
\frac{1}{\sqrt{2}}||\phi||_{L^2}&=\lim_{\lambda\rightarrow\infty}  \lambda^{-\delta/2p-s}\left\|\phi\left(\frac{x}{\lambda^{\delta/p}}\right) \sin(\lambda x-\frac{t}{2})\right\|_{H^s},
\end{split}
\end{equation}
and (5.33), we obtain 
\begin{equation}
\begin{split}
\liminf_{\lambda\rightarrow\infty}||u^{1,\lambda}(t)-u_{0,\lambda}(t)||_{H^s}&\geq  \frac{1}{\sqrt{2}}||\phi||_{L^2} |\sin t|.
\end{split}
\end{equation}
Noticing that $|\sin t|=\sin t, 0\leq t\leq \pi$ gives 
\begin{equation}
\begin{split}
\liminf_{\lambda\rightarrow\infty}||u_{1,\lambda}(t)-u_{0,\lambda}(t)||_{H^s}&\geq  \frac{1}{\sqrt{2}}||\phi||_{L^2} \sin t,
\end{split}
\end{equation}
for $0\leq t\leq \min \{T,2\pi\}$.

In short, 
there exist two sequences of
solutions $z=(u_\lambda(t),v_\lambda(t))$ and $\tilde{z}(t)=(\tilde{u}_\lambda(t),\tilde{v}_\lambda(t))$ to the differential Eq.(1.1) in $\mathcal{C} ([0, T ]; H^s
(\mathbb{R}))\times \mathcal{C} ([0, T ]; H^s
(\mathbb{R}))$ such that
\begin{equation}
||u_\lambda(t)||_{H^s}+||\tilde{u}_\lambda(t)||_{H^s}+||v_\lambda(t)||_{H^s}+||\tilde{v}_\lambda(t)||_{H^s}\lesssim 1,
\end{equation}
\begin{equation}
\lim_{n\rightarrow\infty}||u_\lambda(0)-\tilde{u}_\lambda(0)||_{H^s}=\lim_{n\rightarrow\infty}||v_\lambda(0)-\tilde{v}_\lambda(0)||_{H^s}=0,
\end{equation}
and
\begin{equation*}
\begin{split}
\liminf_{n\rightarrow\infty}||u_\lambda(0)-\tilde{u}_\lambda(0)||_{H^s}> 0,0<t<\min\{2\pi , T\},\\
\liminf_{n\rightarrow\infty}||v_\lambda(0)-\tilde{v}_\lambda(0)||_{H^s}> 0,0<t<\min\{2\pi, T\}.
\end{split}
\end{equation*}
 This concludes the proof of Theorem 1.8.
\end{proof}

\subsection{Nonuniform dependence on a circle}
To prove Theorem 1.8 on a circle, similar to the nonperiodic case,  it suffices to show that there exist   two sequences of
solutions $z=(u_\lambda(t),v_\lambda(t))$ and $\tilde{z}(t)=(\tilde{u}_\lambda(t),\tilde{v}_\lambda(t))$ to the differential Eq.(1.1) in $\mathcal{C} ([0, T ]; H^s
(\mathbb{T}))\times \mathcal{C} ([0, T ]; H^s
(\mathbb{T}))$ such that
\begin{equation*}
||u_\lambda(t)||_{H^s}+||\tilde{u}_\lambda(t)||_{H^s}+||v_\lambda(t)||_{H^s}+||\tilde{v}_\lambda(t)||_{H^s}\lesssim 1,
\end{equation*}
\begin{equation*}
\lim_{n\rightarrow\infty}||u_\lambda(0)-\tilde{u}_\lambda(0)||_{H^s}=\lim_{n\rightarrow\infty}||v_\lambda(0)-\tilde{v}_\lambda(0)||_{H^s}=0,
\end{equation*}
and
\begin{equation*}
\begin{split}
\liminf_{n\rightarrow\infty}||u_\lambda(0)-\tilde{u}_\lambda(0)||_{H^s}\gtrsim \sin t,\\
\liminf_{n\rightarrow\infty}||v_\lambda(0)-\tilde{v}_\lambda(0)||_{H^s}\gtrsim \sin t.
\end{split}
\end{equation*}

To see this, we consider the approximate solutions in the form of 
$$u^{\omega,\lambda}=\omega \lambda^{-\frac{1}{p}}+\lambda^{-s}\cos(\lambda x-\omega^p t), v^{\omega,\lambda}=\omega \lambda^{-\frac{1}{q}}+\lambda^{-s}\cos(\lambda x-\omega^q t),$$
where $\lambda\in \mathbb{Z}^+$ and $\omega=\pm1$.
Eq.(1.1) is reduced to Eq.(1.8) with $ k=2,\beta=\gamma=0$.
Similar to the proof on a line and the proof on a circle for the CH and the 2CH in \cite{HKM,Thompson}, the results listed above for Eq.(1.1) on a circle can be
carried out.

\section{The peaked traveling wave solutions of Eq.(1.1)}
 In this section, we prove Theorem 1.9 through constructing some appropriate sequences of peakon solutions.
 First,  let us show that the peakon
formulas (1.14-1.15) define a weak solution to Eq.(1.1) on a line and on a circle. Furthermore, we derive
the formulation (1.16-1.17) for multi-peakon solutions.

\begin{proof} [Proof of Theorem \ref{result9}]
\textbf{The non-periodic peakon solution in the form of (1.14).}
Without loss of generality, we set  $x_{0}=0$.  First, let us rewrite the model (1.1) as
\begin{equation}
\left\{
\begin{array}{llll}
u_t+v^pu_x+I_1(u,v)=0,\\
v_t+u^qv_x+I_2(u,v)=0,
\end{array}
\right.
\end{equation}
where
\begin{equation*}
\left\{
\begin{array}{llll}
I_1(u,v)=(1-\partial_x^2)^{-1}[ a  v^{p-1}v_xu+(p-a)v^{p-1}v_xu_{xx}]+p(1-\partial_x^2)^{-1}\partial_x(v^{p-1}v_xu_x),\\
I_2(u,v)=(1-\partial_x^2)^{-1}[ b  u^{q-1}u_xv+(q-b)u^{q-1}u_xv_{xx}]+q(1-\partial_x^2)^{-1}\partial_x(u^{q-1}u_xv_x).
\end{array}
\right.
\end{equation*}
Noticing that
$$u_{t}=   \mathrm{sgn}( x-ct)cu, u_{x}=-\mathrm{sgn}( x-ct) u, v_{t}=  \mathrm{sgn}( x-ct) cu, v_{x}=-\mathrm{sgn}( x-ct) v, $$
then we have
\begin{align}
  u_{t}+v^pu_{x}=-(-cu+ v^pu )\mathrm{sgn}(x-ct),
  v_{t}+u^qv_{x}=-(-cv+ u^qv )\mathrm{sgn}(x-ct).
\end{align}
On the other hand, a simple computation reveals 
\begin{align*}
I_1(u,v)
&=\frac{ 1}{2}\int_\mathbb{R} e^{-|x-y|} [ a  v^{p-1}v_yu+(p-a)v^{p-1}v_yu_{yy}] (t,y)dy + \frac{p}{2}\partial_x\int_\mathbb{R} e^{-|x-y|} (v^{p-1}v_yu_y) (t,y)dy \\
&=- \frac{a \alpha\beta^p}{2} \int_\mathbb{R}  \mathrm{sgn}( y-ct)e^{-|x-y|}e^{-(p+1)| y-ct |}(t,y)dy\\
&\quad+\frac{ (p-a)\alpha\beta^p}{4}\int_\mathbb{R} \partial_y\left[\mathrm{sgn}( y-ct)e^{- | y-ct |}\right]^2e^{-|x-y|}e^{- (p-1)| y-ct |} dy\\
&\quad- \frac{p \alpha\beta^p}{2}  \int_\mathbb{R} \mathrm{sgn}^2( y-ct)\mathrm{sgn}( x-y)e^{-|x-y|} e^{-(p+1)| y-ct |}dy \\
&=\alpha\beta^p\int_\mathbb{R}  \big[-\frac{a }{2}\mathrm{sgn}( y-ct)-\frac{3 p-a }{4}\mathrm{sgn}^2( y-ct)\mathrm{sgn}( x-y)+\frac{(p-a)(p-1)}{4}\mathrm{sgn}^3( y-ct)\big]  \\
&\quad\quad\quad\quad \quad\quad e^{-|x-y| -(p+1)| y-ct |}(t,y)dy.
\end{align*}
If $x< ct$, we derive
\begin{align*}
I_1(u,v)&=\alpha\beta^p \big( \frac{(a-p)(p+2)}{4}\int_{-\infty}^{x} e^{(p+2)y-x-(p+1)ct} dy+\frac{p(a+4-p)}{4} \int_x^{ct} e^{ -py-x+(p+1)ct}  dy\\
&\quad+\frac{(p-a)(p+2)}{4} \int_{ct}^\infty e^{-(p+2)y+x+(p+1)ct}dy\big)\\
&= \alpha\beta^p  \left(-e^{(p+1)(x-ct)}+ e^{ x-ct } \right).
\end{align*}
If $x> ct$,   we deduce
\begin{align*}
I_1(u,v)&=\alpha\beta^p \big( \frac{(a-p)(p+2)}{4}\int_{-\infty}^{ct} e^{(p+2)y-x-(p+1)ct} dy+\frac{p(p-a-4)}{4} \int_{ct}^x e^{ -py-x+(p+1)ct}  dy\\
&\quad+\frac{(p-a)(p+2)}{4} \int_{x}^\infty e^{-(p+2)y+x+(p+1)ct}dy\big)\\
&= \alpha\beta^p  \left(e^{(p+1)(x-ct)}- e^{ x-ct } \right).
\end{align*}
Therefore, we obtain
\begin{align}
I_1(u,v)=  (-\beta^pu+ v^pu)\mathrm{sgn}(x-ct),
\end{align}
and 
\begin{align}
I_2(u,v)=  (-\alpha^qv+ u^qv)\mathrm{sgn}(x-ct).
\end{align}
Combining  Eqs. (6.2-6.4)  with the assumption $\alpha=c^{1/q},\beta=c^{1/p}$, one may immediately know that the
first equation of the system (6.1) holds in the sense of distribution.   Therefore, we complete the
proof of the theorem.

\textbf{The periodic peakon solution in the forms of (1.15). }
In order to show that Eq.(6.1) is equivalent to Eq. (1.1), let us start from 
the original system(1.1).
Let $f\in L_{loc}^1(X)$, where $X$ is an open set of $\mathbb{R}$. Assume that $f'$ exists and is continuous except at a single point $x_0\in X$ and
$f'\in L_{loc}^1(X)$; then the left- and right-handed limits $f(x_0^{\pm})$ exist and $(T_f)'=T_{f'}+[f(x_0^+)-f(x_0^-)]\delta_{x_0}$, where $T_f$ is the
distribution associated to the function $f$ and $\delta_{x_0}$ is the Dirac delta distribution centered at $x=x_0$.
Denote $K\doteq x-ct-2\pi\left[\frac{x-ct}{2\pi}\right]-\pi$.
Noticing that
\begin{align*}
u_{t}= cu,u_x=\alpha \sinh  K,u_{xx}=u -2\alpha \sinh(\pi)\delta_{ct},\\
v_{t}= cv,v_x=\beta \sinh  K,v_{xx}=v -2\beta \sinh(\pi)\delta_{ct},
\end{align*}
where $\delta_{ct}$ is the periodic Dirac delta distribution centered at $x=ct$ mod $2\pi$, we have 
$u-u_{xx}=2\alpha \sinh (\pi)\delta_{ct}$  and
\begin{align*}
  (1-\partial_x^2)u_t=-2c\alpha \sinh\pi \delta_{ct}'.
\end{align*}
Employing the hyperbolic identity $\cosh^2x=1+\sinh^2x$ yields
\begin{align*}
  \partial_x^2(v^p u_{ x})&= \alpha\beta^p  \partial_x^2(\sinh K \cosh^{p } K )=\alpha\beta^p \partial_x( \cosh^{p+1}K+p\cosh^{p-1}K\sinh^2K-2 \sinh(\pi)\cosh^p(\pi)\delta_{ct})\\
  &=\alpha\beta \partial_x((p+1)\cosh^{p+1}K-p\cosh^{p-1}K-2 \sinh(\pi)\cosh^p(\pi)\delta_{ct})\\
  &=\alpha\beta ^p[(p+1)^2\cosh^p K\sinh K-p(p-1)\cosh^{p-2}K\sinh K -2 \sinh(\pi)\cosh^p(\pi)\delta_{ct}'].
\end{align*}
Then, we find 
\begin{align*}
 (1- \partial_x^2)(v^p u_{ x})& =\alpha\beta^p  [(1-(p+1)^2)\cosh^p K\sinh K+p(p-1)\cosh^{p-2}K\sinh K +2 \sinh(\pi)\cosh^p(\pi)\delta_{ct}'].
\end{align*}
Similarly, we have 
\begin{align*}
p\partial_x(v^{p-1}v_xu_x)& =p\alpha\beta^p\partial_x(\cosh^{p-1}K\sinh^2K)=\alpha\beta^p\partial_x(\cosh^{p+1}K-\cosh^{p-1}K)\\
 &=p\alpha\beta^p ((p+1)\cosh^{p}K\sinh K-(p-1)\cosh^{p-2}K\sinh K),
\end{align*}
\begin{align*}
 &v_xu_{xx} = \frac{\alpha\beta}{2}\partial_x  \sinh^2  K   =   \frac{\alpha\beta}{2}  \partial_x  (\cosh^2K-1) =  \alpha\beta \sinh K \cosh K,
\end{align*}
and 
\begin{align*}
a  v^{p-1}v_xu+(p-a)v^{p-1}v_xu_{xx}=p \alpha\beta^p \sinh K \cosh^p K.
\end{align*}
Therefore, we have
\begin{align*}
m_{t}+v^{p}m_{x}&+av^{p-1}v_{x}m\\
&=(1-\partial_x^2)u_t+ (1- \partial_x^2)(v^p u_{ x})+  a  v^{p-1}v_xu+(p-a)v^{p-1}v_xu_{xx} +p \partial_x(v^{p-1}v_xu_x)\\
&= -2c\alpha \sinh\pi \delta_{ct}'+2\alpha\beta^p    \sinh(\pi)\cosh^p(\pi)\delta_{ct}'.
\end{align*}
In a similar way, for $n(t)$ we obtain
\begin{align*}
n_{t}+v^{q}n_{x}+bv^{q-1}v_{x}n= -2c\beta \sinh\pi \delta_{ct}'+2\alpha^q\beta    \sinh(\pi)\cosh^p(\pi)\delta_{ct}'.
\end{align*}
So, the periodic peaked function (1.15) is a solution to  the   equation (1.1) if and only if $\alpha=\frac{c^{1/q}}{\cosh(\pi)},\beta=\frac{c^{1/p}}{\cosh(\pi)}$.

\textbf{Multi-peakon solutions for Eq.(1.1)}. 

Let us use an adhoc definition for $u_x(x,t),v_x(x,t)$ given by
\begin{equation*}
u_x(t,x)=-\sum_{i=1}^M\mathrm{sgn}( x-g_i(t)) f_i(t)e^{-|x-g_i(t)|}, v_x(t,x)=-\sum_{j=1}^N\mathrm{sgn}( x-g_j(t)) f_j(t)e^{-|x-g_j(t)|},
\end{equation*}
which imply 
$u_x(x,t),v_x(x,t)$ are equal to $\langle u_x(x,t)\rangle=\frac{1}{2}(u_x(x^-,t)+u_x(x^+,t))$ and $\langle v_x(x,t)\rangle=\frac{1}{2}(v_x(x^-,t)+v_x(x^+,t))$, respectively.
To prove this rigorously, one should consider the nonlocal form of
Eq.(1.1). Similarly, we may compute
\begin{equation*}
u_{xx}(t,x)=u(x,t)-2\sum_{i=1}^M f_i(t)\delta_{g_i(t)}, v_{xx}(t,x)=v(x,t)-2\sum_{j=1}^N f_j(t)\delta_{g_j(t)},
\end{equation*}
i.e., $m=u-u_{xx}=2\sum_{i=1}^M f_i(t)\delta_{g_i(t)},n=v-v_{xx}=2\sum_{j=1}^N f_j(t)\delta_{g_j(t)}$, which lead to 
\begin{align*}
 m_{t}+v^{p}m_{x}+\frac{a}{p}(v^{p})_{x}m &= 2\sum_{i=1}^M[-f_i\dot{g}_i\partial_x(\delta_{g_i})  +\dot{f}_i\delta_{g_i}+v^{p}f_i \partial_x(\delta_{g_i}) +\frac{a}{p}(v^{p})_{x}f_i \delta_{g_i}] \\
&= 2\sum_{i=1}^M[(- f_i\dot{g}_i+v^{p}f_i )  \partial_x( \delta_{g_i}) +(\dot{f}_i +\frac{a}{p}(v^{p})_{x}f_i) \delta_{g_i}],
\end{align*}
where $\dot{f}=\partial_t f $.
Casting a test function  $\varphi\in C_0^\infty(\mathbb{R})$ and  $ (f,\delta_{g_i})=f(g_i)$ on the both sides of the equation yields
\begin{align*}
& \left(m_{t}+v^{p}m_{x}+\frac{a}{p}(v^{p})_{x}m,\varphi \right)\\
&\quad\quad\quad=2\sum_{i=1}^M  - f_i\dot{g}_i\left(  \partial_x\delta_{g_i}   ,\varphi\right)+2\sum_{i=1}^M f_i\left( v^{p}   \partial_x \delta_{g_i}  ,\varphi\right)+2\sum_{i=1}^M\left( (\dot{f}_i +\frac{a}{p}(v^{p})_{x}f_i)\varphi, \delta_{g_i} \right)\\
&\quad\quad\quad=2\sum_{i=1}^M   f_i\dot{g}_i\left(   \delta_{g_i}  ,\partial_x\varphi \right)-2\sum_{i=1}^M f_i\left(  (v^{p})_x\varphi+ (v^{p})\partial_x\varphi   , \delta_{g_i}\right)+2\sum_{i=1}^M\left( (\dot{f}_i +\frac{a}{p}(v^{p})_{x}f_i)\varphi, \delta_{g_i} \right)\\
&\quad\quad\quad=2\sum_{i=1}^M\left[f_i(\dot{g}_i-v^p(g_i)) \varphi _x(q_i)+(\dot{f}_i+\frac{a-p}{p} (v^p)_x(g_i)f_i)\varphi(g_i)\right].
\end{align*}
Similarly, we have
\begin{align*}
& \left(n_{t}+v^{q}n_{x}+\frac{b}{q}(v^{q})_{x}n,\varphi \right)=2\sum_{j=1}^N\left[h_j(\dot{k}_j-u^q(k_j)) \varphi _x(q_j)+(\dot{h}_j+\frac{b-q}{q} (u^q)_x(k_j)h_j)\varphi(k_j)\right].
\end{align*}
Therefore, the multi-peakon is a solution to Eq.(1.1) if and only if $g_i(t),h_j(t)$ and amplitudes $f_i(t),k_j(t)$
satisfy the ODE system given by (1.18).
\end{proof}

\section{H\"{o}lder continuous in $H^r$-topology}

Theorem 1.7 tells us that the CCCH initial value problem is well-posed for $(u_0,v_0)\in H^s\times H^s,s>5/2$ with $(u,v)\in \mathcal{C}([0,T];H^s)\times \mathcal{C}([0,T];H^s)$.
Moreover, in Theorem 1.8 we have shown that the solution map $(u_0,v_0)\in H^s\times H^s \mapsto (u,v)\in \mathcal{C}([0,T];H^s)\times \mathcal{C}([0,T];H^s)$ is continuous but not uniformly continuous in $ H^s \times  H^s $.

Let us now make a  further investigation about the continuity properties for the solution map in H\"{o}lder spaces $H^r, r<s$ with initial data still in $H^s,s>5/2$.
More precisely,
we consider two solutions  of equation (1.1), $ z=(u_1,v_1)$ and $ w=(u_2,v_2)$,  which emanate from the initial data $ z_0=(u_{0,1},v_{0,1})$ and $w_0=(u_{0,2},v_{0,2} )$, respectively.
We want to show that if the initial data $z_0,w_0 $ are assigned 
in a ball with radius $\rho$ in $H^s$, i.e.,
\begin{equation}
\begin{split}
||z_0||_{H^s}\leq \rho,||w_0||_{H^s}\leq \rho,s>5/2,
\end{split}
\end{equation}
then we have
$$||z(t)-w(t)||_{H^r}\lesssim ||z_0-w_0||_{H^r}^\alpha,r<s, $$
where the H\"{o}lder exponent $\alpha$ is to be determined. 

The proof of Theorem 1.10 is inspired from the work on b-family equation \cite{CLZ}, the Novikov equation \cite{HH4}, the FORQ equation \cite{HM33}, and the two-component Camassa-Holm equation \cite{Thompson}.
The follow three lemmas are  needed to prove Theorem 1.10.
\begin{lm}(See \cite{Taylor2}.)
If   $f\in H^{s -1}$ and $g\in H^\theta$, then
 $$||[D^\theta\partial_x,f]g||_{L^2 }\leq c_{\theta,s} ||f||_{H^{s-1 } }|| g||_{H^\theta } ,\theta+1\geq0, s -1>3/2,\theta+1\leq s -1.$$
\end{lm}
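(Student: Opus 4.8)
The plan is to prove this Kato--Ponce type commutator estimate by passing to the Fourier side and exploiting the cancellation built into the commutator through a Littlewood--Paley (Bony paraproduct) decomposition, in the spirit of the paradifferential calculus of \cite{Taylor2}. Writing $m(\xi)=i\xi(1+\xi^2)^{\theta/2}$ for the symbol of the operator $D^\theta\partial_x\in \mathrm{OP}S^{\theta+1}$, a direct computation gives
\begin{equation*}
\widehat{[D^\theta\partial_x,f]g}(\xi)=\int_\mathbb{R}\bigl(m(\xi)-m(\eta)\bigr)\hat f(\xi-\eta)\hat g(\eta)\,d\eta,
\end{equation*}
so the whole point is that the commutator symbol $m(\xi)-m(\eta)$ vanishes on the diagonal $\xi=\eta$ and hence produces a gain of one derivative whenever the $f$-frequency $\zeta=\xi-\eta$ is small relative to the $g$-frequency $\eta$. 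First I would decompose $f=\sum_k\Delta_k f$ and $g=\sum_j\Delta_j g$ and split the double sum into the three standard regimes: the paraproduct region $k\le j-2$ (low $f$, high $g$), the resonant/remainder region $|k-j|\le 1$, and the region $k\ge j+2$ (high $f$, low $g$).

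In the paraproduct region $k\le j-2$ the output is frequency-localized near $2^{j}$ and $|\zeta|\sim 2^{k}\ll 2^{j}\sim|\eta|$. Here the mean value theorem applied to $m$ together with $|m'(\xi)|\lesssim\langle\xi\rangle^{\theta}$ gives $|m(\xi)-m(\eta)|\lesssim|\zeta|\,\langle\eta\rangle^{\theta}$, which is precisely the sought gain of one derivative; this block is therefore controlled by $\|\Delta_k\partial_x f\|_{L^\infty}\|D^\theta\Delta_j g\|_{L^2}$. Summing in $k$ and using the embedding $H^{s-1}\hookrightarrow C^1$ (valid since $s-1>3/2$) bounds the full contribution by $\|\partial_x f\|_{L^\infty}\|g\|_{H^\theta}\lesssim\|f\|_{H^{s-1}}\|g\|_{H^\theta}$, where almost-orthogonality of the $\Delta_j$ outputs is used to reassemble the $H^\theta$ norm of $g$.

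The harder part, and the main obstacle, is the high--low region $k\ge j+2$ and the resonant region $|k-j|\le1$, because there the commutator structure no longer supplies cancellation and one must instead place the extra derivative on $f$. In the high--low region $\xi\approx\zeta$ is large while $\eta$ is small, so $|m(\xi)-m(\eta)|\lesssim\langle\xi\rangle^{\theta+1}$ and the block is estimated by $\|D^{\theta+1}\Delta_k f\|_{L^2}\|S_j g\|_{L^\infty}$, where $S_j=\sum_{k\le j-1}\Delta_k$; in the resonant region both frequencies are comparable but the output frequency is low, so a Bernstein inequality trades the operator $D^{\theta+1}$ for a factor $2^{k(\theta+1)}$. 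In both cases the requirement $\theta+1\le s-1$ is exactly what lets me dominate $2^{k(\theta+1)}$ by $2^{k(s-1)}$ and hence absorb the $f$-factor into $\|f\|_{H^{s-1}}$, while $\theta+1\ge0$ guarantees that the symbol bound $\langle\xi\rangle^{\theta+1}$ is genuinely nondecreasing so that the high frequency dominates. Collecting the three regions and summing the resulting almost-orthogonal square-summable sequences via Cauchy--Schwarz and Young's inequality for the $\ell^2$/convolution sums yields the stated bound $\|[D^\theta\partial_x,f]g\|_{L^2}\le c_{\theta,s}\|f\|_{H^{s-1}}\|g\|_{H^\theta}$. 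The only genuinely delicate bookkeeping is ensuring that in the resonant region no derivative is lost, which is guaranteed precisely by the two-sided constraint $0\le\theta+1\le s-1$ together with $s-1>3/2$.
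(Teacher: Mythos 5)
The paper does not prove this lemma at all: it is quoted directly from Taylor's commutator-estimates paper \cite{Taylor2}, so there is no in-house argument to measure yours against. Your Littlewood--Paley/paraproduct strategy is the standard route to such bounds and is, in outline, how the cited result is obtained; in particular your treatment of the paraproduct region $k\le j-2$ is sound (the mean-value bound $|m(\xi)-m(\eta)|\lesssim|\zeta|\langle\eta\rangle^{\theta}$ does gain the derivative, and the summation over $k$ really needs $H^{s-1}\hookrightarrow B^{1}_{\infty,1}$ rather than just $C^1$, which $s-1>3/2$ provides).

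There is, however, a genuine gap in your high--low region. You bound that block by $\|D^{\theta+1}\Delta_k f\|_{L^2}\,\|S_j g\|_{L^\infty}$ and assert that $\theta+1\le s-1$ is ``exactly what'' closes the estimate. But the factor you are left with on $g$ is an $L^\infty$-type norm, while the hypothesis only gives $g\in H^{\theta}$ with $\theta$ permitted to go down to $-1$; for $\theta\le 1/2$ one has $H^{\theta}\not\hookrightarrow L^\infty$, and this low range is actually used in the paper (Section 7 invokes the lemma with $\theta=r$ as small as $-1$). Bernstein costs a factor $2^{j(1/2-\theta)}\le 2^{k(1/2-\theta)}$ to convert $\|S_j g\|_{L^\infty}$ into $\|g\|_{H^{\theta}}$, so the weight on the $f$-frequency becomes $2^{k(\theta+1)}2^{k(1/2-\theta)}=2^{3k/2}$, independent of $\theta$; what saves the estimate there is the strict inequality $s-1>3/2$ (together with almost-orthogonality of the outputs in $k$), not the constraint $\theta+1\le s-1$. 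The same correction is needed in the resonant region. The architecture of the proof is right and the repair is routine, but as written your argument only covers $\theta>1/2$, and the sentence identifying $\theta+1\le s-1$ as the decisive inequality for the high--low block is wrong for the lower half of the admissible range.
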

\begin{lm}(See\cite{HM33,K}.)
If $\theta >0$, then $H^\theta \cap L^\infty$
is an algebra. Moreover, we have

(i) $||fg||_{H^\theta  }\leq c_\theta ||f||_{H^\theta }||g||_{H^\theta  } , \text{ for } \theta >1/2.$

(ii) $||fg||_{H^{\theta  } }\leq c ||f||_{H^{\theta +1} }||g||_{H^{\theta  }}, \text{ for }  \theta >-1/2.$

(iii) $||fg||_{H^{\theta } }\leq c ||f||_{H^{s-2 } }||g||_{H^{\theta }}, \text{ for } -1\leq\theta\leq 0, s-1>3/2, \theta+s\geq 2.$
\end{lm}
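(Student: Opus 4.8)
The three inequalities are classical product (Moser/Kato--Ponce type) estimates, so the plan is to argue entirely on the Fourier side, where $D^\theta=(1-\partial_x^2)^{\theta/2}$ is the multiplier $\langle\xi\rangle^\theta:=(1+\xi^2)^{\theta/2}$ and $\|f\|_{H^\theta}=\|\langle\cdot\rangle^\theta\hat f\|_{L^2}$. Writing $\widehat{fg}=\hat f*\hat g$, each bound reduces to a weighted convolution estimate for $\langle\xi\rangle^\theta\,\widehat{fg}(\xi)$, to which I would apply Young's inequality ($L^1*L^2\hookrightarrow L^2$) together with Cauchy--Schwarz to convert a weighted $L^1$-norm of one factor's transform into a Sobolev norm. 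The one-dimensional integrability threshold $\int_{\mathbb R}\langle\eta\rangle^{-2\sigma}\,d\eta<\infty\iff\sigma>1/2$ is what produces every exponent restriction in the statement.

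For (i), with $\theta>0$ I would use the subadditivity $\langle\xi\rangle^\theta\le c\big(\langle\xi-\eta\rangle^\theta+\langle\eta\rangle^\theta\big)$ to split
\[
\langle\xi\rangle^\theta|\widehat{fg}(\xi)|\le c\Big[\big(\langle\cdot\rangle^\theta|\hat f|\big)*|\hat g|+|\hat f|*\big(\langle\cdot\rangle^\theta|\hat g|\big)\Big](\xi),
\]
and then Young's inequality gives $\|fg\|_{H^\theta}\le c\big(\|f\|_{H^\theta}\|\hat g\|_{L^1}+\|\hat f\|_{L^1}\|g\|_{H^\theta}\big)$. Bounding $\|\hat g\|_{L^1}\le\|g\|_{H^\theta}\big(\int\langle\eta\rangle^{-2\theta}d\eta\big)^{1/2}$ by Cauchy--Schwarz forces $\theta>1/2$ and yields the symmetric estimate; the same threshold gives $H^\theta\hookrightarrow L^\infty$, so that $H^\theta\cap L^\infty=H^\theta$ is an algebra.

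For (ii) and (iii), where $\theta$ may be negative, subadditivity fails and I would instead invoke Peetre's inequality $\langle\xi\rangle^\theta\le 2^{|\theta|}\langle\eta\rangle^\theta\langle\xi-\eta\rangle^{|\theta|}$ to transfer the whole weight onto the \emph{smoother} factor, giving
\[
\|fg\|_{H^\theta}\le 2^{|\theta|}\big\|\langle\cdot\rangle^{|\theta|}\hat f\big\|_{L^1}\,\|g\|_{H^\theta}.
\]
Spending the derivative budget on $f$ by Cauchy--Schwarz, $\big\|\langle\cdot\rangle^{|\theta|}\hat f\big\|_{L^1}\le\|f\|_{H^{\theta+1}}\big(\int\langle\eta\rangle^{2(|\theta|-\theta-1)}d\eta\big)^{1/2}$ for (ii), and with $H^{s-2}$ replacing $H^{\theta+1}$ for (iii); these integrals are finite in the interior of the admissible ranges and already deliver the desired estimates, with constants depending only on $\theta$ (and $s$), away from the endpoints.

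The main obstacle is precisely those endpoints: the crude Peetre--Cauchy--Schwarz reduction is slightly lossy and does not by itself reach the sharp thresholds $\theta>-1/2$ in (ii) or the exact balance $\theta+s\ge2$ (with $s-1>3/2$) in (iii). To recover them I would replace the single frequency split by a Littlewood--Paley (Bony paraproduct) decomposition $fg=T_fg+T_gf+R(f,g)$ and estimate the high--high, high--low and low--high interactions separately, using $L^\infty$ control of the low-frequency factor (valid once $\theta+1>1/2$, i.e.\ $\theta>-1/2$, and once $s-2>1/2$) in place of a full derivative. Since (i)--(iii) are exactly the standard Kato--Ponce/paraproduct product estimates, I would complete the endpoint cases by the paraproduct calculus and, as the statement records, refer to \cite{HM33,K} for the remaining technical details.
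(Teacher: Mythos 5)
The paper offers no proof of this lemma at all: it is imported by the citation to \cite{HM33,K}, so there is no internal argument to compare yours against, and your proposal must stand on its own. In outline it does. Your proof of (i) is complete and standard. For (ii) and (iii) your own accounting is essentially accurate, though one correction of emphasis is in order: the Peetre--Cauchy--Schwarz reduction requires $\theta+1-|\theta|>1/2$, which for negative $\theta$ means $\theta>-1/4$, and in (iii) it requires $\theta+s>5/2$; hence the uncovered regions are the full intervals $-1/2<\theta\le-1/4$ and $2\le\theta+s\le 5/2$, not merely endpoints, and the Bony decomposition you invoke is doing genuine work there rather than polishing a boundary case. It does close the gap: $T_fg$ is controlled via $\|f\|_{L^\infty}\lesssim\|f\|_{H^{\theta+1}}$ (resp.\ $\|f\|_{H^{s-2}}$), $T_gf$ lands in $H^{2\theta+1/2}\subset H^{\theta}$ (resp.\ $H^{\theta+s-5/2}\subset H^{\theta}$), and the remainder estimate requires exactly $2\theta+1>0$ (resp.\ $\theta+s>2$), reproducing the stated thresholds --- except at the corner $\theta+s=2$, where $R(f,g)$ only lies in $B^{-1/2}_{2,\infty}$ and you must additionally use the embedding $B^{-1/2}_{2,\infty}\hookrightarrow H^{\theta}$, which is legitimate because $s>5/2$ forces $\theta=2-s<-1/2$ there. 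A slicker completion for the negative range, and one that explains the exact numerology of the hypotheses, is duality: for $\theta\le 0$ one has $\|fg\|_{H^{\theta}}\le\|g\|_{H^{\theta}}\sup_{\|h\|_{H^{-\theta}}\le 1}\|fh\|_{H^{-\theta}}$, and the positive-index multiplier bound $\|fh\|_{H^{r}}\lesssim\|f\|_{H^{\sigma}}\|h\|_{H^{r}}$ for $\sigma>1/2$, $0\le r\le\sigma$ (interpolate $M_f$ between $L^2$ and the algebra case $r=\sigma$) yields (ii) precisely when $-\theta\le\theta+1$, i.e.\ $\theta\ge-1/2$, and (iii) precisely when $-\theta\le s-2$, i.e.\ $\theta+s\ge 2$.

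One genuine loose end remains: the lemma's opening sentence asserts that $H^{\theta}\cap L^\infty$ is an algebra for every $\theta>0$, and your argument does not reach this for $0<\theta\le 1/2$. Your frequency split gives $\|fg\|_{H^{\theta}}\le c\bigl(\|f\|_{H^{\theta}}\|\hat g\|_{L^1}+\|\hat f\|_{L^1}\|g\|_{H^{\theta}}\bigr)$, but $\|\hat g\|_{L^1}$ is not controlled by $\|g\|_{L^\infty}$, so the Moser estimate $\|fg\|_{H^{\theta}}\lesssim\|f\|_{L^\infty}\|g\|_{H^{\theta}}+\|g\|_{L^\infty}\|f\|_{H^{\theta}}$ (the paper's Lemma 3.1) needs its own argument. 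Since you already have the paraproduct machinery on the table, this is quick: $T_fg$ and $T_gf$ give the two terms directly, and the remainder is absorbed for $\theta>0$ via $\|R(f,g)\|_{H^{\theta}}\lesssim\|f\|_{L^\infty}\|g\|_{H^{\theta}}$; you should state this explicitly rather than let it ride on the $\theta>1/2$ case.
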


\begin{lm}(See\cite{HM33,K}.)
Suppose $\sigma_1<\sigma<\sigma_2$ and $f\in H^\sigma_1$. Then, we have
\begin{equation*}
\begin{split}
||f||_{H^{\sigma}} \leq ||f||_{H^{\sigma_1}}^\frac{\sigma_2-\sigma}{\sigma_2-\sigma_1} ||f||_{H^{\sigma_2}}^\frac{\sigma-\sigma_1}{\sigma_2-\sigma_1},
\end{split}
\end{equation*}
\end{lm}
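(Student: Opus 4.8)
The plan is to reduce the whole statement to a single application of H\"older's inequality on the Fourier side, via the Plancherel characterization of the Sobolev norm
$$\|f\|_{H^\sigma}^2 = \int_{\mathbb{R}} (1+\xi^2)^\sigma |\hat{f}(\xi)|^2\, d\xi.$$
First I would introduce the interpolation parameter $\theta\doteq \frac{\sigma_2-\sigma}{\sigma_2-\sigma_1}$. Since $\sigma_1<\sigma<\sigma_2$, this gives $\theta\in(0,1)$ and $1-\theta=\frac{\sigma-\sigma_1}{\sigma_2-\sigma_1}$, so that $\sigma$ is the convex combination $\sigma=\theta\sigma_1+(1-\theta)\sigma_2$. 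These are precisely the two exponents appearing in the claimed bound, so this choice is forced by the statement itself.

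The next step is to factor the Fourier multiplier. Because $t\mapsto t^\sigma$ is multiplicative in the exponent, the pointwise identity $(1+\xi^2)^\sigma=[(1+\xi^2)^{\sigma_1}]^\theta[(1+\xi^2)^{\sigma_2}]^{1-\theta}$ holds for every $\xi$. Writing also $|\hat{f}|^2=(|\hat{f}|^2)^\theta(|\hat{f}|^2)^{1-\theta}$, the integrand splits as
$$(1+\xi^2)^\sigma|\hat{f}(\xi)|^2 = \big[(1+\xi^2)^{\sigma_1}|\hat{f}(\xi)|^2\big]^\theta \big[(1+\xi^2)^{\sigma_2}|\hat{f}(\xi)|^2\big]^{1-\theta}.$$
I would then apply H\"older's inequality with conjugate exponents $1/\theta$ and $1/(1-\theta)$ to the two nonnegative factors, obtaining
$$\|f\|_{H^\sigma}^2 \leq \Big(\int_{\mathbb{R}}(1+\xi^2)^{\sigma_1}|\hat{f}|^2\, d\xi\Big)^\theta \Big(\int_{\mathbb{R}}(1+\xi^2)^{\sigma_2}|\hat{f}|^2\, d\xi\Big)^{1-\theta} = \|f\|_{H^{\sigma_1}}^{2\theta}\,\|f\|_{H^{\sigma_2}}^{2(1-\theta)}.$$
Taking square roots produces exactly the desired inequality with H\"older exponents $\theta$ and $1-\theta$.

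The computation is entirely routine and carries no genuine analytic obstacle; the argument is really just Lyapunov's inequality for the measure $|\hat{f}(\xi)|^2\,d\xi$. The only points worth flagging are bookkeeping ones: one should tacitly assume $f\in H^{\sigma_1}\cap H^{\sigma_2}$, since otherwise the right-hand side is $+\infty$ and the estimate is vacuous, and one must check that $\theta\in(0,1)$ so that $1/\theta$ and $1/(1-\theta)$ are legitimate conjugate exponents---both of which follow immediately from $\sigma_1<\sigma<\sigma_2$. Since this is a standard fact (as indicated by the cited references), I would state the Fourier-side H\"older computation and omit further detail.
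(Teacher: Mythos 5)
Your proof is correct, and since the paper states this lemma without proof (deferring to the cited references \cite{HM33,K}), your Plancherel-plus-H\"older argument is precisely the standard derivation those references rely on, with the exponent $\theta=\frac{\sigma_2-\sigma}{\sigma_2-\sigma_1}$ and the convexity identity $\sigma=\theta\sigma_1+(1-\theta)\sigma_2$ handled correctly. Your flagged caveats (needing $f\in H^{\sigma_1}\cap H^{\sigma_2}$ for a nonvacuous bound, and $\theta\in(0,1)$ to legitimize the conjugate exponents) are apt, and the same computation transfers verbatim to the periodic case used elsewhere in the paper by replacing the integral over $\mathbb{R}$ with a sum over $\mathbb{Z}$.
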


\begin{proof}[Proof of Theorem \ref{result10}]
Differentiating the CCCH system with respect to  $x$,  simplifying the resulting
equation, and letting $t=-t$, $f=u_x$ and $g=v_x$, we get 
the following equation
\begin{equation}
\left\{
\begin{array}{llll}
&\partial_tu= v^p\partial_xu +I_{11}(u,v,f,g)+I_{12}(u,v,f,g)+\partial_xI_{13}(u,v,f,g), \\
&\partial_tv= u^q\partial_xv +I_{21}(u,v,f,g)+I_{22}(u,v,f,g)+\partial_xI_{23}(u,v,f,g),\\
&\partial_tf= v^p\partial_xf +\partial_xI_{11}(u,v,f,g)+\partial_xI_{12}(u,v,f,g)+I_{13}(u,v,f,g),\\
&\partial_tg= u^q\partial_xg +\partial_xI_{21}(u,v,f,g)+\partial_xI_{22}(u,v,f,g)+ I_{23}(u,v,f,g),\\
&u_0(x)=u(x,0),v_0(x)=v(x,0),f_0(x)=\partial_xu_0(x)\doteq f_0(x),g_0(x)=\partial_xv_0(x)\doteq g_0(x).
\end{array}
\right.
\end{equation}
where
\begin{equation*}
\left\{
\begin{array}{llll}
I_{11}(u,v,f,g)=\frac{a}{p} (1-\partial_x^2)^{-1}  (\partial_x(v^{p }) u),\quad \quad \quad I_{21}(u,v,f,g)&=\frac{b}{q} (1-\partial_x^2)^{-1}  (\partial_x(u^{q}) v),\\
I_{12}(u,v,f,g)= \frac{p-a }{p}(1-\partial_x^2)^{-1}(\partial_x(v^{p }) \partial_xf ) ,\quad  I_{22}(u,v,f,g)&= \frac{q-b }{q}(1-\partial_x^2)^{-1}(\partial_x(u^{q}) \partial_xg ),\\
I_{13}(u,v,f,g)= (1-\partial_x^2)^{-1} (\partial_x(v^{p})f),\quad\quad \quad  I_{23}(u,v,f,g)&=(1-\partial_x^2)^{-1} (\partial_x(u^{q})g).
\end{array}
\right.
\end{equation*}

It should be pointed out that in the periodic case the integration is over $\mathbb{T}$.
Since all estimates are the
same on both the line and the circle, in what follows we shall keep using the notation
of the line.

Let us consider two solutions  $ \phi=(u_1,v_1,f_1,g_1)$ and $ \varphi=(u_2,v_2,f_2,g_2)$  to equation (7.2),
which correspond to the initial data $ \phi_0=(u_{0,1},v_{0,1},\partial_xu_{0,1},\partial_xv_{0,1})$ and $\varphi_0=(u_{0,2},v_{0,2},\partial_xu_{0,2},v_{0,2} )$, respectively.
If the initial data $\phi_0,\varphi_0 $ are located in a ball with radius $\rho$ in $H^{s-1}$, i.e.,
\begin{equation}
\begin{split}
||\phi_0||_{H^{s-1}}\leq \rho,||\varphi_0||_{H^{s-1}}\leq \rho,s>5/2,
\end{split}
\end{equation}
then, from the estimate (1.13) in Theorem 1.7 we have
\begin{equation}
\begin{split}
||\phi(t)||_{H^{s-1}}\leq \rho,||\varphi(t)||_{H^{s-1}}\leq \rho,s>5/2.
\end{split}
\end{equation}

Let $U=u_1-u_2, V=v_1-v_2,F=f_1-f_2, G=g_1-g_2$. Then $U,V,F,G$ satisfy the following system
\begin{align*}
&\partial_tU= v_1^p\partial_xU+V\partial_xu_2\sum_{i=0}^{p-1}v_1^{p-1-i}v_2^i+I_{11}(u_1,v_1,f_1,g_1)-I_{11}(u_2,v_2,f_2,g_2)  \\
&\quad\quad\quad+I_{12}(u_1,v_1,f_1,g_1)-I_{12}(u_2,v_2,f_2,g_2)+\partial_xI_{13}(u_1,v_1,f_1,g_1)-\partial_xI_{13}(u_2,v_2,f_2,g_2),\\
&\partial_tV= u_1^q\partial_xV+U\partial_xv_2\sum_{i=0}^{q-1}u_1^{q-1-i}u_2^i+I_{21}(u_1,v_1,f_1,g_1)-I_{21}(u_2,v_2,f_2,g_2)  \\
&\quad\quad\quad+I_{22}(u_1,v_1,f_1,g_1)-I_{22}(u_2,v_2,f_2,g_2)+\partial_xI_{23}(u_1,v_1,f_1,g_1)-\partial_xI_{23}(u_2,v_2,f_2,g_2),\\
&\partial_tF= v_1^p\partial_xF+V\partial_xf_2\sum_{i=0}^{p-1}v_1^{p-1-i}v_2^i+\partial_xI_{11}(u_1,v_1,f_1,g_1)-\partial_xI_{11}(u_2,v_2,f_2,g_2)  \\
&\quad\quad\quad+\partial_xI_{12}(u_1,v_1,f_1,g_1)-\partial_xI_{12}(u_2,v_2,f_2,g_2)+ I_{13}(u_1,v_1,f_1,g_1)- I_{13}(u_2,v_2,f_2,g_2),\tag{F}\\
&\partial_tG= u_1^q\partial_xG+U\partial_xg_2\sum_{i=0}^{q-1}u_1^{q-1-i}u_2^i+\partial_xI_{21}(u_1,v_1,f_1,g_1)-\partial_xI_{21}(u_2,v_2,f_2,g_2)  \\
&\quad\quad\quad+\partial_xI_{22}(u_1,v_1,f_1,g_1)-\partial_xI_{22}(u_2,v_2,f_2,g_2)+I_{23}(u_1,v_1,f_1,g_1)- I_{23}(u_2,v_2,f_2,g_2),\\
&U_0(x)=u_{0,1}(x )-u_{0,2}(x),V_0(x)=v_{0,1}(x )-v_{0,2}(x),\\
&F_0(x)=\partial_x(u_{0,1}(x )-u_{0,2}(x)),G_0(x)=\partial_x(v_{0,1}(x )-v_{0,2}(x)).
\end{align*}

\textbf{Lipschitz continuity in $A_1 $.}
We shall show that the solution map for (1.1) is Lipschitz continuous for $(s,r)\in A_1$, that is, the H\"{o}lder exponent { $\alpha=1$}
 in the domain $A_1$.
Applying the operator $D^r$ to both sides of the equation (F), multiplying by $ D^r F$, and integrating, we obtain
\begin{equation}
\begin{split}
\frac{1}{2}\frac{d}{dt}||F||_{H^r}^2  &=   \int_\mathbb{R} D^r \left(v_1^p \partial_xF \right) D^r Fdx + \int_\mathbb{R} D^r \left( V \partial_xf_2\sum_{i=0}^{p-1}v_1^{p-1-i}v_2^i\right) D^r Fdx\\
&\quad+\int_\mathbb{R} D^r (\partial_xI_{11}(u_1,v_1,f_1,g_1 )-\partial_xI_{11}(u_2,v_2f_2,v_2 ))    D^r Fdx\\
&\quad+ \frac{a}{p}\int_\mathbb{R} D^{r} (\partial_xI_{12}(u_1,v_1,f_1,g_1  )-\partial_xI_{12}(u_2,v_2f_2,v_2 ) )    D^rFdx\\
&\quad+\frac{a}{p}\int_\mathbb{R} D^{r} ( I_{13}(u_1,v_1,f_1,g_1  )- I_{13}(u_2,v_2f_2,v_2 ))     D^r Fdx\\
&\doteq B_1+B_2+ B_3+ B_4+ B_5.
\end{split}
\end{equation}
We need to estimate the right-hand side of (7.5). Apparently, $ D^r$ is commutative.

\textbf{Estimation of $B_1 $.}
A direct calculation sends the first  term on the right-hand sides of (7.5) to
\begin{equation}
\begin{split}
|B_1|&=\left|\int_\mathbb{R} D^r [\partial_x\left(v_1^p F \right)-F\partial_x v_1^p  ] D^r Fdx\right|\\
& \lesssim \left|\int_\mathbb{R} [ D^r  \partial_x,v_1^p]F       D^r Fdx\right|+\left|\int_\mathbb{R}(v_1^p D^r  \partial_xF)    D^r Fdx\right|+ \left|\int_\mathbb{R} D^r  F\partial_x v_1^p    D^r Fdx\right| .
\end{split}
\end{equation}
The first integral can be estimated through the Calderon-Coifman-Meyer commutator described in Lemma 7.1 for $r+1\geq0, s-1 >3/2,r+1\leq s-1 $.
Employing the algebraic property $ || v_1^p||_{H^{s-1 }}\lesssim || v_1||^p_{H^{s-1 }}\lesssim \rho^p$ and the Sobolev inequality $ || (v_1^p)_x||_{L^\infty}\lesssim || v_1^p||_{H^{s -1}}\lesssim \rho^p$ yields
\begin{equation*}
\begin{split}
 \left|\int_\mathbb{R} [ D^r  \partial_x,v_1^p]F       D^r Fdx\right|
&\lesssim  ||[ D^r  \partial_x,v_1^p]F||_{L^2}  || F||_{H^r}
\lesssim  || v_1^p||_{H^{s-1 }}|| F||_{H^r}^2 \lesssim  \rho^p || F||_{H^r}^2.
\end{split}
\end{equation*}
The second integral of (7.6) can be handled through integration by parts and the Sobolev's lemma
\begin{equation*}
\begin{split}
\left|\int_\mathbb{R}(v_1^p D^r  \partial_xF)    D^r Fdx\right|
 \lesssim   \left|\int_\mathbb{R}(v_1^p)_x(  D^r   F   )^2dx\right|
 \lesssim  ||(v_1^p)_x||_{L^\infty} || F||_{H^r}^2 \lesssim  \rho^p || F||_{H^r}^2.
\end{split}
\end{equation*}
The third integral of (7.6) would be calculated as follows
\begin{equation*}
\begin{split}
&\left|\int_\mathbb{R} D^r  F\partial_x v_1^p   D^r Fdx\right|\lesssim|| F\partial_x v_1^p||_{H^{r }}|| F||_{H^r}\\
&\quad\lesssim \left\{
\begin{array}{llll}
||  v_1^{p-1}g_1||_{H^{r  }}|| F||^2_{H^r}\lesssim  ||  v_1^{p-1}g_1|| _{H^{s-1  }}|| F||^2_{H^r}\lesssim\rho^p || F||^2_{H^r},\text{ for }  1/2<r\leq s-1 ;\\
||v_1^{p-1}g_1||_{H^{r+1 }}|| F||^2_{H^r} \lesssim||v_1^{p-1}g_1|| _{H^{s-1  }}|| F||^2_{H^r}\lesssim\rho^p || F||^2_{H^r},\text{ for }  -1/2<r\leq 1/2, r+2\leq s;\\
||v_1^{p-1}g_1||_{H^{s-1}}|| F||^2_{H^r}\lesssim     \rho^p || F||^2_{H^r}, \text{ for } -1\leq r \leq -1/2, s-1>3/2, r +s\geq 2.
\end{array}
\right.
\end{split}
\end{equation*}
To sum up, we arrive at 
\begin{equation*}
\begin{split}
|B_1|\lesssim\rho^p  || F||_{H^r}^2 ,\text{ for }(r,s)\in\{ 1/2<r<s-1\}\cup\{-1/2< r\leq 1/2, r+2\leq s\}\cup\{ -1\leq r \leq 0,   r +s\geq 2\}.
\end{split}
\end{equation*}

\textbf{Estimation of $B_2 $.}
A long computation yields the following estimate of $B_2$:
\begin{equation*}
\begin{split}
|B_2|&=\left|\int_\mathbb{R} D^r \left( V \partial_xf_2\sum_{i=0}^{p-1}v_1^{p-1-i}v_2^i \right)D^r Fdx\right|\lesssim \left\|V \partial_xf_2\sum_{i=0}^{p-1}v_1^{p-1-i}v_2^i\right\|_{H^{r }}|| F||_{H^r}\\
& \lesssim \left\{
\begin{array}{llll}
\left\|\partial_xf_2\sum_{i=0}^{p-1}v_1^{p-1-i}v_2^i \right\|_{H^{r  }}|| V|| _{H^r}|| F|| _{H^r}\lesssim ||  f_2|| _{H^{r+1  }}\left\|\sum_{i=0}^{p-1}v_1^{p-1-i}v_2^i \right\|_{H^{r  }}|| F|| _{H^r}|| V|| _{H^r}\\
\quad \quad \lesssim||   f_2|| _{H^{s-1 }}\left\|\sum_{i=0}^{p-1}v_1^{p-1-i}v_2^i\right\|_{H^{s  }}|| F|| _{H^r}|| V|| _{H^r}\lesssim\rho^p || F|| _{H^r}|| V|| _{H^r},\text{ for }  1/2<r\leq s-2;\\
\left\|\partial_xf_2\sum_{i=0}^{p-1}v_1^{p-1-i}v_2^i \right\|_{H^{r+1  }}|| V|| _{H^r}|| F|| _{H^r}\lesssim ||   f_2|| _{H^{r+2}}\left\|\sum_{i=0}^{p-1}v_1^{p-1-i}v_2^i \right\|_{H^{r+1  }}|| F|| _{H^r}|| V|| _{H^r}\\
\quad \quad\lesssim||   f_2|| _{H^{s-1 }}\left\|\sum_{i=0}^{p-1}v_1^{p-1-i}v_2^i\right\|_{H^{s  }}|| F|| _{H^r}|| V|| _{H^r}\lesssim\rho^p || F|| _{H^r}|| V|| _{H^r}, \text{ for }  -1/2< r\leq 1/2, r+3\leq s;\\
\left\|\partial_xf_2\sum_{i=0}^{p-1}v_1^{p-1-i}v_2^i \right\|_{H^{s-2 }}|| V|| _{H^r}|| F|| _{H^r}\lesssim ||  f_2|| _{H^{s-1}}\left\|\sum_{i=0}^{p-1}v_1^{p-1-i}v_2^i \right\|_{H^{s-1  }}|| F|| _{H^r}|| V|| _{H^r}\\
\quad \quad \lesssim\rho^p || F|| _{H^r}|| V|| _{H^r}, \text{ for } -1\leq r \leq 0, s-1>3/2, r +s\geq 2.
\end{array}
\right. \\
&\lesssim\rho^p  || F||_{H^r}^2 ,\text{ for }(r,s)\in\{ 1/2<r\leq s-2\}\cup\{-1/2< r\leq 1/2, r+3\leq s\}\cup\{ -1\leq r \leq -1/2,   r +s\geq 2\} .
\end{split}
\end{equation*}

\textbf{Estimation of $B_4 $.}
We shall show the details for the estimate of the most delicate nonlocal term $B_4$ while other nonlocal terms can be done likewise. 
A direct calculation generates
\begin{equation*}
\begin{split}
|B_4| &\lesssim
 || \partial_x (v^{p }_1) \partial_xf_1-\partial_x(v^{p }_2)  \partial_x  f_2 ||_{H^{r-1  }} || F||_{H^r}\\
 &\lesssim (\left\|\partial_x(v_1^p-v_2^p)\partial_x f_2\right\|_{H^{r -1 }}+\left\|\partial_xv_1^p\partial_x F \right\|_{H^{r -1 }}) || F|| _{H^r}.
\end{split}
\end{equation*}
Applying Lemma 7.2 yields
\begin{equation*}
\begin{split}
\left\|\partial_x(v_1^p-v_2^p)\partial_x f_2\right\|_{H^{r -1 }}&\lesssim \left\|Vg_1\partial_xf_2\sum_{i=0}^{p-2}v_1^{p-2-i}v_2^i \right\|_{H^{r-1  }}+\left\|v_2^{p-1}G\partial_x f_2 \right\|_{H^{r-1  }}\\
& \lesssim \left\{
\begin{array}{llll}
\left\| g_1\partial_xf_2\sum_{i=0}^{p-2}v_1^{p-2-i}v_2^i \right\|_{H^{r-1  }}||V||_{H^r}+\left\| v_2^{p-1} \partial_x f_2 \right\|_{H^{r-1  }}||G||_{H^r}\\
\quad \quad\lesssim
 \rho^p  (||V||_{H^r}+||G||_{H^r}), \text{ for }  1/2< r \leq s-1;\\
\left\|Vg_1\partial_xf_2\sum_{i=0}^{p-2}v_1^{p-2-i}v_2^i \right\|_{H^{r   }}+\left\|v_2^{p-1}G\partial_x f_2 \right\|_{H^{r   }}\\
\quad \quad\lesssim
\left\|V \right\|_{H^{r   }}||\partial_xf_2g_1\sum_{i=0}^{p-2}v_1^{p-2-i}v_2^i ||_{H^{s-2}}+\left\|G \right\|_{H^{r   }}||  v_2^{p-1}\partial_x f_2||_{H^{s-2}}\\
\quad \quad\lesssim
 \rho^p  (||V||_{H^r}+||G||_{H^r}), \text{ for } -1\leq r \leq 1/2, s-1>3/2, r +s\geq 2.
\end{array}
\right.
\end{split}
\end{equation*}
{ Following a similar proof of Lemma 3  in \cite{HH4} }, we can get
  $||fg||_{H^{r-2} }\leq c ||f||_{H^{s-1 } }||g||_{H^{r-2}}  \text{ for } 0\leq r\leq 2, s >3/2, r+s\geq 3 $. Using this fact and Lemma 7.2, we get
\begin{equation*}
\begin{split}
&\left\|\partial_xv_1^p\partial_x F \right\|_{H^{r -1 }}\\
& \quad\lesssim \left\{
\begin{array}{llll}
\left\| v_1^{p-1}g\right\|_{H^{r   }}\left\|\partial_x F \right\|_{H^{r -1 }} \lesssim
\left\| v_1^{p-1}g\right\|_{H^{s-1   }}\left\| F \right\|_{H^{r   }}, \text{ for }  1/2< r \leq s-1 ;\\
\left\|v_1^{p-1}g\right\|_{H^{s -1 }}\left\|\partial_x F \right\|_{H^{r -1 }}\lesssim
\left\| v_1^{p-1}g\right\|_{H^{s-1   }}\left\| F \right\|_{H^{r   }}, \text{ for } -1\leq r \leq 1/2, s-1>3/2, r +s\geq 2.
\end{array}
\right.
\end{split}
\end{equation*}
Therefore
\begin{equation*}
\begin{split}
|B_4|\lesssim\rho^p  || F||_{H^r}^2 ,\text{ for }(r,s)\in\{ 1/2<r\leq s-1\}\cup\{-1\leq r \leq 1/2, s-1>3/2, r +s\geq 2\} .
\end{split}
\end{equation*}

After estimating other terms in a similar manner, 
we obtain,
\begin{equation*}
\begin{split}
 &\frac{d}{dt}(|| U||_{H^r} +|| V||_{H^r}+||F||_{H^r}+|| G||_{H^r})  \leq \mathcal{C} (|| U||_{H^r} +|| V||_{H^r}+||F||_{H^r}+|| G||_{H^r}),
\end{split}
\end{equation*}
 for  $(r,s)\in\{-1\leq r\leq -1/2,r+s\geq 2\}\cup \{-1/2< r\leq 1/2,r+3\leq s\}\cup\{1/2< r,r+2\leq s\}$ and $C=C(s,r,p,q,\rho)$.
Solving the above inequality yields
\begin{equation*}
\begin{split}
|| U(t)||_{H^r} +|| V(t)||_{H^r}+||F(t)||_{H^r}+|| G(t)||_{H^r}  \leq e^{CT}\left(|| U(0)||_{H^r} +|| V(0)||_{H^r}+||F(0)||_{H^r}+|| G(0)||_{H^r} \right),
\end{split}
\end{equation*}
Noticing $F=f_1-f_2=\partial_x(u_1-u_2)=\partial_xU$ and  $G=g_1-g_2=\partial_x(v_1-v_2)=\partial_xV$, by the above  inequality we have
\begin{equation*}
\begin{split}
 &   ||U(t)||_{H^{r+1}}+|| V(t)||_{H^{r+1}}    \lesssim e^{CT}( ||U(0)||_{H^{r+1}}+|| V(0)||_{H^{r+1}}).
\end{split}
\end{equation*}
Lowering down the Sobolev index from $r+1$ to $r$ and adjusting the range accordingly,
we have
\begin{equation*}
\begin{split}
 &   ||U(t)||_{H^{r+1}}+|| V(t)||_{H^{r+1}}    \lesssim e^{CT}( ||U(0)||_{H^{r+1}}+|| V(0)||_{H^{r+1}}),
\end{split}
\end{equation*}
where  $(r,s)\in A_1\doteq\{0\leq r\leq 1/2,r+s\geq 3\}\cup \{1/2< r\leq 3/2,r+2\leq s\}\cup\{3/2< r,r+1\leq s\}$.
Recalling the definition $ z=(u_1,v_1)$ and $ w=(u_2,v_2)$ reveals
\begin{equation*}
\begin{split}
 &   ||z(t)-w(t)||_{H^{r }}  \lesssim e^{C_{r,s,p,q,\rho}T}( ||z(0)-w(0)||_{H^{r }},
\end{split}
\end{equation*}
where  $(r,s)\in A_1\doteq\{0\leq r\leq 1/2,r+s\geq 3\}\cup \{1/2< r\leq 3/2,r+2\leq s\}\cup\{3/2< r,r+1\leq s\}$.
This completes the proof of Lipschitz continuity in the region $A_1$.

\textbf{H\"{o}lder continuity in $A_2$.} By the Lipschitz continuity in $A_1$ and the condition $r\leq 3-s$, we obtain
\begin{equation*}
\begin{split}
 ||z(t)-w(t)||_{H^r}   \leq||z(t)-w(t)||_{H^{3-s}}  \leq e^{C_{r,s,p,q,\rho}T} ||z(0)-w(0)||_{H^{3-s}}.
\end{split}
\end{equation*}
Interpolating between the $H^r$ and the $H^s$ norms (Lemma 7.3 with $\sigma_1=r$, $\sigma=s-3$ and $\sigma_2=s$) produces
\begin{equation*}
\begin{split}
||z(0)-w(0)||_{H^{3-s}} \leq ||z(0)-w(0)||_{H^{r}}^\frac{2s-3}{s-r} ||z(0)-w(0)||_{H^{s}}^\frac{3-s-r}{s-r}\leq c_{r,s,\rho}||z(0)-w(0)||_{H^{r}}^\frac{2s-3}{s-r}.
\end{split}
\end{equation*}

\textbf{H\"{o}lder continuity in $A_3$.} For the case $s-2\leq r<s$, interpolating between the $H^{s-2}$ and the $H^s$ norms (Lemma 7.3 with $\sigma_1=s-2$, $\sigma=r$ and $\sigma_2=s$) generates
\begin{equation*}
\begin{split}
||z(t)-w(t)||_{H^{r}} \leq ||z(t)-w(t)||_{H^{s-2}}^\frac{s-r}{2} ||z(t)-w(t)||_{H^{s}}^\frac{r-s+2}{2},
\end{split}
\end{equation*}
and by the well-posedness size estimate (7.1),  we find
\begin{equation*}
\begin{split}
||z(t)-w(t)||_{H^{s}}\lesssim ||z_0||_{H^s}+||w_0||_{H^s} \lesssim \rho,
\end{split}
\end{equation*}
therefore, we have
\begin{equation*}
\begin{split}
||z(t)-w(t)||_{H^{r}} \leq C_{r,s,p,q,\rho}||z(t)-w(t)||_{H^{s-2}}^\frac{s-r}{2}.
\end{split}
\end{equation*}
By the Lipschitz continuity in $A_1$ and the condition $s-2\leq r<s$, we obtain
\begin{equation*}
\begin{split}
||z(t)-w(t)||_{H^{r}} \leq C_{r,s,\rho}||z(0)-w(0)||_{H^{s-2}}^\frac{s-r}{2}  \leq C_{r,s,\rho}||z(0)-w(0)||_{H^{r}}^\frac{s-r}{2},
\end{split}
\end{equation*}
which is the desired H\"{o}lder continuity in $A_3$.

\textbf{H\"{o}lder continuity in $A_4$.} For the case $s-1<r<s$, letting $\sigma_1=s-1$, $\sigma=r$ and $\sigma_2=s$ in Lemma 7.3 leads to 
\begin{equation*}
\begin{split}
||z(t)-w(t)||_{H^{r}} \leq ||z(t)-w(t)||_{H^{s-1}}^ {s-r} ||z(t)-w(t)||_{H^{s}}^ {r-s+1}.
\end{split}
\end{equation*}
By the Lipschitz continuity in $A_1$ for $r=s-1$ and the size estimate, we arrive at
\begin{equation*}
\begin{split}
||z(t)-w(t)||_{H^{r}} \leq C_{r,s,\rho}||z(0)-w(0)||_{H^{s-1}}^ {s-r} \leq C_{r,s,\rho}||z(0)-w(0)||_{H^{r}}^ {s-r},
\end{split}
\end{equation*}
which completes the proof of Theorem 1.10.
\end{proof}

\section*{Acknowledgments} { The first author would like to thank Professors
Zhijun Qiao and Shuxia Li for their kind hospitality and encouragement during his visit
in the University of Texas-Rio Grande Valley. This work is partly supported   by National Science Fund of China (Grant No.11371384) and National Science Fund for Young Scholars of China (Grant No. 11301573), University Young Core  Teacher Foundation of Chongqing, Technology Research Foundation of Chongqing Educational Committee (Grant No. KJ1400503), Natural Science Foundation of Chongqing  (Grant No. cstc2014jcyjA00008),   the Talent Project of Chongqing Normal University(Grant No. 14CSBJ05).}

\end{document}